\theoremstyle{definition}
\newtheorem{theorem}{Theorem}
\newtheorem*{theorem*}{Theorem}
\newtheorem{definition}{Definition}
\newtheorem*{definition*}{Definition}
\newtheorem{lemma}[definition]{Lemma}
\newtheorem*{lemma*}{Lemma}
\newtheorem*{claim*}{Claim}
\newtheorem{example}{Example}
\newtheorem*{example*}{Example}
\newtheorem{notation}{Notation}
\newtheorem*{notation*}{Notation}
\newtheorem{proposition}[definition]{Proposition}
\newtheorem*{proposition*}{Proposition}
\newtheorem*{property*}{Property}
\newtheorem{observation}[definition]{Observation}
\newtheorem*{observation*}{Observation}
\newtheorem{remark}{Remark}
\newtheorem*{remark*}{Remark}
\newcommand{\PRMK}{\mbox{\red{PRMK}}}
\newcommand{\pierre}[1]{\red{#1}\PRMK}
\newcommand{\repet}{\pierre{REPETITION}}
\newcommand{\fublainv}[1]{}
\newcommand{\fubla}[1]{\pierre{blahblahblah}}
\newcommand{\furef}[1]{\red{ref}}
\newcommand{\fucite}[1]{\red{cite}}
\newcommand{\igintro}[1]{\ignore{#1}}
\newcommand{\ighp}[1]{\ignore{#1}}
\newcommand{\chhp}[2]{#2} 
\newcommand{\skSk}{(k\cdot S_k)_{\kK}}
\renewcommand{\sup}{\mathop{\mathtt{sup}}}
\renewcommand{\inf}{\mathop{\mathtt{inf}}}
\newcommand{\hfills}{\hspace*{\fill}}
\long\def\ignore#1{\relax}
\newcommand{\fnsz}{\footnotesize}
\newcommand{\ssz}{\scriptsize}
\newcommand{\ssst}{\scriptscriptstyle}
\newcommand{\ie}{\textit{i.e.},\ }
\newcommand{\eg}{\textit{e.g.},\ }
\newcommand{\wrt}{w.r.t.\ }
\newcommand{\cf}{\textit{cf.}\ }
\newcommand{\resp}{resp.\ }
\newcommand{\via}{\textit{via}\ }
\newcommand{\vs}{\textit{vs.}\ }
\newcommand{\ih}{induction hypothesis}
\newcommand{\bcen}{\begin{center}}
\newcommand{\ecen}{\end{center}}
\newcommand{\Sec}{Sec.}
\newcommand{\tbul}{\noindent \textbullet~}
\newcommand{\leqs}{\leqslant}
\newcommand{\geqs}{\geqslant}
\newcommand{\ct}{\mathord{\cdot}}
\newcommand{\set}[1]{\{ #1 \}}
\newcommand{\mult}[1]{ [ #1 ]}
\newcommand{\ovl}[1]{\overline{#1}}
\newcommand{\eset}{\emptyset}
\newcommand{\emul}{\mult{\,}}
\newcommand{\tv}{o}
\newcommand{\inter}{\wedge}
\newcommand{\union}{\vee}
\newcommand{\rstr}[1]{|_{#1}}
\newcommand{\setpos}{\set{0,1,2}^*}
\newcommand{\phd}{\phantom{.}}
\newcommand{\subeq}{\subseteq}
\newcommand{\wdg}{\wedge}
\newcommand{\card}[1]{\# #1}
\newcommand{\Bohm}{B\"ohm}
\newcommand{\bbN}{\mathbb{N}}
\newcommand{\bbNfty}{{\mathbb{N}\cup\set{\infty}}}
\newcommand{\ttC}{\mathtt{C}}
\newcommand{\ttE}{\mathtt{E}}
\newcommand{\ttK}{\mathtt{K}}
\newcommand{\ttN}{\mathtt{N}}
\newcommand{\ttP}{\mathtt{P}}
\newcommand{\ttR}{\mathtt{R}}
\newcommand{\ttS}{\mathtt{S}}
\newcommand{\ttT}{\mathtt{T}}
\newcommand{\ttX}{\mathtt{X}}
\newcommand{\ttf}{\mathtt{f}}
\newcommand{\tti}{\mathtt{i}}
\newcommand{\ttx}{\mathtt{x}}
\newcommand{\scrD}{\mathscr{D}}
\newcommand{\scrR}{\mathscr{R}}
\newcommand{\scrRo}{\scrR_0}
\newcommand{\scrV}{\mathscr{V}}
\newcommand{\calH}{\mathcal{H}}
\newcommand{\secu}{^{\prime \prime}}
\newcommand{\asec}{a\secu} 
\newcommand{\csec}{c\secu}
\newcommand{\ovla}{\ovl{a}}
\newcommand{\ovlal}{\ovl{\al}}
\newcommand{\ovlalp}{\ovl{\al'}}
\newcommand{\Nmzo}{\bbN\setminus\set{0,1}}
\newcommand{\ttdom}{\mathtt{dom}}
\newcommand{\ttcodom}{\mathtt{codom}}
\newcommand{\ttdeg}{\mathtt{deg}}
\newcommand{\tttr}{\mathtt{tr}}
\newcommand{\ttpos}{\mathtt{pos}}
\newcommand{\ttax}{\mathtt{ax}}
\newcommand{\ttabs}{\mathtt{abs}}
\newcommand{\ttapp}{\mathtt{app}}
\newcommand{\ttBT}{\mathtt{BT}}
\newcommand{\ttAx}{\mathtt{Ax}}
\newcommand{\TypR}{\Types_\scrR}
\newcommand{\ttCh}{\mathtt{Ch}}
\newcommand{\io}{{i_0}}
\newcommand{\iI}{{i \in I}}
\newcommand{\iIa}{{i \in I_a}}
\newcommand{\jJ}{{j \in J}}
\newcommand{\kK}{{k \in K}}
\newcommand{\nN}{{n\in \bbN}}
\newcommand{\lam}{\lambda}
\newcommand{\Lam}{\Lambda}
\newcommand{\rew}{\rightarrow}
\newcommand{\redrew}{\mathbin{\red{\rew}}}
\newcommand{\rewsh}{\mathord{\rightarrow}}
\newcommand{\bred}{\rightarrow_{\!\beta}}
\newcommand{\bredfty}{{\bred^{\infty}}}
\newcommand{\hred}{\rightarrow_{\mathtt{h}}}
\newcommand{\hhred}{\rightarrow_{\mathtt{hh}}}
\newcommand{\lred}{\rightarrow_{\ell}}
\newcommand{\breda}[1]{\stackrel{#1}{\rightarrow}_{\!\beta}}
\newcommand{\beq}{\equiv_{\!\beta}}
\newcommand{\lx}{\lam x}
\newcommand{\ly}{\lam y}
\newcommand{\lz}{\lam z}
\newcommand{\lxrs}{(\lx.r)s}
\newcommand{\rsx}{r\subx{s}}
\newcommand{\ems}{\mult{\,}}
\newcommand{\est}{(\,)}
\newcommand{\al}{\alpha}
\newcommand{\gam}{\gamma}
\newcommand{\Gam}{\Gamma}
\newcommand{\Del}{\Delta}
\newcommand{\Deli}{\Del_i}
\newcommand{\Delk}{\Del_k}
\newcommand{\sig}{\sigma}
\newcommand{\sigi}{\sig_i}
\newcommand{\sigk}{\sig_k}
\newcommand{\epsi}{\varepsilon}
\newcommand{\fom}{f^{\infty}}
\newcommand{\rhoom}{\mult{\rho}_\om}
\newcommand{\arewa}{\mtv\rew \tv}
\newcommand{\marewa}{\mult{\arewa}}
\newcommand{\erewa}{\emul\rew \tv}
\newcommand{\merewa}{\mult{\erewa}}
\newcommand{\om}{\omega}
\newcommand{\Om}{\Omega}
\newcommand{\som}{{^{\om}}}
\newcommand{\arob}{\symbol{64}}
\newcommand{\hearts}{\heartsuit}
\newcommand{\clubs}{\clubsuit}
\newcommand{\sheart}{{\ssst \hearts}}
\newcommand{\sclub}{{\ssst \clubs}}
\newcommand{\TermV}{\mathscr{V}}
\newcommand{\TypeV}{\mathscr{O}}
\newcommand{\ju}[2]{#1 \vdash #2}
\newcommand{\hPi}{\hat{\Pi}}
\newcommand{\hPik}[1]{\hat{\Pi}_{_#1}}
\newcommand{\mtv}{\mult{\tv}}
\newcommand{\mrho}{\mult{\rho}}
\newcommand{\msig}{\mult{\sig}}
\newcommand{\mtau}{\mult{\tau}}
\newcommand{\msigi}{\mult{\sigi}_{\iI}}
\newcommand{\sSk}{(S_k)_{\kK}}
\newcommand{\sSpkp}{(S'_k)_{\kK'}}
\newcommand{\kSk}{(k\cdot S_k)}
\newcommand{\sSqk}[1]{(S^{#1}_k)_{\kK(#1)}}
\newcommand{\sSqkK}[1]{(S^{#1}_k)_{\kK(#1)}}
\newcommand{\Delf}{\Del_f}
\newcommand{\cu}{\mathtt{Y}}
\newcommand{\cuf}{\cu_f}
\newcommand{\zhnfo}{x\,t_1\ldots t_q}
\newcommand{\hnfo}{\lx_1\ldots x_p.\zhnfo}
\newcommand{\stacktq}{t_1\ldots t_q}
\newcommand{\lxrsstack}{\lxrs\,\stacktq}
\newcommand{\hreducibleo}{\lx_1\ldots x_p.\lxrsstack}
\newcommand{\Pex}{P_{\mathtt{ex}}}
\newcommand{\Piex}{\Pi_{\mathtt{ex}}}
\newcommand{\Sex}{S_{\mathtt{ex}}}
\newcommand{\sigex}{\sig_{\mathtt{ex}}}
\newcommand{\supf}{\,\!^f\!}
\newcommand{\supo}{\,\!^0\!}
\newcommand{\Types}{\mathtt{Typ}}
\newcommand{\STypes}{\mathtt{STypes}}
\newcommand{\Deriv}{\mathtt{Deriv}}
\newcommand{\DerivR}{\Deriv_\scrR}
\newcommand{\Axl}[1]{\mathtt{Ax}^{\lam}(#1)}
\newcommand{\AxPl}[1]{\mathtt{Ax}^{\lam}_P(#1)}
\newcommand{\Trl}[1]{\mathtt{Tr}^{\lam}(#1)}
\newcommand{\TrPl}[1]{\mathtt{Tr}^{\lam}_P(#1)}
\newcommand{\Rep}{\mathtt{Rep}} 
\newcommand{\RepPb}{\Rep_P(b)}
\newcommand{\RepPpb}{\Rep_{P'}(b)}
\newcommand{\leqfty}{\leqslant_\infty}
\newcommand{\init}{\mathtt{init}}
\newcommand{\dom}[1]{\ttdom(#1)}
\newcommand{\codom}[1]{\ttcodom(#1)}
\newcommand{\Ax}{\ttAx}
\newcommand{\Hd}{\mathtt{Hd}}
\newcommand{\Tl}{\mathtt{Tl}} 
\newcommand{\pos}[1]{\ttpos(#1)}
\newcommand{\Axa}{\Ax_{a}}
\newcommand{\AxP}{\Ax^P}
\newcommand{\AxPa}{\AxP_a}
\newcommand{\tr}[1]{ \tttr(#1)}  
\newcommand{\trP}[1]{\tttr^P(#1)}  
\newcommand{\Rt}{\mathtt{Rt}}
\newcommand{\Red}{\mathtt{Red}}
\newcommand{\Exp}{\mathtt{Exp}}
\newcommand{\Res}{\mathtt{Res}}
\newcommand{\rk}[1]{\mathtt{rk}(#1)}
\renewcommand{\deg}[1]{\ttdeg(#1)}
\newcommand{\cadout}[1]{\mathtt{cr}_{\mathtt{out}}(#1)}
\newcommand{\cadin}[1]{\mathtt{cr}_{\mathtt{in}}(#1)}
\newcommand{\cop}{\mathtt{cop}}
\newcommand{\cip}{\mathtt{cip}}
\newcommand{\pf}{\mathtt{pf}}
\newcommand{\rewfty}{\rew^{\infty}}
\newcommand{\Lamfty}{\Lam^{\infty}}
\newcommand{\Lamuuu}{\Lam^{111}}
\newcommand{\Lamzzu}{\Lam^{001}}
\newcommand{\Lamuzu}{\Lam^{101}}
\newcommand{\subb}[2]{[#1/#2]}
\newcommand{\subx}[1]{\subb{#1}{x}}
\newcommand{\subux}{\subb{u}{x}}
\newcommand{\tux}{t\subux}
\newcommand{\cd}{\mathtt{cd}}
\newcommand{\rdeg}{\mathtt{clev}}
\newcommand{\p}{\mathtt{p}}
\newcommand{\Approx}[1]{\mathtt{Approx}(#1)}
\newcommand{\ax}{\ttax}
\newcommand{\abs}{\ttabs}
\newcommand{\app}{\ttapp}
\newcommand{\Rst}{\ttR}
\newcommand{\ArgTr}{\mathtt{ArgTr}} 
\newcommand{\AT}{\mathtt{AT}}
\newcommand{\ArgPos}{\mathtt{ArgPos}}
\newcommand{\Cal}{\mathtt{Call}}
\newcommand{\QRes}{\mathtt{QRes}}
\newcommand{\red}[1]{\textcolor{red}{#1}}
\newcommand{\blue}[1]{\textcolor{blue}{#1}}
\newcommand{\violet}[1]{\textcolor{violet}{#1}}
\definecolor{greenone}{RGB}{0,200,140}
\definecolor{darkred}{RGB}{220,00,00}
\newcommand{\rewb}[1]{\stackrel{ #1}{\rightarrow}}
\newcommand{\code}[1]{\lfloor #1 \rfloor}
\newcommand{\ad}[1]{\mathtt{ad}(#1)}
\newcommand{\adr}[1]{\mathtt{adr}(#1)}
\newcommand{\cad}[1]{\mathtt{cr}(#1)}
\newcommand{\BT}[1]{\ttBT(#1)}
\newcommand{\trck}[1]{\,\red{[#1]}}
\newcommand{\posPr}[1]{~ \red{\langle#1 \rangle}}
\newcommand{\fA}{\supf A}
\newcommand{\fP}{\supf P}
\newcommand{\fPi}{\supf \Pi}
\newcommand{\fU}{\supf U}
\newcommand{\oB}{\supo B}
\newcommand{\Appfty}[1]{\mathtt{Approx}_{\infty}(#1)}
\newcommand{\rstra}{\rstr{a}}
\newcommand{\rstrb}{\rstr{b}}
\newcommand{\trb}{t\rstrb}
\newcommand{\tral}{t\rstr{\al}}
\newcommand{\tprb}{t'\rstrb}
\newcommand{\tra}{t\rstra}
\newcommand{\tras}{t\rstr{a_*}}
\newcommand{\tri}{\rhd}
\newcommand{\ra}{\mathring{a}}
\newcommand{\trra}{t\rstr{\ra}}
\newcommand{\rA}{\mathring{A}}
\newcommand{\rT}{\mathring{\ttT}}
\newcommand{\asc}{\mathtt{asc}}
\newcommand{\Asc}[1]{\mathtt{Asc}(#1)}
\newcommand{\twoarewa}{(2\cdot  \tv)  \rew \tv}
\newcommand{\juGtB}{\ju{\Gam}{t:B}}
\newcommand{\juGtpB}{\ju{\Gam}{t':B}}
\newcommand{\juRax}[2]{
\ju{#1:\mult{#2}}{#1:#2}
  }
\newcommand{\axxRo}[1]{\juRax{x}{#1}}
\newcommand{\juaxtt}{\axxRo{\tau}}
\newcommand{\juGtt}{\ju{\Gam}{t:\tau}}
\newcommand{\juGtpt}{\ju{\Gam}{t':\tau}}
\newcommand{\jufara}{\ju{f:\mult{\arewa}}{f:\arewa}}
\newcommand{\jufera}{\ju{f:\mult{\erewa}}{f:\erewa}}
\newcommand{\fara}{f:\arewa}
\newcommand{\juxkT}{\ju{x:(k\cdot T)}{x:T}}
\newcommand{\juCtt}{\ju{C}{t:T}}
\newcommand{\juCttsh}{\ju{C\!}{\!t\!:\!T}}
\newcommand{\juCtpt}{\ju{C}{t':T}}
\newcommand{\ttsupp}{\mathtt{supp}}
\newcommand{\ttbisupp}{\mathtt{bisupp}}
\newcommand{\supp}[1]{\ttsupp(#1)}
\newcommand{\bisupp}[1]{{\ttbisupp(#1)}}
\newcommand{\bisuppR}[1]{\ttbisupp_{\ttR}(#1)}
\newcommand{\supps}[1]{\ttsupp_{*}(#1)}
\newcommand{\eqr}{\equiv_{\scrR}}
\newcommand{\ttsz}{\mathtt{sz}}
\newcommand{\lra}{\leftrightarrow}             
\newcommand{\sz}[1]{\ttsz(#1)}
\newcommand{\fv}[1]{\mathtt{fv}(#1)}
\newcommand{\sep}{\hspace*{0.5cm}}   
\newcommand{\msep}{\hspace*{0.2cm}}  
\newcommand{\Dels}{\Del_{*}}
\newcommand{\Id}{\mathtt{I}}   
\newcommand{\pcons}{\mathtt{ptyp}}
\newcommand{\NFfty}{$\text{NF}_{\infty}$}
\newcommand{\prefleq}{\leqs}
\newcommand{\ttShp}{\ttS_{\mathtt{hp}}}
\newcommand{\Perm}[1]{\mathfrak{S}_{#1}}
\newcommand{\tthp}{\mathtt{hp}}
\newcommand{\psym}{h}
\newcommand{\hpt}{h} 
\newcommand{\HP}{\mathtt{HP}}
\newcommand{\PPP}{\mathtt{PPP}}
\newenvironment{proofsketch}
        {\bgroup \noindent \textit{Proof sketch.}}{\hfill \qedsymbol \egroup\medskip}
\newcommand{\ndrth}[4]{\node[draw] (#1) at (#2,#3) [thick,rounded corners=#4pt]
}
\newcommand{\ndrvth}[4]{\node[draw] (#1) at (#2,#3) [very thick,rounded corners=#4pt]
}
\newcommand{\slbox}[2]{{\begin{minipage}{#1cm}\textcolor{black}{#2}\end{minipage}} }
\newcommand{\btikz}{\bcen\begin{tikzpicture}}
\newcommand{\etikz}{\end{tikzpicture}\ecen}
\newcommand{\bltikz}{\begin{tikzpicture}}
\newcommand{\eltikz}{\end{tikzpicture}}
\newcommand{\trans}[3]{ 
  \begin{scope}[xshift=#1cm,yshift=#2cm]
     #3
  \end{scope}
}
\newcommand{\transh}[2]{ 
  \begin{scope}[xshift=#1cm]
     #2
  \end{scope}
}
\newcommand{\transv}[2]{ 
  \begin{scope}[yshift=#1cm]
     #2
  \end{scope}
}
\newcommand{\drawlabnode}[3]{  
     \trans{ #1 }{ #2 }{ 
       \draw (0,0) node {\small #3 };
       \draw (0,0) circle (0.23);
     }
}
\newcommand{\drawarob}[2]{   
  \trans{#1}{#2}{
       \draw (0,0) node {\small $\arob$};
       \draw (0,0) circle (0.23);
    }
}
\newcommand{\drawf}[2]{      
  \trans{#1}{#2}{  
    \draw (0,0) node {\small $f$};
    \draw (0,0) circle (0.23);
  }
}
\newcommand{\drawtri}[3]{   
  \trans{#1}{#2}{  
   \draw (0,0) --++ (0.55,0.9) --++ (-1.1,0) -- cycle ;
    \draw (0.07,0.6) node{\small#3} ; 
  }
}
\newcommand{\drawsmalltriin}[3]{   
  \trans{#1}{#2}{  
    \draw (0,0) --++ (0.35,0.65) --++ (-0.7,0) --++ (0.35,-0.65) ;
    \draw (0,0.38) node{\fnsz #3} ; 
  }
}
\newcommand{\drawlefttail}[2]{   
  \trans{#1}{#2}{
      \draw (0,0) --++ (-0.13,-0.18);
    }
  }
\newcommand{\drawrighttail}[2]{   
  \trans{#1}{#2}{
      \draw (0,0) --++ (0.13,-0.18);
    }
  }
\newcommand{\drawcuf}[2]{    
   \drawtri{#1}{#2}{\small $\cuf$}   
}
\newcommand{\blockunary}[3]{ 
  \drawlabnode{#1}{#2}{#3}
  \trans{#1}{#2}{
  \draw (0,0.23) --++ (0,0.44);
}
  }
\newcommand{\blockfacuf}[2]{   
  \trans{#1}{#2}{  
    \drawarob{0}{0}
    \drawf{-0.65}{0.9}
    \drawcuf{0.65}{0.9}
    \draw (-0.12,0.18) -- (-0.53,0.72); 
    \draw (0.12,0.18) -- (0.65,0.9);
  }
}
\newcommand{\blockfa}[2]{  
  \trans{#1}{#2}{  
    \drawarob{0}{0}
    \drawf{-0.65}{0.9}
    \draw (-0.12,0.18) -- (-0.53,0.72); 
    \draw (0.12,0.18) -- (0.53,0.72);
  }
}
\newcommand{\blocka}[2]{  
  \trans{#1}{#2}{  
    \drawarob{0}{0}
    \draw (-0.12,0.18) -- (-0.53,0.72); 
    \draw (0.12,0.18) -- (0.53,0.72);
  }
}
\newcommand{\blockfabis}[2]{
\trans{#1}{#2}{
    \draw (0.18,0.18) -- (0.72,0.72);  
    \draw (-0.9,0.9) node{$f$} ;
    \draw (-0.9,0.9) circle (0.25) ;
    \draw (-0.72,0.72) -- (-0.18,0.18) ;
    \draw (0,0) circle (0.25) ;
    \draw (0,0) node{$\arob$} ; }
  }
\newcommand{\blockcufbis}[2]{
\trans{#1}{#2}{
  \draw (0,0) -- (0.9,1.4) --++ (-1.8,0) -- cycle ;
  \draw (0,0.9) node{$\cuf$} ;
  \draw (-0.18,-0.18) -- (0,0) ;}
  }
\newcommand{\inputarewa}[2]{
  \trans{#1}{#2}{
 \draw (-0.9,0.7) node{$\red{\arewa}$}; 
   \red{\draw [>=stealth, ->] (-0.7,0.5)--(-0.25,0.25) ; }    
}}
\newcommand{\inputpharewa}[2]{
  \trans{#1}{#2}{
 \draw (-0.9,0.7) node{$\red{\mult{\phantom{\tv}}\rew\tv}$}; 
   \red{\draw [>=stealth, ->] (-0.7,0.5)--(-0.25,0.25) ;     
}}}
\newcommand{\outputtv}[2]{
  \trans{#1}{#2}{
\red{\draw [>=stealth, -> ] (0.3,-0.15)-- (0.75,-0.35) ; 
       \draw (0.95,-0.45) node{$\tv$} ;
    }
}}
\newcommand{\techrep}[1]{#1}
\begin{document}
\title{Sequences and Infinitary Intersection Types }

\author[P. Vial]{Pierre Vial}
\address{Inria Paris-Saclay, France}
\email{pierre.vial@inria.fr}
\keywords{intersection types, infinitary lambda-calculus, coinduction, non-idempotent intersection, sequence types, B\"ohm trees}

\maketitle

\begin{abstract}
 We introduce a new representation of \textbf{non-idempotent} intersection types, using \textbf{sequences} (families indexed with natural numbers) instead of lists or multisets. 
This allows scaling up intersection type theory to the infinitary $\lam$-calculus. 
\chhp{We thus characterize  \textbf{hereditary head normalization} (Klop's Problem) and we give a \textit{unique} type to all \textbf{hereditary permutators} (TLCA Problem \#20), which is not possible in a finite system.}{We thus characterize \textbf{hereditary head normalization}, which gives a positive answer to a question known as \textbf{Klop's Problem}.}  
On our way, we use non-idempotent intersection to retrieve some well-known results on infinitary terms. 

\end{abstract}












\section{Presentation}

Intersection types, introduceed by Coppo and Dezani~\cite{CoppoD78,CoppoD80,CoppoDV81}, are a powerful tool to describe and characterize the semantic properties of programs. 
For instance, compared to simple types, they type more terms, such as $\Del:=\lx.x\,x$, the self-application, and they  allow building models~\cite{BarendregtCoppoDezani83,AlessiBD06,PaoliniPR17} for the $\lam$-calculus. They also allow capturing observational equivalence (see~\cite{ManzonettoHDR17} for instance).
Various sets of $\lam$-terms have been characterized with (a form of) typability within intersection type systems. 

The question was naturally raised to know whether some sets of programs defined in terms of infinitary behaviors (for instance, as captured by \Bohm\ trees), such as the set of hereditary permutators~(TLCA Problem \#~20) 
could be characterized in such a system.
In particular, this led Klop and Dezani to formulate the following question~\cite{Tatsuta08}: can the set of  \textbf{hereditary head normalizing terms} (\ie the terms whose \Bohm\ tree does not contain $\bot$) be characterized by means of an intersection type system? This question was known as \textbf{Klop's Problem}.  
One may perhaps undestand better the definition of hereditary head normalizing terms by interpreting them as a generalization of weakly normalizing terms (terms which have a $\beta$-normal form) 
in that, $t$ is hereditary head  normalizing if, at any fixed depth $d$, $t$ can be reduced to a normal form up to subterms which occur below depth $d$. Such a term may not have a normal form: this occurs whenever redexes cannot be completely eliminated by reducing $t$, even if the redexes (of the reducts) of $t$ keep on increasing in depth. For instance, $\cuf:=\Delf\,\Delf$ (with $\Delf=\lx.f(x\,x)$) a variant of Curry fixpoint that will be the case study of this article, satisfies $\cuf \hred f(\cuf)$ (it operational semantics is given in Fig.~\ref{fig:cuf-red}). Thus, after $n$ steps of reductions, $\cuf$ gives $f^n(\cuf)$ (shortening $f(f(\ldots(\cuf))))$ with $n$ occurrences of $n$), which, on one hand, shows that $\cuf$ is not weakly normalizing, since its reducts all have a (unique) redex. On another hand, the only redex of $f^n(\cuf)$ occurs at depth $n$, which means that, above depth $n$, $f^n(\cuf)$ is an normal form. Asymptotically, when $n$ converges towards $\infty$, the computation of $\cuf$ outputs\label{disc:cuf-beginning} an infinite $\lam$-tree (corresponding to a infinite application of $f$, which we call $\fom$) which does not contain any redex: in other words, an infinite normal form. And this is the core reason why hereditary head normalization pertains to an infinitary behavior.

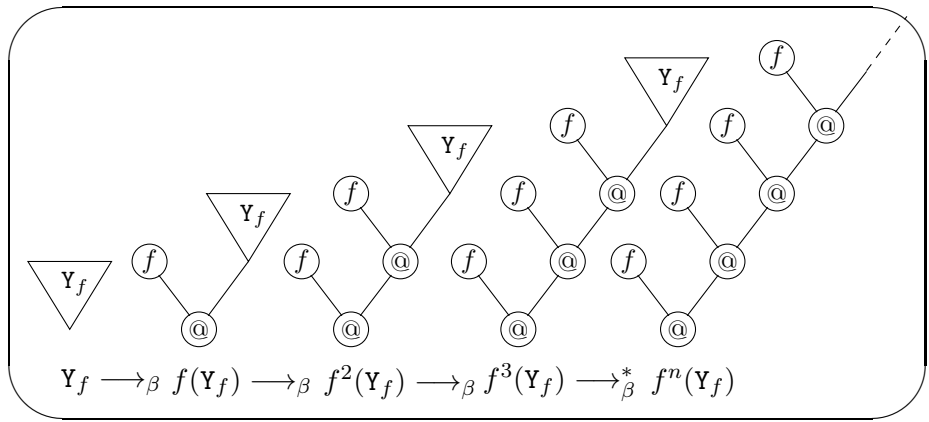
\begin{figure}[!h]
  \begin{center}
\ovalbox{
    \begin{tikzpicture}
\drawcuf{0}{0} 

\trans{1.7}{0}{
  \blockfacuf{0}{0}
}

\trans{3.7}{0}{
  \blockfa{0}{0}
  \blockfacuf{0.65}{0.9}
}

\trans{5.9}{0}{
  \blockfa{0}{0}
  \blockfa{0.65}{0.9}
  \blockfacuf{1.3}{1.8}
}

\trans{8}{0}{
  \blockfa{0}{0}
  \blockfa{0.65}{0.9}
  \blockfa{1.3}{1.8}
  \blockfa{1.95}{2.7}
  \draw [dashed] (2.48,3.43) --++ (0.53,0.72);
}

\draw (-0.25,-0.7) node [right]{$\cuf \longrightarrow_\beta$};
\draw (1.2,-0.7) node [right] {$f(\cuf) \longrightarrow_\beta$};
\draw (3.2,-0.7) node [right] {$f^2(\cuf) \longrightarrow_\beta$};
\draw (5.3,-0.7) node [right] {$f^3(\cuf) \longrightarrow^*_\beta$};
\draw (8.2,-0.7) node {$f^n(\cuf)$};
\end{tikzpicture}}
  \caption{$\cuf$, a Hereditary Normalizing Term}
  \label{fig:cuf-red}
\end{center}
\end{figure}

However, Tatsuta~\cite{Tatsuta08,Tatsuta08b} proved in 2008 that ``infinitary behaviors'' could not be characterized in an inductive, finitary setting. In particular, the answer to Klop's Problem is negative in the inductive case, as Tatsuta showed by:
\begin{itemize}
\item making the observation that, in an inductive type system, the set of typable terms is recursively enumerable.
\item proving that the set of hereditary head normalizing terms\chhp{ and that of hereditary permutator are}{is} \textit{not} recursively enumerable. 
\end{itemize}
In turn, this result raised the \textbf{question of whether such a characterization could be obtained by means of an \textit{infinitary/coinductive intersection type system}}.

In this article, we prove that it is possible and we give a type-theoreric characterization of hereditary head normalization using \textbf{non-idempotent intersection}. Non-idempotent intersection type systems were introduced in 1980 by Coppo, Dezani and Veneri~\cite{CoppoDV80a} before being rediscovered by Gardner~\cite{Gardner94} and de Carvalho in his PhD~\cite{Carvalho07,Carvalho18}, which emphasized they quantitative aspect. They have been since extensively studied (read~\cite{BucciarelliKV17} for a survey).
Non-idempotent intersection $\wdg$ does not satisfy the equality $A\wdg A = A$ as the set-theoretic intersection does ($X\cap X=X$).  
In its simplest form of this system, non-intersection is represented with \textbf{multisets} (the multiset $\mult{A,B,A}$ stands for the intersection $A\wdg B\wdg A$) and gives a typing system with only three rules, which is recalled in \Sec~\ref{ss:system-Ro-Klop}.
Then, reduction inside a derivation almost boils down to \textit{moving} parts of the initial derivation (see Fig.~\ref{fig:non-deter-srRo} and discussion in \Sec~\ref{ss:non-determinism}) and the proof of soundness (if a term is typable, then it is head normalizing) becomes trivial, since reduction decreases a positive integral measure on derivations.

To solve Klop's Problem, we will use a quantitative, resource-aware type system. 
\igintro{In the simplest form of this system, intersection is represented with multisets, as explained at the end of \Sec~\ref{ss:syntax-direction-discuss-intro}.  }
However, it turns out that a direct coinductive adaptation of \textit{infinite multisets} cannot work for two major reasons:
\begin{itemize}
\item It would lead to the possibility of typing some unsolvable terms\footnote{We actually proved that the coinductive version of $\scrRo$ types every term can typed in $\scrR$, the coinductive version of $\scrRo$~\cite{VialLICS18}}, like $\Om$. That is why a validity criterion is needed to discard irrelevant derivations, as in other infinitary frameworks \cite{santocanale01brics}. This validity criterion relies on the idea of \textbf{approximability} (every valid infinite derivation must be, in some sense, a limit of finite derivations), which we informally develop in \Sec~\ref{ss:sequence-tracking-klop-intro}, \ref{ss:typ-inf-nf-informal}, \ref{ss:degenerate}  and formally define in \Sec~\ref{s:approx}.
\item Moreover, as it turns out, multisets are not fit to formally express such a notion, because they do not allow pointing,  meaning for instance that we cannot distinguish the two occurrences of $\sig$ in $\mult{\sig,\tau,\sig}$. 
This motivates the need for rigid constructions: multisets
are then (coinductively) replaced by \textbf{sequences}, \ie families indexed by (non necessarily consecutive) natural numbers called \textbf{tracks} (\eg $(2\cdot \sig,3\cdot \tau,9\cdot \sig)$ instead of $\mult{\sig,\tau,\sig}$). We explain all this in \Sec~\ref{ss:sequence-tracking-klop-intro} below.
\end{itemize}

This leads us to define a type system that we call system $\ttS$, in which intersection is  represented by sequences of types (intersection is said to be sequential). Tracks, which act as \textit{identifiers}, constitute the main feature of system $\ttS$ presented here. 
With those pointers, any type can be tracked  through the rules of a whole typing derivation, and the notionsof approximation and of limit of a family of derivation can be defined.

 This contribution was presented in~\cite{VialLICS17} with some proof sketches.  Here, not only we provide full proofs (which are detailled in appendices), but in the core of the paper, we detail a crucial \textbf{argument of soundness} (if a term is suitably typable, then it has a finite or infinite normal form). We also explain in more details how infinite normal forms can be suitably typed.

 The expert reader on intersection types may be surprised by the following elements: while the subject reduction property (in its usual form) remains simple to prove,  soundness (typability implies normalization) and (infinitary) subject expansion are quite more technical than in the usual, finite non-idempotent case.
 The use of a validity criterion constrains us to prove subject reduction while considering residuals (\Sec~\ref{ss:one-step-sr-se} and \ref{ss:sr-proof}). Approximability (derivation are limits of finite ones) raises a problem with the left-hand side of judgments, which can be overcome by showing that they can be, in some sense, ignored (\Sec~\ref{ss:equinecessity-main}). Typing normal forms \textit{soundly} is also technical, because we must check that it can be done while respecting approximability (\Sec~\ref{ss:support-candidates} to \ref{ss:nf-approx}).
 At last, infinitary subject expansion is a subtle process, based on cutting infinite derivations into finite ones, proceeding to a substitution in a reduction path of infinite length followed by a finite number of expansion stepsn, and then considering a suitable join of the underlying constructions. It is incrementally explained in \Sec~\ref{ss:klop-prob-hp-statement}, \ref{ss:typ-inf-nf-informal} and \ref{s:infty-expansion}.

\subsection*{Structure of the paper.} This article is organized as follows: in \Sec~\ref{s:intro-klop}, we recall the mechanisms of intersection types on which we will base our approach and we present the main difficulties we shall have to overcome, along with a sketch of the tools we develop in this article. In \Sec~\ref{s:finite-inter-infinite-types}, we formally recall the \textit{infinitary} $\lam$-calculus and the \textit{finitary} non-idempotent intersection type system of Gardner-de Carvalho (system $\scrRo$) which use multisets to represent intersection.  \Sec~\ref{ss:system-Ro-Klop} is partly informal: we explain how we will be able to perform later in this article infinitary subject expansion (bring a typing backwards along an infinite sequence of reduction steps) and why this actually cannot be done with multiset intersection. In \Sec~\ref{s:rigid}, we formally define system $\ttS$ and sequence types. In \Sec~\ref{s:dynamics}, we study the \textit{one-step} dynamics of system $\ttS$: we define residuals and we prove a deterministic subject reduction property and the subject expansion property 
Approximability is defined in \Sec~\ref{s:approx} and the infinitary subject reduction property and the normalization of typable terms are proved. \Sec~\ref{s:normal-forms} is devoted to typing infinite normal forms and proving the infinitary subject expansion property. This concludes the study of Klop's Problem and of hereditary head normalization.\ighp{ In the short \Sec~\ref{s:charac-hp-S}, we use the previous contributions to characterize hereditary permutators in system $\ttS$. In the last section (\Sec~\ref{s:characterizing-hp-with-a-unique-type}), we define an extension of system $\ttS$, system $\ttShp$, that we prove to be sound and complete and to characterize hereditary permutators with a unique type.}

This brings us to the main contribution of this article, Theorem~\ref{th:charac-WN-S}: from \Sec~\ref{ss:typ-inf-nf-informal} to \Sec~\ref{s:normal-forms}, we solve Klop's Problem: we give a type theoretic characterization of hereditary head normalization. 
A general presentation of the content of the appendices, in which we give technical arguments that may be missing, is given on p.~\pageref{disc:app-pres}.

\section{Introduction}
\label{s:intro-klop}

In this section, we first recall basic notions on normalization, then we present some aspects of intersection types. We then devote our attention to Klop's Problem and the technical difficulties we had to overcome and the proof techniques we had to introduce to export intersection type theories to an infinitary setting and to characterize hereditary head normalization. Last, we present two main tools that we use in this article: sequences and approximations/truncations of derivations.

\ignore{
In this section, we give intuitions on the main notions of this article: the infinitary $\lam$-calculus (\Sec~\ref{ss:inf-norm-bohm-intro}), how intersection type systems work (\Sec~\ref{ss:intersection-overview-klop-intro} and \ref{ss:inter-from-syntax-intro-klop}), including the distinction between idempotency \vs non-idempotency.  Klop's Problem and TLCA Problem\# 20 in \Sec~\ref{ss:klop-prob-hp-statement}. In \Sec~\ref{ss:tools-and-diff-klop}, we talk about the two major problems that we will meet while exporting intersection type methods to an infinitary framework.  Some elements to overcome these difficulties, namely \textit{tracking} and \textit{approximability} are presented in \Sec~\ref{ss:sequence-tracking-klop-intro} and \ref{ss:approx-intro-klop} .\\

\noindent \textbf{Takeaways.} 
In \Sec~\ref{ss:intro-klop-norm-and-red-strat} and~\ref{ss:inf-norm-bohm-intro}, the reader will be recalled:
\begin{itemize} 
\item The slight difference between the termination of an evaluation
 strategy (\eg the head reduction strategy) on a term $t$ and the existence of some reduction path of an unspecified shape from $t$ to a normal form (\eg a head normal form)
\item The notion of asymptotic convergence to an infinite normal form and the computation of \Bohm\ trees.
\end{itemize}
After reading \Sec~\ref{ss:intersection-overview-klop-intro} and \ref{ss:inter-from-syntax-intro-klop}, one should have an understanding of:
\begin{itemize}
\item The way that a characterization property in a given intersection type system is usually proved (the circular proof scheme, Fig.~\ref{fig:its-fundamental-diag}).
\item How the use of an ``empty''  type allows partial typing.
\item Why intersection types allows certifying strategies.
\item The syntactic mechanisms of intersection, in particular, how the typing of all normal forms and subject expansion are achieved.
\item Non-idempotent intersection and non-duplication (linearity). 
\item Strict intersection, relevance and syntax-direction.
\end{itemize}}




\subsection{Normalization}
\label{s:eval-strat-intro-klop}

\subsection*{Evaluation strategies}
We consider the following restrictions of $\beta$-reduction:
\begin{itemize}
\item The \textbf{head reduction}, denoted $\hred$, consists in reducing the \textbf{head redex} $\lxrs$ of a \textbf{head reducible} term $\hreducibleo$ (notice that any term that is not a HNF is head reducible). A term $t$ is \textbf{Head Normalizing (HN)} if the head reduction terminates on $t$ (and necessarily outputs a head normal form). 
\item The \textbf{leftmost-outermost reduction}, denoted $\lred$, consists in reducing the \textbf{leftmost-outermost redex} of a $\beta$-reducible term  $t$, \ie the one  whose $\lx$ is the leftmost one in $t$ (seen as string of characters). A term $t$ is \textbf{Weakly Normalizing (WN)} if leftmost-outermost reduction terminates on $t$ (and necessarily outputs a full normal form).
\item The \textbf{hereditary head reduction strategy} (formally defined in \Sec~\ref{ss:comput-bohm-trees-hp}) also aims at  computing a full normal form: for a given term $t$, it starts by applying the head reduction strategy on $t$ until it stops (otherwise, it loops). In that case, the current term is of the form $\hnfo$. Then one applies the head reduction strategy on each head argument $t_1,\ldots,\;t_q$ until they are all in head normal forms (if it does not loop for one of the $t_i$). Then one goes on with head reducing each new argument at constant applicative depth and so on. The strategy stops if a $\beta$-normal form is reached. Hereditary head reduction may be seen as a more balanced variant of leftmost-outermost reduction. Moreover, 
when  hereditary head reduction strategy does not stop but keeps on reducing deeper and deeper, it may be seen as asymptotically computing 
an infinite term, as it was recalled above. 
\end{itemize}

\subsection*{Having a normal form}
\chhp{
In rewriting theory~\cite{TeReSe}, \textit{normalization} also describes other forms of termination. The two important ones in this article are the following: }{Normalization may also be associated with reduction paths of unspecified forms (without referring to evaluation strategies). In this article, we mainly consider the two predicates:} 
\begin{itemize}
\item A $\lam$-term $t$  \textbf{has a head normal form} if there is a reduction path from $t$ to a head normal form.
\item A term $\lam$-term $t$ \textbf{has a full normal form} if there is a reduction path from $t$ to a full normal form.
\end{itemize}
Notice that having a head normal form seems to be more general/less constrained than
head normalization: the former pertains to the existence of a reduction path (to a head normal form) of an \textit{unspecified} form whereas the head reduction  reduces terms in a \textit{deterministic} way. 
Thus, if the head reduction strategy  terminates on a term $t$, then $t$ clearly has a head normal form.\ighp{ Likewise, if the leftmost-outermost strategy terminates on $t$, then $t$ has a full normal form.}  
Moreover, it is well-known of that the converse of these two implications is true, \ie if a term $t$ has a head \ighp{(\resp a full) }normal form then the head reduction strategy \ighp{(\resp the leftmost-outermost reduction strategy) }terminates on $t$.\ighp{ However, these two converse implications arenon-trivial and involve standardization, as we recall in the following paragraph.}


\igintro{
\subsection*{The Standardization Theorem and Reduction Strategies} 
The standardization theorem, due to Curry and Feys~\cite{CurryF58}, roughly states that if $t\bred^* t'$, then there is a reduction path from $t$ to $t'$ that never reduces on the left-hand side of a \textit{residual} of a redex that has already been contracted  (a proof may be found in chapter 11 of \cite{Barendregt85} or chapter 3 of the electronic version of \cite{Krivine93}). The fact that the head reduction strategy is complete for head normalization is usually proved as a corollary of the standardization theorem (or a slightly weaker result known as  pseudo-standardization). However, (pseudo-)standardization is quite non-trivial: one needs to prove that the confluence of the $\lam$-calculus (Church-Rosser property), the finite developments theorem and then use involved arguments to prove that series of developments can be commuted.

Besides, it is easy to prove the equivalence between the assertions ``the head reduction strategy is complete for head normalization'' and ``the leftmost-outermost reduction strategy is complete for weak normalization'', but it requires the Church-Rosser property. Yet, as we will see in \Sec~\ref{ss:inf-norm-bohm-intro}, the infinitary $\lam$-calculus only satisfies a weak (but not less difficult) form of confluence.  
}

\igintro{

\subsection*{Confluence in the infinitary $\lam$-calculus} The infinitary $\lam$-calculus is not confluent, for instance, $t:=(\lx.I(x\,x))(\lx.I(x\,x))\bred^2 \Om$, which only reduces to itself. Moreover, $t\hred I\,t\bred I(I\,t)\hred I(I(I\,t))\ldots$, so that $t$ asymptotically reduces to $I^{\om}:=I(I(I\ldots))$, which also only reduces to itself.  Thus, the two reducts $\Om$ and $I^{\om}$ of $t$ do not have a common reduct.

However, the infinitary $\lam$-calculus satisfies a weak form of confluence: in the example above, observe that $t$, $\Om$ and $I^{\om}$ are not head normalizing. Intuitively, they are \textit{meaningless} terms and in the \Bohm\ trees formalism, they would be replaced by the constant $\bot$. This is how confluence is retrieved: the infinitary $\lam$-calculus has the Church-Rosser property up to the identification of any pair of unsolvable subterm. This is weaker than true confluence, but the various proofs of this fact is actually quite involved and spans on  dozen of pages~\cite{KennawayKSV97,Joachimski04,Bahr18}, contrary to the confluence of the finite $\lam$-calculus.}

\subsection{Intersection Types}
\label{ss:intersection-overview-klop-intro}

We recall and discuss here some aspects of intersection types that will interest us in this article. All these aspects are illustrated in system $\scrRo$ in the forthcoming \Sec~\ref{ss:system-Ro-Klop}.

Normalization (termination) is a \textbf{dynamic} property: it pertains to the reduction of $\lam$-terms, \ie their \textit{evaluation}. For instance, proving that a term $t$ is head normalizing consists (in principle) in  giving a sequence of $\beta$-reduction steps starting at $t$ and ending with a head normal form.

On another hand, typing is \textbf{static} in the following sense: in a given type system, finding a derivation typing  a term $t$ or checking that a given tree of typing judgments is a correct derivation is done without having to perform reduction steps but just by considering the structure of the tree and verifying that the typing rules are correctly applied.

One of the desirable features of type systems is that some of them \textit{ensure} normalization, \eg if a term $t$ is typable in Girard's system F, then it is strongly normalizing~\cite{Girard72,GiraFont} (\ie no infinite reduction path starts at $t$). Thus, type systems provide \textit{static} proofs of \textit{dynamic} properties. In some systems, when we have managed to type a program $t$, we do not need to execute $t$  to know that $t$ is terminating.  
However, usually, when typability is related to termination, \eg in polymorphic or dependent type systems, it is a \textit{sufficient} condition for normalization but not a \textit{necessary} one: many normalizing terms are not typable. For instance, $\Del=\lx.x\,x$ is not typable in the simply typed $\lam$-calculus, and  $(\lz.  \ly. y(z\,\Id)(z\,\ttK))(\lx. x\,x)$, although strongly normalizing, 
is not typable in system F. 

The basic idea behind intersection type systems is allowing assigning several types to variables or to terms. They often \textit{characterize} (and not only ensure) normalization properties (head, weak, weak head, strong\ldots): in an intersection type system, typability is usually equivalent with (a notion of) normalization. Since the $\lam$-calculus is  Turing-complete, this makes  type inference undecidable in most intersection type  systems.
As an exception to this rule, intersection types were famously used by Kobayashi and Ong~\cite{KobayashiO09} to prove that Monadic Second Order logic is decidable for higher-order recursion schemes.\\


\noindent \textbf{Normalization theorems in intersection type theory.} %
Intersection type systems often give interesting properties pertaining to reduction strategies. To illustrate this, let us imagine for a moment that we have an intersection type system $\calH$ characterizing head normalization, \ie that satisfies the following theorem (let us call it $T_{\text{HN}}$):
\bcen
``\textit{Theorem $T_{\text{HN}}$:} for every $\lam$-term $t$, $t$ is is typable in system $\calH$ iff $t$ is head normalizing ''
\ecen 
In practice, this characterization theorem is proved in two steps:
\begin{itemize}
\item[$\Rightarrow$] A proof of ``If $t$ is $\calH$-typable (typable in system $\calH$), then the head reduction \textit{strategy} terminates on $t$''.
\item[$\Leftarrow$] A proof of ``If there is a reduction path from $t$ to a HNF, then $t$ is $\calH$-typable''.
\end{itemize}
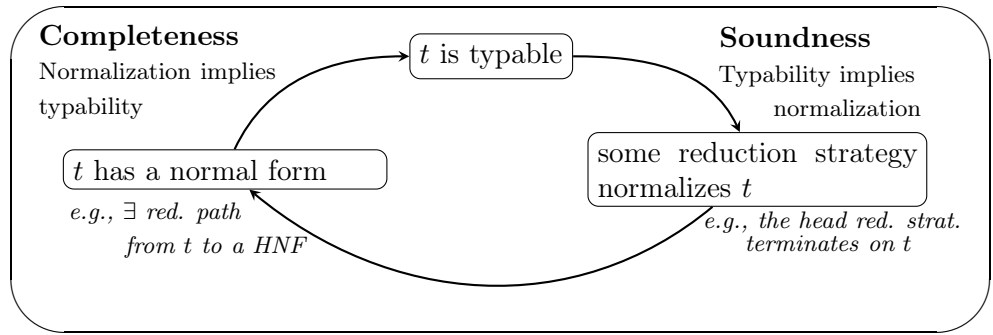
\begin{figure}
\ovalbox{
  \begin{tikzpicture}
\draw (4.5,1) node {\slbox{3}{\textbf{Soundness}\\
{\fnsz Typability implies\\\hspace*{0.6cm} normalization}}};
\draw (-3.5,1) node {\slbox{4.9}{\textbf{Completeness}\\
{\fnsz Normalization implies\\ typability}}};

    \node[draw] (RS) at (3.5,-0.3) [rounded corners=4pt] {
    \parbox{4.2cm}{some reduction strategy normalizes $t$}}; 
    \draw (2.65,-1) node [right] {\fnsz\textit{\eg the head red. strat.}};
    \draw (3.2,-1.28) node [right] {\fnsz\textit{terminates on $t$}};
    
  \node[draw] (TS) at (-3.5,-0.3) [rounded corners=4pt] {
    \parbox{4cm}{
    $t$ has a normal form}}; 
  \draw (-1.4,-1.1) node [left] {\slbox{4}{\fnsz\textit{\eg $\exists$ red. path\\\hspace*{0.6cm} from $t$ to a HNF}
      }};

    \node[draw] (TT) at (0,1.2) [rounded corners=4pt] {
      $t$ is typable};

    \draw [->,>=stealth,thick] (TT) to[out=0,in=115] (RS);
   \draw [->,>=stealth,thick] (RS) to[out=-140,in=-40] (TS);
   \draw [->,>=stealth,thick] (TS) to[out=65,in=180] (TT);

  \end{tikzpicture}
  }

\caption{Characterizing a notion of normalization with intersection types}
\label{fig:its-fundamental-diag}
\end{figure}

Observe that for $\Rightarrow$, we actually prove that a typable $t$ is terminating for the head reduction strategy, which is \textit{a priori} stronger than just the fact that $t$ has a head normal form\igintro{, as we saw in \Sec~\ref{ss:intro-klop-norm-and-red-strat}}. This proves that, in some sense, head reduction is \textit{complete} since it computes a head normal form when it exists. Since the termination of head reduction obviously implies having a head normal form, the implication $\Leftarrow$ gives us theorem $T_{\text{HN}}$ but also the following, non-trivial statement:
\bcen
\hspace*{-5cm} ``A term $t$ is head normalizing\\\hspace*{2.5cm} iff it is terminating for the head reduction strategy.''
\ecen
A remarkable aspect of this equivalence is that it is \textit{not} type-theoretic: it states that a certain reduction strategy is complete for a certain notion of normalization. But it was proved using types instead of syntactic techniques. The structure of this kind of proof is summarized in Fig.~\ref{fig:its-fundamental-diag}. The arrow on the right (typability implies normalization, or soundness) usually relies on subject reduction, whereas the arrow on the left (normalization implies typability, or completeness) relies on subject expansion (typing is stable under expansion).
To the best knowledge of the author, subject expansion is specific to intersection type system, unless it is hard-coded in the typing rules.
Indeed, to perform subject expansion (going from a typing of $\rsx$ to one of $\lxrs$), the variable $x$ can be assigned \textit{several} types and thus, be used as a placeholder for $s$, which can occur with several different types in $\rsx$.

\igintro{
\subsection*{Subject reduction and expansion} Now, let us explain more precisely how the implication $\Rightarrow$ and $\Leftarrow$ above are usually obtained.
\begin{itemize}
\item[$\Rightarrow$] Typability implies normalization:
\begin{itemize}
\item One proves \textbf{subject reduction}, meaning that typing is preserved under reduction (if $t\bred t'$ and $\juGtB$, then $\juGtpB$).
\item Using subject reduction and an additional argument (the nature and the difficulty of this additional argument is discussed below), one proves that some reduction strategy normalizes any typable term.
  \end{itemize}
\item[$\Leftarrow$] Proving that normalization implies typability is usually easier\footnote{In the inductive case. When considering infinite reduction paths, we will see that it is the main difficulty, \cf \Sec~\ref{ss:typ-inf-nf-informal}} and relies on two arguments:
\begin{itemize}
\item One proves \textbf{subject expansion}, meaning that typing is preserved under anti-reduction (if $t\bred t'$ and $\juGtpB$, then $\juGtB$).
\item One types the \textit{terminal states} (in our example,  head normal forms). Only an induction on the \textit{structure} of these terminal states is needed (see \Sec~\ref{ss:inter-from-syntax-intro-klop}). As we will see, this will be more or less true for \textit{infinite} normal forms as well.
\end{itemize}
\end{itemize}}%
\ighp{
This brings up several remarks:
\begin{itemize}
\item Typability implies normalization uses subject reduction along with an additional argument: this argument varies in nature depending on the type system.
\begin{itemize}
\item In the simply typed $\lam$-calculus, this is based upon the decrease of a suitable well-founded measure, usually a multiset containing the size of some types of the subterms in the subject of the derivation. This corresponds more or less to Gentzen's proof of cut-elimination~\cite{Gentzen34}. 
\item  In polymorphic type systems, one needs to use the reducibility candidates technique, due to Tait~\cite{Tait67} and extended by Girard to system F. This method sometimes requires dozens of pages.
\item In \textit{idempotent} intersection type system, one can use the reducibility technique or an argument \textit{à la Gentzen}~\cite{Valentini01}. 
\item Very interestingly, \textit{non-idempotent} intersection types provide the simplest possible proofs of normalization, because, in such a system, duplication of types is disallowed and derivations ``decrease'' in size at each reduction step,. As a consequence, reduction on typed terms must stop at some point. 
  \end{itemize}
\item Subject expansion does not hold in a type system without intersection: in particular, it is not satisfied in usual polymorphic or dependent type systems.
\end{itemize}

\begin{remark}
The proof scheme of Fig.~\ref{fig:its-fundamental-diag} applies to other notions of normalization (weak, weak head) but not to strong normalization, for which it needs to be adapted. Indeed, strong normalization is not preserved under anti-reduction, \eg $t':=y$ is strongly normalizing, $t:=(\lx.y)\Om\bred t'$ but $t$ is not strongly normalizing. The subject expansion property for intersection type systems characterizing strong normalization usually holds in a restricted form.
\end{remark}
}

\igintro{
\subsection{Intersection from the syntax perspective} 
\label{ss:inter-from-syntax-intro-klop}

Intersection types extend the syntax of the simply typed $\lam$-calculus by resorting to a new type constructor $\wdg$ (intersection). Naively, intersection would represent the set-theoretic intersection. Thus, $t:A\wdg B$ is derivable when $t:A$ and $t:B$ are derivable. For instance, since $\Id:A\rew A$ and $\Id:(A\rew B) \rew (A \rew B)$ are derivable, one has:
\bcen$\infer{\Id:A\rew A\hspace{1cm}\Id:(A\rew B) \rew (A \rew B)}{\Id: (A\rew A)\wdg (A\rew B)\rew (A\rew B)}\text{\small$\wdg$-intro}$\ecen

In turn, $\inter$ can be eliminated in the obvious way: if $t:A\wdg B$, we may assert $t:A$ as well as $t:B$. Note that intersection type systems also allow assigning several types to a variable in a context, \eg if $x$ is assigned $A\wdg (B\rew C)$ in a context, then $x$ may be used both as a term of type $A$ and as a term of type $B\rew C$.

Intuitively, intersection provides a finite form of polymorphism, \eg the type  $(A\rew A)\wdg  (A\rew B)\rew (A\rew B)$ can be thought as a double instance of the polymorphic type $\forall X.X\rew X$ (with $X=A$ and $X=A\rew B$). However, intersection is also less constrained than usual polymorphism \textit{à la} system F. For instance, the assignment $x:\tv \wdg (\tv' \rew \tv) \wdg (\tv \rew \tv\rew \tv')$ (with $\tv$ and $\tv'$ two distinct type variables) is sound, although the only polymorphic type that instantiates into  $\tv$,  $(\tv' \rew \tv)$ and $(\tv \rew \tv\rew \tv')$ is $\forall X.X$, \ie the type representing falsehood in system F.}

\igintro{ 
The unconstrained nature of $\wdg$ may give us intuitions on why normalization implies typability in an intersection type system:
\begin{itemize}
\item \textbf{All normal forms can be typed.} A simple structural induction allows typing every normal form.  Indeed, let $t=\hnfo$ be a normal form (so that $t_1,\ldots,t_q$ are also normal forms). Then, we can explain why a simple structural induction shows that $t$ is typable: assume that $t_1,\ldots,t_q$ are respectively typed with types $A_1,\ldots,A_q$ (induction hypothesis), then we type $t$ by assigning to the head variable of $t$ a suitable arrow type as follows:
$$ 
\infer[\abs^*]{
  \infer[\app]{
    \infer[\app^*]{
      \infer[\app]{\infer[\ax]{}{x:A_1\rew \ldots \rew A_q\rew B}
        \sep t_1:A_1     }{x\,t_1:A_2\rew \ldots \rew A_q \rew B}\sep t_2:A_2}
    {    x\,t_1\ldots t_{\!q\mathord{-}1}:A_q\rew B} \sep t_q:A_q}{
  \zhnfo:B}}{\hnfo:C_1\rew \ldots \rew C_p \rew B} $$
The star $*$ indicates several ($\geqs 0$) rules. 
For instance, let $t=(x\,(x\,y))\,(x\,\Id)$. Necessarily, $\Id$ has a type of the form $B\rew B$. We can then type $t$ as follows:
  \bcen $
  \infer{\infer{\infer{ }{x:C_1\rew C_2\rew C_3} \hspace{0.4cm}\infer{\infer{}{x:A\rew C_1}\hspace{0.3cm} \infer{\phd}{y:A}}{x\,y:C_1}   }{
    x\,(x\,y):C_2\rew C_3}
\hspace{0.5cm}
\infer{\infer{\phd}{x:(B\rew B)\rew C_2}\hspace{0.4cm}\stackrel{\vdots}{\Id:B\rew B}}{x\,\Id:C_2}}{(x\,(x\,y))\,(x\,\Id):C_3} 
$\ecen
Note we have assigned to $x$ the 3 types $A\rew C_1$ (to type $x\,y$), $(B\rew B)\rew C_2$ (to type $x\,\Id$) and $C_1\rew C_2\rew C_3$ (to type the whole term).\\
  
\item \textbf{Subject expansion holds.}
  Consider $\Del\,\Id\bred \Id\,\Id \bred \Id$ and say that the rightmost occurrence of $\Id$ has been typed with $A\rew A$ in the usual way. Then one can also type $\Id\,\Id$ with $A\rew A$ in the simply typed $\lam$-calculus by typing the occurrence of $\Id$ on the left-hand side of the application with $(A\rew A)\rew (A\rew A)$.
  $$
\infer{\Id:(A\rew A) \rew (A \rew A)\sep \Id:A\rew A}{\Id\,\Id:A\rew A}\app
  $$
  However, the term $(\lx.x\,x)\Id$ is not typable because $x$ should be assigned both the types $A\rew A$ and $A$ in $x\,x$, which is impossible and makes subject expansion fail in the simply typed $\lam$-calculus. In contrast, with the intersection operator, we can do so: $x\,x$ is typed with $A$ in the context $x:(A\rew A)\wdg A$, so that $\Del$ can be typed with $((A\rew A)\wdg A)\rew A$.
  $$
\infer{
\infer{  \infer{
    \infer{\phd}{x:(A\rew A)\rew (A\rew A)}\msep \infer{\phd}{x:A\rew A}}{
x\,x:A\rew A
}}{\lx.x\,x:(((A\rewsh A) \rewsh (A\rewsh A))\wdg (A\rewsh A))\rewsh (A\rewsh A)}\hspace*{0.3cm}\Id:(A\rewsh A)\rewsh (A\rewsh A)\hspace*{0.2cm} \Id:A\rewsh A }{(\lx.x\,x)\Id:A\rewsh A
}
$$

  In general, since intersection type systems allow assigning several types, a variable $x$ (bound in a redex) can be used as a (typed) placeholder for the occurrences of a same term $u$, even if $u$ occurs with different types. This explains why intersection types can satisfy subject expansion.

\end{itemize}
}

\ignore{
Indeed, a normal form $\hnfo$ (where $t_1,\ldots,t_q$ are also normal forms) is \textbf{fully} typed in an inductive way (meaning that we assume $t_1,\ldots,t_q$ fully typed) by assigning the head variable a suitable arrow type. 
$$ 
\infer[\abs^*]{
  \infer[\app]{
    \infer[\app^*]{
      \infer[\app]{\infer[\ax]{}{x:A_1\rew \ldots \rew A_q\rew B}
        \sep t_1:A_1     }{x\,t_1:A_2\rew \ldots B}\sep t_2:A_2}
    {    x\,t_1\ldots t_{\!q\mathord{-}1}:A_q\rew B} \sep t_q:A_q}{
  \zhnfo:B}}{\hnfo:C_1\rew \ldots \rew C_p \rew B} $$}

\subsection*{Partial typings.} The possibility to partially type terms will be crucial in the core of this article, when we need ``truncate'' infinite derivations typing infinite terms into finite derivations: 
many intersection type systems also feature a universal type $U$ which can be assigned to every term, so that a judgment of the form $t:U$ is meaningless. However, such a meaningless $U$ allows \textbf{partially} typing a term.  Actually, every head normal form becomes typable: if one assigns $U\rew \ldots \rew U\rew A$ (with $n$ arrows) and types the $t_i$ with $U$, then $\zhnfo$ has type $A$). 
The possibility to leave some arguments untyped is crucial for type systems to remain sound while ensuring semantic guarantees are less restrictive than strong normalization (\eg head or weak normalization). 
When considering such a universal type $U$, every term becomes typable and thus, typed judgments should be considered sound only when $U$ does not have ill-placed occurrences, called positive occurrences (technically, an occurrence is positive when it is nested in an even number of arrow domains).

\begin{figure}
  
  \newcommand{\bigderiv}[3]{
\trans{#1}{#2}{
  \draw (0,0.6) node {#3};
  \draw [line width=0.7pt,dotted] (-1.05,2) --++ (0,-0.5) --++ (1.25,-1.1) -- (0.2,0);
\draw [line width=0.7pt,dotted] (0.45,0.75) --++ (0,-0.15) --++ (-0.2,-0.2);
\draw [line width=0.7pt,dotted] (1.65,2.55) --++ (0,-1.6) --++ (-1.4,-0.777);
}}

    \newcommand{\smallderiv}[3]{
\trans{#1}{#2}{
  \draw (0,0) --++ (0.5,0.5) --++ (0.2,1) --++ (-1.4,0) --++ (0.2,-1) -- cycle;
  \draw (0,0.6) node {#3};
}
    }

    \newcommand{\smallderivbis}[3]{
\trans{#1}{#2}{
  \draw (0,0) --++ (0.3,0.4) --++ (0.1,0.7) --++ (-0.8,0) --++ (0.1,-0.7) -- cycle;
  \draw (0,0.6) node {#3};
}
    }

        \newcommand{\smallderivter}[3]{
\trans{#1}{#2}{
  \draw (0,0) --++ (0.4,0.5) --++ (0.1,0.9) --++ (-1,0) --++ (0.1,-0.9) -- cycle;
  \draw (0,0.6) node {#3};
}
    }

\ovalbox{        
\begin{tikzpicture}


\bigderiv{-0.65}{1.6}{}

\draw [line width=0.6pt] (-2.2,4.05) --++ (1.1,0); 
\draw (-0.9,4.05) node {\small$\ax$};
\draw (-1.6,3.85) node {$x:\blue{A_{1}}$};

\trans{0}{-0.3}{

  \draw [line width=0.6pt] (-0.7,3.05) --++ (1.1,0); 
\draw (0.6,3.05) node {\small$\ax$};
\draw (-0.1,2.85) node {$x:\blue{A_{2}}$};

\draw [line width=0.6pt] (0.5,4.55) --++ (1.1,0); 
\draw (1.8,4.55) node {\small$\ax$};
\draw (1.1,4.35) node {$x:\blue{A_{1}}$};

}

\draw [line width=0.6pt] (-2.9,1) --++ (4.9,0);
  \draw (-0.45,1.3) node {$\ju{\ldots,x:\blue{\mult{A_1,A_2,A_1}}}{r:\blue{B}}$};
  \draw [line width=0.6pt] (-3.13,0.4) --++ (10.65,0);
  \draw (-0.45,0.7) node {$\ju{\ldots}{\lx.r:\blue{\mult{A_1, A_2, A_1}\rew B}}$};

\smallderivter{3.25}{0.85}{$\Pi^{a}_{1}$}
\draw (3.4,0.7) node {$s:\blue{A_{1}}$};

\smallderivter{5.05}{0.85}{$\Pi_{2}$}
\draw (5.2,0.7) node {$s:\blue{A_{2}}$};

\smallderivter{6.85}{0.85}{$\Pi^b_1$}
\draw (7,0.7) node {$s:\blue{A_1}$};

  \draw (2,0.1) node {$\lxrs:\blue{B}$};

\draw (3,-0.6) node  {\textbf{Non-idempotently typed redex}};;



\trans{0}{-7}{

\bigderiv{-0.65}{1.6}{}

  \smallderivter{-1.7}{4}{$\Pi^{a}_1$}

\draw (-1.6,3.85) node {$s:\blue{A_{1}}$};

\trans{0}{-0.3}{

  \smallderiv{-0.2}{3}{$\Pi_2$}

\draw (-0.1,2.85) node {$s:\blue{A_{2}}$};}

  \smallderivbis{1}{4.5}{$\Pi^b_{1}$}

\draw (1.1,4.35) node {$s:\blue{A_{1}}$};

\draw (-0.45,1.3) node {$\rsx:\blue{B}$};

\draw (-0.5,0.5) node  {\textbf{Derivation reduct}};;

}


\ndrth{rk}{2}{-1.8}{2}[below right]{\slbox{8.8}{
    \textbf{Non-idempotency}:
{\small 
    \begin{itemize}
\item 
  $x:\mult{A_1, A_2, A_1}$ $\leadsto$ in $r$, $x$ has been typed twice with $A_1$ and once with $A_2$
      \item In the redex,   3 arg. derivations
      because $x$ typed 3 times
    \item During reduction, duplication is disallowed for types\\
\hfills      (no arg. derivation typing is duplicated)
      \item After reduction, the derivation has decreased in size\\
\hfills        (1 $\app$, 1 $\abs$ and 3 $\ax$-rules destroyed)
        \end{itemize}}
    }};

\ignore{
\draw (2.9,4) node {\slbox{4}{\bcen \violet{\small no arg. deriv\\ has been duplicated\\[0.2cm] $\leadsto$ the deriv. has\\$\phantom{\leadsto}$ decreased in size}\ecen}};

\draw (5,2.7) node {\slbox{3}{\bcen \small
\violet{    3 arg. derivs
    because $x$ typed 3 times}\ecen
}};

  \red{
  \ndrvth{nonIdemTermin}{5}{4}{2}{\slbox{4.5}{\textbf{Non-idempotency}:\\
      \tbul \textbf{\red{duplication}} disallowed\\{\hfills  \fnsz \textit{(w.r.t. derivs)}}\\
      \tbul derivations \textbf{\red{decrease}} in size
  }};}}

\end{tikzpicture}
}
    \caption{Non-Idempotent Typing, Reduction and Decrease}
  \label{fig:intro-non-idem-typing-red-decrease}
  \end{figure}

\subsection*{Non-idempotent intersection}    
When intersection operator is not idempotent~\cite{CoppoDV80a,Gardner94,Carvalho18}, the types $A$ and $A\wdg A$ are not equivalent anymore. This expresses the fact that, from the quantitative point of view, using the type $A$ once or twice is not the same. Thus, non-idempotent intersection is closely related to Girard's Linear Logic~\cite{Girard87}. For instance, non-idempotent type systems keep information on how many times a variable has been assigned a type. A powerful consequence of this is that reduction cause derivations to decrease in size during reduction, meaning that if $\Pi$ concludes with $\juGtt$ and $t$ head-reduces to $t'$, then $\Pi'$, the derivation reduct of $\Pi$ concluding with $\juGtpt$, contains stricly less judgments than $\Pi$. This illustrated with Fig.~\ref{fig:intro-non-idem-typing-red-decrease}. This entails that it is trivial that a typed term is head normalizing.
 \igintro{ From the dynamic perspective, non-idempotency forbids the duplication of typing certificates during reduction:
\begin{itemize}
\item In a usual type system (system F, the calculus of inductive constrictions, on which \texttt{Coq} is based, idempotent intersection type) satisfying subject reduction, if the derivation $\Pi$ types $\lxrs$ is typed and $\Pi_s$ is the subderivation of $\Pi$ typing $s$, then  the derivation $\Pi'$ typing $\rsx$ contains $n$ copies of $\Pi_s$, where $n$ is the number of typed axiom rules typing $x$ in $r$ (this description is a bit too simple in the case of idempotent intersection types). This may cause a phenomenon of size explosion, which is the main reason why normalization proofs are difficult (see \Sec~\ref{ss:intersection-overview-klop-intro}).
\item In a non-idempotent intersection type system, no such duplication may occur. Consequently, the argument $s$ must be typed sufficiently many times in $\Pi$ so that no duplication is needed and each axiom rule typing $x$ is replaced by a pairwise distinct subderivation typing $s$. For instance, if $x$ has 3 typed occurrences of type $A$ in $\Pi$, then there are also in $\Pi$ 3 subderivations typing $s$ with $A$. Intuitively, this makes derivations big with a lot of redundant parts, but on  the other hand, the lack of duplication causes the derivations to grow smaller  along a reduction sequence (no size explosion). Consequently, reduction \textit{of a typed term} must stop at some point, \ie a normal form is reached. This strongly suggests indeed that the termination of typed terms is usually very easy to to prove in non-idempotent frameworks, and this is illustrated in  Fig.~\ref{fig:intro-non-idem-typing-red-decrease} \pierre{notation multiset}
\end{itemize}}%
\ighp{
Concretely, the decrease of derivation under reduction
corresponds to \textbf{weighted subject reduction} properties, \eg if $t\bred t'$ is a head step and $\juGtB$ is derivable, then $\juGtpB$ is derivable using strictly less judgments. Since there is no decreasing sequence of natural numbers of infinite length, then head reduction must stops when it is applied on a term which is typable in a non-idempotent system. The weighted aspect of proof reduction is illustrated in  Fig.~\ref{fig:intro-non-idem-typing-red-decrease}: a redex $\lxrs$ is typed. There are three axiom rules typing the variable $x$ of the redex, assigning twice type $A_1$ and once type $A_2$. This means that, in the environment of $r$, $x$ is assigned the type $A_1\wdg A_2\wdg A_1$, whereas in an \textit{idempotent setting}, the type of $x$ would have been collapsed with $A_1\wdg A_2$ and intuitively, the information giving the number of axiom rules typing $x$ would have been lost. Thus, $\lx.r$ has an arrow type with 3 types in its domain. Also because of non-idempotency, the argument $s$ must be typed three times, with matching types. When we fire the redex, no duplication of an argument derivation needs to take place, since $s$ has been typed the right number of times. We thus obtain a smaller derivation typing the reduct $\rsx$: the application and the abstraction rules of the redex have been destroyed, as well as the axioms typing $x$.}
\igintro{
\subsection*{Additional properties}
Actually,  the simplicity of normalization proof is not the only good feature of non-idempotent intersection, which explains that it is also interesting in an infinite setting (where the well-foundedness of $\bbN$ cannot be used in all generality).
It turns out that non-idempotent intersection types also enable simple combinatorial features, that we will thoroughly use:
\begin{itemize}
\item \textbf{Strictness.}
  Intersection is allowed only on the left-hand side of arrows, for instance $(A\wdg B)\rew C$ is allowed whereas $A\rew (B\wdg C)$ is not. We then do not consider introduction and elimination rules for $\wdg$, as the ones below. 
  See \Sec~3.2.1 of \cite{VialPhd} for more details.
  $$
\infer[\mathtt{intro}]{\ju{\Gam}{t:A}\msep \ju{\Gam}{t:B}}{\ju{\Gam}{t:A\wdg B}} \sep \sep \infer[\mathtt{elim-l}]{\ju{\Gam}{t:A\wdg B}}{\ju{\Gam}{t:A}}\sep \sep \infer[\mathtt{elim-r}]{\ju{\Gam}{t:A\wdg B}}{\ju{\Gam}{t:B}}
$$

\item \textbf{Syntax-direction.} \label{ss:syntax-direction-discuss-intro}
   To simplify the type system even more, non-idempotent intersection may be represented with multisets, \eg  the multiset $\mult{A,B,A}$ may represent the intersection $A\wdg B\wdg A$. This allows avoiding using a permutation rule stating that the types $A\wdg B \wdg A$ and $A\wdg A \wdg B$ are equivalent (as the one shown below), because $\mult{A,B,A}=\mult{A,A,B}$. 
  $$
  \infer[\mathtt{perm}]
        {\ju{\Gam,x:A_1\wdg \ldots \wdg A_n}{t:B}\sep
\sig \ \text{permutation} 
  }{\ju{\Gam,x:A_{\sig(1)}\wdg \ldots \wdg A_{\sig(n)}}{t:B}}
  $$ 
  Note that, without introduction and elimination rules for $\wdg$, such a permutation rule would be necessary to make $A\wdg B$ and $B\wdg A$ equivalent. 

\item \textbf{Relevance.} 
  By taking into account an exact record of how many times each type is assigned, non-idempotent intersection enable \textit{relevant} type systems, \ie type system without weakening (actually, strict \textit{idempotent} intersection type systems cannot satisfy both subject reduction an expansion, see \eg \Sec~3.2.2 of \cite{VialPhd} for more details). In a relevant type system, axiom rules have conclusions of the form $\ju{x:A}{x:A}$ (and not for instance
  $\ju{x:A\wdg B}{x:A}$ or   $\ju{x:A\wdg B\wdg C,y:D}{x:A}$), with \textit{one} assignment on the left-hand side. That is, an axiom rule typing an occurrence of $x$ just assigns \textit{one} type to this occurrence and nothing more.
  In particular, the typing context of an axiom rule can be identified by just looking at its right-hand side. By induction, in a relevant setting,
  typing contexts can be computed by inspecting axiom rules above them and thus,  do not need to be mentioned in derivations, because they can be computed by inspecting axiom rules (see Remark~\ref{rk:relevance-computation-contexts-Ro}). 
  However, it is more convenient to indicate them to write derivations. From the dynamic point of view, in a relevant derivation, reduction does not cause erasure of types, and in particular, erasable subterms are always left untyped.
\end{itemize}
With strictness and relevance, proof reduction   boils down to moving parts of the derivations, without duplication and erasure and  without structural rules popping up in the process (elimination,introduction, permutation). This gives rise to \textbf{syntax-directed} systems,
which feature only 3 rules (one for variables, one for abstractions and one for applications): such a system is given in \Sec~\ref{ss:system-Ro-Klop}.  
}

\subsection{Solving Klop's Problem\ighp{ and Hereditary Permutators}}
\label{ss:klop-prob-hp-statement}
Now that we have recalled some useful fact about intersection types, we explain in this section what are the difficulties we will meet and overcome in this article to export (non-idempotent) intersection type theory to an infinitary setting.\\



\ighp{
Another characterization problem arises in the study of the $\beta\eta$-invertible $\lam$-terms, pioneered by Curry and Feys~\cite{CurryF58} and consolidated by Dezani~\cite{Dezani76} who gave a characterization of \textit{weakly normalizing invertible} terms \wrt their normal forms.  This characterization was extended  by Bergstra and Klop~\cite{BergstraK80} for \textit{any} term: $\beta\eta$-invertible terms were proved to have \Bohm\ trees of a certain form, generalizing that given by Curry and Feys and suggesting to name them \textit{hereditary permutators}. Indeed, given a variable $x$, $t$ is a $x$-hereditary permutator iff \textit{coinductively}, $t$ reduces to $\lx_1\ldots x_n.x\,h_{\sig(1)}\ldots h_{\sig(n)}$ where $h_i$ is a $x_i$-hereditary permutator and $\sig$ is a permutation of $\set{1,\ldots,n}$ (the full definition is given in \Sec~\ref{ss:inf-lam-klop-journal}).  Hereditary permutators lacked a characterization with intersection types, so that \pierre{COINDUCTIVELY est ambigu}
 the problem of finding a type system assigning a \textit{unique} type to all hereditary permutators (and only to them) was inscribed in TLCA list of open problems by Dezani in 2006 (Problem \# 20).\\ }



\noindent \textbf{Soundness in an infinitary setting, and the need of a fine-grained notion of residuation.}
In the case of finitary non-idempotent intersection type, soundness is just based on the subject reduction property inducing the decrease of a positive integral measure and is comparatively very simple to prove. In the infinitary case, things are different: the soundness of a derivation $P$ is ensured provided $P$ is, in some sense, the limit object of \textit{finite} derivations (this is the core of the \textbf{approximability} criterion). To be more precise:
\begin{itemize}
\item Defining infinite derivations by just allowing infinitary multiplicites and nestings in multisets is unsound: for instance, the looping term $\Om$ becomes typable. This is why we need the approximability criterion (\Sec~\ref{ss:degenerate}). 
\item If a term $t$ is typed with an approximable derivation, which is a possibly infinite derivation that can be written as a limit (or the join) of finite derivations, then $t$ is necessarily typable in the finitary, usual setting, which ensures soundness (in the sense that, in the finite case, if $t$ is typable, then it is head normalizing).
\item But the necessity of handling the notion of approximability while reducing derivations (\ie reducing there subjects) constrain us to have a more meticulous description of the subject reduction property. Indeed, we need to show that approximability is stable under reduction. But for this, as it will be made clear in the paper, we need to have a precise, fine-grained notion of \textbf{residuation} (\Sec~\ref{ss:one-step-sr-se} and \ref{ss:sr-proof}).
\item At last, residuation raises a problem of its own.
As we shall see, residuation can be suitably defined only for the right-hand sides $t:T$ of typing judgments $\juCtt$ (where $C$ is the context, $t$ the subject and $T$ the type), whereas approximability pertains to their left-hand sides too. We explain how to overcome this problem with the crucial notion of \textbf{equinecessity} (\Sec~\ref{ss:equinecessity-main}).\\
\end{itemize}

\noindent \textbf{The infinite jump problem, or infinitary subject expansion.}
Another problem in the infinitary setting is ensuring \textit{infinitary subject expansion}:  actually, proving that typing is stable under one-step expansion is not really more difficult than it is in the finite case. However, a suitable notion of reduction path of infinite length (the so)-called \textbf{productive paths}, \Sec~\ref{ss:comput-bohm-trees-hp}) naturally arises when we deal with infinite terms and \Bohm\ tree~\cite{KennawayKSV97}.  As a side remark,  one-step subject reduction and even infinitary subject reduction along a productive path are also trivial (as long as we do not consider the question of approximability): the latter can be also thought as a limit process of a finite number of reduction steps. In other words, one may say that infinitary subject reduction is an asymptotic phenomenon.
However, the situation is different for infinitary subject expansion, because intuitively there is \textit{a priori} no way to pass \textit{incrementally}\footnote{To give an idea, if we consider a sequence of rational numbers $(x_n)_\nN$ which converges to a limit $x$ (which is rational or irrational), then passing from $x$ to any $x_n$ means going backward on $\infty-n$ (which is also $\infty$) many steps! In this respect, this is not incremental.} from a limit object to one of a finite rank (let us call that very informally the \textit{infinite jump problem}).\label{disc:infinite-jump-problem}  Infinitary subject expansion is not asymptotic in nature. But as, we shall see in \Sec~\ref{ss:typ-inf-nf-informal}, 
infinitary subject expansion  can be performed and  we may already give some elements of understanding on the involved techniques: on one hand, our notion of soundness---approximability---specifies that infinite derivations are \textit{static} limits of finite ones (by static, we mean that the definition of approximability is independent from any notion of reduction). On another hand, we have the usual notion of productive reduction paths of infinite length in the infinitary $\lam$-calculus, which defines a \textit{dynamic} notion of limit of a family of term: this limit is intuitvely the term which is asymptotically outputted after an infinite number of reduction steps, as in the example of $\cuf$ on page~\pageref{ss:system-Ro-Klop}.

Interestingly enough, the core technique used to perform infinitary subject expansion is also based on approximability, and actually, on confronting these two notions (static \vs dynamic) of limits: we use the fact that derivations are static limits of finite derivations to (1) perform infinite subject expansion on these finite derivations (for which the infinite jump probem has a solution, \Sec~\ref{ss:subj-subst}) (2) statically build a (possibly infinite) expanded derivation by taking the join of the resulting finite expanded derivations.

Thus, approximability is pivotal in two respects: (1) it is used to define which infinitary derivations are semantically valid and which are not (2) it is used to perform infinitary subject expansion.\\

\noindent \textbf{The problem of pointing.} A last technical ingredient to adapt intersection type theory to an infinitary setting is what we call \textbf{sequences} in this articles: multisets of elements which are decorated with suitable natural numbers which we call \textbf{tracks}. 
In non-idempotent intersection type theory, it is convenient (but not mandatory) to represent the intersection of types $A_1$,\ldots,$A_k$ with the multiset $\mult{A_1,\ldots,A_k}$: it allows not needing anymore structural rules and simplifying the type system. However, it has two drawbacks for us: it makes it impossible to have a \textbf{pointing mechanism} (in the multiset $\mult{A,B,A}$, it is impossible to point formally to one specific occurrence of $A$) and it gives rise to \textbf{non-determinism in proof reduction} as it is well-known (\Sec~\ref{ss:non-determinism}). 
Both problems, as we shall see, make it impossible to define approximability and ensure soundness. When elements are decorated with tracks (for instance, $\mult{A,B,A}$ may become $(2\cdot A,4\cdot B, 7\cdot A)$ or $(3\cdot A,2\cdot B,9\cdot A)$), we recover pointing and determinism. For instance, in $(2\cdot A,4\cdot B, 7\cdot A)$, one occurrence of $A$ is on track 2 (\ie is decorated with 2) and the other is on track 7. Actually, this tracking mechanism enables a fine-grained description of residuation on typing derivation and prove soundness, as we saw just above.

\subsection{A Glimpse at Sequences and Derivation Approximations}
\label{ss:sequence-tracking-klop-intro}
Now, let us give a glimpse at the notion of \textit{sequences} (which refines that of multisets) and of approximation of a derivation.\\

\noindent \textbf{Sequences and tracking.} 
Naive multisets do not allow defining a notion of position: we say that they do not allow \textbf{tracking}. To see this, consider the equality $\mult{A,B,A}=\mult{A,B}+\mult{A}$. There is no way to associate the occurrence of $A$ in $\mult{A}$, located in the right-hand side of the equality to one of the two occurrences of $A$ in $\mult{A,B,A}$, in the left-hand side. This shows that there is not way to point to a particular occurrence of $A$ inside $\mult{A,B,A}$. Actually, if we swap the two occurrences of $A$ in $\mult{A,B,A}$, we obtain
$\mult{A,B,A}$. If we do not do anything, we also obtain $\mult{A,B,A}$: the possible swap cannot be seen. It has actually no formal meaning.

To retrieve tracking and thus, the possibility to point at an element, we may annotate elements of multisets with pairwise distinct integers, that we call \textbf{tracks}. Such a decorated multiset is called a \textbf{sequence} and we use the constructor $(\_)$ for sequences instead of $\mult{\_}$ for multisets.
For instance, $(2\cdot A,3\cdot B,5\cdot A)$ is a sequence, there is \textbf{one occurrence} of $A$ \textbf{on track} 2 and another on track 5. Sequences come along with a disjoint union operator $\uplus$, \eg $(2\cdot A,3\cdot B,5\cdot A)= (2\cdot A,3\cdot B)\uplus (5\cdot A)$. Thanks to tracks, each occurrence of $A$ in one side of the equality are unambiguously associated with another on the other side, and in the left-hand side, we may point to the occurrence of $A$ on track 2 rather than the one on track 5. Disjoint union is not defined for two
sequences sharing a same track, \eg
$(2\cdot A,3\cdot B)\uplus (3\cdot B, 5\cdot A)$ is \textit{not} defined (because the two operands both use track 3, even though the same type $B$ occurs): we say that there is a \textbf{track conflict}.
Moreover, $\uplus$ is a \textbf{commutative} and associative, but \textbf{partial} operator. Observe that $(2\cdot A,3\cdot B,5\cdot A)$ may be written six different ways, \eg  $(5\cdot A,2\cdot A,3\cdot B)$ and $(2\cdot A,5\cdot A,3\cdot B)$: if we perform a swap in a sequence, we see it.

Our framework is deterministic, \eg there is a \textit{unique} canonical way to produce a derivation from another one when reducing a redex, contrary to system~$\scrRo$, and we can characterize infinitary semantics (more details in \Sec~\ref{ss:non-determinism}).

To sum up, in many ways, sequence types work like multiset types, \eg they give rise to a syntax-directed, strict and relevant typing system with only three rules\igintro{ (\cf \Sec~\ref{ss:syntax-direction-discuss-intro})}, but they allow tracking, constructing a suitable pointing mechanism inside derivations and thus, defining approximability, the validity criterion that will help us discard the unsound coinductive derivations. This will enable us characterizing hereditary head normalization\ighp{ and hereditary permutators}. \\

\noindent \textbf{Approximation.} As we have hinted at in \Sec~\ref{ss:klop-prob-hp-statement}, coinductive type grammar give rise to \textit{unsound} derivations, \eg derivations typing of $\Omega$, 
but we introduce a \textit{validity criterion} allowing to discard them. This criterion is called \textbf{approximability}.
 Intuitively, an infinite proof/typing derivation is approximable when it is obtained by superposing infinitely many \textit{finite} proofs, growing over and over, as in Fig.~\ref{fig:approx-deriv-intro}: the outer triangle represents an infinite proof/typing derivation $\Pi$ and the inner polygons represent finite proofs that ``fit'' in $\Pi$.  
Thus, an approximable derivation may be infinite, but it is asymptotically obtained from finite/sound proofs.
Equivalently, a derivation is approximable when it is the \textbf{join} of its \textit{finite} \textbf{truncations}, also called \textbf{approximations}. The crucial intuition behind this notion is that it allows ``reducing'' infinite derivations to finite ones along with the \textit{soundness properties} that come with them, such as head normalization.

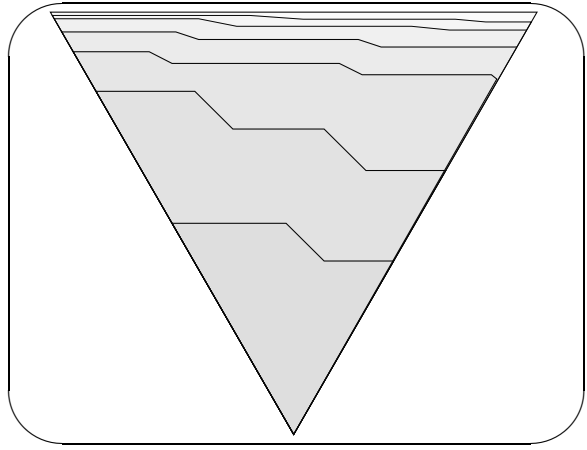
\begin{figure}[h]
  \begin{center}
    \ovalbox{
\hspace*{0.3cm}\begin{tikzpicture}

    \draw  (0,0) --++ (3.2,5.6) --++ (-6.4,0) -- cycle;
  \draw [fill,gray,opacity=0.04](0,0) --++ (3.2,5.6) --++ (-6.4,0) -- cycle;

  \draw [very thin] (0,0) --++ (-3.175,7/4*3.175) --++ (2.6,0) --++ (0.7,-0.05) --++ (2,0)  --++ (0.4,-0.03625) -- (4/7*5.47 ,5.47) -- cycle;
  \draw [fill,gray,opacity=0.04] (0,0) --++ (-3.175,7/4*3.175) --++ (2.6,0) --++ (0.7,-0.05) --++ (2,0)  --++ (0.4,-0.03625) -- (4/7*5.47 ,5.47) -- cycle;

  \draw [very thin] (0,0) --++ (-3.15,7/4*3.15) --++ (1.9,0) --++ (0.5,-0.1) --++ (2.3,0) --++ (0.5,-0.05) -- (4/7*5.3625,5.3625);
  \draw [fill,gray,opacity=0.04] (0,0) --++ (-3.15,7/4*3.15) --++ (1.9,0) --++ (0.5,-0.1) --++ (2.3,0) --++ (0.5,-0.05) -- (4/7*5.3625,5.3625);
  
  \draw (0,0) --++ (-3.05,7/4*3.05) --++ (1.5,0) --++ (0.3,-0.1) --++ (2.1,0) --++ (0.3,-0.1) -- (4/7*5.1375,5.1375);
  \draw [fill,gray,opacity=0.04] (0,0) --++ (-3.05,7/4*3.05) --++ (1.5,0) --++ (0.3,-0.1) --++ (2.1,0) --++ (0.3,-0.1) -- (4/7*5.1375,5.1375); 
  
  \draw (0,0) -- (-2.9,5.075) --++ (1,0) --++ (0.3,-0.155) --++ (2.2,0) --++ (0.3,-0.15) --++ (1.7,0) --++ (0.5/7,-0.4/7) -- cycle;
    \draw [fill,gray,opacity=0.04] (0,0) -- (-2.9,5.075) --++ (1,0) --++ (0.3,-0.155) --++ (2.2,0) --++ (0.3,-0.15) --++ (1.7,0) --++ (0.5/7,-0.4/7) -- cycle;

  \draw (0,0) -- (-2.6,4.55) --++ (1.3,0) --++ (0.5,-0.5) --++ (1.2,0) --++ (0.55,-0.55)-- (2,3.5) -- cycle ;
  \draw [fill,gray,opacity=0.04] (0,0) -- (-2.6,4.55) --++ (1.3,0) --++ (0.5,-0.5) --++ (1.2,0) --++ (0.55,-0.55)-- (2,3.5) -- cycle;

  \draw (0,0) -- (-1.6,2.8) --++ (1.5,0) --++ (0.5,-0.5) -- (1.314,2.3) -- cycle  ;
   \draw [fill,gray,opacity=0.04] (0,0) -- (-1.6,2.8) --++ (1.5,0) --++ (0.5,-0.5) -- (2.3*4/7,2.3) -- cycle  ;
\end{tikzpicture}\hspace*{0.5cm}}
\caption{An Approximable Derivation as an Infinite Superposition}
\label{fig:approx-deriv-intro}
\end{center}
\end{figure}

More precisely, derivations need not being infinite to characterize head normalization in the infinite calculus: infinite derivations are required to characterize hereditary head normalization in an infinitary calculus, because we need to type infinite normal forms without leaving any subterm untyped. Moreover, finite derivations allow us to prove operational properties of the infinite calculus in a relatively simple way (more details in \Sec~\ref{ss:subj-subst}). An important idea is that \textbf{any term which is approximably typable is finitarily typable and as such, is head normalizing}. The whole proof of correctness of the system (\wrt\ hereditary head normalization) is based on this observation.

Just to give a more precise idea of what is meant by superposition, the derivation fragment $\Pi_2$ below can be superposed upon $\Pi_1$ (we omit the typing assumptions and the possible premises of the top judgments):
\[
\Pi_1=\infer{\ttf:(A\wdg B)\rew A\rew C\sep \ttx:A\sep \ttx:B}{\ttf(\ttx):A\rew C}
\]
\[
\Pi_2=\infer{\ttf:(A\wdg B\wdg (B\rew C))\rew (A\wdg C) \rew D \sep \ttx:A\sep \ttx: B\sep \ttx:B\rew D}{
\ttf(\ttx):(A\wdg C)\rew C
  }
\]
Indeed, $\Pi_1$ is obtained from $\Pi_2$ by removing the symbols colored in red:
\[
\Pi_2=\infer{\ttf:(A\wdg B\red{\wdg (B\rew C)})\rew \red{(}A\wdg \red{C)} \rew D \sep \ttx:A\sep \ttx: B\sep \red{\ttx:B\rew D}}{
\ttf(\ttx):\red{(}A\wdg \red{C)}\rew C}
\]
Thus, $\Pi_1$ can be seen as a \textit{truncation} or an \textit{approximation} of $\Pi_2$. 
Infinite superposition strongly hints at the presence of complete lattices and complete partial orders, which is formalized in \Sec~\ref{s:approx}.

With this concept of approximation, we retrieve, in an infinitary setting, the main tools of finite intersection type theory that we sketched in \Sec~\ref{ss:intersection-overview-klop-intro}:
subject reduction, subject expansion, typing of normal forms\igintro{ (terminal states}). Actually, following the scheme of Fig.~\ref{fig:its-fundamental-diag}, we will prove that hereditary head reduction (\Sec~\ref{ss:inf-lam-klop-journal})   is \textit{asymptotically} complete for infinitary weak normalization.


\ignore{
The set 
 of \textbf{Head Normalizing (HN) terms} can be characterized by
various \textit{intersection} type systems. Recall  that  a term is HN
if it can be reduced to a \textbf{Head Normal Form (HNF)},
\ie a term $t$ of the form $\lambda x_1\ldots x_p.x\,t_1\ldots t_q\ (p \geq 0, q \geq 0)$, where $x$ is referred as the
\textbf{head variable} of $t$ and the terms $t_1,\ldots,\,t_q$ as the
\textbf{head arguments} of $t$.

In general, intersection type frameworks, introduced by Coppo and
Dezani~\cite{CoppoD80}, allow to characterize many
classes of normalizing terms, such as the \textbf{Weakly
Normalizing (WN) terms} (see \cite{Bakel95}\fucite{Deliaetc} for an 
extensive survey). \pierre{parler de Salinger, etc}  A term is WN if it can be reduced to a
\textbf{Normal Form (NF)}, \ie a term without
\textit{redexes}. Via the leftmost reduction strategy, Weak Normalization can be restated as follows: a term is WN if it is HN and all the
arguments of its head variable are WN (it is meant that the base 
cases of this induction are the terms of the form $\lambda
x_1\ldots x_p.x$).

According to Tatsuta~\cite{Tatsuta08}, the question of
finding out a type system characterizing \textbf{Hereditary
 Head Normalizing (HHN) terms} was raised by Klop in a private
exchange with Dezani. 
Klop's Problem was also addressed in \cite{Raffalli93,Kurata97}.
The definition of  HHN terms is given by the \textit{coinductive} version of the above inductive definition of Weak Normalization: \textit{coinductively}, a term is HHN if it is HN and all
the arguments of its head variable are themselves HHN. It is
equivalent to say that the \Bohm\ tree~\cite{Barendregt85} of the term does not hold any occurrence of $\bot$. Tatsuta focused his study on \textit{finitary}
type systems and showed Klop's problem's answer was negative for them,
by noticing that the set of HHN terms was not recursively enumerable.

Parallelly, the B\"ohm trees without $\bot$ can be seen as the set of
normal forms of an infinitary calculus, referred as $\Lambda^{001}$ in
\cite{KennawayKSV97}, which has been reformulated
very elegantly in  coinductive frameworks~\cite{EndrullisHHP015,Czajka14}. In this calculus, the HHN terms
correspond to the infinitary variant of the WN terms. An infinite term
is WN if it can be reduced to a NF by at least one \textbf{strongly
  converging reduction sequence (\scrs)}, which constitute a
special kind of reduction sequence of (possibly) infinite length,
regarded as \textit{productive}. This motivates to check whether an \textit{infinitary} type system is able to characterize HHN terms in
the infinite calculus $\Lambda^{001}$.
}


\ignore{ FIGURE
APPROXIMABILITE AVEC MULTISET, EXPLICATION DIRECTE
\newcommand{\phsig}{\phantom{\sig}}
\newcommand{\mphsig}{\mult{\phsig}}
\newcommand{\phtv}{\phantom{\tv}}
\newcommand{\mphtv}{\mult{\phtv}}

If this is not clear, consider the simpler example below where $\Phi_1$ is a truncation of $\Phi_2$: indeed, $\Phi_1$ may be obtained from $ \Phi_2$ by using a rubber. 

$\Phi_1=\infer[\app]{\infer[\ax]{}{\ju{x:\mult{\mphtv\rew \tv}}{x:\mphtv\rew \tv}}}{\ju{x:\mult{\mphtv\rew \tv}}{x\,y:\tv}}$

\hfills $\leqs\Phi_2=\infer[\app]{\infer[\ax]{}{\ju{x:\mult{\mtv\rew \tv}}{x:\mtv\rew \tv}}
\sep \infer{}{\ju{y:\mtv}{y:\tv}}
}{\ju{x:\mult{\mtv\rew \tv};y=\mtv}{x\,y:\tv}}$

\repet\pierre{redondance mais à garder pour qq part}

Truncation may be thought as \textbf{approximations}, \eg $\Phi_1$ is an approximation of $\Phi_2$ (we write $\Phi_1\leqs \Phi_2$). Likewise, for all $\nN$, $\Pi'_n\leqs \Pi'$, \ie the $\scrRo$-derivations $\Pi'_n$ all are  approximations of $\Pi'$.

when $\Pi$  admits finite truncations, generally denoted by $\fPi$---that are   finite
 derivations of $\scrRo$---, so that any fixed finite part of $\Pi$ is contained in some truncation $\supf \Pi$ (for now, a finite part of $\Pi$ informally denotes a finite selection of graphical symbols of $\Pi$, a formal definition is given in Sec.~\ref{s:lattices}). This informal definition is illustrated by Fig.~\ref{fig:approximability}.

 \begin{figure}
   \begin{center}
\begin{tikzpicture}[scale=0.8]
  \draw (-3.5, 6.06) -- (3.5,6.06) -- (0,0) -- (-3.5,6.06) ;
    \draw (-3.3,4.5) node {\huge $\Pi$} ;

      \draw (0.3,1) node{$\tv$};
      \draw (-0.4,5.3) node{$\tv$};
      \draw (-0.7,1.5) node{$\rew$};
      \draw (0,2) node{$\tv'$} ;
      \draw (-2.1,5) node{$\rew$} ;
      \draw (0.95,3.6) node{$\tv$};
      \draw (-0.8,4.3) node{$\tv\secu$} ;
      \draw (-0.9,3.5) node{$\rew$ };
      \draw (-0.1,4.5) node{$\tv\secu$} ;
    \blue{
         \draw [thick] (0,0.03) -- (-3.47,6.03) -- (-2.5,6.03) -- (-1,4.74) -- (-0.2,5.75) -- (1.8,3.14) -- (0,0.05) ; 
       \draw (1.4,4.5) node {\Large $\supf \Pi$ };
    }

    \draw (3.5,6) node [below right]{\parbox{7.8cm}{
        \begin{itemize}
        \item $\Pi$ is an infinite approximable derivation.
        \item A finite number of symbols (arrows or type variables) have been selected in the derivation, in various judgments.
        \item By approximability, there is a finite derivation $\supf \Pi$, that is a truncation of $\Pi$ and contains all the selected symbols.
          \end{itemize}
        }
      };

\end{tikzpicture}
   \end{center}
 \caption{Approximability}  
\label{fig:approximability} 
 \end{figure}

 Equivalently, a derivation is approximable when it is the join of all its finite approximations.

}

\section{Finite Intersection and Infinite Terms}
\label{s:finite-inter-infinite-types}

 In \Sec~\ref{ss:inf-lam-klop-journal} and \ref{ss:comput-bohm-trees-hp}, we define the infinitary $\lam$-calculus and \Bohm\ trees. 
Gardner-de Carvalho's system $\scrRo$ is recalled in \Sec~\ref{ss:system-Ro-Klop}, in particular how weak normalization is characterized by considering \textit{unforgetful} derivations. In \Sec~\ref{ss:subj-subst}, we explain how system $\scrRo$ can still be applied to infinite terms. We encounter a first use of subject substitution, \ie replacing the subject $t$ of a derivation $\Pi$ by another $u$ which is equal to $t$ in the  typed parts of $\Pi$. We use this to perform an \textit{infinite} subject expansion on \textit{finite} derivations. 
 We present the main ideas to solve Klop's Problem in \Sec~\ref{ss:typ-inf-nf-informal} and \ref{ss:degenerate}: we informally explain (1) how infinitary subject expansion \textit{could} be performed by truncating derivations and taking the joins of directed families (2) how coinduction gives rise \textit{unsound} derivations, \eg derivations typing the unterminating term $\Om$ and how soundness can be retrieved with \textit{approximability}.

\subsection{Pointing in a $\lam$-term}
\label{s:pointing-lambda}
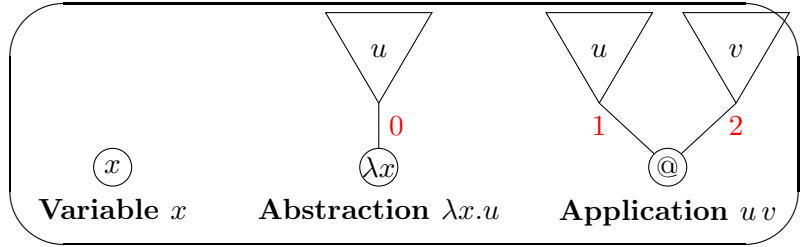
\begin{figure}[!h]
  \begin{center}
    \ovalbox{
  \begin{tikzpicture}

\draw (0,-2.6) node{\textbf{Variable $x$}} ;
\draw (0,-2.05) circle (0.25) ;
  \draw (0,-2.05) node{$x$} ;
\draw (3.5,-2.6) node{\textbf{Abstraction $\lambda x.u$}};
\draw (2.8,0) -- (4.2,0) -- (3.5,-1.2) -- (2.8,0) ;
\draw (3.5,-0.5) node{\textbf{$u$}} ;
\draw (3.5,-1.2) -- (3.5,-1.8) ;
\draw (3.5,-1.5) node[right]{\red{0}} ;
\draw (3.5,-2.05) circle (0.25) ;
\draw (3.5,-2.05) node{$\lambda x$} ;

\draw (7.3,-2.66)  node {\textbf{Application $u\, v$} } ;

\draw (8.2,-1.2) -- (8.9,0) -- (7.5,0) -- (8.2,-1.2) ;
\draw (7.48,-1.87) -- (8.2,-1.2) ;
\draw (8.2,-1.5) node {$\red{2} $} ;
\draw (8.2,-0.5) node {$v$} ;

\draw (6.4,-1.2) -- (5.7,0) -- (7.1,0) -- (6.4,-1.2) ;
\draw (7.12,-1.87) -- (6.4,-1.2) ;
\draw (6.4,-1.5) node {$\red{1}$} ; 
\draw (6.4,-0.5) node {$u$} ;

\draw (7.3,-2.05) node {$\arob$} ;
\draw (7.3,-2.05) circle (0.25) ;
\end{tikzpicture}}
\end{center}
\caption{Lambda Terms as Labelled Trees}
\label{fig:lam-terms-as-trees}
\end{figure}

Lambda terms can be seen as labelled trees following the pattern of Fig.~\ref{fig:lam-terms-as-trees}. 
We orient these labelled trees as typing derivation usually are: the root is put at the bottom of the figure (and not at the top),  because the  derivations of the type system\ighp{s} to come can be represented as refinements of  parsing trees of $\lam$-terms (see in particular Fig.~\ref{fig:app-node-S-with-edges} on p.~\pageref{fig:app-node-S-with-edges}). We say that $u$ is the \textbf{argument} of the application $t\,u$. 
This representation gives to the standard notions of \textbf{positions} in a $\lam$-term, represented as words on $\set{0,1,2}$ (the set of words on $\set{0,1,2}$ is denoted $\set{0,1,2}^*$), and \textbf{support} $\supp{t}$ of a term $t$ (the set of positions in $t$). If $b\in \supp{t}$, then $t(b)$ is the label at position $b$ in $t$ and $\trb$ is the subterm rooted at $b$. 
For instance, if $t:=\lx.x\,y$, then $\supp{t}=\set{\epsi,0, 0 \cdot 1,0\cdot 2}$ where $\epsi$ is the empty word, $t(\epsi)=\lx$, $t(0)=\arob$, $t(0\cdot 1)=x$, $t(0\cdot 2)=y$ and $t\rstr{0}=x\,y$. The \textbf{applicative depth} \texttt{ad} of a position $b$ in a term $t$ is the number of times that $b$ is nested inside the arguments of an application. 
For instance, if $u_1\,u_2\,u_3\,u_4$ are terms and $t=\lx.((y\,u_1)((y\,u_2)u_3))u_4$, $b_1=0\cdot 1^2\cdot 2$,  $b_2= 0\cdot 1\cdot 2\cdot 1\cdot 2$, $b_3=0\cdot 1\cdot 2^2$, $b_4=0\cdot 1$, then $t\rstr{b_i}=u_i$ and $\ad{b_1}=1$, $\ad{b_2}=\ad{b_3}=2$, $\ad{b_4}=0$.  In other words, if $b=b_0\cdot b_1 \ldots \cdot b_n$ (with $b_i$ integer), then $\ad{b}=\card{\set{i \in \set{0,\ldots,n}\,|\, b_i=2}}$, \ie $\ad{b}$ is the number of 2 that $b$ contains.

\subsection{Infinite Lambda Terms}
\label{ss:inf-lam-klop-journal}

    In this section, we present \Bohm\ trees (chapter 10 of Barendregt~\cite{Barendregt85}, in particular, Theorem 10.4.2.),
    a notion that was suggested by the proof of the \Bohm\ Separation Theorem, 
    and the construction of one of the infinitary calculi introduced in \cite{KennawayKSV97}. See also \cite{Czajka14,EndrullisHHP015,Bahr18} for alternative presentations.  
   This infinitary calculus is a natural framework which allows expressing \Bohm\ trees (more precisely, its normal forms are \Bohm\ trees without $\bot$), but it gives us more flexibility. For instance, the computation of an infinite normal form does not need to follow the path given by hereditary head normalization and we have a general notion of \textit{sound} computations of infinite length, that we call here \textit{productive reduction paths}. 
    We shall generalize some notions to the whole infinitary calculus (for instance, hereditary head normalization will pertain both to finite and infinite terms), since it gives a general framework where they can be expressed.

    We now generalize some of the notations of \Sec~\ref{s:finite-inter-infinite-types} in an infinitary setting. 
Let $\scrV$ be a countable set of term variables. The set $\Lamfty$ of infinitary $\lam$-terms is \textit{coinductively} defined by:
$$t, u~:=~ x \in \TermV~\|~(\lx.t)~\|~(t\arob u)$$ 
When there is no ambiguity, we usually just write $\lx.t$ and $t\,u_1\ldots u_n$ instead of $(\lx.t)$ and $(\ldots (t\arob u_1)\ldots \arob u_n)$. If $t$ is an infinitary term, then $\supp{t}$, the \textbf{support} of $t$ (the set of positions in $t$)  is defined
by $\supp{x}=\set{\epsi}$, $\supp{\lx.t}=\set{\epsi}\cup 0\cdot \supp{t}$ and $\supp{t\,u}=\set{\epsi}\cup 1\cdot \supp{t}\cup 2\cdot \supp{u}$ by \textit{coinduction}. 
The notations $t(b)$ and $\trb$ are defined by \textit{induction} on $b$. 

Let $t\in \Lamfty$. An \textbf{infinite branch} of $t$ is an infinite word $\gam$ over $\set{0,1,2}$ such all the  finite prefixes of $\gam$ are in $\supp{t}$. The notion of \textit{applicative depth} $\ad{\_}$ is straightforwardly extended to infinite branches:

\begin{example}\mbox{}
  \label{ex:lambfty-vs-lamzzu}
\begin{itemize}
\item 
\igintro{\pierre{ Let us formally define the term $\fom$from \Sec~\ref{ss:inf-norm-bohm-intro}}}
  Let us formally define $\fom$, the \Bohm\ tree of $\cuf$, 
  by the equation $\fom=f(\fom)$, \ie  $\fom:=f(f(f(f(\ldots))))$ is 
the labelled tree such that $\supp{\fom}=\set{2^n~|~n\in \bbN}\cup \set{2^n\cdot 1\,|\, n\in \bbN}$, $\fom(2^n)=\arob$ and $\fom(2^n\cdot 1)=f$ for all $\nN$. 
Then, $\fom$ has one infinite branch $2^\infty$, which is the infinite repetition of $2$: indeed, for all $n\in \bbN$, $2^n\in \supp{\fom}$. We have $\ad{2^\infty}=+\infty$.
\item
Likewise,  the infinite term ${}^\infty x\in \Lamfty$ defined by ${}^\infty x=({}^\infty x)x$, so that ${}^\infty x=(((\ldots) x)x)x$ satisfies $\supp{{}^\infty x}=\set{1^n\,|\,\nN} \cup \set{1 ^n \cdot 2\,|\, \nN}$, so $\supp{{}^\infty x}$ has the infinite branch $1^\infty$ (this indicates a leftward infinite branch), which satisfies $\ad{1^\infty}=0$ since 2 does not occur in $1^\infty$.
\end{itemize}
\end{example}

If $b\in \supp{t}$, the subterm (\resp the constructor) of $t$ at position $b$ is denoted $\trb$ (\resp $t(b)$), \eg if $t:=\lx.(x\,y)z$ and $b=0\cdot 1$ (\resp $b=0\cdot 4$), then  $\trb=x\,y$ and $t(b)=\arob$) (\resp $\trb=t(b)=z$).  

\begin{definition}[001-Terms]
\label{def:001-term-hp}
Let $t\in \Lamfty$. Then $t$ is a \textbf{001-term}, if, for all infinite branches $\gam$ in $\supp{t}$, $\ad{\gam}=\infty$. \techrep{The set of 001-terms is denoted $\Lamzzu$.}
\end{definition}

Once again, the vocable ``001-term'' comes from \cite{KennawayKSV97}. For instance, $\fom$ is a 001-term since its unique infinite branch $2^\infty$ satisfies $\ad{2^\infty}=+\infty$, whereas ${}^\infty x$ defined  above is \textit{not} a 001-term since its infinite branch $1^\infty$ satisfies $\ad{1^\infty}=0$.

\subsection{The computation of \Bohm\ trees}
\label{ss:comput-bohm-trees-hp}

Let $t,u \in \Lamfty$. The notation $\tux$ denotes the term obtained from $t$ by the \textit{capture-free} substitution of the occurrences of $x$ with $u$  (a formal definition can be found in \cite{KennawayKSV97} in the infinitary setting). One easily checks that if $t,u$ are 001-terms, then $\tux$ also is a 001-term. 
 The $\beta$-reduction $\bred$ is obtained by the contextual closure of $(\lx.t)u\bred \tux$ and  $t\breda{b} t'$ denotes the reduction of a redex at position $b$ in $t$, \eg $ \ly.((\lx.x)u)v \breda{0\cdot 1} \ly.u\,v$. Thus, the relation $t \breda{b} t'$ is defined by \textit{induction} on  $b\in \set{0,\,1,\,2}^*$. 

 $$\begin{array}{c@{\sep}c@{\sep}c@{\sep}c@{\sep}c}
   \infer{\phantom{t}}{(\lambda x.r)s\breda{\epsi} \rsx} &
 \infer{t\breda{b}t'}{\lx.t\breda{0\cdot b} \lx.t'} &
 \infer{t_1\breda{b} t_1}{t_1\,t_2\breda{1\cdot  b} t'_1\,t_2} &
 \infer{t_1\,t_2 \breda{2\cdot b} t_1\,t_2 '}{t_2\breda{b} t'_2} & 
 \infer{t\breda{b}t'}{t\bred t'}
\end{array}
 $$

The  notion of normal form generalizes to $\Lamzzu$, \eg a \textbf{001-Normal Form (001-NF)} is a 001-term that does not contain a redex. Thus, the set of 001-NF, denoted \NFfty, can be defined \textit{coinductively} by:
$$t,\,t_i~ ::=~  \hnfo~ (p,q\geqslant 0)$$
A term $t\in \Lamzzu$ is \textbf{head normalizing} if $t\hred^* \hnfo$ (with $t_1,\ldots,t_q\in \Lamzzu$), which is a head normal form\ighp{ of \textbf{arity} $p$}. A term $t\in \Lamzzu$ is \textbf{has a head normal form} if there is a \textit{finite} reduction path from $t$ to a head normal form $t'$ (\ie $t\bred t'$). In particular, the implication ``$t$  is head normalizing'' $\Rightarrow$ ``$t$ has a head normal form'' straightforwardly holds.  The converse implication (``$t$ has a head normal form $\Rightarrow$ $t$ is head normalizing'') also holds, although the proof differs from the finite cas~\cite{KennawayKSV97}, and we will give a simple semantic proof of this fact in \Sec~\ref{ss:subj-subst}.

\begin{definition}[\Bohm\ tree of a term]
  \label{def:bohm-tree-kj}
  Let $t$ be a 001-term.\\
  The \Bohm\ tree $\BT{t}$ of $t$ is \textit{coinductively} defined  by:
  \begin{itemize}
    \item $\BT{t}=\lx_1\ldots x_p.x\,\BT{t_1}\ldots \BT{t_q}$ if $t\hred^* \hnfo$.
    \item $\BT{t}=\bot$ if $t$ is not head normalizing.
  \end{itemize}
\end{definition}

For instance, $\BT{\Om}=\bot$ where $\Om=(\lx.x\,x)(\lx.x\,x)$,   if $t$ a 001-normal form then $\BT{t}=t$ and $\BT{\cuf}=\fom$.  
Intuitively, the computation of \Bohm\ trees is done by a possibly infinite series of head reductions at deeper and deeper levels. 
This corresponds to the asymptotic reduction strategy known as \textbf{hereditary head reduction}\ighp{ (\Sec~\ref{ss:intro-klop-norm-and-red-strat})}.

More formally, we define $\hhred$, a non-deterministic version of hereditary head reduction as follows: if $t$ is a reducible 001-term, we define $\adr{t}$, the applicative depth of reduction of $t$ as the minimal applicative depth of a redex inside $t$. There may be several such redexes, \eg if $t:=x\,(\Id\,\Id)\Omega$, then $\adr{t}=1$ and both $(\Id\,\Id)$ and $\Omega$ are redex occuring at applicative depth 1.

We then set $t\hhred t'$ whenever $t\bred t'$ and the contracted redex occurs at applicative depth $\adr{t}$, thus, with the same example, we have both $t\hhred x\,\Id\,\Om$ and $t\hhred t$. We can also define a (deterministic) parallel version of this reduction,  by specifying that, when $t$ is reducible, all redexes occuring at applicative depth $\adr{t}$ are simulateneously contracted: $x\,(\Id\,\Id)\,(\Id\,\Id)\,(x\,(\Id\,x))$ would give $x\,\Id\,\Id\,(x(\Id\,x))$.

\begin{observation}
  \label{obs:charac-hhn-ad}
Let $t$ be a 001-term. Then $t$ is hereditary head normalizing iff, for all $d\in \bbN$, there is a term $t'$ such that $\adr{t'}\geqs d$ and $t\hhred^* t'$.
\end{observation}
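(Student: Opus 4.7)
\textit{Proof plan.}
The plan is a round-trip argument: I prove $(\Rightarrow)$ by induction on $d$ and $(\Leftarrow)$ by invoking the $\bred$-invariance of Böhm trees. A preparatory \textbf{monotonicity lemma} will be used throughout: a single $\hhred$-step never decreases $\adr$. Indeed, if $t\hhred t'$ contracts a redex at depth $a:=\adr{t}$, then (i) every surviving pre-existing redex already had depth $\geqs a$; (ii) new redexes appearing inside the contractum lie at an extension of the reduction position, so at depth $\geqs a$; (iii) the only way to create a redex outside the contractum is at the immediate parent of the reduction site (when the contractum becomes an abstraction in applicative position), which still sits at depth $a$. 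So no redex of depth strictly below $a$ is produced.

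For $(\Rightarrow)$, I induct on $d$. The case $d=0$ is handled by $t':=t$. For the inductive step, the coinductive definition of HHN yields $t\hred^*\hnfo$ with each $t_i$ hereditary head normalising. Head redexes sit at positions with no occurrence of $2$ (applicative depth $0$), and a term that is not head normal has $\adr=0$; hence head reduction from $t$ is itself a $\hhred^*$-sequence, giving $t\hhred^*\hnfo$. The induction hypothesis applied to each $t_i$ produces $t'_i$ with $t_i\hhred^* t'_i$ and $\adr{t'_i}\geqs d$. I then merge these argument-wise reductions into a single $\hhred^*$-sequence on the full term by a round-robin schedule: at each step, pick an index $i$ achieving the current $\min_j\adr{t_j}$ and perform the next step of the $t_i$-sequence. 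Thanks to the monotonicity lemma applied inside each $t_i$-sequence, this step occurs at depth $\adr{t_i^{\mathtt{cur}}}=\min_j\adr{t_j^{\mathtt{cur}}}$, matching the full term's $\adr=1+\min_j\adr{t_j^{\mathtt{cur}}}$; so it is a licit $\hhred$-step. After finitely many steps the schedule produces $\lam x_1\ldots x_p.x\,t'_1\ldots t'_q$, whose $\adr$ equals $1+\min_i\adr{t'_i}\geqs d+1$.

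For $(\Leftarrow)$, by the coinductive characterisation of HHN (i.e.\ the absence of $\bot$ in $\BT{t}$) it suffices to show $\BT{t}$ contains no $\bot$ at any finite depth. Fix $d\in\bbN$ and use the hypothesis to pick $t'$ with $t\hhred^* t'$ and $\adr{t'}\geqs d+1$. Since $\hhred\subseteq\bred$ and Böhm trees are $\bred$-invariant, $\BT{t}=\BT{t'}$. The inequality $\adr{t'}\geqs d+1$ forces, by a straightforward structural induction, every subterm of $t'$ rooted at a position of applicative depth $\leqs d$ to be a head normal form: a head redex of such a subterm would witness a redex of depth $\leqs d$ in $t'$, contradicting $\adr{t'}\geqs d+1$. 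Hence the first $d+1$ layers of $\BT{t'}$ are read off directly from $t'$ without ever invoking the $\bot$-clause of Definition~\ref{def:bohm-tree-kj}, so $\BT{t}$ has no $\bot$ at depth $\leqs d$. Since $d$ was arbitrary, $\BT{t}$ is $\bot$-free, i.e., $t$ is hereditary head normalising.

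The main obstacle lies in $(\Rightarrow)$: verifying that the round-robin schedule is indeed a valid $\hhred^*$-sequence in the full term crucially uses the monotonicity lemma to keep the current minimum-$\adr$ argument synchronised with the overall minimum depth.
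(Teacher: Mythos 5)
The paper records this statement as an observation without proof, so there is nothing to compare against; judged on its own, your argument is sound in outline and your monotonicity lemma is correct, but one step of the $(\Rightarrow)$ direction fails as written. You claim the round-robin schedule terminates at $\lx_1\ldots x_p.x\,t'_1\ldots t'_q$. It need not, because the index achieving the current minimum may belong to an already exhausted sequence while another sequence still has pending steps at strictly greater depth. Take $d=2$, $q=2$, with $t_1=(\lam w.y(y\,w))(\Id\,z)\hhred y(y(\Id\,z))=t'_1$ (so $\adr{t'_1}=2$) and $t_2=(\lam w.y(y(y\,w)))(\Id\,z)\hhred y(y(y(\Id\,z)))\hhred y(y(y\,z))=t'_2$: after the two depth-$0$ steps the state is $x\,t'_1\,y(y(y(\Id\,z)))$ with argument depths $(2,3)$; the whole term has $\adr=3$, while the remaining $t_2$-step sits at applicative depth $4$, so it is not a licit $\hhred$-step, and indeed $x\,t'_1\,t'_2$ is not $\hhred$-reachable from $x\,t_1\,t_2$ at all. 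The repair is immediate and should be stated: stop the schedule as soon as every index achieving the minimum is exhausted. At that point the minimum equals the $\adr$ of some exhausted argument, hence is $\geqs d$, so the current term already has $\adr\geqs d+1$, which is all the statement requires; and such a state is reached because each licit step consumes one of the finitely many steps of the component sequences.

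Two smaller points. In $(\Leftarrow)$ you invoke invariance of $\BT{\cdot}$ under finite $\bred^{*}$; for possibly infinite $001$-terms this rests on the heavy confluence-modulo-meaningless machinery of Kennaway et al., which the paper otherwise tries to bypass. A self-contained alternative: a $001$-term has at most one redex of applicative depth $0$ (its head redex), so every $\hhred$-step taken while $\adr=0$ is a forced head-reduction step, and once $\adr\geqs 1$ every subsequent $\hhred$-step projects onto an $\hhred$-step of the argument of minimal $\adr$; hence the set of terms satisfying the right-hand side is closed under ``is HN and its head arguments again satisfy it'', and coinduction gives $\bot$-freeness of the B\"ohm tree directly. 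Finally, clause (i) of your monotonicity lemma should speak of the \emph{residuals} of the pre-existing redexes rather than the redexes themselves; the case that actually needs checking is a residual of a redex of the argument $s$, whose new depth is the depth of the substituted occurrence of $x$ plus its depth inside $s$, hence still $\geqs\adr{t}$ — this is covered by your clause (ii), but the division of cases as written obscures it.
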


\ighp{
In the wake of Definition~\ref{def:bohm-tree-kj}, we may now define hereditary permutators.

\begin{definition}
  \label{def:hereditary-permutator} \mbox{}
  \begin{itemize}
  \item For all $x\in \TermV$, the sets $\HP(x)$ of \textbf{$x$-headed Hereditary  Permutators ($x$-HP)} ($x\in \TermV$) are defined by mutual \textit{coinduction}: 
    $$
\infer{   { \begin{array}{c}
      \hpt_1\in \HP(x_1)\ \ldots\ \hpt_n \in \HP(x_n)\sep (n\geqs 0, \sig \in \Perm{n},\  x_i \neq x,\ \text{$x_i$ pairwise distinct})\\  \text{and}\ h\hred^*\lx_1\ldots x_n.x\,\hpt_{\sig(1)}\ldots \hpt_{\sig(n)}
    \end{array}} }{h \in \HP(x)}$$
   
\item A \textit{closed hereditary permutator}, or simply, a \textbf{Hereditary Permutator (HP)} is a term of the form
  $\psym =\lx.h_0$ with $h_0\in \HP(x)$ for some $x$.
  \end{itemize}
\end{definition}

Thus, a $x$-headed hereditary permutator  is the head reduct of a hereditary permutator applied to the variable $x$ 

\begin{theorem}[Bergstra-Klop,\cite{BergstraK80}]
   A $\lam$-term $t$ is a hereditary permutator iff $t$ is invertible  modulo $\beta\eta$-conversion for the operation $\cdot$ defined by $u\cdot v=\lx.u\,(v\,x)$, whose neutral element is $I=\lx.x$.
\end{theorem}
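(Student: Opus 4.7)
The plan is to prove the two directions separately by coinduction, the converse being the more delicate one. For the direct implication, assume $t = \lambda x.h_0$ is a hereditary permutator with $h_0 \hred^\ast \lambda x_1\ldots x_n.x\,h_{\sigma(1)}\ldots h_{\sigma(n)}$ and $h_i \in \HP(x_i)$. I would coinductively assume that each $h_i$, viewed as an $x_i$-pointed permutator, admits an inverse $h_i^{-1}$, and then exhibit $t^{-1}$ as the hereditary permutator whose head pattern uses the inverse permutation $\sigma^{-1}$ and the $h_i^{-1}$'s in matching slots. Verifying that $(t \cdot t^{-1})\,x$ head-reduces to something $\beta\eta$-equivalent to $x$ proceeds by depth: the head variable is preserved, the two permutations cancel, and coinductively each nested argument reduces to an identity. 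A clean formulation phrases this inside a $\beta\eta$-bisimulation that is stable under corecursion on applicative depth.

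For the converse, suppose $t$ has an inverse $s$ under $\cdot$, so that in particular $t(s\,x) =_{\beta\eta} x$ and $s(t\,x) =_{\beta\eta} x$ for a fresh variable $x$. First, $t\,x$ must admit a head normal form: otherwise Church--Rosser for $\beta\eta$ would prevent $s(t\,x)$ from reducing to the plain variable $x$. Write this HNF as $\lambda x_1\ldots x_n.y\,u_1\ldots u_m$. The head variable $y$ must equal $x$ (any other choice leaks an uninvertible head that $s$ cannot absorb), an $\eta$-counting argument on arities forces $m=n$, and a linearity argument on the occurrences of each $x_j$ forces $(u_1,\ldots,u_n)$ to be, up to $\beta\eta$, a permutation of a family $(h_1,\ldots,h_n)$ in which each $h_i$ depends on $x_i$ and only on $x_i$ among the newly bound variables. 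A symmetric extraction recovers inverses of the $h_i$ from $s$, and coinductive reasoning concludes that $t$ meets Definition~\ref{def:hereditary-permutator}.

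The main obstacle will be the converse direction, specifically showing that each head argument $u_i$ is itself an $x_{\sigma^{-1}(i)}$-pointed hereditary permutator rather than some more exotic $\beta\eta$-invertible term. This rests on the fact that $\lambda$-terms cannot duplicate or erase bound-variable information at the head; ruling this out requires careful bookkeeping of where each $x_j$ appears in the head normal form and an argument that $s$ cannot compensate by non-local rewriting. Bergstra and Klop handle this by a delicate syntactic analysis relying on standardisation and the finite developments theorem; in the setting of the present paper, one could alternatively use the infinitary intersection type system $\ttS$ developed in the later sections to give a semantic proof, exploiting the type-theoretic characterization of hereditary permutators to sidestep part of the syntactic combinatorics.
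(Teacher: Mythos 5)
There is no proof in the paper to compare against: this statement is imported verbatim from Bergstra and Klop~\cite{BergstraK80} and used as a black box (indeed, in this version of the paper the whole hereditary-permutator development is disabled via the \texttt{ighp} macro, and the type-theoretic characterization of hereditary permutators lives in the companion FSCD'19 paper, not here). So what you have written is a reconstruction of a famously nontrivial external result, and as such it has to stand on its own. It does not yet: the converse direction, which is the entire content of the theorem, is exactly the part you defer to ``careful bookkeeping'' and to Bergstra--Klop's ``delicate syntactic analysis relying on standardisation and the finite developments theorem.'' Each of the three claims you assert there --- that the head variable of the HNF of $t\,x$ must be $x$, that the arity argument forces $m=n$, and that the head arguments are, up to $\beta\eta$, a permutation of terms $h_i$ each depending on a single bound variable $x_i$ --- is stated without proof, and none is easy (for instance, $\eta$-padding means $m=n$ is not a mere counting fact, and excluding ``exotic'' invertible head arguments is precisely the inductive/coinductive heart of the theorem). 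The proposed escape route via system $\ttS$ does not help inside this paper, and even in the companion paper the typing characterizes the coinductively defined set $\HP(x)$, not invertibility, so it cannot substitute for this analysis.

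There is also a genuine issue in your ``easy'' direction. Your verification that $t\cdot t^{-1}$ is an inverse proceeds by a bisimulation ``stable under corecursion on applicative depth''; such an argument establishes equality of B\"ohm trees (up to possibly infinite $\eta$-expansion), not finitary $\beta\eta$-convertibility. For an infinite hereditary permutator such as Wadsworth's $J$ with $J\,z\bred^* \lam x_1.z(J\,x_1)$ (identity permutation at every level), the term $\lam x.J(s\,x)$ has no $\beta$-normal form for any $s$, so by Church--Rosser it is never literally $\beta\eta$-convertible to $\lx.x$; invertibility here must be read in $\mathcal{H}^*$ or modulo B\"ohm-tree equality with infinite $\eta$, which is how Bergstra and Klop state it. This mismatch is arguably inherited from the paper's informal phrasing of the theorem, but your proof should either restrict to finite hereditary permutators (Dezani's earlier result) or make the infinitary equational theory explicit; as written, the direct implication proves a different statement from the one claimed.
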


Thus, $u$ is invertible when there exists $v$ such that $\lx.u(v\,x)=_{\beta\eta}=\lx.v(u\,x)=_{\beta\eta} I$. 
An extensive presentation of hereditary permutators and their properties is given in Chapter 21 of \cite{Barendregt85}.

Definition~\ref{def:hereditary-permutator} can be read as the specification of a set of terms whose \Bohm\ trees have a particular form.
}

\subsection*{Infinitary convergence.}
When $t$ is hereditarily head normalizing ($\BT{t}$ does not contain $\bot$), hereditary head reduction is productive in the sense that, whenever we wait sufficiently many evaluation steps, the term will not change under any fixed applicative depth (Observation~\ref{obs:charac-hhn-ad}). This is actually why hereditary head reduction computes something (in this case, the \Bohm\ tree of $t$). More generally, 
some reduction paths of an unspecified form have an infinite length but asymptotically produce a term:

\begin{definition}[Productive reduction paths]
\label{def:scrs-kj}
Let $t=t_0\breda{b_0} t_1\breda{b_1} t_2 \ldots t_n \breda{b_n} t_{n+1}\ldots$ be a reduction path of length $\ell \leqs \infty$.\\
Then, this reduction path is said to be \textbf{productive} if either it is of finite length ($\ell \in \bbN$), or $\ell=\infty$ and $\ad{b_n}$ tends to infinity (remember that $\ad{\cdot}$ is applicative depth). 
\end{definition}

A productive reduction path is called a \textit{strongly converging reduction sequence} in \cite{KennawayKSV97}, in which numerous examples are found. When $\BT{t}$ does not contain $\bot$, the hereditary head reduction strategy on a term $t$ gives a particular case of productive path.

\begin{lemma}[Limit of a productive path]
  \label{lem:limit-prod-paths-hp}
  Let $t=t_0\breda{b_0} t_1\breda{b_1} t_2 \ldots t_n \breda{b_n} t_{n+1}\ldots$ be a productive reduction path of infinite length.\\
  Then, there is a unique 001-term $t'$ such that, for every $d\geqs 0$, there is $N\in \bbN$ such that, for all $n\geqs N$, $\supp{t_n}\cap \set{b\in \setpos\,|\, \ad{b}\leqs d}=\supp{t'}\cap\set{b\in \setpos\,|\, \ad{b}\leqs d}$ and $t_n(b)=t'(b)$ on $\supp{t'}\cap \set{b\in \setpos\,|\,\ad{b}\leqs d}$.
\end{lemma}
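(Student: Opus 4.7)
\begin{proofsketch}
The plan is to build $t'$ as a pointwise limit, exploiting the fact that, beyond a certain stage, each reduction step only modifies parts of the term lying strictly below any fixed applicative depth.

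First I would establish the following stability lemma: if $t \breda{b} t''$ in $\Lamfty$ and $b'' \in \supp{t}$ satisfies $\ad{b''} < \ad{b}$, then $b'' \in \supp{t''}$ and $t(b'') = t''(b'')$. This holds because $b''$ cannot be an extension of $b$ (otherwise $\ad{b''} \geq \ad{b}$), so $b''$ lies outside the subtree rooted at the contracted redex and neither its position nor its label is disturbed. More generally, writing $T^d_n := \{b \in \supp{t_n} \mid \ad{b} \leq d\}$ and letting $\tau_n^d$ denote $t_n$ restricted (with labels) to $T^d_n$, the stability lemma iterated gives: if $\ad{b_n} > d$, then $\tau_n^d = \tau_{n+1}^d$.

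Next, since the path is productive and infinite, $\ad{b_n} \to \infty$, so for every $d \in \bbN$ there is an $N_d$ such that $\ad{b_n} > d$ for all $n \geq N_d$. Consequently the sequence $(\tau_n^d)_n$ is eventually constant; call its eventual value $\tau^d$. The family $(\tau^d)_d$ is monotone in the natural sense ($\tau^d$ is the restriction of $\tau^{d+1}$ to positions of applicative depth $\leq d$). I would then define a candidate 001-term $t'$ by taking $\supp{t'} := \bigcup_d \mathrm{dom}(\tau^d)$ and letting $t'(b)$ be the common value of $\tau^d(b)$ for any $d \geq \ad{b}$. This $t'$ is a well-defined labelled tree, and the desired equalities $\supp{t_n}\cap\{b:\ad{b}\leq d\} = \supp{t'}\cap\{b:\ad{b}\leq d\}$ and the pointwise agreement of labels hold by construction for $n \geq N_d$.

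It then remains to check that $t'$ is a 001-term and that it is unique. For the 001-property, suppose $t'$ had an infinite branch $\gamma$ with $\ad{\gamma} = k < \infty$; then every finite prefix of $\gamma$ lies in $\mathrm{dom}(\tau^k)$, hence eventually in $\supp{t_n}$, producing an infinite branch of applicative depth $k$ in the 001-term $t_n$, a contradiction. For uniqueness, any two 001-terms satisfying the stated property agree on all positions of every finite applicative depth; since every position in a 001-term is a finite word with finite applicative depth, this forces equality. The main (minor) obstacle is the careful verification of the stability lemma, in particular handling the case where $b''$ is a strict prefix of $b_n$ (the label at $b''$ is preserved because $\beta$-reduction only rewrites the subterm at $b_n$) versus the case where $b''$ is incomparable with $b_n$ (preservation is immediate).
\end{proofsketch}
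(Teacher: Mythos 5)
Your proposal is correct, and it is essentially the argument the paper has in mind: the paper omits the proof of this lemma (declaring existence and unicity "easy to prove"), and the stabilization-at-fixed-applicative-depth argument you give is exactly the one the paper itself deploys later, e.g.\ in the proofs of Propositions~\ref{prop:inf-subj-red-Ro} and \ref{prop:inf-subj-exp-Ro}. The only routine detail left implicit is that the union of the coherent family $(\tau^d)_d$ is a well-formed $\lam$-tree (children of each node present with consistent labels), which follows by taking $d=\ad{b}+1$ at each position $b$.
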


The existence and the unicity of the term $t'$ in the statement of Lemma~\ref{lem:limit-prod-paths-hp} is easy to prove, and $t'$  is called the \textbf{limit} of the productive path. Intuitively, the lemma specifies that, when $t'$ is the limit of $(t_n)_{n\geqs 0}$, then $t'$ induces the same tree as $t_n$ at fixed applicative depth after sufficiently many reduction steps, as expected. 

\begin{notation}
We write $t\bredfty t'$ if $t\bred^* t'$ or $t'$ is the limit of a productive path starting at $t$. 
\end{notation}


For instance,
if $\Delf=\lx.f(x\,x)$, $\cuf=\Delf\,\Delf$ (with $f\in \TermV$), then $\cuf\breda{\epsi} f(\cuf)$, which gives the productive path $\cuf \breda{\epsi} f(\cuf)\breda{2}\ldots f^n(\cuf)\breda{2^n} f^{n+1}(\cuf)\ldots$ since $\ad{2^n}=n\longrightarrow \infty$. 
The limit of this path, which implements hereditary head reduction on $\cuf$, is $\fom$, \ie we have $\cuf\bredfty \fom$. This formalizes observations on the first page on this article. More generally, if $t$ is hereditary head normalizing, by Observation~\ref{obs:charac-hhn-ad}, hereditary head reduction on $t$ defines a productive reduction path whose limit is the normal form of $t$, whether this path is finite or not. As a counter-example, $t:= x\,\cuf\,\Om$ is not hereditary head normalizing and $\hhred$ does not stop on $t$, since $t\hhred x\,(f\,\cuf) \Om \hhred x\,(f\,\cuf)\Om$. However, the computation of $\cuf$ in $t$ gives a productive reduction path from $t$ to $x\,\fom\,\Om$.

Notice that $\Om\bred \Om\bred \ldots$ is not a productive path (although the term does not change)  since all reduction steps occur at null applicative depth.


By Observation~\ref{obs:charac-hhn-ad}, a term $t$ is hereditarily head normalizing iff $\hhred$ on $t$ defines a productive reduction path (whose limit is necessarily a 001-normal form by construction, whether this path is of finite length or not). Let us now have look at the alternative notion of infinite normalization.
 
\begin{definition}
\label{def:inf-wn-kj}
A 001-term $t$ has \textbf{an infinitary normal form} if there is a 001-normal form $t'$ such that $t\bredfty t'$ (for some productive path). 
\end{definition}

The reader may keep in mind that, when we say that $t$ has an infinitary normal form $t'$, $t'$ can be either finite or infinite.

It is known that $t$ has an infinitary normal form \ iff $t$ is hereditarily head normalizing, \ie iff the \Bohm\ tree of $t$ does not contain $\bot$. Notice that, if $\BT{t}$ does not contain $\bot$ (\eg when $t=\cuf$), then it is obvious that $t$ has an infinitary normal form by Definitions~\ref{def:bohm-tree-kj} and \ref{def:scrs-kj}, but the converse implication is not, especially in the infinite case~\cite{KennawayKSV97}. 
The statement is proved in \cite{KennawayKSV97} in a syntactical way, but Theorem~\ref{th:charac-WN-S} shall give an alternative  \textit{semantic} proof of this fact. Notice that, contrary to the finite case, this does not hold for leftmost-outermost reduction strategy. If $t :=\cuf\,\cuf$, whose \Bohm\ tree $t':= (\fom)\fom$ does not contain $\bot$, the leftmost-outermost strategy asymptotically computes $\fom\,\cuf$ and not $t'$: hereditary head reduction can be interpreted as a well-balanced version of l.o. reduction.

    
\begin{remark}[Variants of the infinitary $\lam$-calculus] The infinitary $\lam$-calculus has actually seven main variants, which were introduced by Klop and its collaborators~\cite{KennawayKSV97}.  Among those, only three behave well, \ie satisfy the weakened form of confluence. The infinitary calculus that is considered here is one of these 3 calculi, which is referred to as $\Lamzzu$. Its normal forms, as we shall recall in \Sec~\ref{ss:comput-bohm-trees-hp}, are the \Bohm\ trees (without $\bot$). The two other well-behaved infinitary calculi are $\Lamuuu$ and $\Lamuzu$ and their normal forms respectively correspond to the Berarducci trees~\cite{BerarducciI93,Berarducci96} and the Lévy-Longo trees~\cite{Levy75a,Longo83}.
\end{remark}

\subsection{The Finitary Type System $\scrRo$ and Unforgetfulness}
\label{ss:system-Ro-Klop} 

\newcommand{\ttocc}{\mathtt{occ}}
\newcommand{\occm}[1]{\ttocc_{\ominus}(\emul,#1)}
\newcommand{\occp}[1]{\ttocc_{\oplus}(\emul,#1)}
\newcommand{\occe}[1]{\ttocc_{\epsi}(\emul,#1)}
\newcommand{\occme}[1]{\ttocc_{-\epsi}(\emul,#1)}


In this section, we present a well-known variant system of Gardner-de Carvalho's using multisets to represent non-idempotent intersection. This system characterizes \textit{head }normalization. We state its main properties (subject reduction and expansion) and then we explain how it also enables characterizing \textit{weak} normalization, using \textit{unforgetfulness}.

System $\scrRo$  features \textit{non-idempotent} intersection types ~\cite{Carvalho07,Gardner94}, given by the following \textit{inductive} grammar:
$$\sigma,\,\tau \sep :: = \tv \sep |\sep \msigi \rew \tau $$
where the constructor $\ems$ is used for \textit{finite} multisets ($I$ is finite), and the type variable $\tv$ ranges over a countable set $\TypeV$. We write
$\mult{\sig}_n$ to denote the multiset containing $\sig$ with
multiplicity $n$. The multiset $\msigi$ is meant to be the non-idempotent 
intersection of the types $\sigi$, taking into account their
\textit{multiplicity}, and intersection is \textit{strict}, \ie does not occur on codomain of arrows\ighp{, as explained in \Sec~\ref{ss:syntax-direction-discuss-intro}}. 

In system $\scrRo$,  a \textit{judgment} is a triple $\ju{\Gam}{t:\tau}$, where $\Gam$ is a context, \ie a \textit{total} function from the set $\scrV$ of term variables to the set of multiset types $\msigi$, $t$ is a (finite or infinite) 001-term and $\tau$ is a type. The context $x:\msigi$ is the context $\Gam$ such that $\Gam(x)=\msigi$ and $\Gam(y)=\emul$ for all $y\neq x$. 
The multiset union +, satisfying \eg  $\mult{\tau,\sig}+\msig=\mult{\tau,\sig,\sig}$, is extended point-wise on contexts, \eg $\Gam+\Del$ maps $x$ on $\Gam(x)+\Del(x)$. We set $\dom{\Gam}=\set{x\in \TermV\,|\,\Gam(x)\neq \emul}$, the \textbf{domain} of the context $\Gam$. When $\dom{\Gam}\cap \dom{\Del}=\eset$,
 we may write $\Gam;\Del$ instead of $\Gam+\Del$.
The set of $\scrRo$-derivations is defined \textit{inductively} by the rules in Fig.~\ref{fig:system-Ro-klop}:
\begin{figure}[!h]
\bcen \ovalbox{$
  \begin{array}[t]{c}\infer[\ax]{ \phd}{ \ju{x:\mult{\tau} }{x:\tau}} \sep\sep \infer[\abs]{ \ju{ \Gam;x:\msigi}{t:\tau}}{\ju{\Gam}{\lx.t:\msigi\rew \tau}} \\[0.6cm]
    \infer[\app]{\ju{\Gam}{t:\msigi\rew \tau}\sep (\ju{\Deli}{u:\sigi})_{\iI} }{    \ju{\Gam+_{\iI} \Deli}{t\,u:\tau}}
    \end{array}$}\ecen
    \vspace*{-0.3cm}
    \caption{System $\scrRo$}
    \label{fig:system-Ro-klop}
\end{figure}

  We write $\Pi \tri \ju{\Gamma}{t: \tau}$ to mean that the \textit{finite} derivation $\Pi$ concludes with the judgment $\ju{\Gamma}{t:\tau}$ and $\tri \ju{\Gamma}{t: \tau}$ to mean that $\ju{\Gamma}{t:\tau}$ is derivable. 

  No weakening is allowed (\textbf{relevance}). In particular, the $\ax$-rule concludes with
  $\ju{x:\msig }{x:\sig}$ and not with $\ju{\Gam,x:\msigi}{x:\sig_{i_0}}$
  (for some $i_0\in I$). 
  Since contexts are total, if $\juGtt$ and $x\notin \dom{\Gam}$, then $\lx.t$ is always typable: rule $\abs$ gives $\ju{\Gam}{\lx.t:\emul\rew \tau}$.
  Relevance ensures that
  $\lambda x.x$ (resp. $\lx.y$) can be typed with $\mult{\tau}\rew \tau$ (resp. $\emul\rew \tau$), but \textit{not} with $\mult{\tau,\sig}\rew \tau$ (resp. $\mult{\tau}\rew \tau$).

  $$\infer[\abs]{\infer[\ax]{}{\ju{x:\mtau}{x:\tau}}}{\ju{}{\lx.x:\mtau\rew \tau}}\sep\sep\sep
  \infer[\abs]{\infer[\ax]{}{\ju{y:\mtau}{y:\tau}}}{\ju{x:\mtau}{\lx.y:\emul\rew \tau}}
  $$
  
  A straightforward induction shows that if $\juGtt$ is derivable, then $\dom{\Gam}$ is finite. 
  The definition below allows formulating quantitative properties of the system. 

\begin{remark}[Relevance and redundancy of context]
\label{rk:relevance-computation-contexts-Ro}
Contexts are actually superfluous in system $\scrRo$, since, by induction on $\Pi\tri \juGtt$, for all $x\in \scrV$, $\Gam(x)$ is the multiset of the types assigned to $x$ in axiom rules. All the cases are straightforward. In a non-relevant system, this does not hold, since $x$ may have been assigned types \via the weakening rule. However, we prefer to keep contexts explicit in system $\scrRo$ although they are superfluous, since they make the system easier to formulate.
\end{remark}

  \begin{definition}
\label{def:size-Ro-inf-klop}
    Let $\Pi$ be a $\scrRo$-derivation. Then the \textbf{size} of $\Pi$, denoted $\sz{\Pi}$, is the number of judgments in $\scrRo$.
  \end{definition}
  
  Observe that system $\scrRo$ is essentially finite (a derivation contains a finite number of judgments, judgments contains finite number of types, types are finite trees), although it may type infinite terms, which  can appear as \textit{untyped} arguments of applications, \eg $\fom$ ($:=f(\fom)$) in:
 $$\Pi'_1=\infer[\app]{\infer[\ax]{\phd}{\jufera}}{\ju{f:\mult{\erewa}}{\fom:\tv}}$$
The derivation $\Pi'_1$ is finite: it contains only two judgments, and finite types and contexts.

  System $\scrRo$ enjoys both \textbf{subject reduction} and \textbf{expansion}, meaning that types are invariant under (anti-)reduction (if $t\rew t'$, then $\tri \juGtt$ iff $\tri \ju{\Gam}{t':\tau}$). Notice the decrement of size in subject reduction statement.

  \begin{proposition}[Weighted Subject Reduction and Expansion in system $\scrRo$] \label{prop:sr-se-Ro-inf-klop}
    Assume $t\bred t'$.
    \begin{enumerate}
    \item If $\juGtt$ is $\scrRo$-derivable, so is $\juGtpt$.
    Moreover, if $\Pi\tri \juGtt$, then there is $\Pi'\tri \juGtpt$ such that
      $\sz{\Pi'}\leqs \sz{\Pi}$, with actually $\sz{\Pi'}<\sz{\Pi}$ when $t\hred t'$.
      \item If $\juGtpt$ is $\scrRo$-derivable, so is $\juGtt$.        
    \end{enumerate}    
  \end{proposition}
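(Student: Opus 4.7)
\begin{proofsketch}
The plan is to reduce everything to a substitution lemma (for part 1) and an anti-substitution lemma (for part 2), and then conclude by induction on the reduction context. Both lemmas are standard in non-idempotent intersection type theory and exploit the fact that multiplicities in intersections record exactly how many times a variable is typed.

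First, I would establish the \textbf{substitution lemma}: given $\Pi \tri \ju{\Gamma;x:\msigi}{r:\tau}$ and a family $(\Pi_i \tri \ju{\Delta_i}{s:\sigma_i})_{\iI}$, one can build $\Pi\{\Pi_i/x\}_{\iI}\tri \ju{\Gamma + \sum_{\iI}\Delta_i}{\rsx:\tau}$ with
$\sz{\Pi\{\Pi_i/x\}_{\iI}} = \sz{\Pi} + \sum_{\iI}\sz{\Pi_i} - |I|$.
The proof is by induction on $\Pi$; the key case is the $\ax$ rule $\ju{x:\mult{\sigma_{i_0}}}{x:\sigma_{i_0}}$, which is replaced by $\Pi_{i_0}$; the $\abs$ and $\app$ cases require the multiset structure of contexts to partition the family $(\Pi_i)_{\iI}$ according to where each $\sigma_i$ is used. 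Relevance (contexts computed from axioms, Remark~\ref{rk:relevance-computation-contexts-Ro}) guarantees that the multiset $\msigi$ coincides with the collection of types assigned to $x$ in axiom rules.

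Part (1) then follows by induction on the (one-hole) context $C$ such that $t = C\langle \lxrs \rangle$ and $t' = C\langle \rsx\rangle$. In the base case $C = \Box$ (head step), the derivation $\Pi$ must end with an $\app$ applied to an $\abs$, so it has an immediate subderivation $\Pi_0 \tri \ju{\Gamma;x:\msigi}{r:\tau}$ and arguments $\Pi_i \tri \ju{\Delta_i}{s:\sigma_i}$; applying the substitution lemma yields a derivation of $\juGtpt$ of size $\sz{\Pi} - 2 - |I|< \sz{\Pi}$ (the $-2$ accounting for the consumed $\abs$ and $\app$ rules). For the inductive step, I would split on whether the redex occurs inside a subderivation (in which case the IH gives a strictly smaller replacement and the overall size strictly decreases) or in an argument position where the multiset is empty, in which case no subderivation is modified and the size is preserved, witnessing the weak inequality $\sz{\Pi'} \leqs \sz{\Pi}$ in the general $\bred$ case. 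The head case $t \hred t'$ always falls in the former situation, giving the strict inequality.

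For part (2), the central tool is the \textbf{anti-substitution lemma}: for any $\Pi' \tri \ju{\Gamma'}{\rsx:\tau}$, there exist $I$, a typing $\Pi_0 \tri \ju{\Gamma;x:\msigi}{r:\tau}$ with $x \notin \dom{\Gamma}$, and derivations $(\Pi_i \tri \ju{\Delta_i}{s:\sigma_i})_{\iI}$ such that $\Gamma' = \Gamma + \sum_{\iI}\Delta_i$. This is proved by induction on $r$, using the fact that in the non-idempotent setting each typed occurrence of $s$ in $\rsx$ arising from an occurrence of $x$ in $r$ carries its own derivation, so the collection of types can be read off the derivation $\Pi'$ unambiguously (the genuinely pre-existing occurrences of $s$ in $r$ are handled by the inductive hypothesis on subterms). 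Given the lemma, subject expansion of a head step $\lxrs \bred \rsx$ is immediate by one application of $\abs$ followed by $\app$, and the general case is again by induction on the reduction context. The only delicate point I anticipate is the book-keeping in the anti-substitution lemma when $r$ already contains occurrences of $s$; I would resolve it by working with a formal distinction (e.g.\ marked occurrences) between the $x$-generated copies and the original occurrences of $s$, which is transparent in the inductive definition of $\rsx$.
\end{proofsketch}
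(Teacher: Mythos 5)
Your proposal is correct and follows essentially the same route as the paper: a Substitution Lemma with the exact size accounting $\sz{\Pi'}=\sz{\Pi}+(+_\iI\sz{\Phi_i})-\card{I}$ and an Anti-Substitution Lemma, each proved by structural induction, followed by an induction on the position of the contracted redex, with the weak inequality arising precisely when the redex sits in an untyped argument. Your additional observations (the $-2-\card{I}$ decrease in the base case, the head redex always being typed, and the book-keeping for pre-existing occurrences of $s$ in anti-substitution) are all consistent with the paper's treatment.
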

 
  \begin{proof}
 Both points are proved by \textit{induction} on the position of the reduction, using the following Substitution and Anti-Substitution Lemmas in the base case.
 \end{proof}

  \begin{lemma}[Substitution and Anti-Substitution in $\scrRo$]\mbox{}
    Let $r$ and $s$ be two 001-terms.
\begin{itemize}
\item If $\Pi \tri \ju{\Gam;x:\msigi}{r:\tau}$ and, $\forall\iI$, $\Phi_i\tri \ju{\Del_i}{s:\sigi}$, then there is a derivation $\Pi'\tri \ju{\Gam + (+_\iI \Del_i)}{\rsx:\tau}$ such that $\sz{\Pi'}=\sz{\Pi}+(+_\iI\sz{\Phi_i})-\card{I}$ where $\card{I}$ is the cardinality of $I$.
\item If $\Pi'\tri \ju{\Gam'}{\rsx:\tau}$, then there are derivations $\Pi \tri \ju{\Gam;x:\msigi}{r:\tau}$ and $\forall\iI$, $\Phi_i\tri \ju{\Del_i}{s:\sigi}$  such that $\Gam'=\Gam + (+_\iI \Del_i)$.
\end{itemize}
\end{lemma}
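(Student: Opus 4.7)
\begin{proofsketch}
The plan is to prove both statements by straightforward induction, on the structure of $\Pi$ for substitution and on the structure of the subject $r$ for anti-substitution. The crux of the non-idempotent machinery is that each axiom rule typing $x$ consumes exactly one element of the multiset $\msigi$, so the bookkeeping remains unambiguous.

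For substitution, I would induct on $\Pi \tri \ju{\Gam;x:\msigi}{r:\tau}$. In the axiom case, if $r = x$, relevance forces $\Gam$ to be empty and $I$ to be a singleton $\{i_0\}$ with $\sig_{i_0} = \tau$, so that $\rsx = s$; I set $\Pi' := \Phi_{i_0}$ and the size equation holds trivially since $\sz{\Pi} = 1$ and $\card{I} = 1$. If $r = y \neq x$, relevance forces $I = \eset$, $\rsx = r$, and $\Pi' := \Pi$ works. In the abstraction case $r = \lam y.r_0$, I apply the IH to the premise typing $r_0$ and close with $\abs$. In the application case $r = r_1\,r_2$, the multiset $\msigi$ decomposes uniquely as the multiset union of the share of $x$ used in the context typing $r_1$ and of the shares used in the contexts typing each copy of $r_2$; I split the family $(\Phi_i)_{\iI}$ accordingly, apply the IH to each premise, and recombine with $\app$. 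The size identity then follows by additivity across premises.

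For anti-substitution, I would proceed by induction on $r$, reading off the last rule of $\Pi'$. If $r = x$, then $\rsx = s$ and $\Pi'$ itself witnesses a typing of $s$; I take $I$ to be a singleton with $\sig_* := \tau$, $\Phi_* := \Pi'$, and $\Pi$ the axiom $\ju{x:\mtau}{x:\tau}$, with $\Gam$ empty. If $r = y \neq x$, then $I = \eset$ and $\Pi := \Pi'$. If $r = \lam y.r_0$, the derivation $\Pi'$ must end with $\abs$, and I apply the IH to its premise before reclosing with $\abs$. If $r = r_1\,r_2$, then $\Pi'$ ends with $\app$ with one premise typing $r_1\subx{s}$ and finitely many premises typing $r_2\subx{s}$; I apply the IH to each, then take the disjoint union (with re-indexing) of all the resulting families of $s$-typings, producing the required family and a new derivation of $r$ via $\app$.

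The main obstacle, in both directions, is the combinatorics of the application case: one must ensure that the splitting (for substitution) or merging (for anti-substitution) of the multiset indexing the $s$-typings is well-defined and consistent with the multiset union of contexts at an $\app$ rule. Relevance and non-idempotency are precisely what make this decomposition canonical, so the induction goes through without ambiguity and the size equation is preserved additively at each step.
\end{proofsketch}
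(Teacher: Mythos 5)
Your proof is correct and follows essentially the same route as the paper, which proves substitution by structural induction on $\Pi$ and anti-substitution by induction on $\Pi'$ with a case analysis on the last constructor of $r$ — exactly your scheme, with the same use of relevance to make the axiom cases and the multiset splitting in the application case canonical. Your write-up simply supplies the details the paper leaves implicit.
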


  \begin{proof}
Substitution is proved by structural induction on $\Pi$. Anti-substitution is proved by induction on $\Pi'$, with a case-analysis on the last constructor of $r$.
  \end{proof}

\noindent  The following result was proved in \cite{Carvalho07} for $\Lam$. We have slightly adapted it for $\Lamzzu$:

\begin{theorem}[de Carvalho]
\label{th:HN-Ro}
  A 001-term is HN iff it is typable in system $\scrRo$.
\end{theorem}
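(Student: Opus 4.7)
The proof plan follows the standard circular pattern recalled in Figure~\ref{fig:its-fundamental-diag}, exploiting the weighted subject reduction of system $\scrRo$ for soundness and subject expansion for completeness.

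\textbf{Soundness ($\Rightarrow$).} Assume $\Pi \tri \juGtt$ in $\scrRo$. I shall show that the head reduction strategy terminates on $t$, which implies that $t$ is HN. The argument relies crucially on the \emph{strict} decrement in Proposition~\ref{prop:sr-se-Ro-inf-klop}(1): if $t \hred t'$, then $\juGtpt$ is derivable by some $\Pi'$ with $\sz{\Pi'} < \sz{\Pi}$. Suppose towards a contradiction that head reduction does not terminate on $t$: this would produce an infinite sequence $t = t_0 \hred t_1 \hred t_2 \hred \ldots$, hence by iterating weighted subject reduction an infinite strictly decreasing sequence of natural numbers $\sz{\Pi_0} > \sz{\Pi_1} > \sz{\Pi_2} > \ldots$, contradicting the well-foundedness of $\bbN$. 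Thus head reduction stops on $t$ after finitely many steps, and the terminal term is necessarily a HNF (otherwise, being not of the form $\hnfo$, it would still contain a head redex and head reduction would apply).

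\textbf{Completeness ($\Leftarrow$).} Assume $t$ is HN, so there is a finite reduction path $t \hred^* t'$ with $t'$ a HNF, say $t' = \lx_1 \ldots x_p.x\,t_1\ldots t_q$. I first establish that every HNF is $\scrRo$-typable, using the flexibility of empty multisets on the left of arrows. Pick any type variable $\tv \in \TypeV$ and assign to the head variable $x$ the arrow type $\underbrace{\emul \rew \ldots \rew \emul}_{q} \rew \tv$. By the $\app$-rule applied $q$ times with \emph{no} premise derivations for the arguments (the indexing set $I$ is empty in each case, so the $t_i$ remain untyped), we obtain $\ju{x:\mult{\emul \rew \ldots \rew \emul \rew \tv}}{x\,t_1\ldots t_q : \tv}$. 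Finally, $p$ applications of $\abs$ give a derivation of $t'$. Once $t'$ is typed, I iterate Proposition~\ref{prop:sr-se-Ro-inf-klop}(2) along the reduction path $t \hred^* t'$ (of finite length, even though $t$ may be an infinite 001-term) to obtain a derivation of $\juGtt$ in $\scrRo$.

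\textbf{Expected difficulty.} The argument is essentially routine for non-idempotent intersection, with the subtleties already packaged in Proposition~\ref{prop:sr-se-Ro-inf-klop}. The only mildly delicate point is the soundness direction, where one must check that the terminal state of head reduction is indeed a HNF; this relies on the structural fact that any term which is not in HNF always admits a head redex, so head reduction applies. Note also that, although the subject $t$ may be an infinite 001-term, all derivations are finite and the reduction path $t \hred^* t'$ witnessing head normalization has finite length, so the finitary machinery of $\scrRo$ applies without modification. This is why the mere adaptation of the ambient calculus from $\Lam$ to $\Lamzzu$ requires essentially no change to de Carvalho's original proof.
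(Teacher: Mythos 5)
Your proof is correct and follows essentially the same route as the paper: soundness via the strict size decrease of weighted subject reduction under head steps (Proposition~\ref{prop:sr-se-Ro-inf-klop}) and well-foundedness of $\bbN$, completeness by typing the head normal form with $\emul\rew\ldots\rew\emul\rew\tv$ on the head variable (leaving the arguments untyped) and pulling the typing back along the finite head-reduction path by subject expansion. The extra care you take about the terminal state being a HNF and about the finiteness of derivations over infinite 001-terms is consistent with the paper's (more terse) argument.
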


\begin{proof}
Assume that $\Pi\tri \juGtt$. Let $n=\sz{\Pi}$. Then weighted subject reduction entails that the head reduction strategy outputs a HNF in less than $n$ steps. Conversely, head normal forms are easily $\scrRo$-typable: to type $\hnfo$, we just need to assign $\emul \rew \ldots \rew \emul\rew \tv$ (with $q$ arrows) to the head variable $x$, so that $t_1,\ldots, t_q$ are left untyped. We then use subject expansion to conclude that every HN term is typable.
\end{proof}

\noindent \textbf{Characterizing finitary weak normalization.}  
 A term $t \in \Lamzzu$ is finitarily weakly normalizing when  leftmost-outermost reduction terminates on $t$ and outputs a \textit{finite} normal form $t'$ (note that if $u$ is infinite, then $(\lx.y)u$ is infinite but reduces to the finite normal form $y$ and is thus finitarily weakly normalizing in the sense that we consider here).

 To understand how it is done, observe first that system $\scrRo$ characterizes head normalization (and not weak normalization) because the arguments of a head variable can be left untyped: for instance, $x$ is assigned $\ems\rew \tau$, then $x\,t$ is typable with type $\tau$ for any term $t$ -- which is left untyped as a subterm of $x\,t$ -- even if $t$ is not head normalizing. 
 $$\infer[\app]{\infer[\ax]{\phd}{\ju{x:\mult{\emul \rew \tau}}{x:\emul \rew \tau}}}{\ju{x:\mult{\emul \rew \tau}}{x\,t :\tau}}$$
As we explained in \Sec~\ref{ss:intersection-overview-klop-intro} (Partial typings), the possibility to have untyped arguments is crucial to characterize head normalization because this does not forbid, \eg the head normal form $x\,\Om$ to be typed.

On the other hand, in order to characterize weak normalization, we must guarantee somehow that every subterm is typed, except the subterms that can be erased in a reduction paths, \eg the subterm $u$ in $(\lx.y)u$ should not be typed, since $(\lx.y)u\bred y$. To achieve this, $\ems$ should not occur at bad positions in a derivation $\Pi$. Actually, it is enough to only look at the judgment concluding
$\Pi$. We recall that  $\ems$ occurs negatively in $\ems\rew \tau$ and that $\ems$ occurs positively (resp. negatively) in $\msigi \rew \tau$ if $\ems$ occurs positively (resp. negatively) in $\tau$ or negatively (resp. positively) in some $\sig_i$, \ie the polarity of the occurrences of $\emul$ in a type is preserved in the codomain of arrows and inverted in the domain of arrows.


We say here that judgment $\juGtt$ is \textbf{unforgetful} when
$\ems$ occurs neither negatively in any $\Gam(x)$ nor positively in $\tau$. A term $t$ is unforgetfully typable if an unforgetful judgment $\juGtt$ is derivable. This is given by the following theorem~\cite{Carvalho18}, which is the non-idempotent counterpart of a similar characterization with idempotent intersection based on unforgetfulness:

\begin{theorem}
  \label{th:WN-Ro}
  A term $t$ is finitarily weakly normalizing iff it is unforgetfully typable in system $\scrRo$
\end{theorem}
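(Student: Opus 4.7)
The plan is to follow the usual intersection-types scheme of Figure~\ref{fig:its-fundamental-diag} (soundness + completeness), using \emph{unforgetfulness} as an invariant of the proof.

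\medskip

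\noindent\textbf{A preliminary key lemma: unforgetfulness is preserved by one-step reduction and expansion.} I would first verify that if $t \bred t'$ and $\juGtt$ is unforgetful, then the $\scrRo$-derivation of $\juGtpt$ produced by Proposition~\ref{prop:sr-se-Ro-inf-klop} still concludes with an unforgetful judgment, and symmetrically for expansion. The point is that types and contexts are not created during subject reduction (multisets are only shifted from $\abs$-nodes to their $\app$-nodes), so the set of types appearing in the concluding judgment can only \emph{shrink}, which cannot introduce positive $\emul$ in the right-hand side or negative $\emul$ in the context. For subject expansion the target judgment is unchanged, so unforgetfulness is obviously preserved.

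\medskip

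\noindent\textbf{($\Leftarrow$) Unforgetful typability $\Rightarrow$ weak normalization.} Let $\Pi \tri \juGtt$ be unforgetful. I would proceed by strong induction on $\sz{\Pi}$. By Theorem~\ref{th:HN-Ro}, $t$ is head normalizing; applying weighted subject reduction (Proposition~\ref{prop:sr-se-Ro-inf-klop}.1) along the head reduction strategy gives $t \hred^* t^\star = \hnfo$ together with a derivation $\Pi^\star \tri \ju{\Gam}{t^\star{:}\tau}$ with $\sz{\Pi^\star} \leqs \sz{\Pi}$ and $\Pi^\star$ still unforgetful by the key lemma. Inspecting $\Pi^\star$: after $p$ $\abs$-rules one finds an $\app$-spine typing $x\,t_1 \ldots t_q$, so the head variable $x$ must be typed in some axiom with an arrow type of the shape $\msigi[1] \rew \cdots \rew \msigi[q] \rew \tau_0$, and $\Gam(x)$ contains this type. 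The arity argument positions $\msigi[j]$ lie in \emph{negative} position in $\Gam(x)$, hence by unforgetfulness no positive $\emul$ occurs in any $\sigi[j]$. For each $j$ and each $\sig \in \msigi[j]$, the spine provides a sub-derivation $\Pi_{j,\sig}$ of $\ju{\Del_{j,\sig}}{t_j{:}\sig}$, with $\sz{\Pi_{j,\sig}} < \sz{\Pi^\star} \leqs \sz{\Pi}$, and $\Del_{j,\sig}$ is a summand of $\Gam$, so it inherits the absence of negative $\emul$. Thus $\Pi_{j,\sig}$ is unforgetful, and the induction hypothesis yields that each $t_j$ is WN. Since each $\msigi[j]$ is non-empty (a positive $\emul$ in $\tau$ of $\rho = \mu_1 \rew \cdots \rew \mu_p \rew \tau_0$ is forbidden; and $\msigi[j] = \emul$ as the $j$-th domain inside $\Gam(x)$ would be a negative $\emul$ in $\Gam(x)$), every head argument is covered. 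Assembling finite normal forms for $t_1,\ldots,t_q$ under the $p$ binders yields a finite $\beta$-normal form for $t$.

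\medskip

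\noindent\textbf{($\Rightarrow$) Finitary WN $\Rightarrow$ unforgetful typability.} If $t$ is finitarily WN, let $t^\star$ be its finite $\beta$-normal form. By structural induction on $t^\star = \hnfo$ (with $t_1,\ldots,t_q$ smaller finite normal forms) I would construct an unforgetful derivation, as follows. By induction, pick for each $j$ an unforgetful derivation of $\ju{\Del_j}{t_j{:}\sig_j}$, so each $\sig_j$ has no positive $\emul$ and each $\Del_j$ has no negative $\emul$. Type the head variable $x$ with the arrow $\mult{\sig_1} \rew \cdots \rew \mult{\sig_q} \rew \tv$ (singletons, and $\tv$ a fresh type variable so that no $\emul$ appears in $\tv$), and combine the sub-derivations by $q$ $\app$-rules. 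The resulting context $\Gam' = (+_j \Del_j) + (x{:}\mult{\sig_1 \rew \cdots \rew \sig_q \rew \tv})$ has no negative $\emul$: the new type of $x$ places each $\sig_j$ in negative position, where positive $\emul$ is forbidden---granted by the induction hypothesis---and the $\Del_j$ are unchanged. The output type is $\tv$, trivially without positive $\emul$. Then apply the $\abs$-rule $p$ times to bind $x_1,\ldots,x_p$; the binders push multisets $\mu_k = \Gam'(x_k)$ into negative positions of the final type, which cannot create positive $\emul$. This yields an unforgetful derivation for $t^\star$. Finally, iterating subject expansion along the reduction $t \bred^* t^\star$ (using the key lemma to maintain unforgetfulness at each step) produces an unforgetful derivation for $t$.

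\medskip

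\noindent\textbf{Main obstacle.} The delicate point is the bookkeeping of polarities in the soundness direction: one has to confirm that the decomposition of an unforgetful judgment $\ju{\Gam}{\hnfo{:}\rho}$ through the $\abs$/$\app$ rules genuinely transfers unforgetfulness to every argument sub-derivation, in particular that no $\sigi[j]$ carries a positive $\emul$ and no $\msigi[j]$ is itself empty. This requires unfolding the polarity definition carefully through one full $\app$-spine and checking the summands of $\Gam$, which is the step where all the prior work with multiset contexts and relevance genuinely pays off.
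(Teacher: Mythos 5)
Your proof is correct, and it follows exactly the route the paper intends: the paper itself does not prove Theorem~\ref{th:WN-Ro} (it defers to de~Carvalho and records only the key observation that in an unforgetful derivation every argument of the head variable must be typed, which is precisely the crux of your soundness direction), while your completeness direction is the standard scheme of Fig.~\ref{fig:its-fundamental-diag} that the surrounding text alludes to. The only points worth tightening are cosmetic: by Proposition~\ref{prop:sr-se-Ro-inf-klop} the concluding judgment is preserved \emph{verbatim} under (anti-)reduction, so your preliminary lemma is immediate rather than an argument about shrinking; and when the head variable is one of the bound $x_1,\ldots,x_p$ its arrow type sits in some $\mu_i$ in the domain of the concluding type rather than in $\Gam(x)$, so the polarity bookkeeping (and the claim that each $\Del_{j,\sig}$ inherits the absence of negative $\emul$) must be read off from the ``no positive $\emul$ in $\tau$'' half of unforgetfulness in that case --- the computation goes through identically.
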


A straightforward induction shows that, when  $\Pi\tri \juGtt$ is unforgetful, (1) if $t$ is reducible, then every redex of minimal applicative depth is typed, and (2) if $t$ is a normal form, 
 then every subterm of $t$ is typed in $\Pi$ (thus, no part of the normal form of a term $t$ is ``forgotten'' in an unforgetful derivation), which roughly justifies why the above theorem holds. In general, the unforgetfulness criterion gives a characterization of \textit{weak} normalization in any type system which characterizes \textit{head} normalization.

 For now (in particular for \Sec~\ref{ss:typ-inf-nf-informal}), it is enough to keep in mind that a \textbf{sufficient condition} of unforgetfulness is to be \textbf{$\ems$-free}: $t$ is weakly normalizing as soon as $\tri \juGtt$, where $\Gam$ and $\tau$ do not contain $\ems$. 
Throughout this article, we will frequently invoke system $\scrRo$ and this section to illustrate or motivate our choices.

\subsection{Infinitary subject reduction and expansion for finite derivations}
\label{ss:subj-subst}

Now, let us explain why subject reduction and expansion hold for infinite \textit{productive} reduction path (and not just finite ones). The fact that if $\tri \juGtt$ and $t\bredfty t'$ then there is $\tri \juGtpt$ (infinitary subject reduction) is actually an easy consequence of the definition of productive paths (Definition~\ref{def:scrs-kj}). In contrast, the fact that if $\Pi \tri \juGtpt$ and $t\bredfty t'$, then $\tri\juGtt$ (infinitary subject expansion) relies on the finiteness of the derivations in $\scrRo$ and a technical observation that we call subject substitution and that will be a key tool in \Sec~\ref{ss:typ-inf-nf-informal} and \ref{s:infty-expansion} to obtain full infinitary subject expansion. Let us prove the infinitary subject reduction property.

\begin{proposition}[Infinitary suject reduction (system $\scrRo$)]\label{prop:inf-subj-red-Ro}\mbox{}\\
  If $t\bredfty t'$ and $\tri_{\scrRo} \juGtt$, then $\tri_{\scrRo} \juGtpt$.
  \end{proposition}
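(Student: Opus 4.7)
\begin{proofsketch}
If the reduction path $t \bredfty t'$ is finite, we iterate the one-step finitary subject reduction property (Proposition~\ref{prop:sr-se-Ro-inf-klop}) and are done. So assume the path is infinite, say $t = t_0 \breda{b_0} t_1 \breda{b_1} t_2 \ldots$, so that $\ad{b_n} \to \infty$, and $t'$ is its limit in the sense of Lemma~\ref{lem:limit-prod-paths-hp}.

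The plan is to combine \emph{finiteness of $\scrRo$-derivations} with \emph{productivity} of the path. First I would bound the ``typed depth'' of the fixed derivation $\Pi \tri \juGtt$: since $\Pi$ is finite, there exists an integer $d \in \bbN$ (one may take $d \leqs \sz{\Pi}$) such that every subterm of $t$ appearing as the subject of a judgment inside $\Pi$ is located at applicative depth at most $d$ in $t$. Because the path is productive, there exists $N \in \bbN$ with $\ad{b_n} > d$ for every $n \geqs N$, and, by Lemma~\ref{lem:limit-prod-paths-hp} (up to enlarging $N$), $t_N$ and $t'$ agree on all positions of applicative depth $\leqs d$.

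Next, I would apply the finite subject reduction property (Proposition~\ref{prop:sr-se-Ro-inf-klop}) along the finite prefix $t_0 \bred t_1 \bred \ldots \bred t_N$ to obtain a derivation $\Pi_N \tri \ju{\Gam}{t_N:\tau}$ with $\sz{\Pi_N} \leqs \sz{\Pi}$; in particular, every subject position appearing in $\Pi_N$ is still at applicative depth at most $d$ in $t_N$. The final step is a ``subject substitution'' argument: by structural induction on $\Pi_N$, one shows that replacing the subject of every judgment $\ju{\Del}{u:\sig}$ of $\Pi_N$ by the corresponding subterm of $t'$ yields a valid $\scrRo$-derivation $\Pi' \tri \ju{\Gam}{t':\tau}$. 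This is sound because the $\scrRo$-rules $\ax$, $\abs$, $\app$ are all driven by the topmost constructor of the subject, and $t_N$ and $t'$ coincide on every position of applicative depth $\leqs d$, hence on every subject position touched by $\Pi_N$.

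The main technical point is the subject substitution step in the last paragraph: one must carefully phrase the induction so that the ``typed positions'' of a subderivation remain controlled (in particular, stay at applicative depth $\leqs d$ when descending into the arguments of an $\app$-rule, which does not deepen in the function part but does so in the arguments). Once this bookkeeping is in place, each rule of $\scrRo$ transports verbatim, because its premises and conclusion are determined by the head constructor of the subject together with the typing of its immediate subterms, all of which are unchanged when passing from $t_N$ to $t'$.
\end{proofsketch}
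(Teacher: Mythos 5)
Your proof is correct, but it takes a genuinely different route from the paper's. You exploit the finiteness of the $\scrRo$-derivation $\Pi$ to get a uniform bound $d$ on the applicative depth of its typed positions (Observation~\ref{obs:ad-less-than-size-Ro}), jump once to a single stage $t_N$ beyond which $t_N$ and $t'$ agree below depth $d$, push $\Pi$ forward to $\Pi_N$ by $N$ applications of finite subject reduction, and conclude with one application of subject substitution (Lemma~\ref{lem:sub-subst-Rftyo}). This is precisely the strategy the paper uses for infinitary subject \emph{expansion} (Proposition~\ref{prop:inf-subj-exp-Ro}), transplanted to the reduction direction. The paper's own proof of Proposition~\ref{prop:inf-subj-red-Ro} instead builds $\Pi'$ \emph{incrementally}, applicative-depth layer by layer: below depth $k$ it takes the common value of all $\Pi_n$ for $n\geqs N_k$, substituting $t'$ for $t_n$ only within that fixed layer. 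The trade-off is deliberate: the paper's layered construction never invokes the finiteness of the derivation, which is why the same argument transfers verbatim to the infinitary system $\ttS$ (Proposition~\ref{prop:infinite-subject-reduction}), where derivations may have typed positions at unbounded applicative depth and your global bound $d$ simply does not exist. Your argument buys a shorter, more self-contained proof for $\scrRo$ that reuses the same lemmas as the expansion proof, at the cost of not generalizing to the setting the paper is ultimately heading toward.
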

  
  \begin{proof}
  Assume $\Pi \tri_{\scrRo} \juGtt$ and $\Pi =: \Pi_0 \breda{b_0} \Pi_1 \breda{b_1} \Pi_2 \ldots$ the derivations obtained by subject reduction along the given path $t\bredfty t'$.
    For every $k\in \bbN$, let $N_k$ be such that, for all 
    $n\geqs N_k$, $\ad{b_n}> k+1$. 
  
    Let $\Pi'$ be the derivation typing $t'$ defined as follows:
    under applicative depth $k$, $\Pi'$ is the same as $\Pi_n$ for any $n\geqs N_k$ except that the judgments of $\Pi'$ type $t'$ and its subterms instead of $t_n$ and its subterms.
  
    This definition is licit, because, under applicative depth $\Pi_{n}$ and $\Pi_{n'}$ do not differ whenever $n,n'\geqs N_k$, except for \textit{untyped} parts of their subjects, which also proves that $\Pi'$ is correct.
    \ignore{
    When $t\bredfty t'$ in a finite number of steps, this results from inductive application of Proposition~\ref{prop:sr-se-Ro-inf-klop}. Let us then assume that $t=t_0\breda{b_0} t_1\breda{b_1}t_2\bredfty t'$ is an infinite path and that $\Pi \tri_{\scrRo} \juGtt$.
  
    Let $k=\sz{\Pi}$ ($k<+\infty$!). Let $N$ such that, for all $n \geqs N$, $\ad{b_n}>k$ (such a $n$ exists, since the reduction path is productive). 
    By Proposition~\ref{prop:sr-se-Ro-inf-klop}, there is $\Pi_N\tri \ju{\Gam}{t_N:\tau}$ such that $\sz{\Pi_N}\leqs k$. 
  
   By Observation~\ref{obs:ad-less-than-size-Ro}, for all $n\geqs N$, $b_n\notin \hat{\Pi_N}$. Then, by induction on $n\geqs N$, for all $b\in \hat{\Pi}_N$, $t_n(b)=t_N(b)$. By Lemma~\ref{lem:limit-prod-paths-hp}, this implies that, for all $b\in \hat{\Pi}_N$, $t'(b)=t_N(b)$. By subject substitution (Lemma~\ref{lem:sub-subst-Rftyo}) applied to $\Pi_N$, there is a derivation $\Pi'\tri \ju{\Gam}{t':B}$. This concludes the proof.} 
  \end{proof}

  One notices that the proof of Proposition~\ref{prop:inf-subj-red-Ro} does not use anywhere the finiteness of the derivations (we perform subject substitution but only at fixed depth $k$ after a sufficient number of steps $N_k$). The statement and its proof shall still be valid in the infinitary system (Proposition~\ref{prop:infinite-subject-reduction}, which we prove more formally).

  But as we explained in \Sec~\ref{ss:klop-prob-hp-statement}, naively, subject expansion is not an asymptotic process. However, there is a way to by-pass the ``infinite jump problem'' by using the fact the derivations are finite: one may replace the limit $t'$ of a productive path  with some $t_n$ when $n$ is big enough, without compromising the correctness of the derivation.

Actually, in a derivation $\Pi$ whose subject is $t_1$, we may replace $t_1$ with any term $t_2$ which coincide with $t_1$ on the typed parts inside $\Pi$. We name this phenonemon \textit{subject substitution}.


\begin{example}
\label{ex:subj-sub-Ro}
Consider:
$$\Pi_1=\infer{\infer{}{\ju{x:\mult{\emul\rew \tv}}{x:\emul \rew \tv}}}{\ju{x:\mult{\emul\rew \tv}}{x(y\,z):\tv}}
\sep\sep\sep
\Pi_2=\infer{\infer{}{\ju{x:\mult{\emul\rew \tv}}{x:\emul \rew \tv}}}{\ju{x:\mult{\emul\rew \tv}}{x\,\Om:\tv}}
$$
We transform $\Pi_1$ into $\Pi_2$ by \textit{substituting} the subject $x(y\,z)$ with $x\,\Om$. This is licit because we replace $y\,z$, which is an untyped subterm of $x(y\,z)$ in the derivation $\Pi_1$.
\end{example}

To formulate subject substitution, we need to define the set of positions of the  typed subterms in a derivation $\Pi$:

\begin{definition}
  \label{def:typed-pos-Ro}
  Let $\Pi$ a derivation typing a term $t$. We define the set $\hPi\subset \supp{t}$ of the \textbf{typed positions} in $\Pi$ by the following induction on $\Pi$:
  \begin{itemize}
      \item If $\Pi$ ends with an axiom rule ($t=x$), then $\hPi=\set{\epsi}=\supp{t}$.
      \item If $\Pi$ ends with an $\abs$-rule ($t=\lx.t_0$) and $\Pi_0$ is its depth 1 subderivation, then $\hPi=\set{\epsi}\cup 0\cdot \hPik{0}$.
      \item If $\Pi$ ends with an $\app$-rule ($t=t_1\,t_2$), $\Pi_1$ is its left premise (concluding with $t_1:\msigi\rew \tau$) and the $(\Pi_i)_{\iI}$ (concluding respectively with $t_2:\sigi$) are its right premises, then
       $\hPi=\set{\epsi}\cup 1\cdot \hPik{1} \cup (\cup_{\iI} 2 \cdot \hPik{i})$
  \end{itemize}
\end{definition}

In the $\app$-case, if $I=\eset$, $t_2$ is not typed and 
we only have $\hPi=\set{\epsi}\cup 1\cdot \hPik{1} $ , as expected. In the typings of $x(y\,z)$ and $x\,\Om$ in Example~\ref{ex:subj-sub-Ro}, we have $\hat{\Pi}_1=\hat{\Pi}_2=\set{\epsi,1}$. The applicative depth of a typed position inside $\Pi$ cannot exceed the size of $\Pi$ (this can be proved by induction on $\Pi$):

\begin{observation}
  \label{obs:ad-less-than-size-Ro}
  Let $\Pi$ be a $\scrRo$-derivation. If $b\in \hPi$, then $\ad{b}<\sz{\Pi}$.
\end{observation}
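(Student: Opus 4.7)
The plan is to proceed by structural induction on the $\scrRo$-derivation $\Pi$, tracking how applicative depth grows exactly when one descends into an argument premise and showing that each such descent is paid for by a distinct judgment contributed to $\sz{\Pi}$.

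For the base case, $\Pi$ is an axiom typing a variable, so $\hPi=\set{\epsi}$ and we simply observe that $\ad{\epsi}=0<1=\sz{\Pi}$. For the $\abs$ case, $\Pi$ concludes with $\lx.t_0:\msigi\rew\tau$ from a unique premise $\Pi_0$, and $\hPi=\set{\epsi}\cup 0\cdot\hPik{0}$. The root position $\epsi$ poses no problem, and for $b=0\cdot b'$ with $b'\in\hPik{0}$ we have $\ad{b}=\ad{b'}$ since the digit $0$ does not contribute to applicative depth; the induction hypothesis then gives $\ad{b'}<\sz{\Pi_0}<\sz{\Pi}$.

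The key case is $\app$: $\Pi$ concludes with $t_1\,t_2:\tau$ from a left premise $\Pi_1$ and right premises $(\Pi_i)_{\iI}$, and $\sz{\Pi}=1+\sz{\Pi_1}+\sum_{\iI}\sz{\Pi_i}$. For $b=\epsi$ or $b=1\cdot b'$ with $b'\in\hPik{1}$, the applicative depth is unchanged (resp.\ equal to $\ad{b'}$), and we conclude by the induction hypothesis and the fact that $\sz{\Pi_1}<\sz{\Pi}$. The only place where $\ad{\cdot}$ increases is in the subcase $b=2\cdot b'$ with $b'\in\hPik{i}$ for some $\iI$: here $\ad{b}=1+\ad{b'}$, and by the induction hypothesis $\ad{b'}<\sz{\Pi_i}$, so $\ad{b}\leqs\sz{\Pi_i}<1+\sz{\Pi_i}\leqs\sz{\Pi}$, as wanted.

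I do not expect any genuine obstacle: the statement is essentially the observation that each application node crossed on the way from the root to a typed position $b$ both increments $\ad{b}$ by one and contributes at least one fresh $\app$-judgment (together with the judgments of the selected argument subderivation) to $\sz{\Pi}$. The only point that requires a moment of care is verifying that the strict inequality is preserved in the $\app$ case, which is ensured by the extra ``$+1$'' counting the concluding $\app$-judgment itself.
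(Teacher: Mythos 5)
Your proof is correct and follows exactly the route the paper intends: the paper only remarks parenthetically that the observation ``can be proved by induction on $\Pi$'', and your structural induction on $\Pi$, with the applicative depth incremented only in the argument branches of an $\app$-rule and the strict inequality maintained by the ``$+1$'' for the concluding judgment, is the intended argument carried out in full. No gaps.
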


\begin{lemma}[Subject Substitution] \fublainv{référer vers ce lemme ds \eg approx, sr infty}
  \label{lem:sub-subst-Rftyo}
  Let $t,u\in \Lamfty$.
If $\Pi \tri_{\scrRo} \juGtt$, $\hPi\subeq \supp{u}$ and, for all $b\in \hPi$, $u(b)=t(b)$, then there exists a $\scrRo$-derivation $\Psi$ such that $\Psi \tri \ju{\Gam}{u:\tau}$ and $\hat{\Psi}=\hPi$.
\end{lemma}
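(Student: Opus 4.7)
The plan is to proceed by structural induction on the finite $\scrRo$-derivation $\Pi$. The only subtlety lies in checking that the hypotheses of the lemma transmit from $\Pi$ to its immediate subderivations, but this follows mechanically from the definition of $\hPi$ (Definition~\ref{def:typed-pos-Ro}) together with the observation that $\epsi \in \hPi$ in every case, which forces the root constructor of $u$ to agree with that of $t$.

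First I would dispatch the base case. If $\Pi$ consists of a single $\ax$-rule concluding $\ju{x:\mtau}{x:\tau}$, then $t = x$ and $\hPi = \set{\epsi}$. The hypothesis $u(\epsi) = t(\epsi) = x$ forces $u = x$, so $\Psi := \Pi$ works.

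Next, the $\abs$-case. Here $t = \lx.t_0$, $\tau = \msigi \rew \sig$, and the immediate subderivation is $\Pi_0 \tri \ju{\Gam;x:\msigi}{t_0:\sig}$, with $\hPi = \set{\epsi} \cup 0\cdot \hat{\Pi_0}$. Since $\epsi \in \hPi$ and $u(\epsi) = t(\epsi) = \lx$, $u$ must be of the form $\lx.u_0$ with $\supp{u_0} = \set{b \,|\, 0\cdot b \in \supp{u}}$. The inclusion $\hat{\Pi_0} \subeq \supp{u_0}$ and the agreement $u_0(b) = t_0(b)$ on $\hat{\Pi_0}$ follow at once from the corresponding properties for $\Pi$. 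The induction hypothesis yields $\Psi_0 \tri \ju{\Gam;x:\msigi}{u_0:\sig}$ with $\hat{\Psi_0} = \hat{\Pi_0}$, and applying the $\abs$-rule gives the desired $\Psi \tri \ju{\Gam}{\lx.u_0 : \msigi \rew \sig}$ with $\hat{\Psi} = \set{\epsi} \cup 0 \cdot \hat{\Psi_0} = \hPi$.

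The $\app$-case is analogous and is the main case. Here $t = t_1\,t_2$ with premises $\Pi_1 \tri \ju{\Gam_1}{t_1:\msigi\rew\tau}$ and $(\Pi_i \tri \ju{\Del_i}{t_2:\sigi})_{\iI}$, and $\hPi = \set{\epsi} \cup 1\cdot\hat{\Pi_1} \cup \bigcup_{\iI} 2\cdot\hat{\Pi_i}$. Since $\epsi \in \hPi$ and $u(\epsi) = \arob$, we have $u = u_1\,u_2$. The inclusions $\hat{\Pi_1} \subeq \supp{u_1}$ and $\hat{\Pi_i} \subeq \supp{u_2}$ for all $\iI$, together with the pointwise equalities, follow from $\hPi \subeq \supp{u}$ and the definition of support. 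Applying the induction hypothesis to $\Pi_1$ and to each $\Pi_i$ (only needed when $I \neq \eset$; if $I = \eset$ then $u_2$ is unconstrained, matching the fact that $t_2$ is untyped in $\Pi$) yields derivations $\Psi_1$ and $(\Psi_i)_{\iI}$ whose sets of typed positions coincide with those of $\Pi_1$ and the $\Pi_i$. Recombining them with an $\app$-rule produces $\Psi \tri \ju{\Gam}{u_1\,u_2:\tau}$ with $\hat{\Psi} = \hPi$, as required. No real obstacle arises; the proof is essentially a mechanical verification that the typed-skeleton of $\Pi$ can be replayed over $u$.
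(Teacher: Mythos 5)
Your proof is correct and follows exactly the route the paper takes: the paper's own proof is just the one-line remark ``Straightforward by induction on $\Pi$,'' and your case analysis on the last rule (axiom, abstraction, application), using the definition of $\hPi$ to push the hypotheses to the subderivations, is precisely the intended mechanical verification.
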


\begin{proof}
Straightforward by induction on $\Pi$.
\end{proof}

To obtain infinitary subject expansion in $\scrRo$, the idea is to perform subject substitution in the considered infinite productive reduction path. Indeed, if $\Pi$ types $t$ and $t$ and $u$ differ only for positions $b\in \setpos$ such that $|b|>\sz{\Pi}$ (where $|b|$ is the length of $b$), then Lemma~\ref{lem:sub-subst-Rftyo} can be applied on $\Pi$.


%

\ignore{
\begin{remark}
  \label{rmk:hn-in-lam001}
  The equivalence <<$t$ strongly converges to a HNF iff $t$ reduces (in a finite number of steps) to a HNF>> is obvious when considering \scrs of $\Lamzzu$. Indeed:
\begin{itemize}
\item The head redex of a term $t$, when it exists, is the unique redex of null applicative depth.
\item Moreover, if $t$ is head reducible, $t\rewb{b} t'$ with $\ad{b}\geqs 1$, then $t'$ is also head reducible.
\item In a \scrs of $\Lamzzu$, only a finite number of steps can be head reductions (contrary to $\Lamfty$, \cf $\Om_3\rew \Om_3\,\om_3$\ldots).
\end{itemize}
Actually, with Proposition~\ref{prop:partial-conf-Lam-001}, this easily entails\footnote{Indeed, if $t$ is 001-WN and $t\rew^* \hnfo$, then this proposition implies that $t_1,\ldots, t_q$ are also 001-WN} that HHN is equivalent with 001-WN. But we will give an alternative and purely semantic proof (not using confluence) in Theorem~\ref{th:charac-WN-S}.
\end{remark}}

Contrary to infinitary subject reduction, the proof of  infinitary subject expansion below crucially relies on the fact that $\scrRo$-derivations are finite, in order to perform a subject substitution:

\begin{proposition}[Infinitary suject expansion (system $\scrRo$)]\mbox{}\label{prop:inf-subj-exp-Ro}\\
If $t\bredfty t'$ and $\tri_{\scrRo} \juGtpB$, then $\tri_{\scrRo} \juGtB$.
\end{proposition}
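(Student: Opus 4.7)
The plan is to mirror, in the opposite direction, the argument used for infinitary subject reduction (Proposition~\ref{prop:inf-subj-red-Ro}), exploiting crucially the fact that $\scrRo$-derivations are \emph{finite}. First, if $t \bredfty t'$ is a finite path, the result follows by iterating the one-step subject expansion property (Proposition~\ref{prop:sr-se-Ro-inf-klop}(2)). So assume $t = t_0 \breda{b_0} t_1 \breda{b_1} t_2 \ldots$ is a productive path of infinite length with limit $t'$, and suppose $\Pi' \tri_{\scrRo} \ju{\Gam}{t' : B}$.

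The idea is to reduce the infinite case to the finite one by replacing the limit $t'$ with some $t_N$ from far enough in the path. Set $k := \sz{\Pi'}$, which is finite. By Observation~\ref{obs:ad-less-than-size-Ro}, every $b \in \widehat{\Pi'}$ satisfies $\ad{b} < k$. Since the path is productive, Lemma~\ref{lem:limit-prod-paths-hp} (applied with $d := k$) yields an integer $N$ such that, for all $n \geqs N$ and every $b$ with $\ad{b} \leqs k$, we have $b \in \supp{t_n}$ iff $b \in \supp{t'}$, and the labels $t_n(b)$ and $t'(b)$ agree on the common support. In particular, $\widehat{\Pi'} \subseteq \supp{t_N}$ and $t_N(b) = t'(b)$ for every $b \in \widehat{\Pi'}$.

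Next I would invoke Subject Substitution (Lemma~\ref{lem:sub-subst-Rftyo}) on $\Pi'$ with the new subject $t_N$: since $\widehat{\Pi'} \subseteq \supp{t_N}$ and the subjects agree on all typed positions, the lemma produces a derivation $\Pi_N \tri_{\scrRo} \ju{\Gam}{t_N : B}$. At this point we are back in the finitary world: the reduction $t = t_0 \breda{b_0} t_1 \breda{b_1} \cdots \breda{b_{N-1}} t_N$ has finite length $N$, so by applying the one-step subject expansion property of Proposition~\ref{prop:sr-se-Ro-inf-klop}(2) exactly $N$ times, we obtain a derivation $\Pi \tri_{\scrRo} \ju{\Gam}{t : B}$, as required.

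The conceptual obstacle here is precisely the ``infinite jump problem'' discussed in Section~\ref{ss:klop-prob-hp-statement}: one cannot simply walk backwards along an infinite reduction sequence. The key observation that makes the proof work is that $\Pi'$ being finite imposes an a priori bound $k$ on the applicative depth of its typed positions, so that, by productivity, the limit $t'$ can be confused with a finite-stage approximant $t_N$ \emph{from the derivation's point of view}. This is why finiteness of $\scrRo$-derivations is essential, and why this strategy will have to be substantially refined once derivations themselves become infinite in system $\ttS$, where one will have to truncate $\Pi'$ first into approximations and then take a join---the program announced in Sections~\ref{ss:typ-inf-nf-informal} and~\ref{s:infty-expansion}.
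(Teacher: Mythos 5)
Your proof is correct and follows essentially the same route as the paper's: bound the applicative depth of the typed positions of $\Pi'$ by $k=\sz{\Pi'}$ (Observation~\ref{obs:ad-less-than-size-Ro}), use productivity to replace the limit $t'$ by a finite-stage reduct $t_N$ via subject substitution (Lemma~\ref{lem:sub-subst-Rftyo}), and then apply one-step subject expansion $N$ times. The only cosmetic difference is that you invoke Lemma~\ref{lem:limit-prod-paths-hp} directly with $d=k$ to choose $N$, whereas the paper picks $N$ so that all later redexes occur at applicative depth greater than $k$; both choices yield the agreement of $t_N$ and $t'$ on $\hPi'$ that the argument needs.
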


\begin{proof}
  When $t\bredfty t'$ in a finite number of steps, this is an obvious consequence of Proposition~\ref{prop:sr-se-Ro-inf-klop}. Let us then assume that $t=t_0\breda{b_0} t_1\breda{b_1}t_2\bredfty t'$ is an infinite path and that $\Pi'\tri_{\scrRo} \juGtpt$. Let $k=\sz{\Pi'}$.
  Let $N$ such that, for all $n \geqs N$, $\ad{b_n}>k$ (such a $n$ exists, since the reduction path is productive). By definition of the limit (Lemma~\ref{lem:limit-prod-paths-hp}), this implies that $\supp{t_N}\cap \set{b\in \set{0,1,2}^*\,|\, \ad{b}\leqs k} = \supp{t'} \cap \set{b\in \set{0,1,2}^*\,|\, \ad{b}\leqs k}$ and that, on this set, $t_N(b)=t'(b)$. In particular, by Observation~\ref{obs:ad-less-than-size-Ro}, $t_N(b)=t'(b)$ is well-defined and holds on $\hPi'$.

  By subject substitution (Lemma~\ref{lem:sub-subst-Rftyo}) applied to $\Pi'$ and $t_N$, there is a derivation $\Pi_N\tri \ju{\Gam}{t_N:\tau}$.
  
  By Proposition~\ref{prop:sr-se-Ro-inf-klop} applied $N$ times, there is $\Pi\tri \ju{\Gam}{t:\tau}$ such that $\sz{\Pi}\leqs k$. This concludes the proof.
\end{proof}

\begin{remark}\mbox{}
\begin{itemize}
  \item One may take a moment to consider the proof of Lemma~\ref{prop:inf-subj-exp-Ro} with that of Lemma~\ref{prop:inf-subj-red-Ro}. As noticed in \Sec~\ref{disc:infinite-jump-problem}, we need to overcome the ``infinite jump problem''. Here, it is pivotal that, under some applicative depth $k$, the subject $t'$ of the derivation to expand is left untyped. This allows us to ``jump'' from $t'$ to some $t_N$ for a sufficiently large $N$ which is similar to $t'$ under applicative depth $k$. Note that no such need arises while proving Lemma~\ref{prop:inf-subj-red-Ro}, since, for infinitary subject reduction, we define $\Pi'$, the derivation typing the limit, incrementally.
  \item   The lemmas of this section all apply to $\Lamfty$ (\Sec~\ref{ss:inf-lam-klop-journal}) and not only to $\Lamzzu$ with minor changes. 
  In particular, for any $t\in \Lamfty$, there is a reduction path from $t$ to a head normal form iff the head reduction terminates on $t$. 
Notice that some $t\in \Lamfty$ are \textit{headless}, \eg $^\infty x$ from Example~\ref{ex:lambfty-vs-lamzzu}, which has neither a head variable nor a head redex, because of its leftward infinite branch, but such a headless term is not typable in $\scrRo$.
\end{itemize}
\end{remark}

\subsection{Roadmap to solve Klop's Problem}
\label{ss:typ-inf-nf-informal}

Since hereditary head normalization is a form of infinitary weak normalization, we want to adapt the proof of Proposition~\ref{th:WN-Ro} which gives a type-theoretic characterization of \textit{finitary} weak normalization.\igintro{\pierre{ 
The two main ingredients follow the principles of \Sec~\ref{ss:inter-from-syntax-intro-klop}:}} To achieve this, we need:
\begin{itemize}
\item \textit{Typing (infinitary) normal forms} in unforgetful judgment (in this section, $\fom$).
\item \textit{Using a form of infinitary subject expansion} to obtain a derivation typing the expanded term (here, $\cuf$)
\end{itemize}
The second point (infinitary subject expansion) is delicate, but we give intuitions about how it may be achieved just below.  We will actually follow some of the ideas that were given in \Sec~\ref{ss:klop-prob-hp-statement}. 
This will allow us to present the key notions of \textbf{truncation} and \textbf{approximability}. Fig.~\ref{fig:expans-by-trunc-1} illustrates the main ideas of this section.\\

\noindent \textbf{A complete trial on $\cuf$.} The first point (typing 001-normal forms) is easier to grasp: notice that in a type system characterizing infinitary weak normalization, derivations should be able to type infinite normal forms \textit{fully}, \ie without leaving subterms untyped. For instance, the following derivation $\Pi'$, which fully types $\fom$, should be allowed (this derivation is also represented on top of Fig.~\ref{fig:expans-by-trunc-1}):

  $$\Pi'=\infer[\app]{\infer[\ax]{}{\ju{f:\mult{\arewa}}{f:\arewa}}\\
\sep \Pi'\tri \ju{f:\mult{\arewa}_{\omega}}{\fom:\tv}
  }{\ju{f:\mult{\arewa}_{\omega}}{\fom:\tv}}
$$

To define $\Pi'$, we have informally used an infinitary version of system $\scrRo$ that we call system $\scrR$ (system $\scrR$ is formally defined in Appendix~\ref{a:system-R}). System $\scrR$ allows infinite multisets (\eg $\mult{\tv}_{\omega}$ is the multiset in which $\tv$ occurs with an infinite multiplicity, so that $\mult{\tv}_{\omega}=\mult{\tv}+\mult{\tv}_{\omega}$), proofs of infinite depth and also infinitary nestings, \eg the type $\sig$ defined by $\sig=\mult{\sig}\rew \tv$, so that $\sig=\mult{\mult{\ldots}\rew\tv}\rew\tv$, is legal.\\

\begin{figure}
\thispagestyle{empty}
\begin{tikzpicture}

\transh{3.5}{

\draw (-2.7,5.8) node[right]{$\Pi'\rhd f:\,\red{ [ [ \tv]\rew \tv]_{\omega}} \vdash \fom:\red{\tv}$ } ; 

\draw (1,-1) node {$\Pi'$ is an \textit{infinite} derivation typing the \textit{infinite} term $\fom$};


  {\draw (4.3,0.8) node{\parbox{2.3cm}{
        \begin{center}Every Variable is Typed\end{center}} } ; }



    \inputarewa{2.7}{4.5}

    \outputtv{3.6}{3.6}

   \inputarewa{1.8}{3.6}

     
     \outputtv{2.7}{2.7}

     \inputarewa{0.9}{2.7}
  \outputtv{1.8}{1.8}
    
   \inputarewa{0}{1.8}
\outputtv{0.9}{0.9}

   \inputarewa{-0.9}{0.9}
   \outputtv{0}{0}


  \draw [style=dashed] (3.79,3.79) -- (4.9,4.9) ;

\blockfabis{3.6}{3.6}
    
  \blockfabis{2.7}{2.7}

\blockfabis{1.8}{1.8}

\blockfabis{0.9}{0.9}
    

\blockfabis{0}{0}
}


\transv{-7.6}{


\draw (5,6.15) node {$\Pi'$ can be truncated into \eg,};
\draw (-2,6) node [below right]{$\Pi'_3\tri \ju{f:\red{\mult{\arewa}_2+\mult{\erewa}}}{\fom:\red{\tv}}$};

  \draw (5,-0.6) node [below]{\parbox{11cm}{
     Both $\Pi'_3$ and $\Pi'_4$ are \textit{finite} derivations typing the \textit{infinite} term $\fom$}
    };



     \inputpharewa{0.9}{2.7} 
  \outputtv{1.8}{1.8}
    
   \inputarewa{0}{1.8}
\outputtv{0.9}{0.9}

   \inputarewa{-0.9}{0.9}
   \outputtv{0}{0}


  \draw [style=dashed] (3.79,3.79) -- (4.9,4.9) ;

\blockfabis{3.6}{3.6}


  \blockfabis{2.7}{2.7}


\blockfabis{1.8}{1.8}

\blockfabis{0.9}{0.9}
    

\blockfabis{0}{0}

\transh{7}{
\draw (-2.3,5.84) node [below right]{or};
  
\draw (-1.5,6) node [below right]{$\Pi'_4\tri \ju{f:\red{\mult{\arewa}_3+\mult{\erewa}}}{\fom:\red{\tv}}$};




   \inputpharewa{1.8}{3.6} 
     
     \outputtv{2.7}{2.7}

     \inputarewa{0.9}{2.7}
  \outputtv{1.8}{1.8}
    
   \inputarewa{0}{1.8}
\outputtv{0.9}{0.9}

   \inputarewa{-0.9}{0.9}
   \outputtv{0}{0}


  \draw [style=dashed] (3.79,3.79) -- (4.9,4.9) ;

\blockfabis{3.6}{3.6}


  \blockfabis{2.7}{2.7}


\blockfabis{1.8}{1.8}

\blockfabis{0.9}{0.9}
    

\blockfabis{0}{0}

  }

  }


\transv{-15.7}{
  \draw (5,6.4) node {\parbox{14cm}{
\begin{center}
The untyped parts of the subject can be substituted\\
      \eg $\fom$ can be replaced by $f^4(\cuf)$ in $\Pi'_3$ and $\Pi'_4$, yielding respectively\ldots\end{center} }};

\draw (-2,6) node [below right]{$\Pi^4_3\tri \ju{f:\red{\mult{\arewa}_2+\mult{\erewa}}}{f^4(\cuf):\red{\tv}}$};

\draw (5,-0.8) node [below]{\parbox{11cm}{
     Both $\Pi^4_3$ and $\Pi^4_4$ are \textit{finite} derivations typing the \textit{finite} term $f^4(\cuf)$}
    };



     \inputpharewa{0.9}{2.7} 
  \outputtv{1.8}{1.8}
    
   \inputarewa{0}{1.8}
\outputtv{0.9}{0.9}

   \inputarewa{-0.9}{0.9}
   \outputtv{0}{0}




  \blockcufbis{3.6}{3.6}

  \blockfabis{2.7}{2.7}


\blockfabis{1.8}{1.8}

\blockfabis{0.9}{0.9}
    

\blockfabis{0}{0}

\transh{7}{

\draw (-1.5,6) node [below right]{$\Pi^4_4\tri \ju{f:\red{\mult{\arewa}_3+\mult{\erewa}}}{f^4(\cuf):\red{\tv}}$};
  




   \inputpharewa{1.8}{3.6} 
     
     \outputtv{2.7}{2.7}

     \inputarewa{0.9}{2.7}
  \outputtv{1.8}{1.8}
    
   \inputarewa{0}{1.8}
\outputtv{0.9}{0.9}

   \inputarewa{-0.9}{0.9}
   \outputtv{0}{0}




  \blockcufbis{3.6}{3.6}

  \blockfabis{2.7}{2.7}


\blockfabis{1.8}{1.8}

\blockfabis{0.9}{0.9}
    

\blockfabis{0}{0}

  }

  }

\end{tikzpicture}    
\vspace*{-0.6cm}
\caption{Truncation and Subject Substitution}
\label{fig:expans-by-trunc-1}
\end{figure}

\noindent \textbf{Step 1 (truncations/approximations).} 
Notice that $\Pi'$ is of infinitary depth and \textit{fully} types $\fom$ (\ie no subterm of $\fom$ is left untyped in $\Pi'$).  Actually, $\Pi'$ is \textit{$\ems$-free} and in particular, it is unforgetful (recall the discussion surrounding Theorem~\ref{th:WN-Ro}).

Thus, $\Pi'$ is the kind  of derivation we want to expand, as to get a derivation $\Pi$ typing $\cuf$ and witnessing that this term is infinitarily weakly normalizing. Since $\cuf \bredfty \fom$ (infinite number of reduction steps), we are stuck in the ``infinite jumb problem'': indeed, the pivotal argument in Proposition~\ref{prop:inf-subj-exp-Ro} allowing to ``jump back'' from the limit of a productive path to a term of this path is that the subject is partially typed, which allows performing subject substitution. Here, not such thing occurs, since no subterm of $\fom$ is untyped in $\Pi'$.

But notice that $\Pi'$ can be truncated into the derivation $\Pi'_n$ below, for any $n\geqslant 1$. To define these derivations, we have set $\Gamma_n=f:\mult{\arewa}_{n-1}+[\erewa]$ and we write $x:\tau$ instead of $\juaxtt$ for $\ax$-rules:


%


  $$ \Pi'_n=
\infer[\app]{
  \infer[\ax]{}{\fara } \hspace*{-1cm}  \parbox[b]{4cm}{
$\begin{array}[b]{c}
  \infer[\app]{
      \infer[\ax]{}{\fara }
      \infer[\app]{\infer[\ax]{}{\ju{\Gam_1}{f:\erewa} }}{\ju{\Gam_1}{\fom:\tv}}}    
          {\ju{\Gamma_2}{\fom :\tv}}\\
  \vdots \\
\ju{\Gam_{n-1 }}{\fom :\tv}
\end{array}$}  \phantom{aa}}
{\ju{\Gam_n}{\fom : \tv}}
$$

Derivations $\Pi'_3$ and $\Pi'_4$ are represented in the middle of Fig.~\ref{fig:expans-by-trunc-1}. By \textbf{truncation} or \textbf{approximation}, we mean, as suggested in \Sec~\ref{ss:sequence-tracking-klop-intro}, that the finite derivation $\Pi'_n $ can be informally obtained from the infinite one $\Pi'$ by erasing some elements from the infinite multisets appearing in the derivation. We also informally write $\Pi'_1 \leqs \Pi'_2 \leqs \Pi'_3\ldots $ to mean that $\Pi'_n$ approximates $\Pi'_{n+1}$. Thus, $\Pi'_n\leqs \Pi'$ for all $\nN$.

Conversely, we see that $\Pi'$ is the graphical \textbf{join} of the $\Pi'_n$: $\Pi'$ is obtained by superposing suitably all the derivations $\Pi'_n$ on the same infinite sheet of paper.\\

\noindent \textbf{Step 2 (expand the finite approximations).}

\begin{itemize}
\item \textit{Substituting $\fom$.} Observe now that, although we do not know yet how to expand $\Pi'$, we can expand the $\Pi'_n$, because the $\Pi'_n$ are  $\scrRo$-derivations (thus, they are \textit{finite}): by subject expansion in system $\scrRo$ (Proposition~\ref{prop:sr-se-Ro-inf-klop}), for all $n\geqs 1$, there is a $\scrRo$-derivation $\Pi'_n$ concluding with $\ju{\Gam_n}{\cuf:\tv}$. In the particular case of $\Pi'$, we can detail the process: (1) $\Pi'_n$ leaves $\fom$ untyped beyond depth $n$ and (2) for any $k\geqs n$, $\fom$ and $f^k(\cuf)$ are similar below depth $n$, so using subject substitution, for any $k\geqs n$, $\Pi'_n$ yields a derivation $\Pi^k_n$ typing $f^k(\cuf)$.  We have represented $\Pi^4_3$ and $\Pi^4_4$ at the bottom of Fig.~\ref{fig:expans-by-trunc-1}.
\item \textit{Finite expansions.}
  Thus, $\Pi_n^k$ types $f^k(\cuf)$, the rank $k$ reduct of $\cuf$, so we can expand it $k$ times, obtaining a derivation $\Pi_n$. It can easily be observed that $\Pi_n$ does not depend on $k$, because $\Pi^k_n$ and $\Pi^n_n$ have the same typed parts.\\
\end{itemize}

\noindent \textbf{Step 3 (joining the finite expansions).}
We can then define $\Pi$, the expected expansion of $\Pi'$, as the join of all the derivations $\Pi_n$. This is justified by the following points:
\begin{itemize}
\item[(1)]
  for any $\nN$, we have $\Pi^n_0\leqs \Pi^n_1\leqs \ldots \leqs \Pi^n_n$ (see Fig.~\ref{fig:expans-by-trunc-1}, bottom)
\item[(2)]
  proof reduction and expansion are monotonic in this\footnote{But proof reduction is not monotonic in all cases, simply because, in general, it is non-deterministic in system~$\scrRo$. See \Sec~\ref{ss:non-determinism}} case. Thus, since $\Pi_0$, $\Pi_1$,\ldots, $\Pi_n$ are respectively obtained after $n$ expansion steps from $\Pi^n_0$, $\Pi^n_1$,\ldots, $\Pi^n_n$, we have $\Pi_0\leqs \Pi_1\leqs \ldots \leqs \Pi_n$. This intuitively explains why the $\Pi_n$ (typing $\cuf$) have an infinitary join $\Pi$, although the construction of this join is delicate for reasons to be presented in the next section.\\
\end{itemize}

\noindent \textbf{Summary.} 
Thus, given an infinite derivation $\Pi'$ typing $t'$ and a productive reduction path $t=t_0\bred t_1\bred \ldots \bredfty t'$, the main ingredients to perform infinitary subject expansion are:
\begin{itemize}
\item[1.] Approximating $\Pi'$ into a family of finite derivations $\fPi'$ also typing $t'$, so that $\Pi'$ is the (asymptotic) join of the $\fPi'$.
\item[2.a.] For each approximation $\fPi'$, replacing $t'$ with $t_n$ for a sufficiently large $n$ ($n$ depends on $\fPi'$). This gives $\fPi^n$, which is also finite, but types the term $t_n$ instead of $t'$.
\item[2.b.] For each $\fP'$, expanding $\fPi^n$ $n$ times. This gives $\fPi$, which type $t$.
\item[3.] Obtaining $\Pi$, the expansion of $\Pi'$, whose subject is $t$, by taking the join of the \textit{finite} $\fPi$ while $\fP'$ ranges over the approximations of $\Pi'$.
\end{itemize}

\ignore{
So let us first understand what would be the NF of $\Delf\, \Delf$.  We have $\Delta_f\Delta_f\rightarrow^n
f^n(\Delta_f\Delta_f)$ for all $n$. Notice that the (unique) redex occurs at
(applicative) depth $n$ in $f^n(\Delta_f\Delta_f)$, so when $n$
converges towards $\infty$, the redex disappears and we get the NF $\fom$,
where $\fom$ is the (infinite) term $f(f(f(\ldots)))$, containing a
rightward infinite branch (coinductively, $\fom = f(\fom)$). 

Secondly, let us develop the construction of the derivations $\Pi_n$.
Let consider the types $\gam_1=\erewa$ and
$\gam_{n+1}= \mult{\gam_i}_{1\leqslant i \leqslant n}\rew
\tv$ for all $n\geqslant 1$. The judgments $\ju{f:\mult{\erewa
\tv}}{\Delta_f:\, \gamma_1}$ and $f:\, [\alpha]\rightarrow
\alpha \vdash \Delta_f:\,\gamma_n\ (n \geq 2)$ are easily derivable in system
$\scrRo$. There is indeed exactly one
derivation $\Psi_n$ that proves each of these judgments. Thus, for all $n\geqslant 1$, the judgment
$f:[[\alpha]\rightarrow \alpha]_{ n-1} + [\ems \rightarrow
  \alpha]\vdash \Delta_f\Delta_f:\,\alpha$ is derivable, and we write
$\Pi_n$ for the unique derivation that proves it.

Now, let us see how we can type the term $f^k(\cuf)$
obtained by reducing $k$ times $\Delta_f\Delta_f$. For all $n\geqslant
1$ and $k\geqslant n$, there is also a unique derivation, that we call
$\Pi^k_n$, proving $f:\,[[\alpha]\rightarrow \alpha]_{n-1}+ [\ems\rightarrow \alpha] \vdash f^k(\Delta_f\Delta_f)$. Actually, $\Pi_n$ could be also obtained from
$\Pi^k_n$ after $k$-expansion steps. Moreover, the same scheme can be
used to type $\fom$, \ie to obtain a derivation of the
judgment $f:[[\alpha]\rightarrow \alpha]_{n-1}]+[ \ems\rightarrow
    \alpha] \vdash \fom:\,\alpha$ that we call $\Pi'_n$ (intuitively,
  $\Pi'_n$ is $\Pi^\infty_n$).


We notice that every $\Pi'_n$ can be seen as  a \textit{truncation} (see below) of an infinitary (unforgetful) derivation $\Pi'$. For that, we admit we can
define an infinitary version $\scrR$ of the type system $\scrR$ 
(see \Sec~\ref{s:hybrid-cons-techrep}). We write $\mult{\tau}_{\omega}$ for the multiset type in which $\tau$ has an infinite multiplicity (thus, $\mult{\tau}_{\omega}=\mult{\tau}_{\omega}+\mult{\tau}$) and we also present $\Pi'$ as a fixpoint:

By \textit{truncation}, we mean that the finite derivation $\Pi_n '$ can be (informally) obtained from the infinite one $\Pi'$ by erasing some elements from the infinite multisets appearing in the derivation.

Keeping this idea in mind, let us illustrate how each  derivation
$\Pi_n$ can also be seen as a truncation of the same infinitary derivation
$\Pi$.  First, let us observe that each $\Psi_n$ is the truncation of the
following infinitary derivation $\Psi$:
\begin{center}
\begin{prooftree}
\Infer{0}[ax]{f:\, [[\alpha]\rightarrow \alpha] \vdash f:\,[\alpha]
    \rightarrow \alpha}
\Infer{0}[ax]{x:\,[\gamma]\vdash x:\,\gamma}
\Infer{0}[ax]{x:\,[\gamma]\vdash x:\,\gamma}
\Delims{ \left( }{ \right)_{ n \in \omega} }
\Infer{2}[app]{ x:\, [\gamma]_{n\in \omega} \vdash
x x :\, \alpha}
\Infer{2}[app]{  f:\,[[\alpha ]\rightarrow \alpha],~ x:\,[\gamma]_{
n\in \omega} \vdash f(x x):\,\alpha}
\Infer{1}[abs]{f:\,[[\alpha
]\rightarrow \alpha] \vdash \Delta_f:~ \gamma}
\end{prooftree}
\end{center}
where $\gamma$ is an infinitary version of  $\gamma_n$, \ie $\gamma$ is the fixpoint of the
equation $\tau=\mtau_{\omega} \rightarrow \alpha$.
  This derivation enables us to define $\Pi$ as:
\begin{center}
\begin{prooftree}
\Hypo{\Psi}
\Infer{1}{f:\,[\alpha]\rightarrow \alpha \vdash
\Delta_f:\, \gamma}
\Hypo{\Psi}
\Infer{1}{f:\,[\alpha]\rightarrow \alpha \vdash
\Delta_f:\, \gamma}
\Delims{ \left( }{ \right)_{ n \in \omega} }
\Infer{2}[app]{
f:[[\alpha]\rightarrow \alpha]_{n\in \omega}\vdash \Delta_f\Delta_f:\,\alpha
}
\end{prooftree}
\end{center}
We can indeed notice that each  $\Psi_n$ and $\Pi_n$ is, respectively,
a truncation of $\Psi$ and $\Pi$.

Unfortunately, it is not difficult to see that the type $\gamma$ also
allows  the non-HN term $\Delta \Delta$ to be typable. Indeed,
$x:\,[\gamma]_{n\in \bbN}\vdash xx:\,\alpha$ is derivable, so
$\vdash \Delta:\,\gamma$ and $\vdash \Delta\Delta:\, \alpha$ also are. \\

This last observation shows that the naive extension of the standard
non-idempotent type system to infinite terms is unsound as non-HN
terms can be typed. Therefore, we need to discriminate between
sound derivations (like $\Pi$ typing $\Delta_f \Delta_f$)
and unsound ones. For that, we define an infinitary derivation $\Pi$ to be \textbf{valid} or \textbf{approximable} when $\Pi$  admits finite truncations, generally denoted $\supf \Pi$ -- that are  finite derivations of $\scrRo$ --, so that    any fixed finite part of $\Pi$
 is contained in some truncation $\supf \Pi$ (for now, a finite part of
$\Pi$ informally denotes a finite selection of graphical symbols of
$\Pi$, a formal definition is given in Sec.~\ref{subsecBisupp}).

}

\subsection{Problems with infinitary typing and how to solve them}
\label{ss:degenerate}

Thus, the ideas of truncation, subject substitution and join guides us about how to perform $\infty$-subject expansion. The particular form of $\Pi_n$ and $\Pi$ does not matter (but they
 are given in Appendix \ref{s:expanding-pi-prime-n} for the curious reader). Let us just say here that the $\Pi_n$ involve a family of finite types $(\rho)_{n\geqslant 1}$ inductively defined by $\rho_1=\erewa$ and $\rho_{n+1}=\mult{\rho_k}_{1\leqslant k \leqslant n}\rew \tv$ and $\Pi$, their infinite join, involves an infinite type $\rho$ satisfying $\rho=\mult{\rho}_{\omega} \rew \tv$: if $t$ and $u$ are typed with $\rho$, then $t\,u$ may be typed with $\tv$.\\ 

\noindent \textbf{Problem 1 ($\Om$ is typable).}
Unfortunately, it is not difficult to see that the equality $\rho=\mrho_\om\rew \tv$ also allows  the non-head normalizing term $\Om=\Delta \Delta$ to be typed. 
$$\infer[\app]{\infer[\abs]{\infer[\app]{\infer[\ax]{}{\ju{x:\mrho}{x:\rho}}\sep \big(\infer[\ax]{}{\ju{x:\mrho}{x:\rho}}\big)_\om}{\ju{x:\mrho_\om}{x\,x:\tv}}}{\ju{}{\Delta:\rho}}\sep \big(\ju{}{\Delta:\rho}\big)_\om}{\ju{}{\Om:\tv}}$$

This last observation shows that the naive extension of the standard non-idempotent type system to infinite terms is unsound as non-head normalizing terms can be typed (actually, every term is typable in system $\scrR$~\cite{VialLICS18}: this means that every $\lam$-term has a non-empty denotation in the infinitary relational model). Therefore, we need to discriminate between sound derivations (like $\Pi$ typing $\cuf$) and unsound ones. For that, we define an infinitary derivation $\Pi$ to be \textbf{valid} or \textbf{approximable} when it is the join of all its \textit{finite} approximations (Figure~\ref{fig:approx-deriv-intro}). For instance, $\Pi'$ is approximable (it is the join of the $\Pi'_n$) and $\Pi$, defined as the join of the directed family $(\Pi_n)_{n\geqs 1}$ is approximable.
\\

\noindent \textbf{Problem 2 (validity is not definable with multisets).} It turns out that approximability cannot be formally defined in system $\scrR$, \ie with multisets as intersection types. We managed to define in the example of $\cuf$ and $\fom$ only because we explained the concept informally and in those terms, two equal subterms are assigned the same typing. Intuitively, approximability cannot be defined in system $\scrR$ because multisets disable tracking, as explained in \Sec~\ref{ss:sequence-tracking-klop-intro}. Another (informal) argument will be given later in \Sec~\ref{ss:non-determinism}: since proof reduction is not deterministic in system $\scrR$, the reduction dynamics of a $\scrR$-derivation $\Pi$ may be very different from that of finite its approximations. We give a formal argument, to long to be presented in the core of this article, in Appendix~\ref{a:rep-and-dynamics}, once we have presented system $\ttS$ and defined approximability in this system: we find two distinct $\ttS$-derivations $P_1$ and $P_2$, which both collapse on the same $\scrR$-derivation $\Pi$, but such that $P_1$ is approximable  (in system $\ttS$) whereas $P_2$ is not. This proves that approximability cannot be lifted from $\scrR$ to $\ttS$.


\section{Tracking types in derivations}
\label{s:rigid}

In this section, we define system $\ttS$, which is based on the key construct of
\textbf{sequence} (presented in \Sec~\ref{ss:sequence-tracking-klop-intro}) which we use to represent intersection in a \textit{rigid} way, \ie which enables tracking.
\begin{itemize}
\item We first coinductively define arrow types as function types whose domains are sequence of types. A notion of typing judgment naturally follows.
\item We then define derivations trees such that arguments of applications are typed with a sequence of judgments.
\end{itemize}




\subsection*{General notations} 
Some of these notations generalize or refine those of \Sec~\ref{s:finite-inter-infinite-types}.
\begin{itemize}
\item A \textbf{track} $k$ is any natural number. When $k\geqs 2$, $k$ is called an \textbf{argument track}.
  As for $\lam$-terms, 0 is dedicated to the constructor $\lx$, 1 is dedicated to the left-hand side of applications. Not only 2 but   all the $k\geqs 2$ to the possibly multiple typings of the arguments of applications.   For instance, in system $\ttS$, while typing an application $t\,u$, a subderivation on track $9$ (or on track 2, track 3 and so on) will be a subderivation typing the argument $u$. 
  This also explains why 0 and 1 will have a particular status in the definitions to come, and motivates the definition of \textit{collapse} below.
\item  The set of finite words on $\bbN$ is denoted with $\bbN^*$, $\epsi$ is the empty word, $a \cdot a'$ the concatenation of $a$ and $a'$.   The prefix order $\prefleq$ is defined on $\bbN^*$ by $a\prefleq a'$ if there is $a_0$ such that $a'=a\cdot a_0$, \eg $2\cdot 3 \prefleq 2\cdot 3\cdot 0 \cdot 1$. When the denotation is clear, we may just write $2031$ instead of $2\cdot 0 \cdot 3 \cdot 1$. 
\item The \textbf{applicative depth} $\ad{a}$ generalizes to any $a\in \bbN^*$: it still represents the number of nestings inside arguments,
  \ie  $\ad{a}$ is defined inductively by $\ad{\epsi}=0$, $\ad{a\cdot k}=\ad{a}$ if $k=0$ or $k=1$ and $\ad{a\cdot k}=\ad{a}+1$ if $k\geqs 2$.
\item  The \textbf{collapse}\label{txt:def-collapse} is defined on $\bbN$ by $\ovl{k}=\min(k,2)$ and on $\bbN^*$ inductively by $\ovl{\epsi}=\epsi$, $\ovl{a\cdot k}=\ovl{a}\cdot \ovl{k}$, \eg $\ovl{7}=2$, $\ovl{1}=1$ and $\ovl{2\cdot 3\cdot 0 \cdot 1}=2\cdot 2 \cdot 0 \cdot 1$. Intuitively, the collapse converts a position in system $\ttS$ into a position in a $\lam$-term (with only 0s, 1s or 2s) .\\
  This extends to words of infinite length,
  the collapse of $3^\om$ and $(1\cdot 2\cdot 3)^\om$ are $2^\om$ and $(1\cdot 2^2)^\om$.
\item  A \textbf{sequence} of elements of a set $X$ is a family $(k \cdot x_k)_{\kK}$ with $K\subeq \Nmzo$ and $x_k\in X$. In this case, if $k_0\in K$, $x_{k_0}$ is the element of $(k \cdot x_k)_{\kK}$ \textbf{on track} $k_0$.
  We often write $(k\cdot x_k)_{\kK}$ for $(x_k)_{\kK}$, which,  for instance, allows us to denote by $(2\cdot a,4\cdot b,5\cdot a)$ or $(4\cdot b,2\cdot a,5\cdot a)$  the sequence $(x_k)_{\kK}$ with $K=\set{2,4,5}$, $x_2=x_5=a$ and $x_4=b$. In this sequence, the element on track 4 is $b$. If $S=(k\cdot x_k)_\kK$ is a sequence,
we write $K=\Rt(S)$ ($\Rt$ stands for ``\textbf{roots}'', which are the \textit{top-level} tracks of $S$). 
Sequences come along with an (infinitary) \textbf{disjoint union} operator, denoted $\uplus$: let $S_j:=(k\cdot x_{k,j})_{\kK_j}$ be sequences for all $\jJ$ (where $J$ is a possibly infinite set):
\begin{itemize}
\item If the $(K_j)_{\jJ}$ are pairwise disjoint, then $\uplus_\jJ S_j$ is the sequence $(k\cdot x_k)_{\kK}$ where $K=\union_\jJ K_j$ and, for all $\kK$, $x_k=x_{k,j}$ where $j$ is the unique index such that $k\in K_j$. In that case, we say that the $S_j$ are \textbf{disjoint}.
\item If the $(K_j)_{\jJ}$ are not pairwise disjoint, then $\uplus_\jJ S_j$ is not defined.
\end{itemize}
When $J$ is finite, we use $\uplus$ as an \textit{infix} operator: $S_1\uplus S_2$ etc.

The operator $\uplus$ is partial, and infinitarily associative and commutative. For instance,  $(2\cdot a,\, 3\cdot b,\, 8\cdot a)\uplus (4\cdot a,\,9\cdot c)=(4\cdot a,\,9\cdot c) \uplus (2\cdot a,\, 3\cdot b,\, 8\cdot a)= (2\cdot a,\, 3\cdot b,\, 4\cdot a,\,8\cdot a,\,9\cdot c)$, but $(2\cdot a,\, 3\cdot b,\, 8\cdot a)\uplus (3\cdot b,\,9\cdot c)$ is not defined, because track 3 is in the roots of both sequences (\textbf{track conflict}).




\item A \textbf{tree} $A$ of $\bbN^*$ is a non-empty subset of $\bbN^*$ that is downward-closed for the prefix order ($a\leqslant a'\in A$ implies $a\in A$).
A \textbf{forest} is a set  of the form $A\setminus \set{\epsi}$ for some tree $A$ such that $0,\,1 \notin A$. Formally, a \textbf{labelled tree} $T$ (resp. \textbf{labelled forest} $F$) is a function to a set $\Sigma$, whose domain, called its \textbf{support} $\supp{T}$ (resp. $\supp{F}$), is a tree (resp. a forest). If $U=T$ or $U=F$, then $U\rstr{a}$ is the function defined on $\set{a_0\in \bbN^*\,|\, a\cdot a_0\in \supp{U}}$ and $U\rstr{a}(a_0)=U(a\cdot a_0)$. If $U$ is a labelled tree (resp. forest and $a\neq \epsi$), then $U\rstr{a}$ is a tree.
\end{itemize}

\subsection{Rigid Types}
\label{ss:types-S}


We start now to implement the ideas of \Sec~\ref{ss:sequence-tracking-klop-intro} to overcome the problems concluding \Sec~\ref{ss:degenerate}: to enable tracking (so that we may define approximability),  every argument derivation or type in the domain of an arrow must now receive a track for label.


Let $\TypeV$ be a countable set of types variables (metavariable $\tv$). 
The sets of (rigid) types $\Types^{111}$ (metavariables $T$, $S_i$, \ldots)
is \textit{coinductively} defined in Fig.~\ref{fig:S-types-def}.

\begin{figure}
  \begin{center}$ 
  \begin{array}{lll} 
     T,\,S_k & ::= &  \tv \sep \|\sep (k\cdot S_k)_\kK \rightarrow T \\[0.1cm]
  \end{array}$\\[-2ex] 
  \end{center}
  \caption{$\ttS$-Types}
  \label{fig:S-types-def}
  \end{figure}


 A \textbf{sequence type $F=(k\cdot S_k)_{k\in K}$} is a sequence of types in the above meaning and is seen as an intersection of the types $S_k$.
 We write $\est$ for the \textit{empty sequence}, \ie the sequence type whose support is empty. A sequence type of the form $(k\cdot T)$ is called a \textit{singleton sequence type}.  Note that, in $\skSk$, $K$ may be an infinite subset of $\bbN\setminus\set{0,1}$ and that the definition allows infinitely many nestings of sequences in types.

\begin{remark}
The equality between two types (resp. sequence types) may be defined by mutual coinduction: 
$F\rew T=F'\rew T'$ if $F=F'$ and $T=T'$ and $(T_k)_{k\in K}=(T'_k)_{k \in K'}$ if $K=K'$ and for all $k\in K,~ T_k=T_k'$.
\end{remark}


The \textit{support of a type} (resp. \textit{a sequence type}), which is a tree of $\bbN^*$ (resp. a forest),  is defined by mutual coinduction: $\supp{\tv}=\set{\epsi},~ \supp{F\rew T}=\set{\epsi}\cup \supp{F}\cup 1\cdot \supp{T}$ and $\supp{(T_k)_{k\in K}}= \cup_{k\in K} k\cdot \supp{T_k}$.

Thus, $\ttS$-types may be seen as labelled trees. 
For instance, 
$(7\ct \tv_1,3\ct \tv_2,2\ct \tv_1)\rew \tv$ is represented in Fig.~\ref{fig:pars-tree-type}

Thus, for types, track 1 is dedicated to the codomains of arrows.
\begin{figure}[!h]
  \begin{center}
  \ovalbox{
    \begin{tikzpicture}
  \draw (2,0) node {\small $\rew$};
  \draw (2,0) circle (0.18);
  
  \draw (3,1)  node{\small $\tv$};
  \draw (3,1) circle (0.18);
  \draw (2.87,0.87) -- (2.13,0.13) ;
  \draw (2.7,0.55) node {\small \red{$1$} };
  
  \draw (1.3,1) node{\small $\tv_1$} ;
  \draw (1.3,1) circle (0.18);
  \draw (1.91,0.161) -- (1.38,0.85) ;
  \draw (1.8,0.55) node {\small \red{$2$} };
  
  \draw (0.6,1) node{\small $\tv_2$} ;
  \draw (0.6,1) circle (0.18);
  \draw (1.86,0.13) -- (0.68,0.84);
  \draw (1.4,0.55) node {\small \red{$3$} };
  
  \draw (-0.4,1) node{\small $\tv_1$} ;
  \draw (-0.4,1) circle (0.18);
  \draw (1.83,0.08) -- (-0.3,0.85);
  \draw (0.1,0.55) node {\small \red{$7$} };
  \end{tikzpicture}}
  \end{center}
  \caption{The $\ttS$-type $(7\ct \tv_1,3\ct \tv_2,2\ct \tv_1)\rew \tv$ as a tree}
  \label{fig:pars-tree-type}
  \end{figure}

A type of $\Types^{111}$ is in the set $\Types^{001}$ if its support does not hold an infinite branch ending by $1^{\omega}$. For instance, this excludes the $T$ defined by $T=\est \rew T$ (\ie $T=\est \rew \est\rew \ldots$). This restriction means that we may only have finite series of arrows in a type. Indeed, 001-normal forms, even though they may be infinite, contain only finite series of abstraction nodes (\eg $\lx.\lx\ldots$ is not legal).

We just write $\Types$ for $\Types^{001}$ and consider only sequence types which hold only types from $\Types$.

When a family of sequence types $(F^i)_{\iI}$ is disjoint, then there is no overlapping of typing information between the $F^i$, and $\uplus_\iI F^i$ is defined.


\subsection{Rigid Derivations}
\label{s:deriv-S}

A \textbf{$\ttS$-context} $C$ is a \textit{total} function from $\TermV$ to the set of (001-)sequence types, for instance, $x:\sSk$ is the context which assigns $\sSk$ to $x$ and $\est$ to every other variable. 
The \textbf{domain} of $C$ is $\dom{C}=\set{x\in \TermV\,|\, C(x)\neq \est}$. We define the join of contexts pointwise. If $\dom{C}\cap \dom{D}=\emptyset$, we may write $C;D$ instead of $C\uplus D$.
A \textbf{judgment} is a triple of the form $\juCtt$, where $C$ is a context, $t$ a 001-term and $T$ is a (001-)type. A \textbf{sequence judgment} is a sequence of judgments  $(C_k\vdash t:T_k)_{k\in K}$ (notice that these judgments have the same subject $t$). For instance, if $5 \in K$, then the judgment on track 5 is $\ju{C_5}{t:S_5}$.

The set of \textbf{$\ttS$-derivations} (metavariable $P$) is defined
\textit{coinductively} by the rules of Fig.~\ref{fig:rules-system-S}.

\begin{figure}[!h]
  \bcen
  \ovalbox{
   $ \begin{array}{c}
      \infer[\ax]{\phd}{\juxkT}  \sep\sep\sep
    \infer[\abs]{\ju{C;x:\sSk}{t:T}}{\ju{C}{\lx.t:\sSk \rew T}}\\[0.7cm]
    \infer[\app]{\ju{C}{t:\sSk\rew T} \sep (\ju{D_k}{u:S_k})_{\kK}  }
          {\ju{C\uplus (\uplus_{\kK} D_k)}{t\,u:T}} 
  \end{array}$}
  \ecen
  \caption{System $\ttS$}
  \label{fig:rules-system-S}
  \end{figure}  

In the axiom rule, $k$ is called an \textbf{axiom track}. 
In the $\app$-rule, the contexts must be disjoint, so that no track conflict occurs. Otherwise, $\app$-rule cannot be applied. 

Derivations from system $\ttS$ may be seen as labelled trees: we define the \textbf{support} of  $P\rhd C\vdash t:T$ coinductively: $\supp{P}=\set{\epsi}$ if $P$ is an axiom rule, $\supp{P}=\{\epsi\}\cup 0\ct \supp{P_0}$ if $t=\lambda x.t_0$ and $P_0$ is the subderivation typing $t_0$, $\supp{P}=\set{\epsi}\cup 1\ct \supp{P_1} \cup_{\kK} k\ct \supp{P_k}$ if $t=t_1\,t_2,~ P_1$ is the left subderivation typing $t_1$ and $P_k$ the subderivation typing $t_2$ on track $k$. The $P_k$ ($k\in K$) are called \textbf{argument derivations}. Graphically, an $\app$-rule may be represented as in the top of Fig.~\ref{fig:app-node-S-with-edges}:
\begin{figure}[!b]
\bcen \ovalbox{\begin{tikzpicture}

\node (left) at (-2,1.2)  {$\ju{C}{t:(2\ct S_2,3\ct S_3,5\ct S_5)\rew T}$};
\node (arg2) at (2.4,1.2) {$\ju{D_2}{u:S_2}$};
\node (arg3) at (4.9,1.2) {$\ju{D_3}{u:S_3}$} ;
\node (arg5) at (7.4,1.2) {$\ju{D_5}{u:S_5}$}; 
\node (ccl) at (1,0) {$\ju{C\uplus D_2\uplus D_3 \uplus D_5 }{t\,u:T}$};

\draw (ccl) -- (left) node[midway,left]{\red{1}};
\draw (ccl) -- (arg2) node[midway,left]{\red{2}};
\draw (ccl) - -(arg3) node[midway,left]{\red{3}};
\draw (ccl) -- (arg5) node[midway,left]{\red{5}}; 

\draw (1.8,-0.7)  node {\textbf{Application node as a labelled tree }};

\draw (1.8,-2) node {
$\infer{
\ju{C}{t:(2\ct S_2,3\ct S_3,5\ct S_5)\rew T}\sep\msep \ju{D_2}{u:S_2}~\trck{2}\sep 
\ju{D_3}{u:S_3}~\trck{3}\sep 
\ju{D_5}{u:S_5}~\trck{5}}{
 \ju{C\uplus D_2\uplus D_3 \uplus D_5 }{t\,u:T}}$};

\draw (1.8,-3) node {\textbf{Compact representation}};
  \end{tikzpicture}
}
\ecen
\caption{Representing $\app$-nodes in system $\ttS$}
\label{fig:app-node-S-with-edges}
\end{figure}
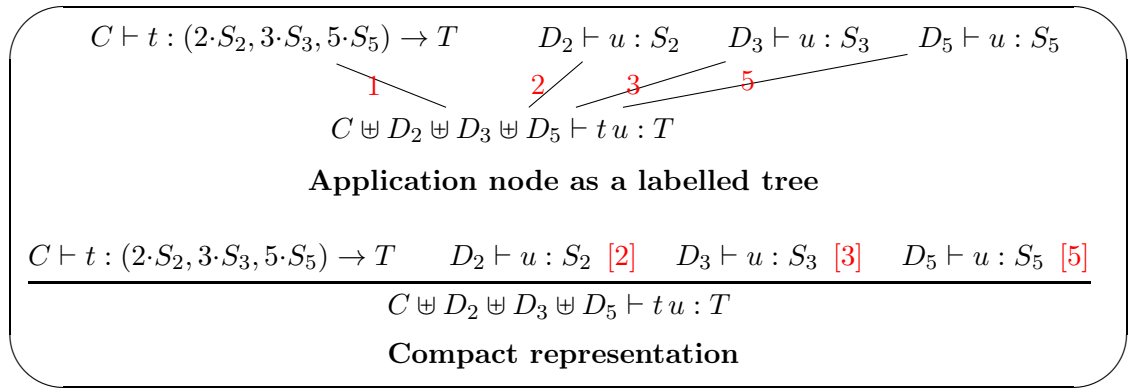
In this figure, argument judgments are on tracks 2, 4 and 5 (indicated in red) and the left-hand side judgment is on track 1 (compare with Fig.~\ref{fig:lam-terms-as-trees}). The same $\app$-rule has an alternative representation at the bottom of Fig.~\ref{fig:app-node-S-with-edges}, where \textit{argument} tracks are indicated between red square brackets inside the argument judgments, whereas track 1 is omitted.

\ignore{ 
We can define \textbf{isomorphisms of derivations}. It is formally done in
Appendix \ref{appIso}. Concretely, $P_1$ and $P_2$ are isomorphic, written
$P_1\equiv P_2$, if they type the same term, there is well-behaved labelled tree isomorphism between their support and use isomorphic types and contexts. In that case, we can define type isomorphisms that are compatible in some
sense with the typing rules in the two derivations $P_1$ and $P_2$.\\}

\begin{example}
\label{ex:S-deriv-1}
  In order to gain space, we do not write right-hand sides of axioms. We  set $\Sex=(8\ct o,3\ct o',2\ct o)\rew o'$. We indicate the track of argument derivations between brackets \eg $x:\!(2\ct \tv')\trck{3}$ means that judgment $\ju{x:\!(2\ct \tv')}{x\!:\!\tv'}$ is on track 3.
  $$\Pex=\infer[\abs]{\infer[\app]{\infer[\ax]{\phd}{x:\!\!(4\ct \Sex)  }\hspace{0.6cm}
      \infer[\ax]{\phd}{x:\!(9\ct \tv)}\trck{2} \hspace{0.5cm}  
\infer[\ax]{\phd}{x:\!(2\ct \tv')}\trck{3} \hspace{0.5cm}
\infer[\ax]{\phd}{x:\!(5\ct \tv)}\trck{8} 
    }{\ju{x:(2\ct \tv',4\ct (8\ct \tv,3\ct \tv',2\ct \tv)\rew \tv',5\ct \tv,9\ct \tv)}{xx:\tv'}}}{
  \ju{}{\lx.xx:(2\ct \tv',4\ct (8\ct \tv,3\ct \tv',2\ct \tv)\rew \tv',5\ct \tv,9\ct \tv)\rew \tv'}
              }$$

  In the $\ax$-rule concluding with $\ju{x:(5\ct o)}{x:o}$, the axiom track is 5. \end{example}

In  Example~\ref{ex:S-deriv-1} above,  $\supp{\Pex}=\set{\epsi,0,0\ct 1,0\ct 2,0\ct 3,0\ct 8}$ and we have $\Pex(0\ct 8)=\ju{x:(5\ct o)}{x:o}$. In particular, this judgment is on the \textbf{argument track 8} of the $\app$-rule at position 0.\\

\noindent \textbf{Collapse.} When we forget about tracks, a sequence naturally collapses on a multiset \eg $(3\ct a,5\ct b,8\ct a)$ collapses on $\mult{a,b,a}$. If this collapse is performed (coinductively)for all the sequences nested in derivations, then the derivations of $\ttS$ will collapse on derivations of $\scrR$. For instance, if we set $\sigex=\mult{o,o',o}\rew o'$, the derivation $\Pex$ of Example~\ref{ex:S-deriv-1} collapses on:
{$$\Piex=\infer{\infer{\infer{\phd}{x:\sigex  }\\
\infer{\phd}{x:\mult{o}} \\  
\infer{\phd}{x:\mult{o'}} \\
\infer{\phd}{x:\mult{o}} 
    }{\ju{x:\mult{o',\mult{o,o',o}\rew o',o,o}}{xx:o'}}}{
  \ju{}{\lx.xx:\mult{o',\mult{o,o',o}\rew o',o,o} \rew o'}
    }$$
}

\begin{remark}
  \label{rk:tracks-tedious}
The collapse turns out to be surjective: every $\ttS$-derivation is the collapse of a $\scrR$-derivation (it is non-trivial). Tracks can be interpreted as a very low-level specification for derivations, but as we said, they are necessary to express approximability and to prove the main interesting properties of system $\ttS$. However, while working with derivations and understanding some of the main intuitions of this article, it is still easier to just use multiset intersection and put tracks under the carpet.
\end{remark}

\noindent \textbf{Useful notations.} 
For the proofs of \Sec~\ref{s:normal-forms},  the following notations will be needed. The reader who does not want to read the proofs can skip them.
\begin{itemize}
  \item \textit{For applications:}
assume that $P$ types $t$, $a\in \supp{P}$ and $a=a_*\ct 1^{n}$. Then $\tras$ is of the form $\tra\,t_1\ldots t_n$
\begin{itemize}
\item The set $\ArgTr^i_P(a)$ (for $1\leqslant i \leqslant n$) contains the tracks of argument derivations typing the $i$-th argument $t_i$ below $a$.
\item The set $\ArgPos^i_P(a)$ contains the positions of those subderivations.
\end{itemize}
Formally, we set
 $1\leqslant i \leqslant n$, $\ArgTr^i_P(a)=\set{k\geqslant 2\,|\, a_*\ct 1^{n-i}\ct  k\in \supp{P}}$ and $\ArgPos^i_P(a)=a_*\ct 1^{n-i}\ct \ArgTr^i(a)=\set{a_*\ct 1^{n-i}\ct k\in \supp{P}\,|\, k\geqslant 2}$.  When $i$ is omitted, $i=1$ is assumed, \ie $\ArgTr_P(a)=\ArgTr^1_P(a)$. 
\item \textit{For variables and binding:} If $a\in A:=\supp{P}$ and $x\in \TermV$, we set $\AxP_a(x)=\set{a_0 \in A\,|\,a\leqslant a_0,\,t(a)=x,\nexists a_0',\,a\leqslant a_0'\leqslant a_0,\,t(a'_0)=\lx }$. Thus, $\AxP_a(x)$ is the set of positions of $\ax$-rules in $P$ above $a$ typing occurrences of $x$ that are not bound  at $a$.
  If $a_0\in A$ is an axiom, we write $\trP{a_0}$ for its associated axiom track.
\item We write $\Ax^P$ for the set of all axiom positions in $\supp{P}$.
\end{itemize}
Usually, $P$ is implicit and
we write only $\ArgTr^i(a)$, $\ArgPos^i(a)$, $\Axa(x)$ and $\tr{a}$.

For instance, in $\Pex$ (Example~\ref{ex:S-deriv-1}), $\ArgTr^1(0\ct 1)=\set{2,3,8}$ and $\ArgPos^1(0\ct 1)=\set{0\ct 2,0\ct 3,0\ct 8}$ (with $a_0=0$). Moreover, $\Ax_0(x)=\set{01,02,03,08}$, $\Ax_\epsi(x)=\eset$ and $\tr{01}=4$, $\trP{08}=5$, whereas  $\trP{0}$ is not defined. 


\section{Statics and Dynamics}
\label{s:dynamics}

In this section, we present \textit{bipositions}, which allow pointing inside $\ttS$-derivation (\Sec~\ref{ss:bisupp}). We use this notion to present and prove the subject reduction and the subject expansion properties (\Sec~\ref{ss:one-step-sr-se}). Proof expansion is actually defined \textit{uniformly}. We notice that proof reduction is deterministic in system $\ttS$. This leads us to define a notion of \textit{residuation} and to formulate a first argument showing that approximability is indeed \textit{not} definable while working with multiset intersection (\Sec~\ref{ss:non-determinism}).

\subsection{Bipositions and Bisupport}
\label{ss:bisupp}

In a rigid setting as system $\ttS$, we can identify and point to every part of a derivation, thus allowing to formulate many useful notions.

If $a\in \supp{P}$, then $a$ points to a judgment inside $P$ typing $t\rstr{\ovl{a}}$. We write this judgment $\ttC(a)\vdash t\rstr{\ovl{a}}:\ttT(a)$: we say $a$ is an \textbf{outer position} of $P$ ($\ovl{a}$ is defined on p.~\pageref{txt:def-collapse}). The context $\ttC(a)$ and the type $\ttT(a)$ should be written $\ttC^P(a)$ and $\ttT^P(a)$ but we often omit $P$.  From now on, we shall also write $\tra$ and $t(a)$ instead of $t\rstr{\ovl{a}}$ and $t(\ovl{a})$.

 In Example~\ref{ex:S-deriv-1}, $\Pex(01)=\ju{x:(4\cdot S)}{x:S}$, so $\ttC(01)=x:(4\cdot S)$ \ie $\ttC(01)(x)=(4\cdot S)$.  Since $S=(8\cdot o,3\cdot o',2\cdot o)\rew o'$, we have $\ttC(01)(x)(4)=\rew$, $\ttC(01)(x)(43)=o',\ \ttT(01)(\epsi)=\rew$, $\ttT(01)(1)=o'$. Likewise, $\Pex(03)=\ju{x:(2\cdot \tv')}{\tv'}$, so that $\ttC(03)=x:(2\cdot \tv')$ and $\ttT(03)=\tv'$. Thus, $\ttC(03)(x)(2)=\tv'$ and $\ttT(03)(\epsi)=o'$. We also have $\ttC(0)(x)=(2\cdot \tv',4\cdot (8\cdot \tv,3\cdot \tv',2\cdot \tv)\rew \tv',5\cdot \tv,9\cdot \tv)$, so that $\ttC(0)(x)(2)=\tv'$ and $\ttC(0)(x)(42)=\tv$. 

This  motivates the notion of \textbf{bipositions}: a biposition (metavariable $\p$) is a pointer into a type nested in a judgment of a derivation. 

\begin{definition}[Bisupport]\mbox{}
  \label{def:bisupp-S}
  \begin{itemize}
  \item A pair $(a,c)$ is a \textbf{right biposition} of $P$ if $a\in \supp{P}$ and $c\in \supp{\ttT^P(a)}$.
  \item A triple $(a,x,k\cdot c)$ is a \textbf{left biposition} if $a\in \supp{P},\, x\in \scrV$ and $k\cdot c\in \supp{\ttC^P(a)(x)}$.
\item The \textbf{bisupport} of a derivation $P$, written $\bisupp{P}$, is the set of its (right or left) bipositions.
  \end{itemize}
\end{definition}

We consider a derivation as a \textit{function} from its bisupport to the set $\TypeV\cup \set{\rew}$ and write now $P(a,c)$ for $\ttT^P(a)(c)$ and $P(a,x,k\cdot c)$ for $\ttC^P(a)(x)(k\cdot c)$. For instance, $(01,\epsi)$, $(01,1)$, $(03,\epsi)$ are right bipositions in $\Pex$ and $\Pex(01,\epsi)=\rew$, $\Pex(01,1)=\tv'$, $\Pex(03,\epsi)=\tv'$. Moreover, $(01,x,4)$, $(01,x,43)$ and $(0,x,42)$ are left bipositions of $\Pex$ and $\Pex(01,x,4)=\rew$, $\Pex(01,x,43)=\tv'$  and $\Pex(0,x,42)=\tv$.


\subsection{Quantitativity and Coinduction}
\label{ss:quant-coind-S}

\begin{figure*}
\begin{center}
\ovalbox{  \begin{tikzpicture}
  \draw (4,0) node{$\phd$};


  \draw (-0.7,1.03) --++ (0,0.38);
  \draw (-2.3,1.6) node [right] {\small $\ju{C;x:\sSk\!}{\!t\!:\!T} $};
  
  \draw (-0.8,1.25) node [right]{\small $\red{0}$};
  \draw (-2.6,0.9) node [right] {\small $\ju{C\!}{\!\lx.r\!:\!(S_k)_{\kK}\!\rew\! T}$};

  \draw (0.22,0.13)--(-0.57,0.67) ;
  \draw (-0.5,0.3) node [right] {\small $\red{1}$};
  \draw (-2.3,-0.1) node [right] {\small $\ju{C\uplus  D_3\uplus D_5\uplus D_8}{(\lx.r)s}:T \posPr{a}$};
  
  \draw (-0.7,5.2) node{\Large $P_r$};
  \draw (-0.7,1.8) --++(2.8,4) --++ (-5.6,0) --++ (2.8,-4) ;
  \draw (1.3,5.3) node {\ssz $\heartsuit$};  
  \red{
    \draw (2.3,4.9) node {$\posPr{a\ct 10\ct \al_{\scriptscriptstyle \heartsuit}}$};
    \draw [->,>=stealth] (1.8,5.1) --++(-0.35,0.2) ;
  }
  


  \draw (-2.5,4.6) node [right] {\small $\ovl{x:(2\ct S_2)}$ }; 
  \draw [dotted] (-1.9,4.4) --++(0,-0.21);
  \red{
    \draw (-2.8,3.5) node {$ \posPr{a\ct10\ct a_2}$};
    \draw [>=stealth,->](-2.4,3.7) --++ (0.1,0.7);
  }

  \draw (-1.65,3.1) node [right] {\small $\ovl{x:(7\ct S_7)}$};
  \draw [dotted] (-1,2.9) --++(0,-0.21);
  \red{
    \draw (0.9,2.7) node {$ \posPr{a\ct10\ct a_7}$};
    \draw [>=stealth,->](0.2,2.8) --++ (-0.3,0.1);
  }

  \draw (-0.6,4.3) node [right] {\small $\ovl{x:(3\ct S_3)}$};
  \draw [dotted] (0.3,4.1) --++(0,-0.21);
\red{
    \draw (1.7,3.8) node {$ \posPr{a\ct10\ct a_3}$};
    \draw [>=stealth,->] (1,3.9) --++ (-0.3,0.2);
  }
  

  \draw (1.5,1.1) --++ (0.5,0.8) --++(-1,0) --++ (0.5,-0.8) ;
  \draw (1.5,1.5) node{\small $P_2$};
  \draw (0.9,0.9) node [right] {\small $\ju{D_2\!}{\!s\!:\!S_2}$};
  \draw (0.4,0.13) -- (1.5,0.7);
  \draw (0.8,0.3) node [right] {\small $\red{2}$};

  \draw (3,1.1) --++ (0.5,0.8) --++(-1,0) --++ (0.5,-0.8) ;
  \draw (3,1.5) node{\small $P_3$};
  \draw (2.4,0.9) node [right] {\small $\ju{D_3\!}{\!s\!:\!S_3}$};
  \draw (1.1,0.13) -- (2.8,0.7);
  \draw (1.9,0.3) node [right] {\small $\red{3}$};

  \draw (4.6,1.1) --++ (0.5,0.8) --++(-1,0) --++ (0.5,-0.8) ;
  \draw (4.6,1.5) node{\small $P_7$};
  \draw (4,0.9) node [right] {\small $\ju{D_7\!}{\!s\!:\!S_7}$};
  \draw (2,0.13) -- (4.4,0.7);
  \draw (3.2,0.3) node [right] {\small $\red{7}$};

  \draw (4.36,1.77) node {\ssz $\clubsuit$}; 
  \draw (4.2,2.4) node {$\posPr{a\ct 7\ct \al_{\scriptscriptstyle \clubsuit}}$};
\red{  \draw [->,>=stealth] (4.2,2.2) --++ (0.15,-0.3) ;}

\draw (-1.8,-0.8) node [right] {\textbf{Subderivation typing the redex}};


\trans{-1.7}{0.4}{  

  \draw (9.8,3.6) node{\Large $P_r$};
  \draw (9.8,0.3) --++(2.8,4) --++ (-5.6,0) --++ (2.8,-4) ;
  \draw (11.8,3.8) node {\ssz $\heartsuit$};
    \red{
    \draw (12.3,3) node {$\posPr{a\ct \al_{\scriptscriptstyle \heartsuit}}$};
    \draw [->,>=stealth] (12,3.3) --++(-0.1,0.4) ;
  }

  \draw (7.2,-0.1) node [right] {\small $\ju{C\uplus  D_3\uplus D_5\uplus D_8}{r[s/x]}:T \posPr{a}$};

  \draw (8.6, 3.2) --++ (0.5,0.8) --++(-1,0) --++ (0.5,-0.8) ;
  \draw (8.6,3.6) node{\small $P_2$};
  \draw (8,3.1) node [right] {\small $\ju{D_2\!}{\!s\!:\!S_2}$ }; 
  \draw [dotted] (8.6,2.8) --++(0,-0.21);
  \red{
    \draw (7.7,2.2) node {$ \posPr{a\ct a_2}$};
    \draw [>=stealth,->](8.1,2.4) --++ (0.1,0.5);
  }

  \draw (9.5,1.8) --++ (0.5,0.8) --++(-1,0) --++ (0.5,-0.8) ;
  \draw (9.5,2.2) node {\small $P_7$};
  \draw (8.85,1.6) node [right] {\small $\ju{D_7\!}{s\!:\!S_7}$};
  \draw [dotted] (9.5,1.4) --++(0,-0.21);
  \draw (9.26,2.47) node {\ssz $\clubsuit$};
   \draw (8,1.4) node {$\posPr{a\ct a_7\ct \al_{\scriptscriptstyle \clubsuit}}$};
 \red{  \draw [->,>=stealth] (8.1,1.6) --++ (1.07,0.87) ;}

\red{
    \draw (11.4,1.2) node {$ \posPr{a \ct a_7}$};
    \draw [>=stealth,->](10.7,1.3) --++ (-0.3,0.1);
}
  
  \draw (10.8,3) --++ (0.5,0.8) --++(-1,0) --++ (0.5,-0.8) ;
  \draw (10.8,3.4) node {\small $P_3$}; 
  \draw (9.9,2.8) node [right] {\small $\ju{D_3\!}{s\!:\! S_3}$};
  \draw [dotted] (10.8,2.6) --++(0,-0.21);
\red{
    \draw (12.2,2.3) node {$ \posPr{a \ct a_3}$};
    \draw [>=stealth,->] (11.5,2.4) --++ (-0.3,0.2);
  }      
  
 \draw (7.5,-0.8) node [right] {\slbox{4.2}{\bcen \textbf{Subderivation typing\\ the reduct}\ecen}};

 }
\end{tikzpicture}}
  \caption{Subject Reduction and Residuals}
  \label{fig:SR-residuals}
\end{center}
\end{figure*}
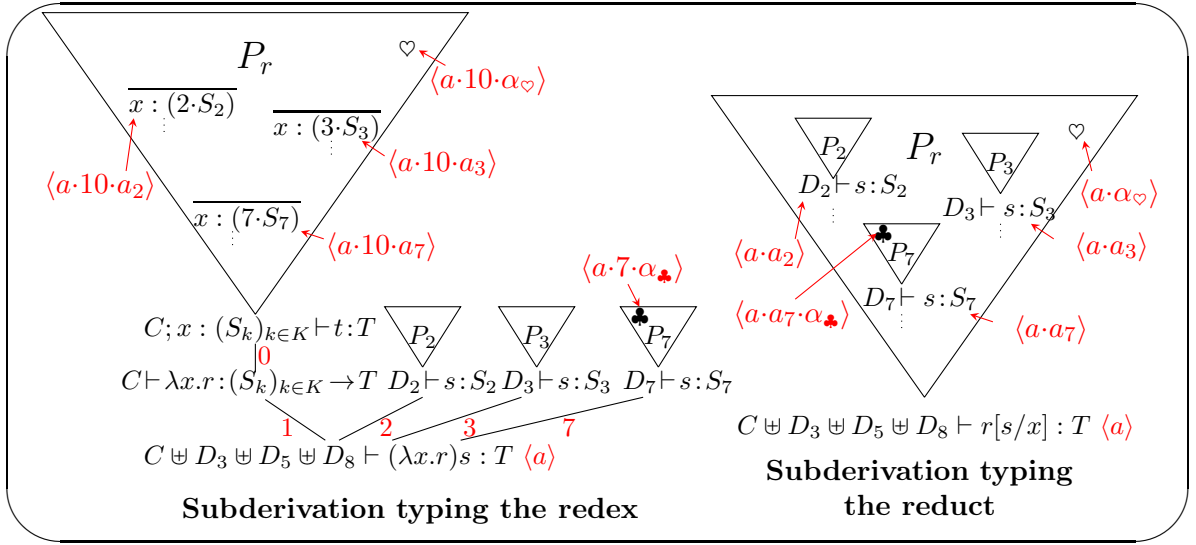

The presence of an infinite branch inside a derivation makes it possible that a type in a context is \textit{not} ``created'' in an axiom rule. For instance, we set, for all $k\geqslant 2,~ j_k=\ju{f:(i\cdot \twoarewa)_{i\geqslant k},\,x:(8\cdot\tv')}{\fom:o}$ and we coinductively define a family $(P_k)_{k\geqslant 2}$ of $\ttS$-derivations by 
$$P_k=\infer[\app]{\infer[\ax]{\phd}{\ju{f:k\cdot\twoarewa}{f:\twoarewa}} \sep
   P_{k+1}\tri j_{k+1}~\trck{2}
  }{\ju{f:(i\cdot \twoarewa)_{i\geqslant k},\,x:(8\cdot\tv')}{\fom:o}}  $$
We observe that the $P_k$ are indeed correct derivations of $\ttS$. 
Every type assignent to $f$ can be traced back to an $\ax$-rule.
However, notice that $x$ has been assigned a type (on track 8) although $x$ does not appear in the typed term $\fom$ and the part of the context assigned to $x$ cannot be traced back to any axiom rule typing $x$ with $o'$ (using axiom track $8$): we say that there is a \textbf{phantom axiom track} in the $P_k$. Thus, Remark~\ref{rk:relevance-computation-contexts-Ro} is not valid for the derivations $P_k$. 
This motivates the notion of \textit{quantitative} derivation, in which this does not happen:

\begin{definition}
  \label{def:quant-S-deriv}
   A derivation $P$ is \textbf{quantitative} when, for all $a\in \supp{P}$ and $x\in
\TermV$, $\ttC^P(a)(x)=\uplus_{a'\in \AxP_a(x)} (\trP{a'}\cdot \ttT^P(a'))$.
\end{definition}

This definition specifies that $P$ is quantitative when every context in $P$ can be computed from the $\ax$-rules. Notice that $P$ is always quantitative whenever the derivation is finite as observed in Remark~\ref{rk:relevance-computation-contexts-Ro}, or its subject is finite.

Now, assume $P$ is quantitative. 
Then $\Rt(\ttC(a)(x))=\set{\tr{a_0}\,|\, a_0\in \Axa(x)}$ and for all $a\in A,~x\in \TermV$ and $k\in \Rt(\ttC(a)(x))$, we write $\pos{a,\,x,\, k}$ for the unique position $a'\in \Axa(x)$ such that $\tr{a'}=k$. The reader who does not want to read the proofs can forget about this notation. 
  Actually, 
$\pos{a,\,x,\,k}$ can be defined by a downward induction on $a$ as follows:
\begin{itemize} 
\item If $a\in \Ax^P$, then actually $a\in \Ax(x)$ and $\tr{a}=k$ and we set 
  $\pos{a,\,x,\,k}=a$.
\item If $a\cdot 1\in A$, we set $\pos{a,\,x,\,k}=\pos{a\cdot \ell,\,x,\,k}$, where $\ell$ is the unique positive integer such that $k\in \Rt(\ttC(a\cdot \ell)(x))$. 
\item If $a\cdot 0\in A$, we set $\pos{a,\, x,\,k}=\pos{a\cdot 0,\,x, \, k}$ 
\end{itemize}

\subsection{One Step Subject Reduction and Expansion}
\label{ss:one-step-sr-se}

System $\ttS$ enjoys both subject reduction and expansion: if $t\rew^* t'$, then $\tri \juCtt$ iff $\tri \juCtpt$.

\begin{proposition}[One Step Subject Reduction]
\label{prop:subj-red-one-step-S}
  Assume $t\rew_{\beta}t'$ and $P\tri \juCtt$. Then there exists a derivation $P'$ such that $P'\tri \ju{C}{t':T}$.
\end{proposition}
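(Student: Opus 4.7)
The plan is to proceed by induction on the position $b \in \supp{t}$ of the fired redex. The standard reduction fires a redex somewhere in the term tree, and the inductive cases (reduction under an abstraction, or inside the left/argument of an application) are handled by applying the induction hypothesis to the appropriate subderivation of $P$ and then rebuilding the outer $\abs$ or $\app$ rule unchanged. The types, contexts, and tracks of the surrounding rules are untouched, so the only real content lies in the base case where the root of $t$ is the redex.

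For the base case, assume $t = (\lambda x.r)s$ and $t' = r[s/x]$. Then $P$ must end with an $\app$ rule, whose left premise comes from an $\abs$ rule. Unfolding the two rules, we can write $P$ as follows: there is a subderivation
\[
P_r \tri C_0;\, x:(k\cdot S_k)_{k \in K} \vdash r : T
\]
obtained by stripping the $\abs$ and $\app$ at the root of $P$, together with a family of argument derivations $P_k \tri D_k \vdash s : S_k$ for $k \in K$, and the final context is $C = C_0 \uplus (\uplus_{k \in K} D_k)$. This is exactly the configuration pictured in Figure~\ref{fig:SR-residuals}. It therefore suffices to prove a \emph{substitution lemma}: given $P_r$ as above and $P_k$ for $k \in K$, one can build a derivation $P_r[(P_k)_{k \in K}/x] \tri C_0 \uplus (\uplus_{k \in K} D_k) \vdash r[s/x] : T$.

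The substitution lemma is proved by coinduction on $P_r$, following the structure of the subderivation. The crucial use of rigidity is in the axiom case: when $P_r$ ends with $\ax$ at $x:(k\cdot S_k)$ for some $k \in K$, the track $k$ uniquely identifies one of the argument derivations $P_k$, and we plug $P_k$ in place of the axiom; for an axiom typing a variable $y \neq x$, we copy it unchanged. At an $\abs$ rule binding some variable $y$ (we may assume $y \neq x$ and $y \notin \fv{s}$ by $\alpha$-conversion), we apply the coinductive hypothesis and reconstruct the $\abs$ rule. At an $\app$ rule, we split the sequence $(k\cdot S_k)_{k \in K}$ according to the partition of $K$ induced by which subderivation of the $\app$ contains each axiom $\pos{\cdot,x,k}$ (this is where quantitativity/Definition~\ref{def:quant-S-deriv} and the uniqueness of $\pos{a,x,k}$ are used), partition the argument family $(P_k)_{k \in K}$ accordingly, apply the coinductive hypothesis to each sub-branch, and reassemble the $\app$ rule. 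The disjointness of the contexts produced remains satisfied because the tracks were disjoint to begin with, so the required $\uplus$ operations never conflict.

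The main obstacle will be bookkeeping in the $\app$ case of the substitution lemma: one must verify that the partition of $K$ induced by the axiom positions of $x$ in $P_r$ aligns correctly with the way the argument derivations $P_k$ get distributed into the subderivations, and that the resulting contexts combine back into exactly $C_0 \uplus (\uplus_{k \in K} D_k)$. The rigid discipline on tracks makes this essentially mechanical (each axiom on $x$ has a unique track $k$ and is replaced by exactly one $P_k$), so no duplication or choice is involved — this is in fact what makes proof reduction deterministic in system $\ttS$, as announced before Section~\ref{ss:one-step-sr-se}. Coinduction is legitimate because the constructed derivation is guarded by the structure of $r$ at each step.
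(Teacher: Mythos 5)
Your proof is correct (for quantitative derivations) but it is not the route the paper takes. You give the classical argument: induction on the position of the fired redex, with the root case reduced to a substitution lemma proved by coinduction on the subderivation typing $r$, in which each axiom on $x$ carrying track $k$ is replaced by the unique argument derivation $P_k$ sitting on that track. The paper instead builds the reduct derivation $P'$ \emph{globally}: it defines a partial injection $\Res_b$ on $\supp{P}$ (and on right bipositions), sets $\supp{P'}=\codom{\Res_b}$, transports types by $\ttT'(\Res_b(\al))=\ttT(\al)$, recomputes the contexts $\ttC'$ explicitly, and then checks that every node of the resulting labelled tree is a correct rule instance (Sec.~\ref{ss:sr-proof}, Fig.~\ref{fig:SR-residuals}). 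The two constructions yield the same derivation --- determinism via tracks is the common engine --- but the paper's formulation is chosen deliberately: the residual map on (bi)positions is precisely what is needed afterwards to prove that approximability is stable under reduction and expansion (Lemma~\ref{lem:approx-red}), and your substitution lemma does not produce it. Yours is the more elementary argument, and the paper itself concedes that the proposition ``may also be proven really simply by using coinduction.'' Two caveats: in the inductive case where the redex sits in the argument of an application you must apply the induction hypothesis to \emph{every} argument derivation $(P_k)_{k\in K}$ typing that argument, not just one; and your $\app$-case partition of $K$ by axiom positions genuinely requires quantitativity (a phantom axiom track has no associated axiom to replace, yet its context contribution must still reach the conclusion), so, like the paper's in-text proof, your argument covers the quantitative case only --- the general case needs the extra bookkeeping of Appendix~\ref{a:equinecessity}.
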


\begin{proposition}[One Step Subject Expansion]
  \label{prop:subj-exp-one-step-S}
  Assume $t\rew_{\beta}t'$ and $P'\tri \ju{C}{t':T}$.
  Then there exists a derivation $P$ such that $P\tri \juCtt$.
\end{proposition}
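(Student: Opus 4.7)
The plan is to proceed by structural induction on the position $a \in \set{0,1,2}^*$ of the reduced redex inside $t$, reducing the general case to a base case ($a=\epsi$) which will itself rest on an \emph{anti-substitution} lemma. The inductive step handles the cases where the redex lies strictly under a constructor: if $t=\lx.t_0 \bred \lx.t'_0 = t'$, then $P'$ ends with an $\abs$-rule whose premise $P'_0$ types $t'_0$; applying the induction hypothesis gives $P_0 \tri \ju{C;x:(k\cdot S_k)_\kK}{t_0:T_0}$, and wrapping with $\abs$ yields $P$. The cases $t = t_1\,t_2 \bred t'_1\,t_2$ and $t = t_1\,t_2 \bred t_1\,t'_2$ are analogous, applying the induction hypothesis to the left premise, or respectively to each argument premise on tracks $k\in K$ (this works uniformly because in an $\app$-rule all argument premises have the same subject $t_2$, which is reduced simultaneously).

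The heart of the proof is the base case, which I would derive from the following \emph{anti-substitution lemma}: if $P'\tri \ju{C'}{\rsx:T}$, then there exist a context $C$, a sequence $(k\cdot S_k)_{\kK}$, contexts $(D_k)_\kK$, a derivation $P_r\tri \ju{C;x:(k\cdot S_k)_\kK}{r:T}$, and derivations $P_k\tri \ju{D_k}{s:S_k}$ for each $\kK$, such that $C'=C\uplus (\uplus_\kK D_k)$. Assuming this lemma, the base case is immediate by assembling
\[
P \;=\; \infer[\app]{\infer[\abs]{P_r\tri \ju{C;x:(k\cdot S_k)_\kK}{r:T}}{\ju{C}{\lx.r:(k\cdot S_k)_\kK\rew T}} \sep \bigl(P_k\tri \ju{D_k}{s:S_k}\bigr)_\kK}{\ju{C\uplus(\uplus_\kK D_k)}{(\lx.r)s:T}},
\]
which is well-formed provided the contexts $C, D_k$ are pairwise disjoint.

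To prove the anti-substitution lemma, I would trace $P'$ \emph{coinductively} along the shape of $r$: at every position $b\in\supp{r}$ where $r(b)=x$ and $x$ is not bound above $b$ in $r$, the subderivation of $P'$ at the corresponding position types $s$ with some type $S_b$ in some context $D_b$. Collecting these subderivations indexed by those positions yields the family that will become the $P_k$. To obtain the rigid derivation $P_r$, each such subderivation is replaced by an axiom rule $\ju{x:(k_b\cdot S_b)}{x:S_b}$ on a freshly chosen argument track $k_b$, using a fixed injection from the (possibly infinite) set of substituted positions into $\bbN\setminus\set{0,1}$. The remaining part of $P'$, whose structure follows that of $r$, gives $P_r$; the contributions to the contexts split accordingly into those of axioms typing variables of $r$ other than $x$ (contributing to $C$) and those of axioms inside the extracted subderivations typing $s$ (contributing to the $D_k$). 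Disjointness of tracks, and hence of contexts, is enforced by renaming axiom tracks inside each extracted $P_k$ via a second canonical injection, exploiting the fact that the $\app$-rule in system $\ttS$ requires only disjoint support sets in its premises (cf.\ \Sec~\ref{s:deriv-S}).

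The main obstacle will be the anti-substitution lemma, specifically the bookkeeping of tracks under coinduction: one must ensure that the coinductive extraction of the $(P_k)_\kK$ yields genuinely disjoint sequences (no track conflict anywhere, however deep) and that the resulting $P_r$ satisfies the $\uplus$-disjointness conditions at every $\app$-node, possibly at infinite depth. Once this is set up carefully via a canonical track-assignment scheme, one-step subject expansion will drop out immediately, and the very same decomposition will be what supports the fine-grained notion of residuation to be introduced in \Sec~\ref{ss:sr-proof}.
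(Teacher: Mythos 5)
Your overall route---induction on the position of the redex, with the root case factored through an anti-substitution lemma, and fresh axiom tracks chosen via a fixed injection---is in its essential mechanism the same as the paper's: the paper builds the expansion $\Exp_b(P',\code{\cdot},t)$ directly in Appendix~\ref{ss:uni-subj-exp} by partitioning $\supp{P'}$ into the positions outside the reduct, those inside $r$, and those inside the typed occurrences of $s$, and uses an injection $\code{\cdot}$ to name the new axiom tracks; your anti-substitution lemma is the root instance of that construction. However, one step of your sketch would fail as written. You index the extracted argument derivations by the term positions $b\in\supp{r}$ with $r(b)=x$, and speak of ``the subderivation of $P'$ at the corresponding position''. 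In system $\ttS$ a single occurrence of $s$ in $\rsx$ sitting in argument position is typed once per argument track, so one free occurrence of $x$ in $r$ may correspond to zero, one, or many subderivations of $P'$ typing $s$, and ``the'' corresponding subderivation is not well defined. The family $(P_k)_{\kK}$ must be indexed by the positions \emph{in $\supp{P'}$} of the typed occurrences of $s$ (the set $A'_{s\epsi}$ of the paper's construction), and $P_r$ must contain one axiom rule for $x$ per such derivation position, not per term position; otherwise the sequence type assigned to $x$ is too small and the reconstructed $\app$-rule does not type-check.

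A second, smaller problem: the ``second canonical injection'' renaming axiom tracks inside each extracted $P_k$ is both unnecessary and harmful. The contexts $D_k$ are already pairwise disjoint, and disjoint from $C$, because they are carved out of the context of the subderivation of $P'$ at the root of $\rsx$, which is a disjoint union by the validity of the $\app$-rules of $P'$. Renaming tracks inside a $P_k$ would alter $D_k$, hence alter $\uplus_{\kK} D_k$, and the reassembled $P$ would no longer conclude with the same context as $P'$---which the statement requires. The only fresh tracks needed are those of the new axioms typing $x$ (equivalently, of the new argument premises of the reconstructed redex), and the single injection on the set of substituted derivation positions already guarantees the absence of track conflicts there.
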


These proposition are valid in general, whether the derivations are quantitative or not. However, in the following, we shall consider only the quantitative case (this allows us not handling phantom axiom tracks). They may also be proven really simply by using coinduction. However, in the forthcoming sections, we shall need to deal with a validity criterion, whose soundness \wrt reduction and expansion (Lemma~\ref{lem:approx-red}) can only be dealt with by having a suitable notion of residuation. More precisely, we shall need to have a good understanding of the action of reduction on bipositions.
We thus give an alternative proof of the subject reduction property using \textit{residuals}  
and defining directly the derivation reduct $P'$ in \Sec~~\ref{ss:sr-proof}. Details can be found in Appendix~\ref{a:equinecessity}. We also explain here why subject reduction may be seen as a \textit{deterministic} process in system $\ttS$ and why it is vital to ensure soundness (\Sec~\ref{ss:non-determinism}). Subject expansion is \textit{not} deterministic, but it may be processed \textit{uniformly}.

All this is illustrated by Fig.~\ref{fig:SR-residuals}: we assume that $\trb=\lxrs$ and $t \rewb{b} t'$, $P$ is a quantitative derivation concluding with $\juCttsh$. We also assume that $a\in P$ is such that $\ovl{a}=b$ (thus, $a$ is the position of a judgment typing the redex to be fired) and that there are exactly 3 $\ax$-rules typing $x$ above $a$, using axiom tracks 2, 3 and 7. Notice that $\ax$-rule typing $x$ on track 7 must be above $a\cdot 10$, so that its position is of the form $a\cdot 10\cdot a_7$. Likewise for the two other axioms.

Now, let us have a look at how reduction is performed inside $P$. We omit $\ax$-rules right-hand sides. We also indicate the position of a judgment between angle brackets, \eg $\posPr{a\cdot 10 \cdot a_3}$ means that judgment $\ju{x:(3\cdot S_3)}{x:S_3}$ is at position $a\cdot 10\cdot a_3$.


Notice how this transformation is \textit{deterministic}: for instance, assume $7 \in K$. There must be an axiom rule typing $x$ using axiom track 7, \eg $x:\, (7 \cdot S_7)\vdash x:\,S_7$ at position $a\cdot 10\cdot a_7$ and also a subderivation at argument track 7, namely, $P_7$ concluded by $s:S_7$
at position $a\cdot 7$. Then, when we fire the redex at position $b$, the subderivation $P_7$ \textit{must} replace the axiom rule on track 7, even if there may be other $k\neq 7$ such that $S_k=S_7$, in contrast to system $\scrR$ (see \Sec~\ref{ss:non-determinism}).

Thus, Proposition~\ref{prop:subj-red-one-step-S} canonically gives only one derivation $P'$ typing $t'$, so that we may also write $P\rewb{b} P'$.

Now, we observe that subject expansion cannot be deterministic in the same sense. When we pass from a derivation typing $\rsx$ to a derivation typing $\lxrs$, we create new axiom rules which type $x$.  
Those axiom rules must be assigned axiom tracks. 
But if for instance, 3 axiom rules are created above position $a$, there is no more reason to choose tracks 2, 3 and 7 than the tracks 8, 4, 38: the axiom tracks may be chosen arbitrarily, as long as they do not raise track conflicts. This explains the non-determinism of subject expansion.

However, as it will turn out in \Sec~\ref{s:infty-expansion}, we will need to expand simultaneously \textit{families} of derivations --- and this, infinitely many times. 
For that, we should find a way to perform subject expansion uniformly. Let then $\code{\cdot}$ be any \textit{injection} from $\bbN^*$ to $\bbN\setminus\set{0,1}$. 
We write $\Exp_b(P',\code{\cdot},t)$ for the \textit{unique} expansion of $P'$ such that $P \rewb{b} P'$ and, for all $a_0 \in \bbN^*$, if there is an $\ax$-rule typing $x$ created at position $a_0$, then the axiom track that has been assigned is $\code{a_0}$. 
Since $\code{\cdot}$ is injective, no track conflict may occur. We give all the details in Appendix~\ref{ss:uni-subj-exp}. The term $t$ must be indicated in the expression, because, for one $t'$ and one $b$, there may be several $t$ such that $t\rewb{b} t'$.

\begin{remark}
  \label{rmk:inf-quant-and-red}
If $P\breda{b}P'$ then $P$ is quantitative iff $P'$ is quantitative.
\end{remark}

Remark~\ref{rmk:inf-quant-and-red} can be understood by looking at Fig.~\ref{fig:SR-residuals}: indeed, if $P$ is not quantitative, there must be a phantom axiom track (\Sec~\ref{ss:quant-coind-S}) on an infinite branch. But whether this phantom axiom track visits $r$, $s$ or is outside $\lxrs$, the contracted redex, it will be found in $P'$. 
Conversely, if $P'$ is not quantitative, there is a phantom axiom rule on an infinite branch, which visits an occurrence of $s$, or visits $r$ but no substituted occurrence of $s$, or is outside $\rsx$. In all cases, this phantom axiom track will be found in $P$.\\






\noindent \textbf{A First Look at Residuals.} 
Deterministic subject reduction naturally allows defining the residuals of positions and \textit{right} bipositions after reduction, extending the notion of residuals for positions in $\lambda$-terms. 
We define them more precisely in \Sec~\ref{ss:sr-proof}, to sketch the formal details of the proof of the subject reduction property.

In Fig.~\ref{fig:SR-residuals}, $\hearts$ represents a judgment nested in $P_r$. Thus, its position must be of the form $a\cdot 10\cdot \al_{\sheart}$. After reduction, the $\app$-rule and $\abs$-rule at positions $a$ and $a\cdot 0$ have been destroyed and the position of this judgment $\hearts$ will be $a\cdot \al_{\sheart}$. We set then $\Res_b(a\cdot 10\cdot \al_{\sheart})=a\cdot \al_{\sheart}$.

Likewise, $\clubs$ represents a judgment nested in the argument derivation $P_7$ on track 7 above $a$
. Thus, its position must be of the form $a\cdot 7\cdot \al_{\sclub}$ where $a\cdot 7$ is the root of $P_7$. After reduction, $P_7$ will replace the $\ax$-rule typing $x$ on track 7, so its root will be at $a\cdot a_7$ (by definition of $a_7)$. Thus, after reduction, the position of judgment $\clubs$ will be $a\cdot a_7\cdot \al_{\sclub}$. We set then $\Res_b(a\cdot 7\cdot \al_{\sclub})=a\cdot a_7\cdot \al_{\sclub}$.

We can thus define the \textbf{residuals} of most positions $\al\in \supp{P}$, but not all, \eg $a\cdot 1$, that corresponds to the abstraction of the redex, is destroyed during reduction and does not have a residual. For \textit{right} bipositions, when $(a,c)\in \bisupp{P}$ and $a'=\Res_b(a)$ is defined, we set $\Res_b(a,c)=(a',c)$.

Note that defining residuals in system $\scrR$ would be impossible: system $\scrR$ lacks pointers and is not deterministic (\Sec~\ref{ss:non-determinism} and Fig.~\ref{fig:non-deter-srRo}).

\subsection{Safe Truncations of Typing Derivations and Determinism of Reduction}
\label{ss:non-determinism}



Now that we have described the deterministic reduction dynamics of system $\ttS$, we can explain why the non-deterministic system $\scrR$, based on multiset intersection, is unfit to express the notion of \textbf{approximability}, informally introduced at the end of \Sec~\ref{ss:typ-inf-nf-informal}.

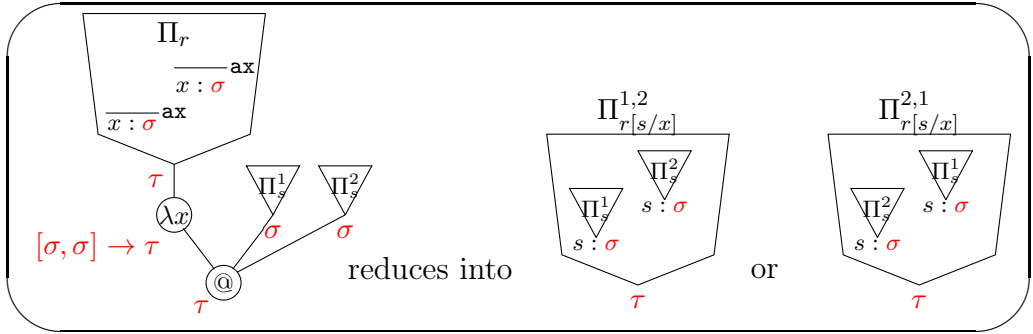
\begin{figure}
  \begin{center}
    \ovalbox{
  \begin{tikzpicture}

    {

 \trans{1.2}{0}{     
    \draw (-0.65,3.3) node {$\Pi_r$};
    \draw (-0.65,1.57) --++ (1,0.4) --++ (0.2,1.6) --++ (-2.4,0) --++ (0.2,-1.6) -- cycle;
    \draw (-0.65,1.57) node [below left] {$\red{\tau}$};
    \draw (-1.55,2.25) --++ (0.7,0);
    \draw (-0.65,2.25) node  {\fnsz $\ax$};
    \draw (-1.2,2.1) node  {\fnsz $x:\red{\sig}$};
    \draw (-0.65,2.85) --++ (0.7,0);
    \draw (0.25,2.85) node  {\fnsz $\ax$};
    \draw (-0.3,2.6) node {\fnsz $x:\red{\sig}$};
    
    \blockunary{-0.65}{0.9}{$\lx$}
    \draw (-0.7,0.8) node [below left] {$\red{\mult{\sig,\sig}\rew \tau}$};
    \blocka{0}{0}
    \draw (-0.05,-0.1) node [below left] {$\red{\tau}$};

\drawsmalltriin{0.65}{0.9}{$\Pi^1_s$}
\drawlefttail{0.65}{0.9}
\draw (0.65,0.9) node [below] {$\red{\sig}$};

\drawsmalltriin{1.6}{0.9}{$\Pi^2_s$}
\draw (0.18,0.14) -- (1.6,0.9);
\draw (1.6,0.9) node [below] {$\red{\sig}$};
 }
 }
\draw (2.7,0.2) node [right] {\large reduces into};

\trans{7}{3.4}{ 


\trans{0.3}{-5}{    
    \draw (-0.65,3.85) node {$\Pi^{1,2}_{\rsx}$};
    \draw (-0.65,1.57) --++ (1,0.4) --++ (0.2,1.6) --++ (-2.4,0) --++ (0.2,-1.6) -- cycle;
    \draw (-0.65,1.57) node [below] {$\red{\tau}$};
    \drawsmalltriin{-1.2}{2.2}{$\Pi^1_s$};
    \draw (-1.2,2.1) node  {\fnsz $s:\red{\sig}$};
    \draw (-0.3,2.6) node {\fnsz $s:\red{\sig}$};
    \drawsmalltriin{-0.3}{2.7}{$\Pi^2_s$}; 
}

\draw (1,-3.25) node [right] {\large or};

\trans{4}{-5}{    
    \draw (-0.65,3.85) node {$\Pi^{2,1}_{\rsx}$};
    \draw (-0.65,1.57) --++ (1,0.4) --++ (0.2,1.6) --++ (-2.4,0) --++ (0.2,-1.6) -- cycle;
    \draw (-0.65,1.57) node [below] {$\red{\tau}$};
    \drawsmalltriin{-1.2}{2.2}{$\Pi^2_s$};
    \draw (-1.2,2.1) node  {\fnsz $s:\red{\sig}$};
    \draw (-0.3,2.6) node {\fnsz $s:\red{\sig}$};
    \drawsmalltriin{-0.3}{2.7}{$\Pi^1_s$}; 
}

} 
  \end{tikzpicture}  }
  \caption{Non-Determinism of Subject Reduction (Multiset Intersection)}
  \label{fig:non-deter-srRo}
  \end{center}
\end{figure}

Let us consider a redex $t=(\lambda x.r)s$ and its reduct $t'=r[s/x]$. If a $\scrR$-derivation $\Pi$ types $t$, then $\Pi$ has one subderivation $\Pi_r$ of the form $\Pi_r\tri \ju{\Gam,\,x:\msigi}{r:\tau}$.

\ignore{
Also, for each $\iI$, $s$ has been given the type $\sigi$ inside some subderivation $\Pi_i$.  We can obtain a
derivation $\Pi'$ typing the term $t'$ by replacing the axiom rule
concluding with $x:\ju{\mult{\sigi}}{x:\sigi}$ with the derivation
$\Pi_i$.  
We then call $\Pi'$ a \textbf{derivation reduct}.

If a type $\sig$ occurs several times in $\msigi$---say $n$ times --- there must be $n$ axiom leaves in $\Pi$ typing $x$ with type $\sig$, but also $n$ argument derivations $\Pi_i$ concluding with $s:\,\sig$.  When an axiom rule typing $x$ and an argument
derivation $\Pi_i$ are concluded with the same type $\sigma$, we shall
informally say that we can \textbf{associate} them. This means that this
axiom rule can be substituted with that argument derivation $\Pi_i$ when
we reduce $t$ to produce a reduct derivation $\Pi'$ typing $t'$. There is not
only one way to associate the $\Pi_i$ to the axiom leaves typing $x$
(there can be as many as $n!$) and possibly many different derivation reducts.}

 Notice that proof reduction is non-deterministic in system $\scrRo$: let us consider a redex $t=\lxrs$ and its reduct $\rsx$. Assume that $t$ is typed so that $s$ is typed exactly \textit{twice} with the same type. For instance, say that $\Pi$ is a derivation typing $t$ which has a subderivation $\Pi_r\tri \ju{\Gam,x:\mult{\sig,\sig}}{r:\tau}$ (thus $\lx.r:\mult{\sig,\sig}\rew \tau$) and two subderivations $\Pi^1_s$ and $\Pi^2_s$ concluding with $\ju{\Del_i}{s:\sig}$ (one may assume $\Pi^1_s\neq \Pi^2_s$). This situation is represented in Fig.~\ref{fig:non-deter-srRo} and in particular, there are two different axiom rules typing $x$ with $\sig$ in $r$. Then, to obtain a \textit{derivation reduct} typing $t'=\rsx$ with $\tau$, there are two possibilities depending on which argument derivation $\Pi^i_s$ replaces which axiom rule typing $x$.

This makes a sharp difference with system $\ttS$: assume that $S_2=S_3=S_7=S$ so that argument derivation $P_2,\, P_3,\, P_7$ type $s$ with the same type $S$. Then  each $P_k$ ($k\in K$) will replace axiom rule at position $a\cdot 10\cdot a_k$ (see Fig.~\ref{fig:SR-residuals}) without other choice. In system $\ttS$, there is a unique (canonical) derivation reduct.

In contrast, let us dig into the following independent situations in system $\scrR$:
\begin{itemize}
\item Assume $\Pi^1_s$ and $\Pi^2_s$ (typing $s$), both concluding with the
  same type $\sig= \sig_1=\sig_2$, as in Fig.~\ref{fig:non-deter-srRo}.  Thus, we also have two axiom leaves \#1 and \#2 
  concluded by $\ju{x:\msig}{ x: \sig}$. In that case, during reduction, the axiom rules \#1 may be replaced with $\Pi^1_s$ (in that case, axiom \#2 is replace with $\Pi^2_s$) or with $\Pi^2_s$ (in that case, axiom \#2 is replaced with $\Pi^1_s$), because the types of axioms \#1 and \#1 match with those of the argument derivations $\Pi^1_s$ and $\Pi^2_s$.   
  When we truncate $\Pi$ into a
  finite $\supf \Pi$, the subderivations $\Pi^1_s$ and $\Pi^2_s$ are also
  cut into two derivations $\supf \Pi^1_s$ and $\supf \Pi^2_s$. In each
  $\supf \Pi^i_s$, $\sig$ can be cut into a type $\supf \sigi$.  When 
  $\Pi^1_s$ and $\Pi^2_s$ are different, it is possible that $\supf \sig_2 \neq \supf \sig_1$ for every finite truncation 
   of $\Pi$. Thus, it is possible that, for every truncation
  $\supf \Pi$, the axiom leaf \#1 does not match $\supf \Pi_2$:  indeed, a match between axiom rule and argument derivation that is possible in $\Pi$ could be impossible
  for any of its truncations. 
\item Assume this time $\sig_1\neq \sig_2$. When we truncate
$\Pi$ into a finite $\supf \Pi$, both $\sig_1$ and $\sig_2$ can 
be truncated into the same finite type $\supf \sigma$. In that case, we can associate $\supf \Pi_1$ with axiom \#2 and $\supf \Pi_2$ with axiom \#1
inside $\supf \Pi$ in $\supf \Pi$ (which is impossible in $\Pi$), thus producing a reduct derivation $\supf \Pi'$ typing 
$t'$, which has no meaning \wrt $\Pi$  ($\supf \Pi'$ would not be a truncation of any derivation reduct of $\Pi$).
\end{itemize}
That is why we need the \textit{deterministic} association between the argument derivations and the axiom rules typing each in system $\ttS$ (thanks to tracks),
so that the associations between them are preserved even when we truncate derivations. Ses not allow to formulate a well-fit notion of 
approximability for derivations that would be stable under 
(anti)reduction and hereditary for subterms. Whereas determinism ensures that proof reduction is monotonic in system $\ttS$, it is not monotonic in system $\scrR$.

Another, more formal argument proving that approximability cannot be defined in system $\scrR$ is presented in Appendix~\ref{a:rep-and-dynamics}:
 there exists a $\scrR$-derivation $\Pi$ and two $\ttS$-derivations $P_1$ and $P_2$ that both collapse on $\Pi$ such that $P_1$ is approximable (Definition~\ref{def:approximable} to come) and $P_2$ is not.

\subsection{Proving Subject Reduction with Residuals}
\label{ss:sr-proof}

In the beginning of \Sec~\ref{ss:one-step-sr-se}, we emphasized the necessity of describing the effect of reduction on bipositions.
In this section, we sketch a proof of the subject reduction property in the quantitative case using residuation. A detailed proof can be found in Appendix~\ref{a:equinecessity} (in the non-quantitative case), although  Fig.~\ref{fig:SR-residuals} synthetizes the core arguments. For that, we define and use the notion of residual of a biposition, which will be instrumental to prove that approximability is stable under reduction and expansion. The reader can skip this section if they want to and keep in mind the following  summary:
\begin{itemize}
\item Residuation is defined for almost all \textit{right} bipositions (but not for \textit{left} ones) following Fig.~\ref{fig:SR-residuals}.
\item It is actually not defined when the (right) biposition points inside the application, the abstraction or an occurrence of the bound variable of the contracted redex.  
\item Residuation is a partial injection, but it is surjective: every biposition of the derivation typing the reduct is the residual of a biposition in the initial derivation. 
\end{itemize}


\noindent \textbf{Hypotheses and Notations.}
We assume again that $P$ is a (quantitative or not) $\ttS$-derivation typing $t$ and $\trb =\lxrs$. We set $\tprb=\rsx$, so that $t\breda{b} t'$. 
%
We then use the following notations and conventions:
\begin{itemize}
\item Let $\beta \in \supp{t}$. 
  We set $\Rep_P(\beta)=\set{\al\in \supp{P}\,|\, \ovlal = \beta}$. If $\al \in \Rep_P(\beta)$, we say that $\al$ is a representative of $\beta$ (inside $P$).
\item The metavariable $a$ ranges over $\RepPb$, \ie over the representatives of the position of the contracted redex. The metavariable $\al$ will be used for any position in $\bbN^*$.
\item If $a\in \RepPb$, we set $\AxPl{a}:=\Ax_{a\cdot 10}(x)$ and $\TrPl{a}=\Rt(\ttC^P(a\cdot 10)(x))$. Since $P$ is assumed to be quantitative, $\TrPl{a}:=\set{\tr{\al_0}\,|\,\al_0\in \Axl{a}}$.
  Thus, $\AxPl{a}$ is the set of positions of the redex variable (to be substituted) above $a$ and $\TrPl{a}$ is the set of the axiom tracks that have been used for them. We often omit $P$ in the notations.
\end{itemize}
For instance, in Fig~\ref{fig:SR-residuals}, $\Axl{a}=\set{a\cdot 10 \cdot a_2,\; a\cdot 10 \cdot a_3, a\cdot 10 \cdot a_7}$ and $\Trl{a}=\set{2,3,7}$.
Since $P$ is quantitative, $\ttC(a\cdot 10)(x)$ must be of the form $\skSk$ where $K=\Trl{a}$, for any $a\in \RepPb$.


For $k\in \Trl{a}$, we write $a_k$ for the unique $a_k\in \mathbb{N}^*$ such that $\pos{a\cdot 10,\,x,\, k}=a\cdot 10\cdot a_k$: thus, $a\cdot 10\cdot a_k$ is the position of the axiom rule typing $x$ above $a$ using axiom track $k$.\\

\noindent \textbf{Residuation for positions.} Assume $\alpha \in A,~ \ovl{\alpha}\neq a,\, a\cdot 1,\, a \cdot 10\cdot a_k$ for no $a\in \RepPb$ and $k\in \TrPl{a}$. We follow Fig.~\ref{fig:SR-residuals} and the \textbf{residual position} of $\alpha$, written $\Res_b(\alpha)$, is defined as follows:
\begin{itemize}
\item  If $\alpha \geqslant a\cdot k\cdot \alpha_0$ for some $a\in \RepPb$  and $k\geqslant 2$ (paradigm $\clubs$), then $\Res_b(\alpha)=a\cdot a_k \cdot \alpha_0$
\item If $\alpha =a\cdot 10 \cdot \alpha_0$ for some $a\in \RepPb$ with $\al_0\notin a_k$ (paradigm $\hearts$), then  $\Res_b(\alpha)=a\cdot \alpha_0$
\item If $\ovl{a}\ngeqslant b$, $\Res_b(\alpha)=a$.
\end{itemize}
By case analysis, we notice that $t'(\Res_b(a))=t(a)$ for all $a\in \dom{\Res_b}$. We set $A'= \codom {\Res_b}$ (we call $A'$ \textbf{residual support} of $P$). \\

\noindent \textbf{Residuation for \textit{right} bipositions.} Now, whenever
$\alpha':= \Res_b(\alpha)$ is defined, the \textbf{residual
  biposition} of $\p:=(\al,\gam)\in \bisupp{P}$ is $\Res_b(\p)=(\alpha',\,\gamma)$.

We notice that $\Res_b$ is a \textit{partial injective}
function (both for positions and right bipositions). In particular,
$\Res_b$ is a bijection from $\dom {\Res_b}$ to $A'$ and we write
$\Res_b^{-1}$ for its inverse. We set, for all $\al'\in A'$, $\ttT'(\al)=\ttT(\Res^{-1}_b(\al'))$.\\

\noindent \textbf{Handling contexts.} For any $\alpha'\in A'$, let $\ttC'(\alpha')$ be the context defined by $\ttC'(\alpha'):=(\ttC(\alpha)\setminus x) \uplus(\uplus_{k\in \ttK(\al)} \ttC(\alpha\cdot k))$, where $\alpha:=\Res_b^{-1} (\alpha')$, $\ttK(\al):= \Rt(\ttC(\al)(x))$
and $\ttC'(\al')\setminus x$ is the context $C'$ such that $C'(x)=\est$ and $C'(y)=\ttC'(\al')(y)$ for all $y\neq x$. Intuitively, $\ttC'(\alpha')$ erases the types assigned to $x$ and replace them by the contexts assigned to the matching occurrences of $s$, as expected in the reduct, where the argument $s$ has replaced $x$. 
Notice that $\ttC'(\alpha)=\ttC(\alpha)$ for any $\alpha \in A$ such that $\ovl{\alpha} \ngtr b$, \eg $\ttC'(\epsi)= \ttC(\epsi)=C$. Moreover, since $\Res_b(\epsi)=\epsi$, $\ttT'(\epsi)=\ttT(\epsi)=T$.\\

\begin{remark}
  \label{rmk:residuals-right-only}\mbox{}
  \begin{itemize}
\item   Residuation is defined for \textit{right} bipositions only (and not for left ones). Indeed, there is not good way to define residiuals for left bipositions. This may be understood by having a look à Fig.~\ref{fig:SR-residuals} and remembering that left bipositions pertain to left-hand sides of judgments, \ie in the typing contexts.  When reduction is performed, a judgment nested in the argument $s$ may be moved at \textit{higher depth} in $r$. But typing context have ``descendants'' below them till the root of the derivation. For instance, if $y$ is free in $t$ and $y:\skSk$ appears somewhere in $s$, then we find $y:\skSk$ all the way down to the root of $t$. When $s$ is substituted when $t\bred t'$, $y:\skSk$ may have more descendants in $t'$ than in $t$. In these conditions, it is not possible to have quantitative notion of residuation for contexts, at least naively (it is actually possible, but only if we consider \textit{classes} of left bipositions, as in~\cite{VialLICS18})
\item However, altough residuation cannot be defined for left bipositions, morally, the definition $\ttC'$ specifies the effect of reduction on the \textit{left}-hand sides of judgments of derivation $P$.
  \end{itemize}  
  \end{remark}

We define now $P'$ as the labelled tree such that $\supp {P'}=A'$ and $P'(\alpha')$ is $\ju{\ttC'(\alpha')}{ t'\rstr{\alpha'}:\, \ttT'(\al')}$. We claim that $P'$ is a correct derivation concluded by $\juCtpt$: indeed, $\ovl{A'}\subset \supp {t'}$ stems from $\ovl{A}\subset \supp {t}$. Then, for any $\alpha'\in A'$ and $\alpha=\Res_b^{-1}(\alpha)$, 
$t'(\alpha')=t(\alpha)$ and the rule at position $\alpha'$
is correct in $P'$ because the rule at position $\alpha$ in $P$ is 
correct after we notice that:
\begin{itemize}
\item In the abstraction case, $t'(\alpha')=\lambda y$ implies $\ttC'(\alpha'\cdot 0)(y)=\ttC(\alpha\cdot 0)(y)$).
\item If $t'\rstr{\al'}$ is a substituted occurrence of $s$, then, $t(\al)=x$ and $P(\al)=\ju{x:(k\cdot \ttT(\al))}{x:\ttT(\al)}$ with $\al=a\cdot 10\cdot a_k$ for some $a\in \RepPb$ and the typing rules ensure that $\ttT(\al)=\ttT(a\cdot k)=\ttT'(\al')$.
\end{itemize}
Thus, $P'$ is a derivation concluding with $\juCtpt$. This proves that the subject reduction property holds in system $\ttS$.

\begin{observation}
  \label{obs:when-res-not-defined}
  As noted above, $\Res_b$ is a surjective and partial injective function from $\supp{P}$ to $\supp{P'}$ (thus, $\Res^{-1}(\p')$ is always a singleton set when $\p'\in \bisuppR{P'}$). Actually there are 3 cases whan $\Res_b(\al)$ is not defined: 
  \begin{itemize}
  \item $\al$ points to an $\ax$-rule typing the variable of the redex:  $t(\al)=x$. Equivalently, $\al=a\cdot 10\cdot a_k$ for some $a\in \RepPb$ and $k \in \Trl{a}$.
  \item $\al$ points to the contracted redex: $\tral=\lxrs$. Equivalently, $\al=a\in \RepPb$.
  \item  $\al$ points to the abstraction of the constracted redex: $\tral = \lx.r$. Equivalently, $\al = a\cdot 1$ for some $a\in \RepPb$.
  \end{itemize}
\end{observation}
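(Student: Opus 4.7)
The plan is to unpack the claim by reading off directly from the definition of $\Res_b$ that was given just above the observation. The preamble to that definition explicitly restricts $\alpha$ by the hypothesis ``$\ovl{\alpha}\neq a,\,a\cdot 1,\,a\cdot 10\cdot a_k$ for no $a\in \RepPb$ and $k\in \TrPl{a}$''. I would start by noting that this hypothesis partitions into exactly the three excluded cases of the statement: $\alpha=a$ with $a\in\RepPb$ means $\tral=\lxrs$ (contracted redex); $\alpha=a\cdot 1$ means $\tral=\lx.r$ (abstraction of the redex); and $\alpha=a\cdot 10\cdot a_k$ for some $k\in\TrPl{a}$ means that $\alpha$ is the position of an axiom rule typing the redex variable $x$ (equivalently, $t(\alpha)=x$ and $\alpha\in\AxPl{a}$). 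So the ``if'' direction is immediate by inspection.

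For the converse direction (``only if''), I would show that every $\alpha\in\supp{P}$ not falling in one of the three excluded cases is in $\dom{\Res_b}$. The definition distinguishes positions according to their location relative to some representative $a\in\RepPb$: either $\alpha$ is outside every subtree rooted at a representative (so $\ovl{\alpha}\ngeqslant b$, giving the third clause), or it lies strictly above an $a\in\RepPb$, in which case $\alpha$ has a unique prefix of the form $a$, $a\cdot 1$, $a\cdot 10$, or $a\cdot k$ with $k\geqs 2$. The first two are among the excluded cases; the third ($\alpha=a\cdot 10\cdot\alpha_0$) falls in the $\heartsuit$-clause provided $\alpha_0\notin a_k\cdot\bbN^*$ for any $k\in\TrPl{a}$ --- and when it does have such a prefix $a_k$, $\alpha$ lies above an $\ax$-rule typing $x$, which is the third excluded case; and the fourth ($\alpha=a\cdot k\cdot\alpha_0$ with $k\geqs 2$) falls in the $\clubsuit$-clause. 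Thus the three listed cases are exactly those where $\Res_b$ is undefined.

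For partial injectivity and surjectivity, I would argue by the same case analysis. Injectivity is clear because the three defining clauses produce images in disjoint regions of $\supp{P'}$: the $\clubsuit$-clause maps to positions inside a substituted copy of $s$ in the reduct (prefix $a\cdot a_k$), the $\heartsuit$-clause maps to positions strictly inside (what was) $r$ and now lies at prefix $a$, and the third clause fixes $\alpha$ outside the reduction site. Within each clause, $\alpha_0$ (or $\alpha$ itself) can be recovered uniquely from the image, so $\Res_b^{-1}$ is well-defined on its image. For surjectivity, any $\alpha'\in\supp{P'}$ lives in exactly one of: (i) outside the site of the contracted redex, (ii) strictly inside the body $\rsx$ but outside any substituted copy of $s$, (iii) inside a substituted copy of $s$. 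Each of these is clearly in the image of one of the three clauses (using that $P'$ is built precisely by grafting the argument derivations $P_k$ onto the axiom sites $a\cdot 10\cdot a_k$ and erasing the redex and its abstraction), which gives the required preimage.

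The main subtlety --- and the one step I would write out carefully --- is the case disjunction for the $\heartsuit$-clause: one must verify that no representative $a'\neq a$ of $b$ interferes, i.e.\ that the prefix $a$ determining the class of $\alpha$ is uniquely determined. This uses the fact that distinct representatives of $b$ are incomparable in $\bbN^*$ (since any two prefixes of the same position in $\supp{P}$ are comparable, and two distinct representatives collapse to the same $b$ but cannot be comparable without one being strictly above the other, which would contradict the syntactic form of $b$ being $\lxrs$). Once this is settled, both injectivity and the clause partition are clean.
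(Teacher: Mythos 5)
Your proposal is correct and follows essentially the same route as the paper, which states this observation without proof precisely because it is read off from the definition of $\Res_b$ and the incomparability of distinct elements of $\RepPb$, exactly as you do. Note only that surjectivity is even more immediate than your case analysis suggests: the paper \emph{defines} $A'=\codom{\Res_b}$ and then sets $\supp{P'}=A'$, so the image of $\Res_b$ is $\supp{P'}$ by construction, and your careful checks (disjointness of the images of the three clauses, unique recovery of $a$, $k$ and $\al_0$, and the fact that positions strictly above an axiom leaf cannot occur in $\supp{P}$) are exactly what is needed for the injectivity and exhaustiveness parts.
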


Actually, the constructions presented here are mostly valid when $P$ is not assumed to be quantitative anymore.  In that case, $\TrPl{a}:=\Rt(\ttC^P(a\cdot 10)(x))$ may be different from $\ttX(a):=\set{\tr{\al_0}\,|\,\al_0\in \Axl{a}}$ ($\ttX(a)\subeq \TrPl{a}$ holds and $\TrPl{a}\setminus \ttX(a)$ corresponds to the phantom axiom tracks of $x$ above $a\cdot 10$). Then, $\ttC'(\alpha')$ is still defined as $(\ttC(\alpha)\setminus x) \uplus(\uplus_{k\in \Trl{a}} \ttC(\alpha\cdot k))$ with $\ttK(\al):= \Rt(\ttC(\al)(x))$ (thus, $\ttK(\al)$ may contain some phantom axiom tracks) but $\Res_b(a\cdot k\cdot \al)$ is not defined anymore when $k\in \Trl{a}\setminus \ttK(a)$

\section{Approximable Derivations and Unforgetfulness}
\label{s:approx}

We formally define \textit{approximations} and \textit{approximability} (\Sec~\ref{ss:lattices}). We prove that one-step proof reduction is monotonic. \Sec~\ref{ss:equinecessity-main} shows that very left-biposition is intrisically related to a right one, which is pivotal to prove soundness later (if a term is approximablity typable, then it is normalizing).
 In \Sec~\ref{ss:S-unforgetfulness}, we define the unforgetfulness criterion in system $\ttS$.  
We prove infinitary subject reduction and that the infinite proof reduction actually preserves approximability (\Sec~\ref{ss:inf-subj-red-S}). As a by-product, we prove that if a term is suitably typed, then it is hereditary head normalizing, which is one of the two implication of the fundamental equivalence between typability and infinitary normalization.

\subsection{Approximability and the Lattice of Approximations}
\label{ss:lattices}

In \Sec~\ref{ss:non-determinism}, we saw $\scrR$ is unfit to recover soundness through approximability.  Let us now work with system $\ttS$ only and formalize this notion.

As seen in \Sec~\ref{ss:typ-inf-nf-informal},  we must be able to truncate derivations (notion of approximation) and define the join of some families of derivations. This can be properly done in system $\ttS$:

\begin{definition}[Approximations] \mbox{}
\label{def:approximation} 
  \begin{itemize}
\item  Let $P$ and $P_*$ be two derivations typing a same term $t$. We say 
$P_*$ is an \textbf{approximation} of $P$, and we write $P_* \leqfty P$, if $\bisupp {P_*} \subseteq \bisupp {P}$ and for all $\p \in \bisupp {P_*},~ P_*(\p)=P(\p)$.
\item We write $\Appfty{P}$ for the set of approximations of a derivation $P$ and $\Approx{P}$ for the set of \textit{finite} approximations of $P$. 
\end{itemize}
\end{definition}

Thus, $P_* \leqfty P$ if $P_*$ is a \textit{correct} restriction of $P$ to a subset of $\bisupp{P}$ (\ie a restriction that gives an $\ttS$-derivation). We usually write $\fP$ for a \textit{finite} approximation of $P$ (\ie $\bisupp{\fP}$ is finite) and in that case only, we write $\fP \leqslant P$ instead of $\fP\leqfty P$. Actually, $\leqfty$ and $\leqslant$ are associated to lattice structures induced by the set-theoretic inclusion, union and intersection on bisupports :

\begin{theorem}
  \label{th:cpo}
  Let $t$ be a 001-term.
 The set of derivations typing $t$  is a directed complete semi-lattice \wrt $\leqfty$: 
  \begin{itemize}
  \item If $D$ is a directed set of derivations typing $t$:
\begin{itemize}
  \item The \textbf{join} $\sup D$ of $D$ is the function $P$ defined by $\dom{P}=\cup_{P_*\in D} \bisupp{P_*}$ and $P(\p) = P_*(\p)$ (for any $P_*\in D$ such that $\p \in \bisupp {P_*}$), which also is a derivation.
  \item  The \textbf{meet} $\inf D$ of $D$ is the function $P$ defined by $\dom{P}=\cap_{P_*\in D} \bisupp{P_*}$ and $P(\p) = P_*(\p)$ (for all $P_*\in D$), which also is a derivation.
\end{itemize}
  \item If $P$ is a derivation typing $t$, $\Appfty{P}$ is a complete lattice and $\Approx{P}$ is a lattice.
  \end{itemize}
\end{theorem}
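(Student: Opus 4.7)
The proof rests on the observation that a derivation $P$ is fully determined by its bisupport together with its label function $P \colon \bisupp{P} \to \TypeV \cup \{\rew\}$, and that the typing rules of system $\ttS$ (Fig.~\ref{fig:rules-system-S}) impose \emph{local} constraints relating, at each position $a \in \supp{P}$, the context and type at $a$ to those at child positions. In particular, the $\app$-rule accommodates \emph{any} set $K$ of argument tracks, provided the domain $(k \cdot S_k)_{k \in K}$ of the arrow type at $a \cdot 1$ matches the premises at positions $a \cdot k$ and the context at $a$ is the disjoint union of those at the premises. This means validity of the rule at $a$ is preserved under enlarging or shrinking the set of argument tracks, as long as the corresponding types and contexts are coherently carried along. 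My strategy is to transport all set-theoretic operations on bisupports through this local characterisation.

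For the \textbf{join} of a directed family $D$, I define $P := \sup D$ by $\bisupp{P} := \bigcup_{P_* \in D} \bisupp{P_*}$ and, for each $\p$ in this union, $P(\p) := P_*(\p)$ for any $P_* \in D$ with $\p \in \bisupp{P_*}$. Directedness ensures well-definedness: if $\p \in \bisupp{P_1} \cap \bisupp{P_2}$ with $P_1, P_2 \in D$, a common upper bound $P_3 \in D$ forces $P_1(\p) = P_3(\p) = P_2(\p)$. To verify $P$ is a correct $\ttS$-derivation, I check, at each $a \in \supp{P}$, the typing rule dictated by $t(\ovl{a})$, arguing case by case. For $\app$-positions, the set of argument tracks at $a$ in $P$ is $\bigcup_{P_* \ni a} \mathtt{ArgTr}_{P_*}(a)$; the arrow type at $a \cdot 1$ in $P$ has domain exactly $(k \cdot P(a \cdot k, \epsi))_{k \in K}$ by pointwise consistency with every $P_*$; and contexts glue as disjoint unions at $a$ because they already do so in each $P_*$ and directedness prevents any new track conflict. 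The $\ax$- and $\abs$-cases are strictly simpler. Minimality of $P$ above $D$ is then immediate from the set-theoretic definition.

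For the \textbf{meet} of a family of derivations typing $t$ (in particular for a directed family), I define $\inf D$ analogously, with bisupport $\bigcap_{P_* \in D} \bisupp{P_*}$ and labels inherited from any $P_* \in D$ (well-defined by agreement). The main point is again to verify that this restricted bisupport still defines a derivation, which follows from the same local argument: shrinking the set of argument tracks in an $\app$-rule preserves validity, with correspondingly shrunken arrow domain at $a \cdot 1$ and smaller disjoint union at $a$; similar considerations apply to $\ax$ and $\abs$. From the existence of directed joins together with the greatest element $P$, it follows by a standard order-theoretic argument that $\Appfty{P}$ is a \emph{complete} lattice: arbitrary meets are realised as joins of the (always non-empty, bounded above by $P$) set of common lower bounds. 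For $\Approx{P}$, it suffices to observe that finite joins and finite meets of finite approximations remain finite, since both are computed by finite set-theoretic union and intersection on bisupports.

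The main obstacle I anticipate is the coherent handling of \emph{contexts} under joins in $\app$-rules. At a position $a$ with argument tracks $K$, the context $\ttC^P(a)$ must equal the disjoint union of $\ttC^P(a \cdot 1)$ and the $\ttC^P(a \cdot k)$ for $k \in K$; as $K$ grows under joining, I have to argue that these contexts stay pairwise track-disjoint. This is precisely where the rigid, tracking structure of system $\ttS$ pays off, in contrast to $\scrR$ (cf.\ \Sec~\ref{ss:non-determinism}): since each $P_* \leqfty \sup D$, the axiom tracks appearing in $P_*$ are a subset of those in $\sup D$, so the sequence-type fragments sitting in each $\ttC^{P_*}(a)(y)$ assemble without conflict into the context at $a$ in $\sup D$. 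I would formalise this by a coinductive argument on positions, using at each step both the disjoint-union condition already guaranteed inside each $P_*$ and directedness of $D$ to rule out new conflicts.
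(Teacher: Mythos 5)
Your proposal is correct and follows essentially the same route as the paper's proof in Appendix~\ref{a:lattices}: both define $\sup D$ and $\inf D$ set-theoretically on bisupports, use common upper bounds in $D$ to get well-definedness of labels and to exclude track conflicts, and then verify correctness by a case analysis on the constructor ($\ax$, $\abs$, $\app$) at each position, the $\app$-case with its varying set of argument tracks being the delicate one in both treatments. The only cosmetic difference is that the paper factors the pointwise reasoning through auxiliary lemmas on the c.p.o.\ structure of types, sequence types and contexts, whereas you argue directly at the level of bipositions.
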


\begin{proof}
See Appendix~\ref{a:lattices} for the details.
\end{proof}

\begin{remark}
Some $\ttS$-derivations do not have any finite approximations, \eg any $\ttS$-derivation typing $\Om$, because $\Om$ is not head normalizing.
\end{remark}

\noindent Approximation is compatible with reduction:

\begin{lemma}[Monotonicity of approximation] \label{lem:red-monot} \mbox{}
\begin{enumerate}
\item Reduction is monotonic: if $P_* \leqfty P,~ P_* \rewb{b} P'_*$ and $P\rewb{b}P'$, then $P'_* \leqfty P'$.
\item If $P\rewb{b} P'$, then, for any $P'_*\leqfty P'$, 
there is a unique $P_* \leqfty P$ such that $P'_* \rewb{b} P'_*$.
\item Uniform expansion is monotonic: let $\code{\cdot}:\bbN^* \rew \Nmzo$ an injection.
  If $P'$ types $t'$, $t\rewb{b} t'$ and  $P'_* \leqfty P'$, then $P_* \leqfty P$ with $P_*=\Exp_b( P'_*,\code{\cdot},t),~ P=\Exp_b(P',\code{\cdot},t)$.
\end{enumerate}
\end{lemma}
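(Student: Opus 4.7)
\begin{proofsketch}
The plan is to exploit the fact that one-step reduction in system $\ttS$ is described by the residuation machinery of Section~\ref{ss:sr-proof}, which is \emph{local} and \emph{uniform} in the sense that the formulas for $\Res_b$ and the new contexts $\ttC'$ depend only on the combinatorial structure of positions, not on the specific types or contexts carried by the derivation. This will make monotonicity essentially a bookkeeping exercise on bisupports.

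For part~(1), let $P_*\leqfty P$ and fix the common redex position $b$ with $P\rewb{b}P'$, $P_*\rewb{b}P'_*$. Since $P_*\leqfty P$ means $\bisupp{P_*}\subseteq \bisupp{P}$ with agreement on values, and since the sets $\Rep_{P}(b)$ and $\Rep_{P_*}(b)$ of redex representatives are determined by the position structure (which $P_*$ and $P$ share wherever $P_*$ is defined), residuation on right bipositions satisfies $\Res_b^{P_*}=\Res_b^{P}\rstr{\bisuppR{P_*}\cap \dom{\Res_b^P}}$. Consequently $\bisuppR{P'_*}=\Res_b(\bisuppR{P_*}\cap \dom{\Res_b})\subseteq \bisuppR{P'}$, and on each such biposition $P'_*(\Res_b(\p))=P_*(\p)=P(\p)=P'(\Res_b(\p))$. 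For left bipositions, one uses the explicit formula $\ttC'(\al')=(\ttC(\al)\sm x)\uplus(\uplus_{k\in \ttK(\al)}\ttC(\al\cdot k))$ from Section~\ref{ss:sr-proof}: every left biposition of $P'_*$ is then either a left biposition of $P_*$ (and hence of $P$), or the image of a right biposition of an argument derivation of $P_*$; in both cases the agreement with $P'$ at the corresponding biposition of $P'$ is inherited from $P_*\leqfty P$.

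For part~(2), use that $\Res_b$ is a partial injection (Observation~\ref{obs:when-res-not-defined}) and is surjective onto $\bisuppR{P'}$. Given $P'_*\leqfty P'$, define $P_*$ by letting $\bisupp{P_*}$ consist of (a)~$\Res_b^{-1}(\bisupp{P'_*})$, (b)~the ``structural'' bipositions of the fired redex, namely the right and left bipositions at positions $a\in \RepPb$, at $a\cdot 1$, and at the $\ax$-rules $a\cdot 10\cdot a_k$ that are forced to exist because the corresponding argument derivation is already in~$P_*$; and by inheriting values from $P$. A routine check on each typing rule shows $P_*$ is a derivation, with $P_*\leqfty P$ and $P_*\rewb{b}P'_*$ by construction. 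Uniqueness follows from the determinism of $\rewb{b}$ (Section~\ref{ss:non-determinism}): any derivation $Q\leqfty P$ with $Q\rewb{b}P'_*$ must have precisely the bipositions enumerated in (a)--(b), by applying $\Res_b$ and $\Res_b^{-1}$.

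For part~(3), take $P_*:=\Exp_b(P'_*,\code{\cdot},t)$ and $P:=\Exp_b(P',\code{\cdot},t)$; so $P_*\rewb{b}P'_*$, $P\rewb{b}P'$, and both expansions use the same injection $\code{\cdot}$ to label the freshly created $\ax$-rules typing $x$. By part~(2), there exists a \emph{unique} $Q\leqfty P$ with $Q\rewb{b}P'_*$. Now $Q$ inherits its axiom tracks above the redex from $P$, which are exactly those prescribed by $\code{\cdot}$ at the relevant positions; hence $Q$ is an expansion of $P'_*$ using~$\code{\cdot}$, and by uniqueness of uniform expansion (end of Section~\ref{ss:one-step-sr-se}) $Q=\Exp_b(P'_*,\code{\cdot},t)=P_*$, yielding $P_*\leqfty P$.

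The main obstacle is part~(1) for \emph{left} bipositions: since Remark~\ref{rmk:residuals-right-only} warns that residuation has no natural quantitative counterpart on the left, the monotonicity of $\ttC'$ must be checked by hand from its explicit formula. The rest reduces to exploiting the uniformity of $\Res_b$ and $\Exp_b$, and the determinism of reduction.
\end{proofsketch}
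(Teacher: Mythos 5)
Your proof is correct, and part~(1) is essentially the paper's own argument: residuation is determined by the position structure and the axiom tracks, both of which $P_*$ inherits from $P$, so $\Res_b^{P_*}$ is the restriction of $\Res_b^{P}$, and the reduct contexts — computed from the axiom rules above, or equivalently from the explicit formula for $\ttC'$ — are monotone in the support (this is exactly the ``Monotonicity'' paragraph closing Appendix~\ref{a:qres-Shp-formal}). Where you genuinely diverge is in the order of parts~(2) and~(3). The paper proves~(3) first, directly from the construction of $\Exp_b$ in Section~\ref{ss:uni-subj-exp}: the only data created by expansion are the new $\ax$-rules and their tracks $\code{\al_0}$, and since the same injection is used for $P'_*$ and $P'$ these agree; part~(2) is then obtained (Remark~\ref{rk:exp-ad-hoc}) by instantiating $\code{\cdot}$ with the axiom tracks actually occurring in $P$. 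You instead prove~(2) directly, by pulling $\bisuppR{P'_*}$ back along the partial injection $\Res_b$ and adjoining the forced structural bipositions of the redex, and then derive~(3) from the uniqueness in~(2). Both routes work; yours has the merit of making the uniqueness claim of~(2) — which the paper leaves implicit — explicit, at the price of having to verify by hand that the pulled-back set is the bisupport of a correct derivation (your ``routine check'', which in the paper is absorbed by the equinecessity analysis behind Observation~\ref{obs:red-to-right-res}). Two small points of care: $\Res_b$ is defined on \emph{right} bipositions only, so ``$\Res_b^{-1}(\bisupp{P'_*})$'' should read $\Res_b^{-1}(\bisuppR{P'_*})$, the left bipositions of $P_*$ being recovered by quantitativity from the axiom rules you already enumerate; and the uniqueness in~(2) does rely on the standing restriction to quantitative derivations (with phantom axiom tracks an argument derivation can vanish under reduction, and uniqueness would fail), which is the convention in force from Section~\ref{ss:one-step-sr-se} onward but is worth stating.
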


As we saw in \Sec~\ref{ss:non-determinism}, determinism of reduction in system $\ttS$ is crucial to ensure Point 1 of Lemma~\ref{lem:red-monot}, which in turn is pivotal to ensure that the notion of approximation is well-behaved \wrt conversion.


We define here our validity condition, \ie approximability: a derivation $P$ is approximable when it is the join of all its finite approximations.
Intuitively, this means that  $P$ is approximable if all its bipositions are 
meaningful, \ie can be part of a \textit{finite} derivation $\fP$ approximating $P$.

\begin{definition}
  \label{def:approximable}
A $\ttS$-derivation $P$ is said to be \textbf{approximable} if $P$ is the join of its \textit{finite} approximations, \ie $P=\sup_{\fP\leqs P} \fP$. 
\end{definition}

\noindent By definition of $\leqs$, we immediately obtain the following characterization of approximability:

\begin{lemma}
\label{lem:Bo-appox-charac}
  A $\ttS$-derivation $P$ is approximable iff, for all finite $\oB\subseteq \bisupp {P}$, there is a $\fP\leqslant P$ such that $\oB\subseteq \bisupp {\fP}$.
\end{lemma}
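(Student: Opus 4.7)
The plan is to unpack the two sides of the equivalence directly from the lattice structure already established in Theorem~\ref{th:cpo} and the definitions of $\leqfty$, $\sup$, and $\bisupp{\cdot}$. The key fact I shall use throughout is that, by the $\sup$ construction of Theorem~\ref{th:cpo}, for a directed family $D$ of derivations typing $t$ we have $\bisupp{\sup D} = \bigcup_{P_* \in D} \bisupp{P_*}$, with values inherited from any $P_* \in D$ containing the given biposition.

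For the direct implication ($\Rightarrow$), assume $P = \sup_{\fP \leqs P} \fP$. Then $\bisupp{P} = \bigcup_{\fP \leqs P} \bisupp{\fP}$. Given a finite $\oB = \{\p_1, \dots, \p_n\} \subseteq \bisupp{P}$, pick for each $\p_i$ a finite approximation $\fP_i \leqs P$ with $\p_i \in \bisupp{\fP_i}$. By Theorem~\ref{th:cpo}, $\Approx{P}$ is a lattice, so the finite join $\fP := \fP_1 \vee \cdots \vee \fP_n$ is again a \emph{finite} approximation of $P$, and its bisupport is $\bisupp{\fP_1} \cup \cdots \cup \bisupp{\fP_n}$, which contains $\oB$.

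For the converse implication ($\Leftarrow$), assume that every finite $\oB \subseteq \bisupp{P}$ is included in the bisupport of some finite approximation $\fP \leqs P$. The family $\Approx{P}$ is directed (again by Theorem~\ref{th:cpo}), so its join $Q := \sup_{\fP \leqs P} \fP$ is a well-defined derivation typing $t$, and $Q \leqfty P$ because every $\fP$ in the family is an approximation of $P$ and their values agree with those of $P$ on their common bipositions. To show $Q = P$ it suffices to check $\bisupp{P} \subseteq \bisupp{Q}$, since the reverse inclusion holds automatically and agreement of values follows from $Q \leqfty P$. Given any $\p \in \bisupp{P}$, apply the hypothesis to the singleton $\oB = \{\p\}$ to obtain a finite approximation $\fP \leqs P$ with $\p \in \bisupp{\fP}$; then $\p \in \bisupp{\fP} \subseteq \bigcup_{\fP' \leqs P} \bisupp{\fP'} = \bisupp{Q}$, concluding the proof.

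There is no real obstacle here: the lemma is essentially a reformulation of approximability in terms of the directed set $\Approx{P}$, and the only non-trivial input is the closure of $\Approx{P}$ under finite joins granted by Theorem~\ref{th:cpo}. The $\Rightarrow$ direction uses this closure to assemble the finitely many witnesses $\fP_i$ into a single $\fP$; the $\Leftarrow$ direction only needs the singleton case and the definition of $\sup$ via union of bisupports.
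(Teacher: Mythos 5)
Your proof is correct and matches the paper's intent: the paper gives no explicit proof, stating only that the lemma follows immediately from the definition of $\leqs$ and the join construction, and your argument is exactly the straightforward unpacking of that — using the closure of $\Approx{P}$ under finite joins (Theorem~\ref{th:cpo}) for the forward direction and the union-of-bisupports description of $\sup$ for the converse.
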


\subsection{Equinecessity: forgetting about left bipositions}
\label{ss:equinecessity-main}

Approximability is stable under reduction and expansion (forthcoming Lemma~\ref{lem:approx-red}) and as expected, it is a crucial property of system $\ttS$. However, there are some difficulties to overcome:
\begin{itemize}
\item The approximability condition (Lemma~\ref{lem:Bo-appox-charac}) pertains both to right and \textit{left} bipositions.  
\item Thus, this stability property shall involve residuals of bipositions.
\item But residuals are defined for \textit{right} bipositions only (Remark~\ref{rmk:residuals-right-only}).

\end{itemize}
There seems to be a catch. However, approximability may be reduced to \textit{right} bipositions only, as we  sketch in this section (full details are in Appendix~\ref{a:equinecessity}).

Assume for instance that $P$ is a derivation, that $\ju{x:\kSk}{x:S_k}$ is an axiom node in $P$ occuring at position $a$ and  $P_*\leqfty  P$ is a finite or not approximation of $P$.
Let us write $\ttT$ and $\ttT_*$ for $\ttT^P$ and $\ttT^{P_*}$, and so on. We have thus $S_k=\ttT(a)$ and $x:\kSk=\ttC(a)$.

Notice that $P_*$ may not contain that axiom rule (\ie both $a \in \supp{P_*}$ and $a\notin \supp{P_*}$ are possible \textit{a priori}): it occurs whenever this axiom has been cut off during truncation. Assume  that $a\in \supp{P_*}$:  
in particular, $(a,\epsi)\in \supp{P_*}$, since it points to the root of $\ttT(a)$. Actually, since  $P_*(a)=\ju{x:(k\cdot \ttT_*(a))}{x:\ttT_*(a)}$, 
$\supp{\ttC_*(a)(x)}=k\cdot \supp{\ttT_*(A)}$. This means that, for all $c\in \bbN$, $(a,x,k\cdot c)\in \bisupp{P_*}$ iff $(a,c)\in \bisupp{P_*}$. We say then that bipositions $(a,x,k\cdot c)$ and $(a,c)$ are \textbf{equinecessary \wrt derivation $P$} (the notion of equinecessity depends on $P$, the considered derivation). 

%
A crucial observation is that the cases of equinecessity we have observed above relate left \textit{bipositions} with right ones ($(a,x,k\cdot)$ is on the left and $(a\,c)$ on the right). Actually, we showed that, in an axiom node, any left biposition is equinecessary with a right-one (since $\supp{\ttC(a)(y)}=\eset$ if $y\neq x$ and $\supp{\ttC(a)}(x)=k\cdot \supp{\ttT(a)})$. This expresses the fact that, in an $\ax$-rule, the left-hand side and the right-hand side are the same (up to the axiom track in the context).

More generally, in a quantitative derivation (Definition~\ref{def:quant-S-deriv}), any left biposition is equinecessary with a right one. Indeed, by definition, in a quantitative derivation, contexts can be computed by looking at the axiom rules and any left biposition must come from the left-hand side of an axiom rule (since there are no phantom axiom tracks). We may thus observe:

\begin{observation}
\label{obs:red-to-right}
Let $P$ be a quantitative derivation. Every left biposition of $P$ is equinecessary with at least one right biposition.
\end{observation}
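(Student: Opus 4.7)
The plan is to produce, for each left biposition $\p = (a, x, k\cdot c) \in \bisupp{P}$, an explicit right biposition $\p' \in \bisupp{P}$ equinecessary with $\p$, by exploiting quantitativity of $P$ to trace $k$ back to the unique $\ax$-rule above $a$ that generates it. Concretely, since $P$ is quantitative, $\ttC^P(a)(x) = \biguplus_{a_0 \in \Ax^P_a(x)} (\tr{a_0} \cdot \ttT^P(a_0))$, and $k \in \Rt(\ttC^P(a)(x))$ singles out a unique axiom position $a' := \pos{a,x,k} \in \Ax^P_a(x)$ for which the subtree of $\ttC^P(a)(x)$ rooted at $k$ is exactly $\ttT^P(a')$. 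In particular $c \in \supp{\ttT^P(a')}$, so $\p' := (a', c)$ is a right biposition of $P$, and the candidate equinecessary partner of $\p$ is $\p'$.

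The next step is to prove that for every approximation $P_* \leqfty P$, one has $\p \in \bisupp{P_*}$ iff $\p' \in \bisupp{P_*}$. The tool is a path-propagation lemma, proved by following the rules of system $\ttS$ along the (finite) prefix path from $a$ up to $a'$: at an $\abs$-rule binding a variable distinct from $x$, the $x$-component of the context is unchanged from premise to conclusion; at an $\app$-rule, the $x$-component of the conclusion's context is the disjoint union of those of the premises; at the axiom $a'$, the $x$-component of the context is the singleton sequence $(\tr{a'} \cdot \ttT(a'))$. Iterating these equations, the summand of $\ttC^{P_*}(a)(x)$ rooted at track $k$ is present iff $a' \in \supp{P_*}$, and when present, equals $\ttT^{P_*}(a')$ entirely. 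Hence $k\cdot c \in \supp{\ttC^{P_*}(a)(x)}$ iff $a' \in \supp{P_*}$ and $c \in \supp{\ttT^{P_*}(a')}$, i.e.\ iff $\p' \in \bisupp{P_*}$; and when these hold, $a \in \supp{P_*}$ follows automatically since supports of derivations are prefix-closed and $a \prefleq a'$. This gives the desired equivalence $\p \in \bisupp{P_*} \Leftrightarrow \p' \in \bisupp{P_*}$.

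The main obstacle, and the point that requires the hypothesis that $P$ (not $P_*$) be quantitative, is the reverse direction of the propagation lemma: one must rule out that $P_*$ carry track $k$ in $\ttC^{P_*}(a)(x)$ without the $\ax$-rule at $a'$ belonging to $\supp{P_*}$ — i.e., that $P_*$ have a \emph{phantom} axiom track at $(a,x,k\cdot c)$. The key observation is that $\bisupp{P_*} \subseteq \bisupp{P}$ and $P_*$ values agree with $P$ values on common bipositions, so the unique $\uplus$-decomposition of $\ttC^P(a)(x)$ induced by $P$'s quantitativity (together with the disjointness required by the $\app$-rule in $P_*$) forces any occurrence of track $k$ in $\ttC^{P_*}(a)(x)$ to originate at the same ancestor position $a'$ as in $P$. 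This ancestor position must therefore already belong to $\supp{P_*}$, since it is the only vertex in $\supp{P}$ where an $\ax$-rule contributing track $k$ to $\ttC^P(a)(x)$ is located. Combining the two directions yields equinecessity of $\p$ and $\p'$, which is exactly the statement of the observation.
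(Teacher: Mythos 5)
Your proposal is correct and follows essentially the same route as the paper: quantitativity lets you trace the track $k$ of a left biposition $(a,x,k\cdot c)$ back to the unique axiom $a'=\pos{a,x,k}$, and the right biposition $(a',c)$ is shown equinecessary by propagating the context entry along the path from $a$ to $a'$ through the typing rules (the paper packages this propagation as the iterated $\asc$ map and $\Asc{\cdot}$, which yields exactly your $(a',c)$).
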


Observation~\ref{obs:red-to-right} is proved in details in Appendix~\ref{a:equinecessity}. 
But our troubles are not over yet: when $P$ types $t'$ and $t\bred t'$, some right bipositions $\p \in \bisupp{P}$ do not have a residual. If we want to transport approximability through reduction and expansion, we need to handle this. The idea is that any right biposition is equinecessary with one which has a residual.

By Remark~\ref{obs:when-res-not-defined}, $\Res_b(\p)$ (with $\p =(\al,c)$) is not defined when $t(\al)=x$, $\tral=\lxrs$ or $\tral=\lx.r$. We explain now why in each case, $\p$ and some biposition which has a residual are equinecessary.
\begin{enumerate}
  \item If $t(\al)=x$, say $\al=a\cdot 10\cdot a_k$, $P(\al)=\ju{x:\kSk}{x:S_k}$ with $S_k:=\ttT(a)$. Then the argument $s$ of the redex must also have type $S_k$. Namely, $P(a\cdot k)=\ju{D}{s:S_k}$ for some context $D$, and it is clear that biposition $\p=(a\cdot 10\cdot a_k,c)$ and biposition $(a\cdot k,c)$, which has a residual, are equinecessary.
  \item If $\tral=\lxrs$, then $\al=a\in \Rep_P(b)$. Since $r$ and $\lxrs$ must have the same type, it is clear 
that  $\p=(a,c)$ and $\p_0:=(a\cdot 10,c)$ are equinecessary.
Then, either $t(a\cdot 10)\neq x$ and $\p_0$ has a residual (we are done) or $t(a\cdot 10)=x$ and we use case 1 to conclude that $\p$ is equinecessary with a biposition which has a residual.
  \item If $\tral=\lx.r$, then $\al=a\cdot 1$ with $\ovla=b$ and $P(\al)=\ju{C}{\lx.t:\sSk\rew T}$ where $T:=\ttT(a)$ (which is the type of $\lxrs$) and the $\sSk$ have been assigned to $x$. Then either $\p=(\al,c)$ points to a symbol inside $T$ (we may then use case 2) or $\p$ points to a symbol in some $S_k$ (we may then use case 1) or $c=\epsi$, \ie $\p$ points to the top-level $\rightarrow$ in $\sSk \rew T$. In the latter case, $\p=(a\cdot 1,\epsi)$ is equinecessary to $\p_0:=(a\cdot 1,1)$ (if there is an arrow a position $\epsi$, there must me something at position 1). Since $\p_0$ points to a symbol in $T$ and that this case has already been handled, this shows also that $\p$ is equinecessary to a biposition which has a residual.
\end{enumerate}
We may thus state:

\begin{observation}
  \label{obs:red-to-right-res}
  Let $P$ be a quantitative derivation. Every left biposition of $P$ is equinecessary with at least one right biposition which has a residual.
\end{observation}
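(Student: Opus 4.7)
The plan is to prove Observation~\ref{obs:red-to-right-res} by composing the two preliminary observations already established in the text, namely Observation~\ref{obs:red-to-right} (every left biposition of a quantitative derivation is equinecessary with a right one) and the case analysis (enumerated as Cases 1--3 just above the statement) showing that every right biposition is equinecessary with a right biposition having a residual. The key formal ingredient is that equinecessity \wrt $P$ is a transitive relation: if $\p_1$ and $\p_2$ are equinecessary \wrt $P$, and $\p_2$ and $\p_3$ are equinecessary \wrt $P$, then for any $P_*\leqfty P$ the equivalences $\p_1\in \bisupp{P_*}\Leftrightarrow \p_2 \in \bisupp{P_*}\Leftrightarrow \p_3\in \bisupp{P_*}$ compose. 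So the proof reduces to a two-step chase through the existing machinery.

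First, I would start with an arbitrary left biposition $\p_L \in \bisupp{P}$. By Observation~\ref{obs:red-to-right}, since $P$ is quantitative, there exists a right biposition $\p_R \in \bisupp{P}$ that is equinecessary with $\p_L$ \wrt $P$. Second, I would apply the case analysis already carried out in the text to $\p_R=(\al,c)$: by Observation~\ref{obs:when-res-not-defined}, $\Res_b(\p_R)$ is undefined in exactly three situations, namely $t(\al)=x$, $\tral=\lxrs$, or $\tral=\lx.r$. If $\Res_b(\p_R)$ is already defined, set $\p_R':=\p_R$ and we are done. Otherwise, depending on which of the three configurations occurs, one applies Cases 1, 2 or 3 (possibly chaining them: e.g.\ Case 3 may reduce to Case 2, which may in turn reduce to Case 1) to produce a right biposition $\p_R'$ which is equinecessary with $\p_R$ and on which $\Res_b$ is defined, by exhaustion of Observation~\ref{obs:when-res-not-defined}.

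Finally, I would conclude by transitivity of equinecessity \wrt $P$: since $\p_L \sim \p_R \sim \p_R'$ and $\p_R'$ has a residual, $\p_L$ is equinecessary with a right biposition $\p_R'$ which has a residual, as required.

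The part that deserves the most care is verifying transitivity of equinecessity and ensuring that each reduction step in the case analysis preserves membership in $\bisupp{P_*}$ \emph{for every} approximation $P_*\leqfty P$ simultaneously---not just for $P$ itself. In particular, for Case 1 one must check that whenever an axiom rule typing $x$ with $\ttT_*(a)$ appears in $P_*$ at position $a\ct 10\ct a_k$, the corresponding argument subderivation $P_*(a\ct k)$ typing $s$ with the same type $\ttT_*(a)$ is simultaneously present, which is forced by the rule $\app$ and the quantitativity of approximations (a fact that can be re-derived from the definition of $\leqfty$). The other two cases are straightforward once Case 1 is in hand, since in Cases 2 and 3 the equinecessary biposition is obtained purely by local rule-shape considerations ($\lxrs$ inherits its type from $r$; an arrow head exists iff it has at least a codomain child), independently of $P_*$.
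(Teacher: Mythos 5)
Your proof is correct and follows essentially the same route as the paper: the paper also obtains Observation~\ref{obs:red-to-right-res} by combining Observation~\ref{obs:red-to-right} with the case analysis on the three configurations of Observation~\ref{obs:when-res-not-defined} (chaining Cases 3~$\to$~2~$\to$~1 where needed), with transitivity of equinecessity doing the gluing; the appendix merely packages the same content as a single case analysis (Lemma~\ref{lem:obs-equi-bip-res}) via the $\asc$/$\Asc{\cdot}$ machinery and the elementary equinecessities $(E1)$--$(E3)$. Your closing remark about Case~1 is exactly the point of $(E2)$, namely that the $\abs$- and $\app$-rules force the axiom type of $x$ on track $k$ and the argument derivation on track $k$ to be truncated consistently in every $P_*\leqfty P$.
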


Observation~\ref{obs:red-to-right-res} is proved in detail in Appendix~\ref{a:approx-stable-conv}.



\noindent Quantitativity (Definition~\ref{def:quant-S-deriv}) is necessary to ensure approximability.

\begin{lemma}[Approximability and Quantitativity]\label{lem:approx-quant}\mbox{}\\
  If $P$ is not quantitative, then $P$ is not approximable.
\end{lemma}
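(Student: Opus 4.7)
The plan is to use the contrapositive strategy enabled by the characterization of approximability given in Lemma~\ref{lem:Bo-appox-charac}: to show $P$ is not approximable, I would exhibit a single biposition $\p_0 \in \bisupp{P}$ that no finite approximation $\fP \leqs P$ can contain. The source of this biposition will of course be the non-quantitativity of $P$.

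First I would unpack the hypothesis. Assuming $P$ is not quantitative, by Definition~\ref{def:quant-S-deriv} there exist $a \in \supp{P}$ and $x \in \TermV$ such that $\ttC^P(a)(x) \neq \uplus_{a' \in \AxP_a(x)}(\trP{a'} \cdot \ttT^P(a'))$. The typing rules of system $\ttS$ always ensure that every axiom above $a$ typing an occurrence of $x$ (free at $a$) contributes its track to $\Rt(\ttC^P(a)(x))$; so the inequality can only come from a \emph{phantom} track, namely some $k_0 \in \Rt(\ttC^P(a)(x))$ with $k_0 \notin \{\trP{a'} \mid a' \in \AxP_a(x)\}$. I then pick the left biposition $\p_0 := (a,\,x,\,k_0 \cdot \epsi)$, which belongs to $\bisupp{P}$ since $k_0 \in \Rt(\ttC^P(a)(x))$ and $\epsi$ lies in the support of any type.

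Next I would argue that no finite approximation of $P$ can contain $\p_0$. Suppose, for contradiction, that $\fP \leqs P$ is a finite approximation with $\p_0 \in \bisupp{\fP}$. As noted right after Definition~\ref{def:quant-S-deriv}, every finite derivation of system $\ttS$ is quantitative; hence $\ttC^{\fP}(a)(x) = \uplus_{a' \in \Ax^{\fP}_a(x)} (\tr^{\fP}{(a')} \cdot \ttT^{\fP}(a'))$, and the presence of $\p_0$ forces the existence of $a' \in \Ax^{\fP}_a(x)$ with $\tr^{\fP}(a') = k_0$. The key point is now to transport this axiom back into $P$: since $\fP \leqs P$, we have $a' \in \supp{P}$ and $P(a')$ is also an axiom typing $x$; comparing the left bipositions $(a', x, \tr^{\fP}(a') \cdot c)$ in $\fP$ with the axiom of $P$ at $a'$ shows that $\tr^P(a') = \tr^{\fP}(a') = k_0$. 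Moreover the set of axiom positions above $a$ for $x$ depends only on the subject, which is common to $\fP$ and $P$, so $a' \in \AxP_a(x)$ as well. This contradicts the choice of $k_0$ as a phantom track.

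By Lemma~\ref{lem:Bo-appox-charac} applied to the finite set $\oB := \{\p_0\}$, the impossibility of embedding $\p_0$ in any finite approximation of $P$ yields that $P$ is not approximable, completing the argument. The only delicate step is the transport of axioms in the previous paragraph, but the agreement of bipositions prescribed by $\leqs$ is precisely what guarantees that tracks are preserved when restricting to $\fP$; no non-trivial combinatorial obstacle is expected.
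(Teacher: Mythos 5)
Your proof is correct, but it reaches the conclusion by a genuinely different route than the paper. The paper's proof stays inside $P$: it traces the phantom track $k_0$ upward through the $\app$- and $\abs$-rules, observes that since it never terminates in an axiom it must run along an infinite branch, and concludes via equinecessity that any approximation containing the phantom biposition must contain the whole infinite chain of its ascendants and hence cannot be finite. You instead argue by contradiction through the quantitativity of \emph{finite} derivations: a finite $\fP\leqs P$ containing $(a,x,k_0\cdot\epsi)$ would have to realize track $k_0$ by an axiom $a'\in\Ax^{\fP}_a(x)$, and since $\leqs$ preserves supports, the common subject and the axiom tracks, $a'$ would also be an axiom of $P$ witnessing $k_0$, contradicting its phantom status. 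Your transport step is sound --- $(a',\epsi)\in\bisupp{\fP}\subseteq\bisupp{P}$ forces $a'\in\supp{P}$, the shared subject forces $P(a')$ to be an axiom on $x$, and the shared left biposition forces the tracks to agree --- and your preliminary reduction of non-quantitativity to the existence of a phantom track (the context at $a$ always \emph{contains} the disjoint union of the axiom contributions, so the only possible failure is an extra track) is exactly the reduction the paper makes implicitly. What your version buys is independence from the equinecessity machinery, which the paper's proof invokes only informally at this point, at the price of leaning on the stated-but-unproved fact that finite derivations are always quantitative; the paper's version makes the infinitary obstruction --- the infinite ascending chain of occurrences of the phantom track --- explicit, which is the picture it reuses when equinecessity is developed in earnest in the appendix.
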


\begin{proof}
  If $P$ is not quantitative, then $P$ contains some left biposition $\p:=(a,x,k\cdot c)$ that does not come from an $\ax$-rule, \ie $k$ is a phantom axiom track. 
  This phantom axiom track must have infinitely many equinecessary occurrences (along an infinite branch).
    An approximation $P_*\leqfty P$ that contains $\p$ has to contain all these occurrences and thus, cannot be finite. So $P$ cannot be approximable.
\end{proof}

\begin{lemma}[Approximability is stable under reduction and expansion]
\label{lem:approx-red}\mbox{}\\
If $P\rewb{b} P'$, then $P$ is approximable iff $P'$ is approximable.
\end{lemma}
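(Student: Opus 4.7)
\begin{proofsketch}
The plan is to use the characterization of approximability given by Lemma~\ref{lem:Bo-appox-charac}: $P$ is approximable iff every finite $\oB \subseteq \bisupp{P}$ is contained in $\bisupp{\fP}$ for some finite $\fP \leqs P$. The two directions proceed by transporting finite witnesses of bipositions back and forth across the reduction step $P \rewb{b} P'$, using the residual map on right bipositions and the monotonicity of reduction and expansion from Lemma~\ref{lem:red-monot}. Throughout, by Lemma~\ref{lem:approx-quant} and Remark~\ref{rmk:inf-quant-and-red}, we may assume both $P$ and $P'$ are quantitative, since otherwise neither can be approximable and the equivalence holds vacuously.

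For the forward direction ($P$ approximable $\Rightarrow$ $P'$ approximable), I would take a finite $\oB' \subseteq \bisupp{P'}$ and build a finite $\oB \subseteq \bisupp{P}$ as follows. By Observation~\ref{obs:red-to-right} applied to $P'$, each left biposition of $\oB'$ is equinecessary with a right biposition of $P'$, so it is enough to replace $\oB'$ with a finite set $\oB'_R$ of right bipositions that guarantees inclusion of the original ones in any approximation containing $\oB'_R$. Since residuation $\Res_b$ is a partial surjection from the right bisupport of $P$ onto the right bisupport of $P'$, I can set $\oB := \Res_b^{-1}(\oB'_R)$, which is finite because $\Res_b$ is injective on its domain. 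By approximability of $P$, pick a finite $\fP \leqs P$ with $\oB \subseteq \bisupp{\fP}$; by Proposition~\ref{prop:subj-red-one-step-S} applied to $\fP$, let $\fP \rewb{b} \fP'$. Monotonicity (Lemma~\ref{lem:red-monot}(1)) gives $\fP' \leqfty P'$, and $\fP'$ is finite since $\fP$ is. A direct inspection of the residual construction in \Sec~\ref{ss:sr-proof} shows that $\oB'_R \subseteq \bisupp{\fP'}$, and then equinecessity transported into $P'$ yields $\oB' \subseteq \bisupp{\fP'}$, as required.

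For the backward direction ($P'$ approximable $\Rightarrow$ $P$ approximable), I would conversely take a finite $\oB \subseteq \bisupp{P}$. Here the key tool is Observation~\ref{obs:red-to-right-res}: every biposition of $P$ is equinecessary with a right biposition of $P$ that has a residual. This lets me replace $\oB$ with a finite set $\oB_*$ of right bipositions all in $\dom{\Res_b}$, still generating $\oB$ by equinecessity. Set $\oB' := \Res_b(\oB_*) \subseteq \bisupp{P'}$, which is finite. By approximability of $P'$, pick a finite $\fP' \leqs P'$ with $\oB' \subseteq \bisupp{\fP'}$. By Lemma~\ref{lem:red-monot}(2), there is a unique $\fP \leqfty P$ with $\fP \rewb{b} \fP'$, and $\fP$ is finite because the subject reduction construction adds only finitely many bipositions (those corresponding to the destroyed $\app$- and $\abs$-rules of the redex, and to the axiom rules on tracks $k \in \TrPl{a}$ for representatives $a \in \Rep_{\fP}(b)$). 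By construction $\Res_b(\bisupp{\fP} \cap \dom{\Res_b}) = \bisupp{\fP'}$, so $\oB_* \subseteq \bisupp{\fP}$, and equinecessity inside $\fP$ yields $\oB \subseteq \bisupp{\fP}$.

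The main obstacle I expect is the bookkeeping around equinecessity: one needs to check carefully that if $\p_1$ and $\p_2$ are equinecessary in $P$ (respectively $P'$) then $\p_1 \in \bisupp{\fP}$ iff $\p_2 \in \bisupp{\fP}$ for every approximation $\fP$, so that replacing $\oB$ by a set of right bipositions with residuals is a faithful reduction. A secondary subtlety is the finiteness of $\fP$ in the backward direction: one must verify that the residual bijection of \Sec~\ref{ss:sr-proof} identifies $\bisupp{\fP}$ with $\bisupp{\fP'}$ up to the bipositions local to the redex (namely those pointing at $\lxrs$, at its abstraction, or at its bound occurrences of $x$), which form a finite set determined by $\Rep_{\fP}(b)$ and $\TrPl{a}$, both finite in a finite approximation.
\end{proofsketch}
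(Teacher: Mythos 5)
Your proposal is correct and follows essentially the same route as the paper's proof: both directions reduce to the characterization of Lemma~\ref{lem:Bo-appox-charac}, pass from arbitrary bipositions to right bipositions via Observations~\ref{obs:red-to-right} and \ref{obs:red-to-right-res}, transport them through $\Res_b$ (using surjectivity forward and totality on the chosen set backward), and invoke the two monotonicity clauses of Lemma~\ref{lem:red-monot} to land in a finite approximation. Your extra care about why $\fP$ remains finite in the backward direction is a point the paper handles more tersely, but it is the same argument, not a different one.
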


\begin{proof} We use Lemma~\ref{lem:Bo-appox-charac}.
\begin{itemize}
  \item 
  Assume $P$ approximable. Let us show that $P'$ is also approximable.
  First, since $P$ is approximable, then $P$ is quantitative by Lemma~\ref{lem:approx-quant}, so $P'$ is also quantitative (Remark~\ref{rmk:inf-quant-and-red}). 
  Let $\oB'\subseteq \bisupp{P'}$ be a finite set of bipositions. We must prove that there exists $\fP' \leqs P'$ such that $\oB' \subeq \bisupp{\fP'}$.
  By Observation~\ref{obs:red-to-right}, there is a finite set $\oB'_0$ of right bipositions such that every $\p'\in \oB'$ is equinecessary with some $\p'_0 \in \oB'_0$.
  Let $\oB:= \Res^{-1}_b(\oB'_0)$. Since $\Res_b$ is surjective on right biposition (Observation~\ref{obs:when-res-not-defined}), $\oB'_0 \subeq \Res_b(\oB)$. 
  Since $P$ is approximable, there is $\fP\leqs P$ such that $\oB\in \bisupp{\fP}$. Let $\fP'$ defined by $\fP\breda{b}\fP'$, so that $\oB'_0\subeq \bisupp{\fP'}$. By Lemma~\ref{lem:red-monot} , $\fP'\leqs P'$.  By equinecessity, $\oB'\subeq \bisupp{\fP'}$. Thus, $P'$ is approximable.
\item  Assume that $P'$ is approximable.
Let us show that $P$ is also approximable.
  First, since $P'$ is approximable, then $P'$ is quantitative by Lemma~\ref{lem:approx-quant}, so $P$ is also quantitative (Remark~\ref{rmk:inf-quant-and-red}). 
  Let $\oB\subseteq \bisupp{P}$ be a finite set of bipositions. We must prove that there exists $\fP \leqs P$ such that $\oB \subeq \bisupp{\fP}$.
  By Observation~\ref{obs:red-to-right-res}, there is a finite set $\oB_0$ of right bipositions which have a residual, such that every $\p\in \oB$ is equinecessary with some $\p_0 \in \oB_0$.
  Let $\oB':= \Res_b(\oB'_0)$. Since $\Res_b$ is totally defined on $\oB_0$, $\oB_0 \subeq \Res^{-1}_b(\oB')$. 
  Since $P'$ is approximable, there is $\fP'\leqs P'$ such that $\oB'\in \bisupp{\fP'}$. By Lemma~\ref{lem:red-monot} (2nd point), there is $\fP\leqfty P$ such that $\fP \bred \fP'$, so that $\oB_0\subeq \bisupp{\fP}$. Since $\fP'$ is finite, $\fP$ is also finite. By equinecessity, $\oB\subeq \bisupp{\fP}$. Thus, $P$ is approximable.
\end{itemize}
\end{proof}

\begin{remark}[Root approximability]
  \label{rk:root-approx}
Approximability (Definition~\ref{def:approximable}) is a condition pertaining to a whole derivation $P$ in the sense that $P$ is approximable iff each biposition $\p$ in $P$ is inside some finite approximation of $P$ (Lemma~\ref{lem:Bo-appox-charac}). One may wonder whether it is possible to reformulate approximability such that it pertains only to the \textit{conclusion} of $P$, as some other criteria (\eg unforgetfulness, \Sec~\ref{s:finite-inter-infinite-types} and forthcoming \Sec~\ref{ss:S-unforgetfulness}). We prove in Appendix~\ref{a:approx-0-cex}
 that there is no way to do so, at least naively.
\end{remark}

\subsection{Unforgetfulness}
\label{ss:S-unforgetfulness}

We remember from \Sec~\ref{ss:system-Ro-Klop} that weak normalization for the finite calculus is characterized in system $\scrRo$ by means of \textit{unforgetful} derivations. In order to characterize weak normalization in $\Lamzzu$ (\Sec~\ref{ss:comput-bohm-trees-hp}), we shall adapt Theorem~\ref{th:WN-Ro} to system $\ttS$. This will yield Theorem~\ref{th:charac-WN-S}, one of the main results of this paper, stated as follows:

\begin{theorem*}  
  A term $t$ is weakly-normalizing in $\Lamzzu$ if and only if $t$ is typable by means of an approximable unforgetful derivation, and if and only if $t$ is hereditary head normalizing.
\end{theorem*}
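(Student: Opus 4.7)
The plan is to establish a triangle of implications among the three conditions: weak normalization in $\Lamzzu$ (call it WN$_\infty$), approximable unforgetful typability (AUT), and hereditary head normalization (HHN). The simplest leg is HHN $\Rightarrow$ WN$_\infty$: if $t$ is hereditary head normalizing then by Observation~\ref{obs:charac-hhn-ad} the hereditary head reduction on $t$ contracts redexes whose applicative depth tends to infinity, giving a productive path in the sense of Definition~\ref{def:scrs-kj}; its limit is by construction a 001-normal form (the Bohm tree of $t$).

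For AUT $\Rightarrow$ HHN, I would proceed coinductively. Let $P \tri \juCtt$ be approximable and unforgetful. By Lemma~\ref{lem:Bo-appox-charac} there exists a nonempty finite approximation $\fP \leqs P$, which via the track-forgetting collapse (Remark~\ref{rk:tracks-tedious}) yields a finite $\scrRo$-derivation of $t$; de Carvalho's Theorem~\ref{th:HN-Ro} then forces $t \hred^* \hnfo$ for some head normal form. Infinitary subject reduction along this head-reduction path, combined with stability of approximability (Lemma~\ref{lem:approx-red}), produces an approximable unforgetful derivation $P_{\mathrm{nf}}$ typing $\hnfo$. The shape of the final $\app$-rules in $P_{\mathrm{nf}}$ together with the fact that unforgetfulness travels to the premises (arguments occur in negative positions of the head variable's type) shows that each subderivation typing $t_i$ is itself approximable and unforgetful; the coinductive hypothesis then gives HHN for each $t_i$, hence for $t$.

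The harder leg is WN$_\infty$ $\Rightarrow$ AUT, which requires the machinery of the forthcoming sections. Given a productive path $t \bredfty t'$ with $t'$ a 001-normal form, the first task is to exhibit an approximable unforgetful derivation $P'$ typing $t'$ \emph{fully}, with no untyped subterm; this is the coinductive construction sketched for $\fom$ in Section~\ref{ss:typ-inf-nf-informal} and is the business of Section~\ref{s:normal-forms}. The second task is to perform infinitary subject expansion on $P'$ along the path, following the three-step scheme of Section~\ref{ss:typ-inf-nf-informal}: approximate $P'$ by the directed family of finite derivations $\fP' \leqs P'$; for each $\fP'$, use subject substitution (the system $\ttS$ counterpart of Lemma~\ref{lem:sub-subst-Rftyo}) to replace $t'$ by a $t_n$ lying far enough along the path that the typed parts of $\fP'$ already match $t_n$, and then expand $n$ times via Proposition~\ref{prop:subj-exp-one-step-S} \emph{uniformly} using a fixed injection $\code{\cdot}$; finally take the join of the resulting finite derivations $\fP$ using Theorem~\ref{th:cpo}.

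The main obstacle is directedness in the joining step: without uniformity of the expansions, the $\fP$ would use incompatible axiom tracks and the family would not be directed. This is precisely why $\code{\cdot}$ is fixed once and for all --- monotonicity of uniform expansion (Lemma~\ref{lem:red-monot}.3) then delivers directedness, Theorem~\ref{th:cpo} supplies the join $P$ typing $t$, and stability of approximability under one-step expansion (Lemma~\ref{lem:approx-red}) together with Lemma~\ref{lem:Bo-appox-charac} applied to the $\fP$ shows $P$ is approximable. Preservation of unforgetfulness is comparatively routine: neither substitution nor expansion introduces positive occurrences of $\est$ at the root, so the concluding judgment of $P$ remains unforgetful.
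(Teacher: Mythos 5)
Your proof is correct and follows essentially the same route as the paper's: the same triangle of implications, with soundness obtained from finite approximations of an approximable unforgetful derivation plus propagation of unforgetfulness to head arguments (the paper's Proposition~\ref{prop:uf-typ-implies-wn} via Lemmas~\ref{lem:typ-HN} and~\ref{lem:uf-hereditary}), and completeness obtained by fully typing the $001$-normal form approximably and unforgetfully (Proposition~\ref{prop:NF-uf-approx-typable}) followed by infinitary subject expansion via uniform expansion, directedness, and joins (Proposition~\ref{prop:infinite-subject-expansion}). The only deviations are cosmetic: you route the head-normalization of finitely typable terms through the collapse to $\scrRo$ and Theorem~\ref{th:HN-Ro} where the paper invokes Lemma~\ref{lem:typ-HN} directly, and you phrase the hereditary step coinductively where the paper builds the productive hereditary-head-reduction path by induction on applicative depth.
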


To state and prove this theorem, we must first export the definition of unforgetfulness to system $\ttS$. We recall that the targets of arrows are regarded as positive and their sources as negative. The following definitions are straightforward adaptations from system $\scrRo$.

\begin{definition}
  Inductively:
  \begin{itemize}
  \item For all types $T$, $\est$ occurs negatively in $\est\rew T$.
  \item $\est$ occurs positively (resp. negatively) in $\sSk$ if there exists $\kK$ such that $\est$ occurs positively (resp. negatively) in $S_k$.
  \item $\est$ occurs positively (resp. negatively) in $\sSk\rew T$ if $\est$ occurs positively (resp. negatively) in $T$ or negatively (resp. positively) in $\sSk$.
  \end{itemize}  
\end{definition}

\begin{definition}[Unforgetfulness] \mbox{}
  \begin{itemize}
  \item A judgment $\juCtt$ is \textbf{unforgetful} when $\est$ does not occur positively in $T$ and  $\est$ does not occur negatively in $C(x)$ for any $x\in \TermV$.
  \item A derivation is \textbf{unforgetful} when it concludes with an unforgetful judgment.
    \end{itemize}
\end{definition}

We easily check by induction on $b\in \supp{t}$ that, if $t$ is a 001-NF and $P \tri \juCtt$ is unforgetful, then every subterm $\trb$ of $t$ is typed in $P$ (\ie $\forall b\in \supp{t},\ \exists a \in \supp{P},\ \ovla=b$).


\begin{lemma}
 \label{lem:uf-hereditary}
  If $P\rhd C \vdash t:\,T$ is an unforgetful derivation typing a head normal form 
$t=\lambda x_1\ldots x_p.x\,t_1\ldots t_q$, then, there are 
unforgetful subderivations of $P$ typing $t_1$, $t_2$,\dots, $t_q$. 
Moreover, if $P$ is approximable, so are they.
\end{lemma}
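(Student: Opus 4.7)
My plan is to read off the structure of $P$ from the fact that its subject is a head normal form, then propagate the polarity conditions along the natural top-to-bottom traversal of this fixed skeleton to produce unforgetful argument subderivations, and finally transfer approximability through Lemma~\ref{lem:Bo-appox-charac} by lifting finite bisupports from the subderivation to $P$ and cutting back.

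First, because $t = \lx_1\ldots x_p.x\,t_1\ldots t_q$ is a head normal form, $P$ can only consist of $p$ successive $\abs$-rules (at positions $\epsi, 0, \ldots, 0^{p-1}$), followed by a spine of $q$ $\app$-rules (at $0^p, 0^p\cdot 1, \ldots, 0^p\cdot 1^{q-1}$), ending in an $\ax$-rule typing $x$. This forces $T = F_1 \rew \cdots \rew F_p \rew T_0$ (the $F_i$ being the sequences introduced at the abstractions) and the axiom type to be $S = G_1 \rew \cdots \rew G_q \rew T_0$, where $G_i = (k\cdot S^i_k)_{k\in K_i}$ is the argument sequence for $t_i$; for each $i$ and each $k\in K_i$ the subderivation $P^i_k$ sitting at position $0^p\cdot 1^{q-i}\cdot k$ concludes with $\ju{D^i_k}{t_i : S^i_k}$.

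Second, I would propagate polarity constraints. Unforgetfulness of $P$ gives: no positive $\est$ in $T$, hence no positive $\est$ in $T_0$ and no negative $\est$ in any $F_i$; and no negative $\est$ in $C(y)$ for any $y$. Consequently the enlarged context at the top of the spine $C_p := C;x_1:F_1;\ldots;x_p:F_p$ still has no negative occurrence of $\est$. The axiom rule's context $x:(k'\cdot S)$ is a sub-context of $C_p$, so no negative $\est$ appears in $(k'\cdot S)$, hence none in $S = G_1\rew\cdots \rew G_q \rew T_0$. Unfolding this: each $G_i \neq \est$ (otherwise the base clause ``$\est$ occurs negatively in $\est\rew T$'' would produce a negative $\est$-occurrence in $S$), so $K_i \neq \emptyset$; moreover $\est$ occurs positively in no $G_i$, and a fortiori in no $S^i_k$. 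Finally, since contexts only split at $\app$-rules (disjoint unions of sequences), each $D^i_k(y)$ is a sub-sequence of $C_p(y)$, so $D^i_k$ inherits the absence of negative $\est$. Picking any $k\in K_i$ yields an unforgetful subderivation $P^i_k$ for $t_i$ (in fact all of them are unforgetful).

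Third, for approximability, identify $\bisupp{P^i_k}$ with the subset of $\bisupp{P}$ of bipositions sitting under the position $a^i_k := 0^p\cdot 1^{q-i}\cdot k$ via $(a,c)\mapsto (a^i_k\cdot a,c)$ and $(a,y,k'\cdot c)\mapsto (a^i_k\cdot a,y,k'\cdot c)$; this identification is compatible with the contexts and types because subderivations are literally the subtrees of $P$ at $a^i_k$. Given any finite $\oB_0\subseteq \bisupp{P^i_k}$, let $\oB\subseteq \bisupp{P}$ be its image together with $(a^i_k,\epsi)$. By approximability of $P$ (Lemma~\ref{lem:Bo-appox-charac}) there is a finite $\fP\leqs P$ with $\oB\subseteq \bisupp{\fP}$; the subtree $\fP_0$ of $\fP$ at $a^i_k$ is a well-defined finite $\ttS$-derivation (since $a^i_k\in \supp{\fP}$ and truncation preserves the local rule structure), and by construction $\fP_0\leqs P^i_k$ with $\oB_0\subseteq \bisupp{\fP_0}$. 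Applying Lemma~\ref{lem:Bo-appox-charac} in the other direction proves $P^i_k$ approximable.

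The only delicate point is the polarity argument establishing $G_i\neq \est$: one cannot deduce this from the recursive clauses of the polarity definition alone (which quantify over non-empty sequences), and must use the base clause. Everything else is routine inspection: the bisupport identification for subtrees is immediate, the monotonic lattice structure of $\leqs$ makes the approximability transfer direct, and the unforgetfulness propagation only uses that $\uplus$ of sequences preserves ``no negative $\est$''.
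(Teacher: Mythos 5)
Your proof is correct and follows the same route as the paper's (one-sentence) argument: the shape of the derivation is forced by the head normal form, and unforgetfulness propagates along the spine because $\est$ cannot occur negatively in the unique type assigned to the head variable $x$, whether $x$ is free (so its type sits in $C(x)$) or bound (so it sits in a domain $F_i$ of $T$). Your version additionally spells out the approximability transfer via Lemma~\ref{lem:Bo-appox-charac} and the subtree identification of bisupports, which the paper's proof leaves entirely implicit.
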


\begin{proof}
Whether $x=x_i$ for some $i$ or not, the unforgetfulness condition ensures that every argument of the head variable $x$ is typed, since $\est$ cannot occur negatively in its unique given type.
\end{proof}

\begin{lemma} \label{lem:typ-HN}
If $P \tri \juCtt$ is a \textit{finite} derivation, then 
$t$ is head normalizing. 
\end{lemma}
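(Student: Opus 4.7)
\begin{proofsketch}
The plan is to adapt to system $\ttS$ the standard quantitative argument for head normalization in non-idempotent intersection type systems, in the spirit of Theorem~\ref{th:HN-Ro} and the weighted subject reduction of Proposition~\ref{prop:sr-se-Ro-inf-klop}. I set $\sz{P} := |\supp{P}| \in \bbN$, which is finite precisely because $P$ is. The strategy is to upgrade Proposition~\ref{prop:subj-red-one-step-S} to a \emph{weighted} one-step head subject reduction asserting $\sz{P'} < \sz{P}$ whenever $P \hred P'$, and then to conclude by well-foundedness of $\bbN$ that no infinite head reduction sequence can start at $t$.

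For the weighted step, assume $t$ is not a head normal form, so $t = \lx_1\ldots x_p.\lxrs\, t_1\ldots t_q$ with head redex at position $b = 0^p\cdot 1^q$. By the shape of the rules of Fig.~\ref{fig:rules-system-S}, any derivation concluding with $\juCtt$ is forced to begin with $p$ nested $\abs$-rules, followed by $q$ $\app$-rules taken through their left premises, followed by an $\abs$-rule over $\lx.r$ crowned by an $\app$-rule over $\lxrs$; hence there is a unique $a \in \Rep_P(b)$ and the head redex is always typed in $P$. The construction of $P'$ recalled in Section~\ref{ss:sr-proof} and drawn in Fig.~\ref{fig:SR-residuals} then removes from $\supp{P}$ the $\app$-rule at $a$, the $\abs$-rule at $a\cdot 1$, and each of the $|\Trl{a}|$ axiom rules typing the bound variable at positions $a\cdot 10\cdot a_k$; at the same time, each argument derivation $P_k$ originally rooted at $a\cdot k$ is merely relocated (with all its internal rules) to $a\cdot a_k$ via the residual bijection $\Res_b$. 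Observation~\ref{obs:when-res-not-defined} tells us precisely that $\Res_b$ is a bijection from $\supp{P}$ minus those three classes of positions onto $\supp{P'}$. Counting accordingly, $\sz{P'} = \sz{P} - 2 - |\Trl{a}| \leq \sz{P} - 2$, so the decrease is strict.

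To conclude, suppose for a contradiction that $t$ is not head normalizing: there is then an infinite sequence $t = t_0 \hred t_1 \hred t_2 \hred \ldots$, which by Proposition~\ref{prop:subj-red-one-step-S} lifts to an infinite chain of finite derivations $P = P_0 \hred P_1 \hred P_2 \hred \ldots$ with, by the weighted step above, $\sz{P_0} > \sz{P_1} > \sz{P_2} > \ldots$ in $\bbN$, which is impossible. The main obstacle I expect is the exact bookkeeping of the weighted step: one must be sure that no internal rule of any $P_k$ is ever duplicated or erased during proof reduction, so that relocation preserves its contribution to $\sz{\cdot}$. This is precisely what the rigid, deterministic residuation of Section~\ref{ss:sr-proof} guarantees---and what would fail in the multiset-based system $\scrR$ (compare Fig.~\ref{fig:non-deter-srRo}).
\end{proofsketch}
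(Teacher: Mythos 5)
Your proof is correct and follows essentially the same route as the paper's: the head redex is necessarily typed because the derivation's spine mirrors the term's, each head step strictly decreases the (finite) number of judgments since the $\app$- and $\abs$-rules of the redex and the axioms typing its bound variable disappear while argument subderivations are only relocated, and well-foundedness of $\bbN$ concludes. Your explicit count $\sz{P'}=\sz{P}-2-|\Trl{a}|$ is just a sharper version of the paper's "at least one $\app$-rule and one $\abs$-rule disappear".
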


\begin{proof}\mbox{}
\begin{itemize} 
\item By the typing rules, the head redex---if it exists \ie if $t$ is head reducible, and not a head normal form---must be typed. 
\item When we reduce a typed redex, the number of rules of the derivation must
strictly decrease (at least one $\app$-rule and one $\abs$-rule
disappear). See Figure~\ref{fig:SR-residuals}. 
\item Since there is no infinite decreasing sequence of natural number, the head reduction strategy must halt at some point, meaning that a HNF is reached.
\end{itemize}
\end{proof}

\begin{proposition}
  \label{prop:uf-typ-implies-wn}
If a term $t$ is typable by a unforgetful approximable derivation,
then $t$ is hereditarily head normalizing.
\end{proposition}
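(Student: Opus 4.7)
The plan is to prove the statement by coinduction, following the coinductive structure of hereditary head normalization: $t$ is HHN iff $t\hred^* \hnfo$ for some head normal form and each head argument $t_i$ is itself HHN.

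Let $P \tri \juCtt$ be an unforgetful approximable derivation. First I would show that $t$ is head normalizing: since $P$ is approximable, it admits at least one finite approximation $\fP$ (by Lemma~\ref{lem:Bo-appox-charac}, take a $\fP$ containing any finite $\oB \subseteq \bisupp{P}$). By Definition~\ref{def:approximation}, $\fP$ has the same subject $t$ (its type and context may be truncated, but the typing rules are still applied correctly). Since $\fP$ is a \emph{finite} derivation of system $\ttS$, Lemma~\ref{lem:typ-HN} gives that $t$ is head normalizing, so $t \hred^* t_0$ with $t_0 = \hnfo$ a head normal form.

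Next, I would transport approximability and unforgetfulness to $t_0$. By iterated application of the one-step subject reduction (Proposition~\ref{prop:subj-red-one-step-S}) along the head reduction path $t \hred^* t_0$, one obtains a derivation $P_0 \tri \ju{C}{t_0:T}$ with the \emph{same} conclusion $\juCtt$ as $P$; in particular, $P_0$ is unforgetful. By Lemma~\ref{lem:approx-red}, approximability is preserved through each reduction step, so $P_0$ is approximable as well.

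I would then invoke Lemma~\ref{lem:uf-hereditary}: since $P_0$ is unforgetful and $t_0$ is a head normal form $\hnfo$, there exist approximable unforgetful subderivations $P_0^i$ of $P_0$ typing each head argument $t_i$ (for $1 \leqs i \leqs q$). Applying the coinductive hypothesis to each $P_0^i$, every $t_i$ is hereditary head normalizing. Combined with $t \hred^* \hnfo$, this yields by coinduction that $t$ itself is hereditary head normalizing.

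There is no real obstacle here: once the three key lemmas (subject reduction, preservation of approximability under reduction, and the hereditary nature of unforgetful typings of HNFs) are combined, the coinductive argument unfolds smoothly. The only subtle point is the preliminary observation that finite approximations type the same subject (with possibly truncated types), which is exactly what allows Lemma~\ref{lem:typ-HN} to trigger head normalization from the mere existence of a finite approximation.
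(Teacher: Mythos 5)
Your proof is correct and, up to the final step, follows the paper's own argument: both use Lemma~\ref{lem:typ-HN} to reach a head normal form, then Proposition~\ref{prop:subj-red-one-step-S} together with Lemma~\ref{lem:approx-red} to transport an approximable unforgetful derivation onto that head normal form, and finally Lemma~\ref{lem:uf-hereditary} to descend to the head arguments. (Your explicit detour through a finite approximation $\fP$ before invoking Lemma~\ref{lem:typ-HN} is in fact more careful than the paper, which applies that lemma to $P$ directly even though $P$ may be infinite.) Where you genuinely diverge is in how the recursion is closed. You appeal to the coinductive characterization of hereditary head normalization ($t$ is HHN iff $t\hred^*\hnfo$ with each $t_i$ HHN) and conclude by coinduction: the set of terms admitting an unforgetful approximable typing is a post-fixed point of the defining operator, hence contained in the greatest fixed point. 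The paper instead iterates the descent explicitly, building by induction on $d$ a hereditary head reduction path whose steps from $t_d$ to $t_{d+1}$ all occur at applicative depth $d$; this path is productive by construction and its limit has no redex at any applicative depth, so it is a (possibly infinite) normal form. Both closings are sound. Yours is shorter and matches the literal statement, but it silently relies on the equivalence between the coinductive definition of HHN and the paper's working formulations (\Bohm\ tree without $\bot$, Observation~\ref{obs:charac-hhn-ad}); the paper's version buys something that is used immediately afterwards in Theorem~\ref{th:charac-WN-S}, namely a concrete productive reduction path to a normal form, \ie a proof of infinitary weak normalization obtained simultaneously with hereditary head normalization.
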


\begin{proof}
  
Assume that $P\tri \juCtt$ is an approximable and unforgetful derivation. 
By Lemma~\ref{lem:typ-HN}, there is a head normal form $t':=\hnfo$ such that $t\hred^* t'$. By Proposition~\ref{prop:subj-red-one-step-S} and \label{lem:approx-red}, there is an approximable derivation $P'\tri \ju{C}{t':T}$. Since $P'$ and $P$ conclude with the same type and context, $P'$ also is unforgetful. By Lemma~\ref{lem:uf-hereditary}, there are approximable unforgetful derivations $P_1$, $P_2$,\ldots, $P_q$ respectively typing $t_1$, $t_2$,\ldots, $t_q$. Notice that the $t_i$ occur at applicative depth 1. 

This observation allows us to build by induction on $d\in \bbN$ a path $t =: t_0 \bred^* t_1 \bred^* t_2 \bred^* t_d\bred^*\ldots$, corresponding to hereditary head reduction (\Sec~\ref{ss:comput-bohm-trees-hp}), such that the reduction steps from $t_d$ to $t_{d+1}$ all occur at applicative depth $d$. This path stops at $t_d$ (\ie is of finite length), if $t_d$ is a normal form. It is of infinite length if no $t_d$ is a normal form. In both cases, it is by construction a productive reduction path. 
\begin{itemize}
\item If the path is finite, some $t_d$ is a normal form. Thus, $t$ is weakly normalizing.
\item If the path is of infinite length, let $t'$ be its limit. By definition of the limit, the specification of $t_d$ entails that, for all $d\in \bbN$, $t'$ does not have any redex under applicative depth $d$. Thus, $t'$ has no redex: $t'$ is a normal form, \ie $t$ is hereditarily head normalizing.
\end{itemize}
\end{proof}

\subsection{The Infinitary Subject Reduction Property}
\label{ss:inf-subj-red-S}


In this section, we prove subject reduction for productive reduction paths. The initial derivation may be approximable or not.

 Thus, when $t\bredfty t'$, we have to construct a derivation $P'$ typing $t'$ 
from a derivation $P$ typing a term $t$. The main intuition is as in \Sec~\ref{ss:subj-subst}: when a reduction is performed at applicative depth $n$, the contexts and types are not affected below depth $n$. Thus, a productive reduction path stabilizes contexts and types at any fixed applicative depth. It allows to define a derivation typing the limit $t'$.

The following (and straightforward) \textbf{subject substitution lemma}, similar to Lemma~\ref{lem:sub-subst-Rftyo}, is very useful while
working with productive paths. It states that we can freely change the untyped parts of a term in a typing derivation. 

\begin{lemma}
\label{lem:subject-substitution}
  Assume $P\rhd \juCtt$ and for all $a\in \supp {P},~ t(a) =t'(a)$ ($P$ is not necessarily assumed to be approximable).\\
Let $P[t'/t]$ be the \textit{labelled tree} obtained from $P$ by replacing $t$ by $t'$ (more precisely, $P[t'/t]$ is the labelled tree $P'$ such that $\supp{P'}=\supp{P}$ and, for all $a\in \supp{P},~ P'(a)=\ju{\ttC^P(a)}{t'\rstr{a}:\ttT^P(a)}$).\\
Then $P[t'/t]$ is a correct derivation.
\end{lemma}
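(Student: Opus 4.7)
\begin{proofsketch}
The plan is to proceed by coinduction on derivations. Recall that $\ttS$-derivations are defined coinductively by the three rules of Figure~\ref{fig:rules-system-S}, so to show that the labelled tree $P':=P[t'/t]$ is a correct derivation it suffices to verify, for every $a\in\supp{P'}=\supp{P}$, that the rule instance at position $a$ in $P'$ is a valid application of $\ax$, $\abs$, or $\app$, and then invoke the coinductive hypothesis on the immediate subderivations. The key observation driving the whole argument is that the validity of a typing rule at position $a$ depends only on (i) the shape of the constructor $t(\overline{a})$, (ii) the context $\ttC^P(a)$, (iii) the type $\ttT^P(a)$, and (iv) for non-axiom rules, the contexts and types of the immediate child judgments. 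Points (ii)--(iv) are preserved by construction, since $P[t'/t]$ copies the contexts and types of $P$ verbatim, so everything hinges on (i).

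First I would fix $a\in\supp{P}$ and do a case analysis on the rule used at position $a$ in $P$. In the axiom case, $t(\overline{a})=x$ and $P(a)=\juxkT$; the hypothesis $t(\overline{a})=t'(\overline{a})$ ensures $t'\rstr{\overline{a}}=x$ too, so the same axiom correctly derives $P'(a)=\ju{\ttC^P(a)}{t'\rstr{\overline{a}}:\ttT^P(a)}$. In the abstraction case, $t(\overline{a})=\lambda y.u$ with $u=t\rstr{\overline{a}\cdot 0}$ and $a\cdot 0\in\supp{P}$; since $t'(\overline{a})=t(\overline{a})=\lambda y$, we have $t'\rstr{\overline{a}}=\lambda y.(t'\rstr{\overline{a}\cdot 0})$ and the premise at $a\cdot 0$ in $P'$ has exactly the shape required by the $\abs$-rule. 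In the application case, $t(\overline{a})=u_1\,u_2$, the premise at $a\cdot 1$ types $u_1$ and the premises at $a\cdot k$ ($k\in K$) type $u_2$; again $t'(\overline{a})=t(\overline{a})$ is an application, and since the contexts/types of the premises and conclusion are unchanged, the $\app$-rule remains correctly instantiated (in particular the disjointness condition on the contexts is inherited from $P$).

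Next I would discharge the coinductive step. For each immediate subderivation $Q$ of $P$ rooted at $a\cdot i$, I have $\supp{Q}=\{a_0\mid a\cdot i\cdot a_0\in\supp{P}\}$, so the hypothesis $t(\overline{a'})=t'(\overline{a'})$ for $a'\in\supp{P}$ restricts to the analogous hypothesis on $Q$ and the corresponding subterm of $t'$. Hence $Q[t'\rstr{\overline{a\cdot i}}/t\rstr{\overline{a\cdot i}}]$ is the immediate subderivation of $P'$ at $a\cdot i$, and the coinductive hypothesis applies.

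There is really no main obstacle here beyond being careful with the bookkeeping: since $P$ and $P'$ have identical supports, contexts, types, and rule labels, and the constructors of $t$ and $t'$ agree on every position visited by $\supp{P}$, the local rule-correctness checks all go through mechanically, and the coinduction on the (possibly infinite) derivation structure closes the proof. Note that quantitativity is not assumed nor needed: phantom axiom tracks in contexts, if any, are transported unchanged from $P$ to $P'$ and do not affect rule validity.
\end{proofsketch}
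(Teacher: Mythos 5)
Your proof is correct and matches what the paper intends: the paper states this lemma without proof, calling it "straightforward" (its finite analogue, Lemma~\ref{lem:sub-subst-Rftyo}, is proved "by induction on $\Pi$"), and your argument — checking local rule-correctness at each node, noting that validity depends only on the constructor $t(\ovl{a})$ together with the contexts and types, all of which are preserved, and closing by coinduction since $\ttS$-derivations may be infinite — is exactly the intended reasoning. The remarks that untyped argument subterms (the case $K=\eset$) are free to differ and that quantitativity is not needed are also on point.
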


Now, let us formally prove the infinitary subject reduction property. For that, we assume:
\begin{itemize}
\item $t\rightarrow^{\infty} t'$. For instance, say that $t=t_0 \rewb{b_0} t_1\rewb{b_1} \dots \rewb{b_{n-1}} t_n \rewb{b_n}t_{n+1}\rewb{b_{n+1}} \ldots $ with $b_n\in \{0,~ 1,~ 2\}^*$ and $\ad{b_n}\longrightarrow \infty$ and $t'$ is the limit of this path.
\item There is a 
  derivation $P \rhd \juCtt$ and  $A=\supp{P}$.\\[-0.15cm]
\end{itemize}

By following step by step the reduction path, 
$b_0,~ b_1,\ldots$, we get a sequence of derivations $P_n\rhd \ju{C}{t_n: T}$ of support $A_n$. We then write $\ttC_n,\; \ttT_n$ for $\ttC^{P_n}$ and $\ttT^{P_n}$. When performing $t_n\stackrel{b_n}{\rightarrow}t_{n+1}$, notice that $\ttC_n(a)$ and $\ttT_n(a)$ are not modified for any $a$ such that $b_n\nleqslant \ovla$ \ie $\ttC_n(a)=\ttC_{n+1}(a)$ and $\ttT_n(a)=\ttT_{n+1}(a)$ if $b_n\nleqslant \ovla$.

Let $a\in \bbN^*$ and $N\in \bbN$ be such that, for all $n\geqslant N,~ \ad{b_n}>\ad{a}$. There are two cases:
\begin{itemize}
\item $a\in A_n$ for all $n\geqslant N$. Moreover,
  $\ttC_n(a)=\ttC_N(a),~ \ttT_n(a)=\ttT_N(a)$ for all $n\geqslant N$, and $t'(a)=t_n(a)=t_N(a)$.  
\item $a\notin A_n$ for all $n\geqslant N$.
\end{itemize}

We set $ A'=\set{a\in \bbN^*\, |\, \exists N,\, \forall n \geqslant N, ~
a\in A_n}$. We define a labelled tree $P'$ whose support is $A'$ by
$P'(a)=\ju{\ttC_n(a)}{ t'\rstr{a}:\ttT_n(a)} $ and we set $\ttC'(a)=\ttC_n(a),\ttT'(a)=\ttT_n(a)$
for any $n\geqslant N(\ad{a})$ (where $N(\ell)$ is the smallest rank $N$ such that $\forall n \geqslant N,~ \ad{a_n}>\ell$) .

\begin{lemma}
  \label{lem:subj-red-1}
The labelled tree $P'$ is a derivation. 
\end{lemma}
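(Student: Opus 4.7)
The plan is to verify coinductively that, at every $a \in A'$, the label $P'(a)$ is the conclusion of an $\ttS$-rule whose premises coincide with $P'(a \cdot k)$ at the appropriate positions $a\cdot k \in A'$. Since $\ttS$ is defined coinductively, nothing more than this local well-formedness check is required.

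First, I would record the stabilization facts that make the definition of $P'$ meaningful. Fix $a \in A'$ and let $N = N(\ad{a})$, so that for all $n \geq N$, the reduction positions $b_n$ satisfy $\ad{b_n} > \ad{a}$. As noted just before the lemma, such reductions leave $\ttC_n(a)$ and $\ttT_n(a)$ unchanged, hence $\ttC_n(a) = \ttC'(a)$ and $\ttT_n(a) = \ttT'(a)$ for every $n \geq N$. The same depth argument, combined with the definition of the limit of a productive path (Lemma~\ref{lem:limit-prod-paths-hp}), gives that the constructor $t_n(\ovla)$ also stabilizes and coincides with $t'(\ovla)$ for $n \geq N$; therefore the rule applied at position $a$ in $P_n$ is entirely determined by $t'(a)$ and by the stabilized typing data.

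Second, I would do a case analysis on $t'(a)$ to check that the correct $\ttS$-rule is applied at $a$ in $P'$ and that its premises all lie in $A'$. If $t'(a)$ is a variable $x$, then for $n \geq N$ the rule at $a$ in $P_n$ is the axiom $\juxkT$ with $T = \ttT'(a)$ and a unique track $k$, and this axiom is precisely $P'(a)$. If $t'(a) = \lx$, then $a\cdot 0 \in A_n$ for all $n \geq N$, so $a\cdot 0 \in A'$; the $\abs$-rule applied at $a$ in $P_n$ relates $\ttC_n(a)$, $\ttT_n(a)$, $\ttC_n(a\cdot 0)$ and $\ttT_n(a\cdot 0)$ in the required way, and by stabilization the same relation holds between $\ttC'(a), \ttT'(a), \ttC'(a\cdot 0), \ttT'(a\cdot 0)$. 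If $t'(a) = \arob$, then for $n \geq N$ the set $\Rt(\ttT_n(a\cdot 1))$ of domain tracks of the left-hand arrow type stabilizes to $K := \Rt(\ttT'(a\cdot 1))$; for each $k \in \{1\} \cup K$ we have $a \cdot k \in A_n$ whenever $n \geq N$, hence $a \cdot k \in A'$, and the $\app$-rule applied at $a$ in $P_n$ witnesses exactly the disjoint-union identity $\ttC_n(a) = \ttC_n(a\cdot 1) \uplus \biguplus_{k\in K} \ttC_n(a\cdot k)$, which transfers to the corresponding identity for $\ttC'$ at $a$ by stabilization.

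Finally, putting these local checks together and applying coinduction on the support $A'$ yields that $P'$ is a correct $\ttS$-derivation concluding with $\juCtpt$, since $\ttC'(\epsi) = \ttC_0(\epsi) = C$, $\ttT'(\epsi) = \ttT_0(\epsi) = T$, and $t'\rstr{\epsi} = t'$. The main obstacle, as is often the case with infinitary reduction, is the stabilization step: one must justify that productivity ($\ad{b_n} \to \infty$) forces both the tree-shape of $\supp{P_n}$ and the labels $\ttC_n(a), \ttT_n(a)$ to become eventually constant at every fixed position $a$, and that this eventual value agrees with $t'(a)$. Once stabilization is granted, the verification of each typing rule is entirely local and follows from the fact that every $P_n$ is a correct $\ttS$-derivation by iterated one-step subject reduction (Proposition~\ref{prop:subj-red-one-step-S}).
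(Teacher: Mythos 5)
Your proof is correct and takes essentially the same route as the paper's: fix $a\in A'$, use productivity to stabilize the node and its premises for large enough $n$, and transfer the local correctness of the rule at $a$ from $P_n$ (correct by iterated one-step subject reduction) to $P'$. The only nitpick is that the premises of an $\app$-node at $a$ sit at applicative depth $\ad{a}+1$, so you should take $n\geqs N(\ad{a}+1)$ rather than $N(\ad{a})$ to stabilize them — which is exactly the bound the paper uses.
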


\begin{proof} 
Let $a\in A'$ and $n\geqslant N(|a|+1)$. Thus, $t'(a)=t_n(a)$ and the 
types and contexts involved at node $a$ and its premises are the same in 
$P'$ and $P_n$. So the node $a$ of $P'$ is correct, because it is 
correct in $P_n$. 
\end{proof}

\begin{lemma}
  \label{lem:subj-red-2}
If $P$ is approximable, so is $P'$.
\end{lemma}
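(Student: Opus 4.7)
To prove $P'$ is approximable, by Lemma~\ref{lem:Bo-appox-charac} it suffices to show that every finite $\oB' \subseteq \bisupp{P'}$ is contained in the bisupport of some finite $\fP' \leqs P'$. The strategy is to pick a finite approximation of some $P_N$ with $N$ large enough, transport it along the tail of the reduction path until it stabilizes, and then substitute its subject by $t'$. Given $\oB'$, let $D'$ be the maximum applicative depth of the positions appearing in it. By productivity, there exists $N$ with $\ad{b_n} > D'$ for all $n \geq N$; hence every position $a$ of $\oB'$ lies in $A_n$ with $\ttC_n(a) = \ttC'(a)$ and $\ttT_n(a) = \ttT'(a)$ for $n \geq N$, so $\oB' \subseteq \bisupp{P_N}$ and $P_N$ agrees with $P'$ on $\oB'$. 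Iterating Lemma~\ref{lem:approx-red} on the finite prefix $P \to \ldots \to P_N$ shows that $P_N$ is approximable; so there exists a finite $\fP \leqs P_N$ with $\oB' \subseteq \bisupp{\fP}$.

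Set $\fP_N := \fP$ and, for $n \geq N$, inductively define $\fP_{n+1} \leqs P_{n+1}$ as follows. If $b_n \in \ovl{\supp{\fP_n}}$, let $\fP_{n+1}$ be the reduct of $\fP_n$ at $b_n$, which approximates $P_{n+1}$ by Lemma~\ref{lem:red-monot}. Otherwise, because $\supp{\fP_n}$ is downward-closed under the collapse, no $\alpha \in \supp{\fP_n}$ satisfies $\ovl{\alpha} \geq b_n$, so $t_n$ and $t_{n+1}$ agree on $\ovl{\supp{\fP_n}}$, and we set $\fP_{n+1} := \fP_n[t_{n+1}/t_n]$ via Lemma~\ref{lem:subject-substitution}, preserving both bisupport and values. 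The key observation is that the first case occurs only finitely many times: each such reduction destroys the $\app$-rule and the $\abs$-rule of the contracted redex, strictly decreasing the (finite) size of $\fP_n$, exactly as in the size-decrease argument underlying Lemma~\ref{lem:typ-HN}. Hence some $M \geq N$ exists beyond which only the second case applies; the structure of $\fP_n$ stabilizes to $\fP_M$, while only its subject continues to evolve along $(t_n)_{n \geq M}$.

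Now define $\fP' := \fP_M[t'/t_M]$. Since all $t_n$ for $n \geq M$ agree with $t_M$ on $\ovl{\supp{\fP_M}}$, so does their limit $t'$, and Lemma~\ref{lem:subject-substitution} makes $\fP'$ a valid finite derivation. Stabilization also gives $\supp{\fP_M} \subseteq \bigcap_{n \geq M} A_n \subseteq A'$, so for every $\p \in \bisupp{\fP_M}$ one has $P_M(\p) = P'(\p)$, whence $\fP'(\p) = \fP_M(\p) = P'(\p)$ and therefore $\fP' \leqs P'$. Finally, every biposition $\p \in \oB'$ is located at a position $a$ with $\ad{a} \leq D' < \ad{b_n}$ for $n \geq N$, so $\ovl{a} \not\geq b_n$ and $\Res_{b_n}(a) = a$; hence $\p$ is preserved through all the intermediate reductions and belongs to $\bisupp{\fP_M} = \bisupp{\fP'}$. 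This gives the required approximation, concluding the proof. The main obstacle is the stabilization step: it crucially relies on the quantitative, non-idempotent decrease of size under typed reduction to bound the number of ``absorbed'' reductions along the infinite productive path, thereby reconciling the static notion of approximability with the dynamic limit $P \to^\infty P'$.
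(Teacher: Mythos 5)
Your proof is correct and follows essentially the same route as the paper's: use productivity to find $N$ with $\oB'\subseteq\bisupp{P_N}$, invoke Lemma~\ref{lem:approx-red} along the finite prefix to get a finite $\fP\leqs P_N$ containing $\oB'$, transport it along the tail of the path by monotonic reduction, and finish with subject substitution (Lemma~\ref{lem:subject-substitution}) to land on $t'$. The only variation is in how stabilization of the finite approximant is justified: the paper picks $N'$ so that $\ad{b_n}$ exceeds $d=\card{\bisupp{\fP_N}}$ and uses the bound on the applicative depth of typed positions, whereas you use the strict size decrease under typed reduction to show that only finitely many steps can touch the approximant --- both arguments are standard and equally valid.
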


\begin{proof}
Let $\oB\subseteq \bisupp{P'}$. We set $\ell=\max\set{\ad{\p}\,|\, \p\in \oB}$ ($\ad{\p}$ is the applicative depth of the underlying $a\in \supp{P}$).
  

  By productivity, there is $N$ such that, $\forall n\geqslant N,~ \ad{b_n}\geqslant \ell+1$, and thus, $t_{N}(a)=t_n(a)=t'(a),~ \ttC_N(a)=\ttC_n(a)=\ttC'(a)$ and $\ttT_N(a)=\ttT_n(a)=\ttT'(a)$ for all $a$ such that $\ad{a}\leqslant \ell$ and $n\geqslant N$. In particular, $\oB\subseteq \bisupp{P_N}$.

  Since $P$ is approximable, by Lemma~\ref{lem:approx-red}, $P_N$ is approximable. So, there is $\fP_N \leqslant P$ such that $\oB\subseteq \bisupp{\fP_N}$.
  

  Let $d=\card{\bisupp{\fP_N}}$ and $N'\geqs N$ such that, for all $n\geqs N'$, $\ad{b_n}> d$. Let $\fP_{N'}$ the derivation defined by $\fP_N\breda{b_N} \ldots \breda{b_{N'-1}} \fP_{N'}$. Since $b\in \supp{\fP_{N'}}$ implies $\ad{b}\leqs N'<d<\ad{b_n}$ for any $n\geqs N'$, $t_{N'}$ and $t'$ do not differ on $\supp{\fP_{N'}}$ and we can apply Lemma~\ref{lem:subject-substitution}. Thus, let $\fP'$ be the derivation obtained by replacing $t_{N'}$ with $t'$ in $\fP$. We have $\oB \subeq \fP'\leqs P'$. Thus, $P'$ is approximable.
\end{proof}

\begin{proposition}[Infinitary Subject Reduction]
  \label{prop:infinite-subject-reduction}
  Assume $t\rew^{\infty} t'$ and $P\tri \juCtt$. Then there exists a derivation $P'$ such that $P'\tri \ju{C}{t':T}$.\\
  Moreover, if $P$ is approximable, $P'$ may be chosen to be approximable.
\end{proposition}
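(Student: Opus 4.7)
The plan is to split the argument according to whether the productive reduction $t \rew^{\infty} t'$ has finite or infinite length. If it is finite, $t \bred^* t'$ and we simply iterate the one-step subject reduction (Proposition~\ref{prop:subj-red-one-step-S}) along this path, obtaining $P' \tri \ju{C}{t':T}$; preservation of approximability is then handled step-by-step by Lemma~\ref{lem:approx-red}. So the real content lies in the infinite case, for which the setup leading to Lemmas~\ref{lem:subj-red-1} and~\ref{lem:subj-red-2} above is almost exactly the argument we need; I will simply assemble it.

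Concretely, assume $t =: t_0 \rewb{b_0} t_1 \rewb{b_1} \ldots$ is a productive path of infinite length with limit $t'$. Starting from $P_0 := P$, iterate Proposition~\ref{prop:subj-red-one-step-S} to produce derivations $P_n \tri \ju{C}{t_n:T}$, with supports $A_n \subeq \bbN^*$. The crucial observation used already in the excerpt is that one-step subject reduction at position $b_n$ leaves the context and type attached to any $a$ with $b_n \nleqs \ovla$ unchanged, because residuation is identity on such positions. Since the path is productive, $\ad{b_n} \to \infty$, so for every $a \in \bbN^*$ there exists $N$ (depending on $\ad{a}$) such that for all $n \geqs N$, either $a \notin A_n$ permanently or $a \in A_n$ with $\ttC_n(a),\, \ttT_n(a),\, t_n(a)$ all stabilized and equal to $\ttC_N(a),\, \ttT_N(a),\, t'(a)$ respectively (the last equality coming from Lemma~\ref{lem:limit-prod-paths-hp}).

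Thus it is legitimate to define $A' := \{\, a \in \bbN^* \mid \exists N,\ \forall n \geqs N,\ a \in A_n \,\}$ and then the labelled tree $P'$ on $A'$ by $P'(a) := \ju{\ttC_N(a)}{t'\rstr{a} : \ttT_N(a)}$ for any sufficiently large $N$. That $P'$ is a correct derivation is exactly the content of Lemma~\ref{lem:subj-red-1}: every rule instance in $P'$ at position $a$ is a copy of a correct rule instance at $a$ in $P_n$ for $n$ large enough to exceed the applicative depth of the premises, using the fact that $t'$ agrees with $t_n$ on these premises. This gives the first part of the statement.

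For the moreover clause, we assume $P$ is approximable and must show the constructed $P'$ is. Apply Lemma~\ref{lem:subj-red-2} as stated, whose proof sketch is already in the excerpt: given a finite set $\oB \subeq \bisupp{P'}$ of bipositions, pick $N$ large enough so that every biposition in $\oB$ lies below applicative depth strictly smaller than every $\ad{b_n}$ for $n \geqs N$; then $\oB \subeq \bisupp{P_N}$, and by iterating Lemma~\ref{lem:approx-red} $N$ times on the approximability of $P$ we obtain that $P_N$ is approximable, yielding a finite $\fP_N \leqs P_N$ with $\oB \subeq \bisupp{\fP_N}$; push this further along the path (again using Lemma~\ref{lem:approx-red} to preserve approximability and finiteness) up to some $N' \geqs N$ large enough that $t_{N'}$ and $t'$ coincide on $\supp{\fP_{N'}}$, and finally apply the subject substitution Lemma~\ref{lem:subject-substitution} to replace $t_{N'}$ by $t'$ in $\fP_{N'}$, yielding a finite $\fP' \leqs P'$ containing $\oB$. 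By Lemma~\ref{lem:Bo-appox-charac}, $P'$ is approximable. There is no real obstacle here since all the technical machinery has been established; the only delicate point to keep in mind is that the stabilization rank $N$ must be chosen relative to the applicative depth of the bipositions in $\oB$, so that productivity of the path does the work of localizing both the construction and the approximability argument.
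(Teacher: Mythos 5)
Your proof is correct and follows essentially the same route as the paper: the same pointwise-limit construction of $P'$ via stabilization at fixed applicative depth (the content of Lemmas~\ref{lem:subj-red-1} and~\ref{lem:subj-red-2}), and the same approximability transfer by localizing a finite set of bipositions in some $P_N$, reducing the finite approximant along the path, and finishing with subject substitution. The only nitpick is that the step carrying $\fP_N$ to $\fP_{N'}$ with $\fP_{N'}\leqs P_{N'}$ rests on monotonicity of reduction (Lemma~\ref{lem:red-monot}) rather than on Lemma~\ref{lem:approx-red}, but all the needed facts are in place.
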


\begin{proof}
Consequence of Lemmas~\ref{lem:subj-red-1} and \ref{lem:subj-red-2}.
\end{proof}

\section{Typing Normal Forms and Subject Expansion}
\label{s:normal-forms}


In this section, we prove that (possibly infinite) normal forms are \textit{fully}  typable (\ie without leaving untyped subterms) in system $\ttS$ and that all the quantitative derivations typing a normal form are approximable (\ie the converse of Lemma~\ref{lem:approx-quant} is valid for normal forms).

Indeed, as hinted at in \Sec~\ref{ss:intersection-overview-klop-intro} and \Sec~\ref{ss:system-Ro-Klop}, a proof of \textit{Hereditarily Head Normalizing}  $\Rightarrow$ \textit{Typable} proceed by giving first an unforgetful typing of normal forms, and then, using a subject expansion property.

Actually, we will describe all the quantitative derivations typing any 001-normal form and prove them to be approximable. The latter part is an extra-difficulty imposed by working in the infinitary setting.  Then, we will prove an infinitary subject expansion property, which is enough to show the above implication.

Describing all the possible typings of  001-normal forms is based on the distinction  between what we call \textit{constrained} and \textit{unconstrained} positions in a normal form. We use this notions from  \Sec~\ref{ss:support-candidates} to \Sec~\ref{ss:nf-approx}.\\

\noindent \textbf{Constrained and Unconstrained Positions.}  
Infinitary normal forms (\Bohm\ trees without $\bot$) are coinductive assemblages of head normal forms (\cf \Sec~\ref{ss:comput-bohm-trees-hp}). It is then important to understand how a head normal form $t=\lx_1\ldots \lx_p.x\,t_1\ldots t_q$ may be typed. Let us have a look at figure \ref{fig:parsNF} and ignore for the moment the $\rdeg$ and positions (between chevrons) annotations: the head variable $x$ has been assigned an arrow type $\sSqk{1}\rew\ldots \rew \sSqk{q}\rew T$ whereas the first argument $t_1$ is typed with types $S^1_k$ ($k$ ranging over $K(1)$),\dots, the $q$-th argument $t_q$ is typed with types $S^q_k$ ($k$ ranging over $K(q)$).

In the term $t$, the subterms $x,\, x\,t_1,\, x\,t_1t_2,\ldots,\, x\,t_1\ldots t_{q\!-\!1}$, which are ``partial''  applications of the head variable $x$, 
are typed with arrow types, as well as the subterms $\lx_p.x\,t_1\ldots t_q,\,\lx_{p\!-\!1}x_p.x\,t_1\ldots t_q,\ldots$ and $\lx_1\ldots x_p.x\,t_1\ldots t_q$, which start with abstractions. By contrast, notice that subterm $x\,t_1\ldots t_n$, where $x$ is ``fully'' applied, has type $T$ and that this type $T$ may be \textit{any} type. So, we say that the type of subterm $x\,t_1\ldots t_q$ is \textbf{unconstrained} (in $t$), whereas for instance:
\begin{itemize}
\item $x\,t_1\ldots t_{q\!-\!1}$ has type $\sSqk{q}\rew T$: this type depends on the types $S^q_k$ given to the subterm $t_q$. We say informally that it \textit{calls} for the types of the subterm $t_q$.
  \item $\lx_p.x\,t_1\ldots t_q$ has type $\ttC(0^p)(x_p)\rew T$: this type depends on the types assigned to $x_p$ deeper in the term, which are stored in $\ttC(0^p)(x_p)$. We say informally that it \textbf{calls for the types} of variable the $x_p$.
\end{itemize}

\begin{figure*}
  \ovalbox{
    \begin{tikzpicture}
  

\draw (-3, 3) node {$\rdeg$};
\draw (-3,2.4) node {$q$};
\draw (-3,1.6) node {$q\!-\!\!1$};
\draw (-3,0.6) node {$1$};
\draw (-3,0) node {$0$};
\draw (-3,-0.9) node {$1$};
\draw (-3,-2) node {$p$};

  
  \draw (-2,2.4) node {\small $x$};
  \draw (-2,2.4) circle (0.23) ;

  \draw (-1.92,2.19) -- (-1.52,1.78);
  \draw (-1.8,1.9) node {\fnsz $\red{1}$};
  
  
  \draw (-1,2.45) node {\small $t_1$};
  \draw (-1,2.1) --++ (-0.4,0.6) --++ (0.8,0) --++ (-0.4,-0.6);
  
  \draw (-1.28,1.78) -- (-1,2.1);   

  \draw (-1.4,1.6) node {\small $\arob$};
  \draw (-1.4,1.6) circle (0.23);

  \draw [dotted] (-1.26,1.43) -- (-0.75,0.86);

  
 \draw (-0.2,1.62) node {\small $t_{\!q\!-\!1}$};
  \draw (-0.2,1.2) --++ (-0.4,0.6) --++ (0.8,0) --++ (-0.4,-0.6);
  
  \draw (-0.48,0.88) -- (-0.2,1.2);   

  
\draw (-0.6,0.7) node {\small $\arob$};
\draw (-0.6,0.7) circle (0.23) ;  

\draw (-0.48,0.52) -- (-0.16,0.16);  
\draw (-0.4,0.2) node {\fnsz $\red{1}$};


 \draw (0.4,0.92) node {\small $t_{q}$};
 \draw (0.4,0.5) --++ (-0.4,0.6) --++ (0.8,0) --++ (-0.4,-0.6);
  
 \draw (0.12,0.18) -- (0.4,0.5);  


\draw (0,0) node {\small $\arob$};
\draw (0,0) circle (0.23) ;

\draw (0,-0.23) --++ (0,-0.44);
\draw (-0.13,-0.42) node {\fnsz $\red{0}$};

\draw (0,-0.9) node {\ssz $\lambda \!\!\;x_{p}$};
\draw (0,-0.9) circle (0.23);

\draw [dotted] (0,-1.13) --++ (0,-0.64) ;

\draw (0,-2) node {\ssz $\lambda \!\!\;x_{1}$};
\draw (0,-2) circle (0.23);



\draw (-2.6,-3) node [right] {\textbf{Subtree of a normal form $t$}};

\trans{-3.7}{0}{
\draw (8.6,-3) node [right] {\textbf{Corresponding types}};




\draw (8.1,2.4) node {\small $\sSqk{1}\rew\! \ldots \!\rew \sSqk{q}\rew T\posPr{\ra_q}$};

  \draw (10.58,2.19) -- (10.98,1.78);
  
  

  \draw (12.1,2.4) node {\small $\sSqk{1}$}; 
  \draw (11.22,1.78) -- (11.5,2.15);   
  \draw (11.22,1.78) -- (11.65,2.15);
  \draw (11.22,1.78) -- (12,2.15);
  

\draw (8.7,1.6) node {\small $\sSqk{2}\rew\! \ldots \!\rew \sSqk{q}\rew T\posPr{\ra_{q\!-\!1}}$};
  
  \draw [dotted] (11.24,1.43) -- (11.75,0.86);

  

 \draw (13,1.4) node {\small $\sSqk{q\!-\!1}$}; 
  
 \draw (12.02,0.88) -- (12.3,1.2);
 \draw (12.02,0.88) -- (12.45,1.2);
 \draw (12.02,0.88) -- (12.8,1.2);


\draw (10.7,0.7) node {\small $\sSqk{q} \rew T\posPr{\ra_1}$};
 
\draw (12.02,0.52) -- (12.34,0.16);  


 \draw (13.5,0.7) node {$\sSqk{q}$};
 \draw (12.62,0.18) -- (12.9,0.5);
 \draw (12.62,0.18) -- (13.05,0.5);
 \draw (12.62,0.18) -- (13.4,0.5);  
 


\draw (12.25,0) node {$T\posPr{\ra}$};

\draw (12.5,-0.23) --++ (0,-0.44);

\draw (11.5,-0.9) node {\small $\ttC(\ra)(x_p)\rew T\posPr{\ra_{-\!1}}$};

\draw [dotted] (12.5,-1.13) --++ (0,-0.64) ;


\draw (10.2,-2) node {\small $\ttC(\ra_{1\!-\!p})(x_1) \rew \!\ldots\!\rew \ttC(\ra)(x_p)\rew T\posPr{\ra_{-\!p}}$};
}
\end{tikzpicture}}
    \caption{Typing Normal Forms}
    \label{fig:parsNF}
  \end{figure*}
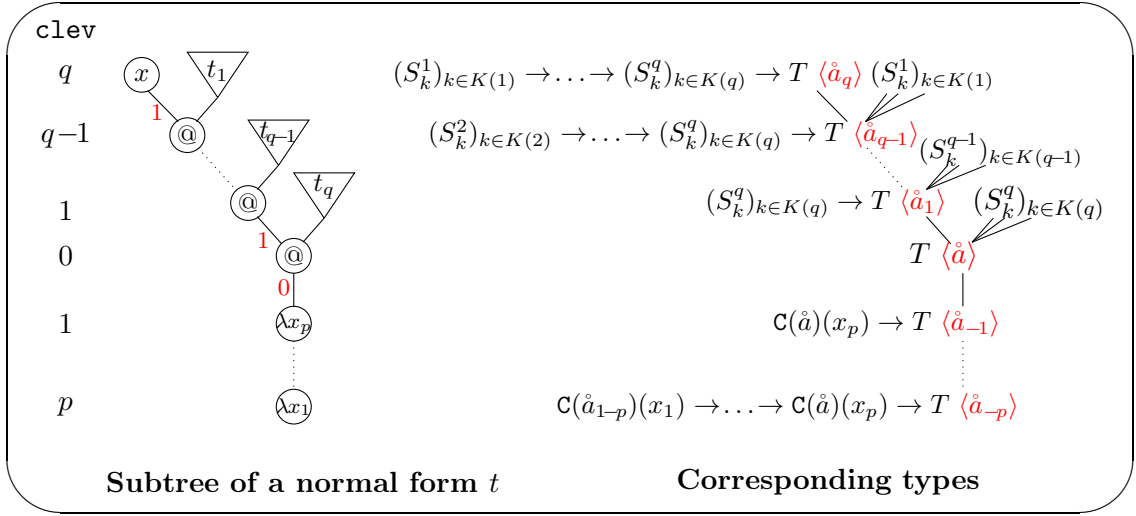

\subsection{Support Candidates}
\label{ss:support-candidates}

Before using the notion of constrained position above, we must devote our attention to the possible forms of the support $A:=\supp{P}$ of a derivation $P$ typing a 001-normal form $t$. This is done by considering the \textit{preorder} $\prec$ defined below, extending the prefix order.

If $\ovla,\ovla'\in\supp{t}$, we write $a\prec a'$ when there exists $a_0$ such that $a_0\leqslant a,~ a_0\leqslant a'$ and $\ad{a}=\ad{a_0}$. Then we observe that if $P$ is a derivation typing $t$, then $\supp{P}$ is downward closed for $\prec$ over $\supp{t}$, meaning by that, for all $a_1$ such that $\ovl{a_1}\in \supp{t}$ and $a_2\in \supp{P}$, then $a_1\prec a_2$ implies $a_1\in \supp{P}$.  Intuitively, $a\prec a'$ holds when $a$ and $a'$ are in the same sub-head normal form of $t$ or if $a'$ is nested in an argument derivation of the sub-head normal form containing $a$.

For instance, $a:=021031 \prec 021037=:a'$ since $a_0:=02103 \leqslant a, a'$ and $\ad{a_0} = \ad{a} = 2$. If $021037\in \supp{P}$, since $7$ is an argument track,  $t(02103)=\arob$ (\ie $a$ points to an $\app$-rule in $P$). Last, $02031$, which is this $\app$-rule left-hand side, should also be in $\supp{P}$, as well as every prefix of $021037$. And we have thus $021031\in \supp{P}$ as expected.

Conversely, a non-empty set $A$ downward closed for $\prec$ such that $\ovl{A}\subseteq \supp{t}$
will be called a \textbf{support candidate} for a derivation typing $t$ and we prove that, for all  support candidates $A$ associated with a normal form $t$, there is indeed a derivation $P$ typing $t$ such that $A=\supp{P}$. This shall be Lemma~\ref{l:NF-are-typable}.

For now, assume that there is a derivation $P$ typing the 001-normal form $t$. Each $a\in A$ is either an unconstrained position or \textit{related} to an unconstrained position, which we denote $\ra$, in the sense that $a$ is intuitively in the same head normal form than $\ra$ (inside $P$). The \textbf{constrain level} $\rdeg(a)$ of $a$ is defined as the graph distance between $a$ and $\ra$.
First, notice that, given $a\in A:=\supp{P}$, 
we have  $\tra=\lambda x_1\ldots x_n.u$, where $u$ is not an abstraction. The integer $n$ is sometimes called the \textbf{order} (as in~\cite{BucciarelliCFS16}) of $\tra$. We say then that $a$ is an order $n$ position. 
\begin{itemize}
  \item If $n\geqslant 1$, we say that $a$ is a \textbf{non-zero position} and we set $\ra=a\cdot 0^n$ (so that $\trra=u$) and $\rdeg(a)=n$.
  \item If $n=0$, we distinguish two subcases, by first defining $\ra$ as the shortest prefix $a_0\leqslant a$ such that $a=a_0\cdot 1^\ell$ (for some $\ell$) and setting $\rdeg(a)=\ell$:\\
    - If $\ell=0$, then we set $\ra=a$ and we say that $a$ is a \textbf{unconstrained position}.\\
    - If $\ell\geqslant 1$, we say that $a$ is a \textbf{partial (zero) position}.
\end{itemize}

As it has been observed above, $a$ is unconstrained when, intuitively, the type of the underlying subterm does not depend on deeper parts of the derivation. If $i\geqslant 0$, we write $\ra_i$ for $\ra\cdot 1^i$, and $\ra_{-i}$ for the rank $i$ prefix of $\ra$ (\eg $\ra=\ra_{\!-\!2}\cdot 0^2$ if $\trra$ is of order $\geqslant 2$). Thus, if $a\in \supp{P}$ and $d:=\rdeg(a)$, then $a=\ra_d$ if $a$ is a partial zero position and $a=\ra_{-d}$ if $a$ is a non-zero position.
More generally, following the discussion beginning \Sec~\ref{s:normal-forms}, we observe that
if $\rdeg(a)=d$, then $\ttT(a)$ is an arrow type $F_1\rew \ldots \rew F_d\rew \ttT(\ra)$, where $\ttT(\ra)$ is an unconstrained type. More precisely, using Fig.~\ref{fig:parsNF} to guide us:
\begin{itemize}
\item When $a$ is a non-zero position, \ie $\ra=a\cdot 0^d$ and $\tra$ is of the form $\lx_{q\!-\!d\!+\!1}\ldots \lx_p.x\,t_1\ldots t_q$ with $\trra=x\,t_1\ldots t_q$, then $F_1=\ttC(a\cdot 0)(x_{q\!-\!d\!+\!1}),\, F_2=\ttC(a\cdot 0^2)(x_{q\!-\!d\!+\!2}),\ldots,\, F_d=\ttC(a\cdot 0^d)(x_p)=\ttC(\ra)(x_p)$.
\item When $a$ is partial \ie $a=\ra\cdot 1^d$ and  $\trra$ is of the form $x\,t_1\ldots t_q$ with $\tra= x\,t_1\ldots t_{p\!-\!d}$, then $\trra=\tra\,t_{p\!-\!d\!+\!1}\ldots t_q$ and
  $F_1:=(k\cdot \ttT(\ra\cdot 1^{d-1} \cdot k))_{k \in \ArgTr^1(a)},\ldots,\,  F_d:=(k\cdot \ttT(\ra\cdot k))_{k \in \ArgTr^k(a)}$ (see end of \Sec~\ref{s:deriv-S} for notation $\ArgTr^i$): thus, $F_i$ is the sequence of types given to the $i$-th argument $t_{p\!-\!d\!+\!i}$ of $\tra$  w.r.t. position $a$.
  \end{itemize}

\subsection{Natural Extensions}
\label{ss:nat-ext}

Let $A$ be a support candidate for the 001-normal form $t$. We want to build a derivation $P$ such that $\supp{P}=A$. For that, we need to find types for every subterm $\tra$ when $a$ ranges over $A$. To obtain this, let us just assign first types at the unconstrained positions in $A$. Thus, let $\rT$ be  a function from $\rA$ to the set of types. We need to extend $\rT$ on $A$ so that we get a correct derivation $P$ typing $t$. We will call the resulting construction a \textit{natural extension}.

First, note that $\rdeg(a)$ may be defined in $A$ as it was for a derivation $P$ in the previous section (except that now, we only have $A$, a support candidate, instead of a whole derivation $P$): we set $a\in \rA$ iff ($a\cdot 0 \notin A$ and there is no $a'$ such that $a=a'\cdot 1$). Moreover, if $a\in A\setminus \rA$, then either there are unique $a'\in \rA$ and $n\geqs 1$ such that $a'=a\cdot 0^n$, or there are unique $a'\in \rA$ and $n\geqs 1$ such that $a=a'\cdot 1^n$. In both cases, we set $\ra=a'$ and $\rdeg(a)=n$. The notation $\ra_{\pm n}$ can easily be redefined.

As we have seen in \Sec~\ref{ss:support-candidates}, we must capture the way \textit{calls} are made in a derivation by a type to other types which are deeper in the derivation.
For that, to each $a\in \mathbb{N}^*$, we attribute an \textit{indeterminate} $X_a$, which shall be a placeholder for $\ttT(a)$ (when $\ttT(a)$ is built). 
Intuitively, $X_a$ calls for $\ttT(a)$, the type given to the subterm at position $a$.
For all $a\in A,\ x\in \TermV$, we set $A_a(x)=\set{a_0 \in A\,|\,a\leqslant a_0,\,t(a)=x,\nexists a_0',\,a\leqslant a_0'\leqslant a_0,\,t(a'_0)=\lx }$ so that we intend to have $\AxP_a(x)=A_a(x)$ (as well as $\trP{a_0}=\code{a_0}$) when $P$ is built.

Combining all the above intuitions, we set, for all $a\in A,\,x\in \TermV,~ \ttE(a)(x)=(\code{a_0}\cdot X_{a_0})_{a_0\in A_a(x)}$ (thus, $\ttE(a)(x)$ calls for the types given to $x$ in $\ax$-rules above $a$). If $a\in A$ is partial or unconstrained, $d=\rdeg(a)$ (\ie $a=\ra\cdot 1^d$) and $1\leqslant i \leqslant d$, we define the sequence $\Rst_i(a)$ by $\Rst_i(a)=(k\cdot X_{\ra_{d-i}\cdot k})_{k\in \ArgTr_A^i(a)}$ with $\ArgTr^i_A$ defined as for derivations by $\ArgTr^i_A=\set{k\geqslant 2\,|\, a_*\cdot 1^{d-i}\cdot  k\in A}$. The intuition is that $\Rst_i(a)$ calls for the types given to the argument of the $i$-th application below $a$.
\begin{itemize}
\item If $a\in A$ is a non zero position, \eg $\tra$ is of the form $\lambda x_1\ldots x_d.\trra$. We then set:
  $$\Cal(a)=\ttE(a\cdot 0)(x_1)\rew\ldots \rew \ttE(\ra)(x_d)\rightarrow \rT(\ra)$$
\item If $a\in A$ is partial, we set:
  $$\Cal(a)=\Rst_1(a)\rew \ldots \rew \Rst_n(a)\rew \rT(\ra)$$
\item If $a\in A$ is unconstrained, we set $\Cal(a)=\rT(a)$. 
\end{itemize}
We then extend $\rT$ (defined  on unconstrained positions only, for the time being) to $A$ by the following coinductive definition: for all $a\in A$, $$\ttT(a)=\Cal(a)[\ttT(a')/X_{a'}]_{a'\in \bbN^*}$$ For all $a\in A$, we define the contexts $\ttC(a)$ by:
$$\ttC(a)(x)=\ttE(a)(x)[\ttT(a')/X_{a'}]_{a'\in A_a(x)}$$

Those definitions are well-founded, because whether $a$ is non zero position or a partial one, every occurrence of an $X_\al$ in $\ttE(a)$ or $\Cal(a)$  is at depth $\geqslant 1$ and the coinduction is \textit{productive}. Finally, let $P$ be the labelled tree whose support is $A$ and such that, for $a\in A$, $P(a)$ is $\ju{\ttC(a)}{\tra: \ttT(a)}$.


\begin{lemma}
\label{l:NF-are-typable}
  The labelled tree $P$ defined above is a derivation.
\end{lemma}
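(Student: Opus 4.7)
\begin{proofsketch}
The plan is to verify, for each $a \in A$, that the judgment $P(a) = C(a) \vdash t\rstr{a} : T(a)$ is the conclusion of a valid typing rule of system $\ttS$ whose premises are exactly the $P(a')$ for the appropriate children $a' \in A$. Since $A$ is downward-closed for $\prec$, the child positions required by the typing rules are indeed in $A$: for $t(a) = \lambda x$ we get $a \cdot 0 \in A$; for $t(a) = @$ we get $a \cdot 1 \in A$ together with all argument positions $a \cdot k$ ($k \geqs 2$) that happen to lie in $A$. The verification proceeds by case analysis on $t(a)$, using the well-foundedness of the coinductive definition (productivity was already noted) together with the injectivity of $\code{\cdot}$ to handle disjoint unions.

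\emph{Axiom case: $t(a) = x$.} Then $a \in \rA$, so $T(a) = \rT(a)$ and $\Cal(a) = \rT(a)$. The set $A_a(y)$ equals $\{a\}$ when $y = x$ and $\eset$ otherwise, so $E(a)(x) = (\code{a} \cdot X_a)$ and $E(a)(y) = \est$ for $y \neq x$. After substitution, $C(a) = \bigl(x : (\code{a} \cdot T(a))\bigr)$, making $P(a)$ exactly an instance of the $\ax$-rule.

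\emph{Abstraction case: $t(a) = \lambda x$.} Writing $d = \rdeg(a) \geqs 1$, position $a \cdot 0$ is a non-zero position with $\rdeg(a \cdot 0) = d - 1$ and $\overline{a \cdot 0} = \ra$. Comparing the definitions of $\Cal(a)$ and $\Cal(a \cdot 0)$, one sees that $\Cal(a) = E(a \cdot 0)(x) \rightarrow \Cal(a \cdot 0)$; after substitution this yields $T(a) = C(a \cdot 0)(x) \rightarrow T(a \cdot 0)$. For the contexts, $A_a(x) = \eset$ (since every $x$-occurrence below $a$ is captured by the $\lambda x$ at $a$), while $A_a(y) = A_{a \cdot 0}(y)$ for $y \neq x$; hence $C(a) = C(a \cdot 0) \setminus x$. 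This is exactly the $\abs$-rule conclusion.

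\emph{Application case: $t(a) = @$.} Here $a \cdot 1 \in A$, and we let $K := \ArgTr_A^1(a) = \set{k \geqs 2 \mid a \cdot k \in A}$ be the argument tracks present in $A$. The key type-theoretic identity, obtained by matching $\Rst_i(a \cdot 1)$ with $\Rst_{i-1}(a)$ for $i \geqs 2$ and computing $\Rst_1(a \cdot 1) = (k \cdot X_{a \cdot k})_{k \in K}$, gives $\Cal(a \cdot 1) = (k \cdot X_{a \cdot k})_{k \in K} \rightarrow \Cal(a)$ and hence $T(a \cdot 1) = (k \cdot T(a \cdot k))_{k \in K} \rightarrow T(a)$. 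For contexts, one checks that for every $y \in \TermV$ the set $A_a(y)$ decomposes \emph{disjointly} as $A_{a \cdot 1}(y) \uplus \biguplus_{k \in K} A_{a \cdot k}(y)$ (no occurrence of $y$ lies under two distinct branches of the application), so that $E(a)(y) = E(a \cdot 1)(y) \uplus \biguplus_{k \in K} E(a \cdot k)(y)$; the injectivity of $\code{\cdot}$ ensures there is no track conflict. After substitution, $C(a) = C(a \cdot 1) \uplus \biguplus_{k \in K} C(a \cdot k)$. This matches the $\app$-rule.

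The main obstacle is the bookkeeping in the application case: one must track the shift between the arrow indices in $\Cal(a)$ and in $\Cal(a \cdot 1)$ and verify that the correct indeterminates $X_{a \cdot k}$ appear at top level. Once that identity is established the remaining cases are routine, and verifying disjointness of the context decomposition reduces to the observation that the binder-respecting sets $A_a(y)$ partition naturally along the tree structure. Since every node of $P$ is thereby the conclusion of a valid rule applied to its children, $P$ is indeed a derivation.
\end{proofsketch}
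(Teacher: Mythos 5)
Your proof follows the same route as the paper's own (two-line) proof: a case analysis on $t(a)$ checking that each node of $P$ is a correct instance of $\ax$, $\abs$ or $\app$, with the key point being that the indeterminates $X_{a'}$ sit on the tracks $\code{a'}$ so that types, contexts and argument tracks match up, and injectivity of $\code{\cdot}$ rules out track conflicts; your computations in the abstraction and application cases are correct. One caveat: in the axiom case the claim that $t(a)=x$ forces $a\in\rA$ is false --- a head variable applied to $q\geqs 1$ arguments sits at a \emph{partial} position $a=\ra\cdot 1^{q}$, so $\ttT(a)$ is then the arrow type obtained by substitution from $\Rst_1(a)\rew\cdots\rew\rT(\ra)$ rather than $\rT(a)$ (which is only defined on $\rA$); fortunately the only fact your case actually uses, namely $\ttC(a)(x)=(\code{a}\cdot\ttT(a))$, holds regardless, so this is a wrong justification rather than a gap (the same remark applies to the claim in the abstraction case that $a\cdot 0$ is a non-zero position, which fails when $\rdeg(a)=1$).
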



\begin{proof}
  Let $a\in A$. Whether $t(a)$ is $x$, $\lambda{x}$ or $\symbol{64}$, we check the associated rule has been correctly applied.  Roughly, this comes from the fact that the variable $X_{a'}$ is ``on the good track'' (\ie $\code{a'}$) in $\ttE(a)(x)$, as well as in $\Rst_i(a)$, thus allowing to retrieve correct typing rules.
\end{proof}

We then call the derivation $P$ built above the \textbf{natural extension} of the pair $(A,\,\rT)$. Natural extension give all the possible quantitative derivations typing a normal form. For our purpose, they also give:

\begin{lemma}
  \label{lem:NF-uf-typable} 
A normal form $t$ is unforgetfully typable.
\end{lemma}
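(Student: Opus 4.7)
The plan is to apply Lemma~\ref{l:NF-are-typable} with a carefully chosen support candidate and type assignment at unconstrained positions. I take $A := \supp{t}$ viewed as a subset of $\bbN^*$ that uses only the argument track $2$. This $A$ is a support candidate: whenever $a' \in A$ and $a \prec a'$ with $\ovla \in \supp{t}$, the word $a$ can only contain tracks in $\set{0,1,2}$ (since $a$ extends a common prefix of $a, a'$ by $0$s and $1$s), so $a = \ovla \in \supp{t} = A$. I then pick $\rT(a) := \tv$, a fixed type variable, for every unconstrained $a \in \rA$, and let $P$ be the natural extension of $(A, \rT)$, which is an $\ttS$-derivation typing $t$ by Lemma~\ref{l:NF-are-typable}. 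What remains is to prove that the concluding judgment $\juCtt$ of $P$ is unforgetful.

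The key observation is that every type $\ttT(a)$ produced by the natural extension is an arrow chain $G_1 \rew \cdots \rew G_n \rew \rT(\ra)$ ending in a type variable. When $a$ is an axiom position (unconstrained or partial with $t(a)$ a variable), each $G_i = \Rst_i(a)$ is non-empty (it contains the component on track $2$), because we took care to type every argument in $A$. When $a$ is a non-zero (abstraction) position, each $G_i = \ttE(a\cdot 0^i)(x_i)$ equals $\est$ precisely when the abstracted variable $x_i$ has no occurrence in the body at position $a\cdot 0^i$; this is the sole source of $\est$ sub-sequences in the entire derivation.

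I would then prove, by mutual coinduction, the following two invariants:
\begin{itemize}
\item[$(\mathcal{A})$] for every axiom position $a_0 \in A$, $\est$ does not occur negatively in $\ttT(a_0)$;
\item[$(\mathcal{B})$] for every non-zero position $a \in A$, $\est$ does not occur positively in $\ttT(a)$.
\end{itemize}
The step for $(\mathcal{A})$ unfolds $\ttT(a_0) = \Rst_1(a_0) \rew \cdots \rew \rT(\ra_0)$: the polarity rules give $N(\ttT(a_0)) = \bigvee_j \bigvee_c P(\ttT(c))$, where $c$ ranges over the argument positions appearing in the non-empty $\Rst_j(a_0)$; each such $\ttT(c)$ is either $\tv$ (if $c$ is unconstrained) or, by the coinductive hypothesis $(\mathcal{B})$, satisfies $P = \text{false}$. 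The step for $(\mathcal{B})$ is symmetric: $P(\ttT(a)) = \bigvee_i N(G_i)$, where either $G_i = \est$ (so $N(G_i) = \text{false}$ by the sequence rule, since $\est$ has no components) or $G_i$ is a non-empty sequence whose components are axiom types in class $(\mathcal{A})$.

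To conclude: $\ttT(\epsi)$ is either $\tv$ (when $\epsi$ is unconstrained, e.g.\ $t$ starts with an application or is a single variable) or a class-$(\mathcal{B})$ type (when $\epsi$ is non-zero), so $\est$ does not occur positively in $\ttT(\epsi)$; and for each free variable $x$ of $t$, the sequence $\ttC(\epsi)(x) = (\code{a_0}\cdot \ttT(a_0))_{a_0 \in A_\epsi(x)}$ has class-$(\mathcal{A})$ components, so $\est$ does not occur negatively in it. The main subtlety I foresee is setting up the mutual coinductive invariant so that it propagates correctly through the possibly infinite types produced by the natural extension; the key point that makes the alternation work is precisely that argument sequences $\Rst_j$ are never empty, which is why the support $A$ was chosen to type every argument.
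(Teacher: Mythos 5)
Your construction is exactly the paper's proof: take $A=\supp{t}$, assign the type variable $\tv$ at every unconstrained position, and invoke Lemma~\ref{l:NF-are-typable} to obtain the natural extension. The mutual polarity invariants $(\mathcal{A})$/$(\mathcal{B})$ you add correctly spell out the verification of unforgetfulness that the paper leaves implicit (in particular the observation that with $A=\supp{t}$ every argument sequence $\Rst_i$ is non-empty, so empty sequences arise only from vacuous abstractions and hence only negatively).
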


\begin{proof}
We set $A=\supp {t}$ and $\rT(a)=\tv$ for each unconstrained position (where
$\tv$ is a type variable). Then, the extension $P$ of $(A,\,\rT)$ is an unforgetful derivation typing $t$.
\end{proof}



\subsection{Approximability of Normal Derivations} 
\label{ss:nf-approx}

In this section, we prove that (1) for derivations typing  001-normal forms, it is enough to be quantitative to be valid  and that (2) as a corollary, all 001-normal forms are approximably and unforgetfully typable:
when we truncate a derivation, we must keep all equinecessary positions. Thus, when we approximate $P$ into $\fP$, if we keep $\p\in \bisupp{P}$, we must keep all the bipositions $\p'$ which are equinecessary with $\p$ and we need to prove that this can be finitarily done with any \textit{finite} subset $\oB$ of the bisupport of a derivation typing a normal form.
To prove this, we define the rank of a position as follows:

\begin{definition}
  \label{def:rank-max-width-depth}
  Let $a\in \bbN^*$. The \textbf{rank} of $a$, denoted $\rk{a}$ is defined by
  $\rk{a}=\max(\ad{a},\max(a))$. 
\end{definition}

\noindent Thus, $\rk{a}$ bounds the width and the applicative depth of $a$. 
For instance, $\rk{0^6\cdot 2^5} = 5$ and $\rk{2\cdot 8\cdot 3}=8$.

\begin{lemma}
  \label{lem:q-deriv-NF-approx}
  If $P$ is a quantitative derivation typing a 001-normal form $t$, then $P$ is approximable.
\end{lemma}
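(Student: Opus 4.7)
The proof will verify the characterization of approximability given by Lemma~\ref{lem:Bo-appox-charac}: given any finite $\oB \subseteq \bisupp{P}$, I will construct a finite derivation $\fP \leqs P$ such that $\oB \subseteq \bisupp{\fP}$. Since $P$ is quantitative, Observation~\ref{obs:red-to-right} allows us to replace each left biposition in $\oB$ by an equinecessary right biposition; it therefore suffices to handle a finite set of right bipositions, any approximation containing which will automatically contain the original $\oB$ by equinecessity.

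The core idea is to exploit the natural extension structure from Section~\ref{ss:nat-ext}: since $t$ is a 001-normal form, $P$ coincides (as a labelled tree) with the natural extension of $(A, \rT)$, where $A = \supp{P}$ and $\rT$ is the restriction of $\ttT^P$ to unconstrained positions. I will build $\fP$ as the natural extension of a pair $(A^*, \rT^*)$, where $A^* \subseteq A$ is a \emph{finite} support candidate and $\rT^*$ assigns to each unconstrained position in $\rA^* := A^* \cap \rA$ a finite subtree of $\rT$ at that position.

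To select $A^*$ and $\rT^*$, I will trace each $(a, c) \in \oB$ through the natural extension formula $\ttT^P(a) = \Cal(a)[\ttT^P(a')/X_{a'}]_{a'}$. The position $c$ either lies within the shallow skeleton $\Cal(a)$ (yielding constraints on which positions must appear in $A^*$, and on which tracks of which sequences must be retained), or it enters a substituted subterm $\ttT^P(a')$ at a strictly shorter biposition $(a', c')$; since $X_{a'}$ occurs at depth $\geqs 1$ in $\Cal(a)$, each step strictly decreases the remaining length of $c$, so the iteration terminates in at most $|c|$ steps and eventually reaches a biposition at an unconstrained position, contributing to $\rT^*$. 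I then take $A^*$ to be the union of all positions encountered during tracing, further closed under the $\prec$-relation within each head normal form chunk visited so that $A^*$ is a valid support candidate; $\rT^*$ is defined as the union of the required type positions at each unconstrained location.

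The main obstacle will be showing finiteness of this closure. This will rest on two structural facts: first, every head normal form chunk of the 001-normal form $t$ has finite order $p$ and arity $q$, so the $\prec$-closure within a single chunk contributes only finitely many backbone positions; second, the natural extension formula is well-founded in the length of $c$, so the tracing from a single biposition produces a finite dependency set. A subtle point is that contexts $\ttE(a)(x)$ in $P$ can be infinite sequences (e.g., when a free variable occurs infinitely often in $t$), but restricting to $A^*_a(x) := A_a(x) \cap A^*$ yields finite contexts in $\fP$; likewise the argument families at each $\app$-node of $A^*$ are restricted to the finitely many tracks reached by the tracing. Finally, Lemma~\ref{l:NF-are-typable} ensures that the natural extension of $(A^*, \rT^*)$ is a valid derivation, and by construction it is finite, satisfies $\fP \leqs P$, and has $\oB \subseteq \bisupp{\fP}$, proving that $P$ is approximable.
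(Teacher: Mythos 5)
Your proposal is correct and takes essentially the same route as the paper's proof: both identify the quantitative derivation $P$ with the natural extension of $(\supp{P},\rT)$, trace each biposition of $\oB$ back through the productive substitution $\ttT(a)=\Cal(a)[\ttT(a')/X_{a'}]$ to a finite set of calling positions (well-founded because the indeterminates sit at depth $\geqs 1$), and conclude by taking the natural extension of the resulting finite data as the approximant $\fP\leqs P$ containing $\oB$. The only real difference is packaging: the paper cuts uniformly at rank $n$ (keeping exactly the positions and type-positions of applicative depth and track value $\leqs n$, with $n$ the maximum called rank $\cad{\p}$ over $\p\in\oB$), so that finiteness of $\fP$ and the closure conditions you must check separately (finite spines of head-normal-form chunks for the $\prec$-closure, arrow-closure of truncated types in $\Types^{001}$, surviving argument tracks) all follow from one lemma about rank-$n$ truncations, whereas your minimal dependency closure yields a smaller but more laboriously justified approximant.
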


\begin{proofsketch}
  Let $P\tri \juCtt$ be a quantitative derivation typing a 001-normal form $t$.
  Let us prove that $P$ is approximable. For that, we will build, 
  for any finite subset $\supo B$ of $\bisupp {P}$,  a finite derivation $\supf P \leqslant P$ containing $\oB$. We will proceed as follows:
\begin{itemize} 
\item We choose a finite support candidate $\supf A\subseteq A$ of $t$ \ie we will discard all positions in $A$ but finitely many.
\item Then, we choose, for each $\ra\in \fA$, a finite approximation $\supf \ttT(\ra)$ of $\ttT(\ra)$. 
\end{itemize} 
The natural extension of $(\supf A,\supf T)$ will be a finite derivation $\supf P\leqslant P$ typing $t$ such that $\oB \subeq \supp{\fP}$. \\

Actually, we define $P_n$, the \textbf{rank $n$ truncation} of $P$ as follows:
\begin{itemize}
\item We define $A_n$ by discarding every position $a\in A$ such that $\ad{a}>n$ or $a$ contains a track $> n$, \ie we set  $A_n=\set{a\in A\,|\, \rk{a}\leqs n}$. Since $t\in \Lamzzu$, $A$ does not have infinite branch of finite applicative depth and thus, $A_n$ is a \textit{finite} set of positions.
\item For each $\ra\in \rA_n$, we define $\rT_n(\ra)$ by discarding every $c\in \supp{\ttT(\ra)}$ such that $\ad{c}>n$ or $c$ has a track $>n$, \ie we restrict define $\rT_n$ so that
  $\supp{\rT_n(\ra)}=\set{c\in \supp{\ttT_n(\ra)}\,|\, \rk{c}\leqs n  }$). Since $\ttT_n(\ra)\in \Types$ (and not in $\Types^{111}-\Types^{001}$), $\rT_n(\ra)$ is a \textit{finite} type.
\end{itemize}
We define now $P_n$ as the natural extension of $(A_n,\, \rT_n)$.  Using the quantitativity of $P$, we may prove then that, for all $\oB\subseteq \bisupp{P}$, there exists a large enough $n$ such that $\oB \subseteq \bisupp{P_n}$. The idea is the following: as we have seen, each biposition $\p$ may ``call'' a chain of deeper bipositions. However, the set of bipositions called by $\p$ is finite and we may define the \textbf{called rank} $\cad{\p}$ of $\p$ as the maximal applicative depth of a biposition called by $\p$. The called rank of $\p$ can be understood as the maximal rank of a biposition which is equinecessary with $\p$. A crucial point is that (1) since $P$ is quantitative and types a normal form, for all $\p\in \bisupp{P}$, the called rank of $\p$ is finite (2) $P_n$ contains all the bipositions of $P$ whose called rank is $\leqs n$ (see Appendix~\ref{a:normal-forms} for a detailed proof of these two claims). Then, since $\oB$ is finite, we define $n$ as $\max(\set{\cad{\p}\,|\,\p \in \oB})$  and we have $\oB\subeq \bisupp{P_n}$. This proves that $P$ is approximable.
\end{proofsketch}

\noindent  Lemmas~\ref{lem:NF-uf-typable} and \ref{lem:q-deriv-NF-approx} give immediately the following proposition:

\begin{proposition}
\label{prop:NF-uf-approx-typable}
Every 001-normal form is approximably and unforgetfully typable in system $\ttS$.
\end{proposition}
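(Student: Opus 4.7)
The plan is to prove Proposition~\ref{prop:NF-uf-approx-typable} simply by combining Lemma~\ref{lem:NF-uf-typable} and Lemma~\ref{lem:q-deriv-NF-approx}, after observing that the derivation produced by the first lemma is quantitative by construction, so that the hypothesis of the second lemma is automatically satisfied.

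First, given a 001-normal form $t$, I would invoke the construction inside Lemma~\ref{lem:NF-uf-typable}: take $A = \supp{t}$ as support candidate, and assign an atomic type $\rT(\ra) = \tv$ at every unconstrained position $\ra \in \rA$. The natural extension $P$ of $(A,\rT)$ from \Sec~\ref{ss:nat-ext} is then an $\ttS$-derivation $P \tri \juCtt$ with $C$ and $T$ built coinductively from the $\ttE(a)(x)$ and $\Cal(a)$ calls. Since the only types put in codomain positions are either unconstrained atoms $\tv$ (manifestly $\est$-free positively) or arrow types rebuilt from the same recipe, a straightforward verification (already sketched in the proof of Lemma~\ref{lem:NF-uf-typable}) shows that $\est$ never occurs positively in $T$ nor negatively in any $C(x)$, so $P$ is unforgetful.

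Second, I would observe that by the very definition of natural extension, each context entry is $\ttC(a)(x) = \ttE(a)(x)[\ttT(a')/X_{a'}]_{a' \in A_a(x)}$ with $\ttE(a)(x) = (\code{a_0}\cdot X_{a_0})_{a_0 \in A_a(x)}$. Since $A_a(x)$ is exactly $\Axa(x)$ and $\code{a_0}$ plays the role of the axiom track $\tr{a_0}$, this equality is precisely Definition~\ref{def:quant-S-deriv}: the derivation $P$ is quantitative (no phantom axiom tracks occur, because it has been built bottom-up from its axioms).

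Finally, applying Lemma~\ref{lem:q-deriv-NF-approx} to $P$, which is a quantitative derivation typing a 001-normal form, yields that $P$ is approximable. Hence $P$ is both unforgetful and approximable, which establishes the proposition. I do not expect any real obstacle here: the only subtle part was the proof of Lemma~\ref{lem:q-deriv-NF-approx} itself (controlling the rank/called-rank of bipositions so as to build finite approximations), but that work has already been done; here we only need to check that the natural extension produced in Lemma~\ref{lem:NF-uf-typable} feeds correctly into Lemma~\ref{lem:q-deriv-NF-approx}, i.e.\ that it is quantitative, which is immediate from the construction.
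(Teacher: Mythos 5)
Your proof is correct and is essentially the paper's own argument: the paper derives Proposition~\ref{prop:NF-uf-approx-typable} immediately from Lemma~\ref{lem:NF-uf-typable} (natural extension of $(\supp{t},\,\rT)$ with $\rT(\ra)=\tv$, which is unforgetful) together with Lemma~\ref{lem:q-deriv-NF-approx}. Your explicit check that the natural extension is quantitative (because its contexts are assembled exactly from the axiom positions $A_a(x)$ with tracks $\code{a_0}$, matching Definition~\ref{def:quant-S-deriv}) is a point the paper leaves implicit, but it is the right justification and adds nothing beyond the paper's route.
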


\subsection{The Infinitary Subject Expansion Property}
\label{s:infty-expansion}


In \Sec~\ref{ss:inf-subj-red-S}, we defined the derivation $P'$ resulting
from a productive reduction path from any (approximable or not) derivation
$P$. Things do not work so smoothly for subject expansion when we try
to define a good derivation $P$ which expands a derivation $P'$
typing the limit of a productive reduction path. Indeed, approximability play a central role \wrt expansion. Assume that:
\begin{itemize} 
\item $t\bredfty t'$. Say by means of the productive reduction path $t=t_0 
\breda{b_0} t_1\breda{b_1} \ldots t_n \breda{b_n} t_{n+1} \bred \ldots$ with 
$b_n\in \set{0,~ 1,~ 2}^*$ and $\ad{b_n}\longrightarrow \infty$.
\item $P'$ is an approximable derivation of $C'\vdash t':\,T'$.
\item We fix an arbitrary injection $a\mapsto \code{a}$ from $\bbN^*$ to $\bbN\setminus\set{0,1}$.
\end{itemize} 
We want to show that there exists a derivation $P$ concluding with $\ju{C'}{t:T'}$. We will follow closely the ideas presented in \Sec~\ref{ss:typ-inf-nf-informal}, but now we can implement them formally. A complete proof is given in Appendix~\ref{a:expans}.


 Let $\fP' \leqs P'$ be a finite approximant of $P'$. 
Since $\fP'$ is finite, for a large enough $n$, $t'$ can be replaced by $t_n$ inside $\fP'$, according to Lemma~\ref{lem:subject-substitution}: we set $\fP_n=\fP'[t_n/t']$, which is a finite derivation typing $t_n$. But when $t_n$ is typed instead of $t'$, we can perform $n$ steps of $\code{\cdot}$-expansion (starting from $\fP_n$) to obtain a finite derivation $\fP$ typing $t$, that we denote $\Exp(\fP')$ ($\Exp(\fP')$ implicitly depends on $\code{\cdot}$ and the productive path, but we omit them in the notation).

Notice then that the definition of $\Exp(\fP')$  is sound because $\Exp(\fP')$ does not depend on the value of $n$, as long as $n$ is great enough (meaning that no reduction takes place in $\supp{\fP'}$ after rank $n$, which ensures in particular
that $t_k$ and $t'$ coincide on $\supp{\fP'}$), because, when we expand a redex in an untyped part of a derivation, it is as if we perform subject substitution.

Now, let us prove that the set $\scrD := \set{\Exp(\fP')\;|\;  \fP' \in \Approx{P'}}$ (the set of the expansions of the finite approximations of $P'$) is a directed set.
Assume $\fP'_1 \leqs \fP'_2\leqs P'$. Let $n$ big enougth so that no reduction takes place in $\supp{\fP'_2}$ after rank $n$. 
We set $\fP_{i,n}:=\fP'_1[t_n/t']$ for $i=1,2$, so that $\Exp(\fP_i)$ is obtained from $\fP_{i,n}$ after $n$ steps of $\code{\cdot}$-expansion. 
By monotonicity of uniform expansion (Lemma~\ref{lem:red-monot}), we have $\Exp(\fP_1)\leqs \Exp(\fP_2)$. Thus, $\Exp$ is monotonous.
Since $\Approx{P'}$ is directed and $\Exp$ is monotonous, $\scrD$ is directed.

Then, by Theorem~\ref{th:cpo}, we define $P$ as the join of the $\fP$ when  $\fP'$ ranges over $\Approx{P'}$. It is immediate to see that the context and the type concluding $P$ are the same as those of $P'$, because $\fP'$ and $\Exp(\fP')$ have the same contexts and types for all $\fP'\leqs P'$. 
This yields a derivation satisfying the desired properties.

\begin{proposition}
  \label{prop:infinite-subject-expansion}
  Assume $t\bredfty t'$ and $P'\tri \ju{C'}{t':T'}$.\\
  If $P'$ is approximable, then there exists an approximable derivation $P$ such that $P\tri \ju{C'}{t:T'}$.
\end{proposition}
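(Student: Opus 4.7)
\begin{proofsketch}
The plan is to follow the three-step scheme informally outlined just above the statement: truncate $P'$ into finite approximants, expand each approximant through the productive path (which is possible because truncations are finite and hence reachable by finitely many expansion steps via subject substitution), and finally take the join of this directed family of expansions to build $P$. Throughout I fix the injection $\code{\cdot}:\bbN^*\to\Nmzo$ used in Proposition~\ref{prop:subj-exp-one-step-S} and the subsequent \emph{uniform} expansion operator $\Exp_b(\_,\code{\cdot},\_)$ from \Sec~\ref{ss:one-step-sr-se}; determinism of reduction and uniformity of expansion together guarantee the monotonicity needed to assemble the join.

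First I would define, for each $\fP'\in \Approx{P'}$, a finite derivation $\Exp(\fP')$ typing $t$. Since $\fP'$ is finite, pick $n_{\fP'}\in\bbN$ large enough that every $b_m$ with $m\geqs n_{\fP'}$ satisfies $\ad{b_m}$ strictly greater than the maximal rank of a position in $\supp{\fP'}$; by Lemma~\ref{lem:limit-prod-paths-hp} this guarantees $t_{n_{\fP'}}(a)=t'(a)$ on $\supp{\fP'}$, so subject substitution (Lemma~\ref{lem:subject-substitution}) applied to $\fP'$ yields a finite derivation $\fP_{n_{\fP'}}:=\fP'[t_{n_{\fP'}}/t']$ typing $t_{n_{\fP'}}$ with the same context and type as $\fP'$. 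Now iterate the uniform expansion operator $n_{\fP'}$ times along the path $t=t_0\breda{b_0}\ldots\breda{b_{n_{\fP'}-1}}t_{n_{\fP'}}$, yielding a finite derivation $\Exp(\fP')$ typing $t$. A routine check, using the fact that expansion inside an untyped subterm of a derivation amounts to subject substitution, shows that $\Exp(\fP')$ is independent of the choice of $n_{\fP'}$ (provided it is large enough), so the construction is well-defined.

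Next I would verify that $\Exp$ is monotonous, and hence that $\scrD:=\{\Exp(\fP')\mid \fP'\in\Approx{P'}\}$ is directed. Given $\fP'_1\leqs \fP'_2\leqs P'$, pick a common $n$ large enough for both. Subject substitution is clearly monotonic on bisupport, so $\fP'_1[t_n/t']\leqs \fP'_2[t_n/t']$. Applying Lemma~\ref{lem:red-monot}(3) $n$ times, using the same $\code{\cdot}$ at every stage so that the \emph{same} axiom tracks are introduced, gives $\Exp(\fP'_1)\leqs\Exp(\fP'_2)$. Since $\Approx{P'}$ is directed by Theorem~\ref{th:cpo}, so is $\scrD$. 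Then set $P:=\sup\scrD$, which is a derivation by Theorem~\ref{th:cpo}. Because subject substitution and uniform expansion both preserve the conclusion, every $\Exp(\fP')$ concludes with $\ju{\ttC^{\fP'}(\epsi)}{t:\ttT^{\fP'}(\epsi)}$, and passing to the join yields $P\tri \ju{C'}{t:T'}$, as required.

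The main obstacle will be approximability of $P$, but in fact it is almost immediate from the construction. By definition $P=\sup\scrD$ and $\scrD$ is a directed family of \emph{finite} derivations, so $\bisupp{P}=\cup_{\fP'}\bisupp{\Exp(\fP')}$. Given any finite $\oB\subseteq \bisupp{P}$, each biposition of $\oB$ lies in some $\bisupp{\Exp(\fP'_i)}$, and by directedness there is a single $\Exp(\fP'_0)\in\scrD$ containing all of $\oB$; hence $\Exp(\fP'_0)$ is a finite approximant of $P$ witnessing Lemma~\ref{lem:Bo-appox-charac}. The only delicate point is that the finite derivations $\Exp(\fP')$ really are approximants of $P$, which follows from the same uniform choice of $\code{\cdot}$: all axiom tracks introduced along the expansion path are determined by the position of the created rule, so the bipositions assembled across the family are pairwise compatible, and the join is consistent.
\end{proofsketch}
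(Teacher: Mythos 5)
Your proof is correct and follows essentially the same route as the paper: truncate $P'$ into finite approximants, substitute $t_n$ for $t'$ via subject substitution, apply $n$ steps of uniform expansion, check independence of $n$ and monotonicity so that the family of expansions is directed, and take its join. The only difference is that you spell out explicitly why the join is approximable (each $\Exp(\fP')$ is a finite approximant of $P$ and directedness lets a single one absorb any finite set of bipositions), a point the paper leaves implicit.
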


Since infinitary subject reduction and expansion (for productive reduction paths) preserve unforgetful derivations, it yields our main characterization theorem :

\begin{theorem}[Infinitary Normalization Theorem]
  \label{th:charac-WN-S}\mbox{}\\
  Let $t$ be a 001-term. Then $t$ has a (possibly infinite) normal form in $\Lamzzu$ if and only if $t$ is typable by means of an approximable unforgetful derivation, and if and only iff $t$ is hereditarily head normalizing.
\end{theorem}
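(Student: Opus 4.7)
The plan is to close the cycle of implications (2) $\Rightarrow$ (3) $\Rightarrow$ (1) $\Rightarrow$ (2), where (1) is ``$t$ has a normal form in $\Lamzzu$'', (2) is ``$t$ is approximably and unforgetfully typable'', and (3) is ``$t$ is hereditarily head normalizing''. Each implication reduces to a statement already established in the preceding sections, so the main work is to assemble them correctly and to check that unforgetfulness is preserved where needed.

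First I would do (2) $\Rightarrow$ (3): this is exactly Proposition~\ref{prop:uf-typ-implies-wn}, which was obtained by iterating one-step subject reduction (Proposition~\ref{prop:subj-red-one-step-S}) together with Lemma~\ref{lem:approx-red} (approximability is stable under reduction) and Lemma~\ref{lem:uf-hereditary} (under an unforgetful typing, all head arguments of a HNF are themselves approximably and unforgetfully typed). So nothing new is required.

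Next, (3) $\Rightarrow$ (1) is essentially definitional: if $t$ is hereditarily head normalizing then by Observation~\ref{obs:charac-hhn-ad}, hereditary head reduction from $t$ yields terms $t_d$ with $\adr{t_d}\geqs d$ for arbitrarily large $d$. Either this process halts, giving a finite normal form, or it produces an infinite reduction sequence whose contracted redexes lie at unbounded applicative depth; in the second case the sequence is productive in the sense of Definition~\ref{def:scrs-kj}, and its limit (Lemma~\ref{lem:limit-prod-paths-hp}) has no redex at any finite applicative depth, hence is a 001-normal form. In either case $t\bredfty t'$ with $t'\in \mathtt{NF}_{\infty}$, giving (1).

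The core implication is (1) $\Rightarrow$ (2). Assume $t\bredfty t'$ with $t'$ a 001-normal form. By Proposition~\ref{prop:NF-uf-approx-typable}, $t'$ admits an approximable and unforgetful derivation $P'\tri \ju{C}{t':T}$ (this is precisely the payoff of Sections~\ref{ss:support-candidates}--\ref{ss:nf-approx}, where natural extensions provide the typing and the rank-$n$ truncation argument together with quantitativity yields approximability). By infinitary subject expansion (Proposition~\ref{prop:infinite-subject-expansion}), applied along the productive path $t\bredfty t'$, there is an approximable derivation $P\tri \juCtt$ concluding with the \emph{same} context $C$ and type $T$. Since unforgetfulness is a property of the concluding judgment only, $P$ is unforgetful as well, which gives (2) and closes the cycle.

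The step I expect to carry the whole argument is (1) $\Rightarrow$ (2). The hard work is already packaged into the two invoked propositions: Proposition~\ref{prop:NF-uf-approx-typable} requires checking that every quantitative derivation typing a 001-normal form is a join of its finite rank truncations (the delicate point being that every biposition has finite called rank); and Proposition~\ref{prop:infinite-subject-expansion} requires the whole machinery of uniform expansion, subject substitution along the productive path, and taking joins of directed families of finite approximants in the CPO of Theorem~\ref{th:cpo}. Once these are in hand, the final theorem is merely the assembly described above.
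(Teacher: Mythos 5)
Your proof is correct and follows essentially the same route as the paper: the same cycle of implications (typable $\Rightarrow$ HHN $\Rightarrow$ has a normal form $\Rightarrow$ typable), closed by Proposition~\ref{prop:uf-typ-implies-wn}, the observation that hereditary head reduction produces a productive path, and Proposition~\ref{prop:NF-uf-approx-typable} combined with infinitary subject expansion. If anything, you are slightly more careful than the paper's own write-up in citing the proposition that gives \emph{approximable} (and not merely unforgetful) typability of normal forms, and in noting explicitly that unforgetfulness transfers to the expanded derivation because it only depends on the concluding judgment.
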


\begin{proof}
We follow the proof scheme of Fig.~\ref{fig:its-fundamental-diag}.
  \begin{itemize}
    \item If $t$ is typable by means of an approximable and unforgetful derivation, then $t$ is HHN by Proposition~\ref{prop:uf-typ-implies-wn}.
    \item If $t$ is HHN, then $t$ is obviously 001-WN.
    \item If $t$ is 001-WN, let $t'$ be the 001-normal form of $t$. By Lemma~\ref{lem:NF-uf-typable}, there is an approximable and unforgetful derivation $P'$ concluding with $P'\tri \juCtpt$ for some $C$ and $T$. By Proposition~\ref{prop:infinite-subject-expansion}, there is an approximable derivation $P\tri \juCtt$. Since $P'$ is unforgetful, $P$ also is. 
  \end{itemize}
 \end{proof}

\begin{remark}
Theorem~\ref{th:charac-WN-S} proves that hereditary head reduction is asymptotically complete for infinitary weak normalization: it computes the 001-normal form of a 001-term, whenever there is a productive reduction path from this term to a 001-normal form.
\end{remark}

\section{Conclusion}
\label{conclusion}




We have provided a type-theoretic characterization of the set of hereditary head normalizing terms (the terms whose \Bohm\ tree does not contain $\bot$). This characterization is stated in  Theorem~\ref{th:charac-WN-S}. Hereditary head normalization was proved \textit{non}-recursive enumerable by Tatsuta~\cite{Tatsuta08}, and thus, impossible to characterize with an inductive type system.

To obtain this characterization, we had then to build a \textit{coinductive} type system, system $\ttS$,  based on intersection types, which ensure not only soundness (the implication: if a term is typable, then it has some expected semantic property) as in most type systems, but also completeness (the converse implication: if a term has some expected semantic property, then it is typable). More precisely, our characterization uses \textit{non-idempotent} intersection type theory, introduced in \cite{CoppoDV80a} and developped on by Gardner and de Carvalho~\cite{Gardner94,Carvalho18}.

System  $\ttS$ enjoys the main properties of non-idempotent intersection type system: it quantitative (in the finite-case), syntax-directed, relevant and has a simple reduction combinatorics. Its main novel features are:
\begin{itemize}
\item It is infinitary, meaning that types and derivations are \textit{coinductively} generated.
\item  It is endowed a validity condition called \textit{approximability}. An infinitary derivation is approximable when it is the join of its finite \textit{approximations}. Approximability is used to discard \textit{unsound coinductive} derivations.
\item It is based on \textit{sequence types}, which are annotated multisets  allowing \textit{tracking} and retrieving \textit{determinism} (thus, in this respect, system $\ttS$ is rigid). Sequence types are necessary to define approximability.
\end{itemize}

%

This contribution shows that non-idempotent intersection type theory extends to infinitary settings. The characterization presented here has been extented to obtain one of the set of \textit{hereditary permutators}~\cite{VialFSCD19}, another set of $\lam$-terms that Tatsuta~\cite{Tatsuta08b} proved \textit{not} to be characterizable in an inductive way, because it also captures an asymptotic behavior.

These results the way for other semantic characterizations in other infinitary settings, in particular while considering Lévy-Longo trees and Berarducci trees, which provide semantics refining \Bohm\ trees.

It also seems that approximability, which is defined here in terms of directed sets, should admit a categorical generalization and could help define infinitary models in a more generic way.

\bibliographystyle{alpha}
\bibliography{biblio-full.bib}

\newpage

\tableofcontents

\newpage

\appendix

\subsection*{Contents of the appendices.}\mbox{}\label{disc:app-pres}
\begin{itemize}
\item In Appendix~\ref{s:expanding-pi-prime-n}, we give the expansions $\Pi$ and $\Pi_n$ (typing $\cuf$) of the derivations $\Pi'$ and its finite truncations $\Pi'_n$ typing $\fom$, which we studied in \Sec~\ref{ss:typ-inf-nf-informal}.
\item In Appendix~\ref{a:lattices}, we detail the proof of Theorem~\ref{th:cpo} which states that the set of $\ttS$-derivations typing a given $t$  is a directed complete semi-lattice.
\item Appendix~\ref{a:equinecessity}: we detail more the mechanisms of residuation  and equinecessity in \Sec~\ref{a:qres-Shp-formal}, which we introduced  in \Sec~\ref{s:equinecessary-bip}. This help us prove subject reduction and expansion (without hypothesis of quantitativity, contrary to \Sec~\ref{ss:sr-proof}).
In \Sec~\ref{ss:uni-subj-exp}, we prove uniform subject expansion in system $\ttS$ (Proposition~\ref{prop:subj-exp-one-step-S}). 
In \Sec~\ref{a:approx-0-cex}, using equinecessity, we prove that approximabilty cannot be expressed as a predicate that pertains only to the root judgment of a derivation, as claimed  in Remark~\ref{rk:root-approx}.
\item In Appendix~\ref{a:normal-forms}, we explain why every \textit{quantitative} derivation $P$ typing a \textit{normal form} is indeed approximable. For that, we prove the two crucial claims  of   the proof  of Lemma~\ref{lem:q-deriv-NF-approx}: (1) all bipositions of such a derivation $P$ have finite called rank and (2) the bipositions of rank $\leqs n$ define an approximation of $P$.
\item In Appendix~\ref{a:system-R}, we build $\scrR$, the infinitary version of system $\scrRo$ (\Sec~\ref{ss:system-Ro-Klop}) based on intersection types as infinite multisets. This help us formally prove in \Sec~\ref{a:rep-and-dynamics} that approximability cannot be defined when intersection types are multisets (whereas \Sec~\ref{ss:non-determinism} gives only suggestion why it is impossible).
\end{itemize}

\section{Expanding the $\Pi'_n$ and $\Pi'$}
\label{s:expanding-pi-prime-n}

\newcommand{\biglabnode}[3]{  
     \trans{ #1 }{ #2 }{ 
       \draw (0,0) node {\small #3 };
       \draw (0,0) circle (0.35);
     }
}

\begin{figure}
\begin{tikzpicture}

  \trans{-3.5}{8}{
\draw (2,-1) node {$\rho_1=\erewa$};
  
\drawlabnode{2}{0}{$\rew$}

\drawlabnode{3}{1.2}{$\tv$}
\draw (2.86,1.02) -- (2.16,0.16) ;
  
\transh{7}{
\draw (2,-1) node {$\rho_{n+1}=\mult{\rho_k}_{1\leqs k \leqs n} \rew \tv$};
  
  \drawlabnode{2}{0}{$\rew$}

\drawlabnode{3}{1.2}{$\tv$}
\draw (2.86,1.02) -- (2.16,0.16) ;

\drawlabnode{1.3}{1.2}{$\rho_n$}

\draw [dotted] (0,1.2) --++ (1,0);

\draw (1.89,0.183) -- (1.41,1) ;
\drawlabnode{-0.4}{1.2}{$\rho_2$}

\draw (1.83,0.13) -- (-0.28,1);

\drawlabnode{-1.4}{1.2}{$\rho_1$}

\draw (1.81,0.08) -- (-1.27,1);}}

\draw (0.4,6) node {The derivation $\Pi_n$ below is obtained from $\Pi^k_n$ (Fig.\;\ref{fig:expans-by-trunc-1}) after $k$ steps of expansion.};
\draw (0.4,5.5) node {The subderivation $\Psi_n$ types $\Delf$ with $\rho_n$. };


      \draw (0,4.8) node {\large $\Psi_n$};
\draw [dashed] (-3.5,4.5) --++ (7.2,0) --++ (0,-1.3) --++ (-4.1,-1.3) --++ (0,-1.9)--++ (-1.1,0) --++ (0,1.9) --++ (-2,1.3)-- cycle ;
      
      \draw (-2.8,1.3) node {\huge $\Pi_n =$ } ;
      
      \red{
      \draw [->,>=stealth] (-2.3, 3.15) -- (-2,2.9 ) ;
      \draw (-2.4,3.4) node{$[\tv]\rew \tv$} ;
      }


      \red{
        \draw [->,>=stealth] (-1.1,4.15) -- (-0.98, 3.87) ;
        \draw (-1.15,4.3) node{$\rho_{n-1}$} ;
      }

      \red{
        \draw [->,>=stealth] (1.1,4.15) -- (0.98,3.87) ;
        \draw (1.15,4.3) node{$\rho_{n-2}$} ;        
      }
      
      \red{
        \draw [->,>=stealth] (2.35,4.15) -- (2.25,3.87) ;
        \draw (2.35,4.3) node{$\rho_2$} ;
      }

   \red{
        \draw [->,>=stealth] (3.15,4.15) -- (3.05,3.87) ;
        \draw (3.15,4.3) node{$\rho_1$} ;
      }
      
      \red{
      \draw [->,>=stealth] (-1.07,0.65) -- (-1.2,0.4) ;
      \draw (-1.2,0.25) node{$\rho_n$} ;
      }

      \red{
        \draw (1.75,0.75) node[right]{$\rho_{n-1}$};
        \draw (4.65,0.75) node[right]{$\rho_2$} ;
        \draw (6.7,0.75) node[right]{$\rho_1$} ;
      }
      

      \draw (0,0) node{$\arob$} ;
      \draw (0,0) circle (0.25) ;
      \draw (-0.18,0.18) -- (-0.72,0.72) ;
      \draw (-0.9,0.9) node{$\lx$} ;
      \draw (-0.9,0.9) circle (0.25) ;   
      
      \draw (-0.9,1.15) -- (-0.9,1.55) ;
      \draw (-0.9,1.8) node{$\arob$} ;
      \draw (-0.9,1.8) circle (0.25) ;

      \draw (-1.08,1.98) -- (-1.62,2.52) ;
      \draw (-1.8,2.7) node{$f$} ;
      \draw (-1.8,2.7) circle (0.25) ;
      
      \draw (-0.72,1.98) -- (-0.18,2.52) ;
      \draw (0,2.7) node{$\arob$} ;
      \draw (0,2.7) circle (0.25) ; 
      
      \draw (-0.18,2.88) -- (-0.72,3.43) ;
      \draw (-0.9,3.6) node{$x$} ;
      \draw (-0.9,3.6) circle (0.25) ;

      \draw (0.18,2.88) -- (0.72,3.43) ;
      \draw (0.9,3.6) node{$x$} ;
      \draw (0.9, 3.6) circle (0.25) ;

      \draw [dotted] (1.25,3.6) -- (1.85,3.6) ;

      \draw (0.22,2.82) -- (1.98,3.48) ; 
      \draw (2.2,3.6) node{$x$} ;
      \draw (2.2,3.6) circle (0.25) ;

      \draw (0.24,2.77) -- (2.84,3.43) ; 
      \draw (3,3.6) node{$x$} ;
      \draw (3,3.6) circle (0.25) ;

      \draw (0.22,0.12) -- (2,0.9) ;
      \draw (2,0.9) -- (1,2.3) -- (2.9,2.3) -- cycle ;
      \draw (2,1.8) node{$\Psi_{n-1}$} ;

      \draw [dotted] (2.4,1.4) -- (4.5,1.4) ;
      
      \draw (0.26,0.06) -- (4.9,0.9) ;
      \draw (4.9,0.9) -- (5.8,2.3) -- (4,2.3) -- cycle ; 
      \draw (4.9,1.8) node{$\Psi_2$} ;

      \draw (0.26,0) -- (6.9,0.9) ;
      \draw (6.9,0.9) -- (7.8,2.3) -- (6,2.3) -- cycle ; 
      \draw (6.9,1.8) node{$\Psi_1$} ;
      
\draw (0.4,-0.7) node {\textit{Remark:} $\Pi_1$ is a bit different, it just assigns $\erewa$ to $f$.};
      
      \transv{-3.9}{
        \draw (0.4,2.1) node {\parbox{12cm}{
            \begin{center}
              Intuitively, the family $(\rho_n)_{n\geqs 1}$ is ``directed'' ($\rho_n$ is a truncation of $\rho_{n+1}$).\\
                The infinite type $\rho$ below is the ``join'' of this family.
                \end{center}
            }
        };
        
\draw (2,-0.8) node {$\rho=\mult{\rho}_{\om} \rew \tv$};
  
  \drawlabnode{2}{0}{$\rew$}

\drawlabnode{3}{1.2}{$\tv$}
\draw (2.86,1.02) -- (2.16,0.16) ;

\drawlabnode{1.3}{1.2}{$\rho$}

\draw [dotted] (0,1.2) --++ (1,0);

\draw (1.89,0.183) -- (1.41,1) ;
\drawlabnode{-0.4}{1.2}{$\rho$}

\draw (1.83,0.13) -- (-0.28,1);

\draw [dotted] (-0.7,1.2) --++(-0.7,0) ;

}

\transv{-11.9}{
 \draw (0.4,6) node {\parbox{12cm}{
            \begin{center}
              Since  $(\rho_n)_{n\geqs 1}$ is ``directed'',  $(\Psi_n)_{n\geqs 0}$ and then, $(\Phi_n)_{n\geqs 0}$ are also directed ($\Psi_n$ and $\Phi_n$ are truncations of $\Psi_{n+1}$ and $\Phi_{n+1}$ resp.).\\
Their \resp joins are then $\Psi$ and $\Pi$ below.
            \end{center}
            }
        };


      \draw (0,4.8) node {\large $\Psi$};
\draw [dashed] (-3.5,4.5) --++ (7.2,0) --++ (0,-1.3) --++ (-4.1,-1.3) --++ (0,-1.9)--++ (-1.1,0) --++ (0,1.9) --++ (-2,1.3)-- cycle ;
      
      \draw (-2.8,1.3) node {\huge $\Pi =$ } ;
      
      \red{
      \draw [->,>=stealth] (-2.3, 3.15) -- (-2,2.9 ) ;
      \draw (-2.4,3.4) node{$[\tv]\rew \tv$} ;
      }


      \red{
        \draw [->,>=stealth] (-1.1,4.15) -- (-0.98, 3.87) ;
        \draw (-1.15,4.3) node{$\rho$} ;
      }

      \red{
        \draw [->,>=stealth] (1.1,4.15) -- (0.98,3.87) ;
        \draw (1.15,4.3) node{$\rho$} ;        
      }
      
      \red{
        \draw [->,>=stealth] (2.35,4.15) -- (2.25,3.87) ;
        \draw (2.35,4.3) node{$\rho$} ;
      }

      
      \red{
      \draw [->,>=stealth] (-1.07,0.65) -- (-1.2,0.4) ;
      \draw (-1.2,0.25) node{$\rho$} ;
      }

      \red{
        \draw (1.8,0.75) node[right]{$\rho$};
        \draw (4.7,0.75) node[right]{$\rho$} ;
      }
      

      \draw (0,0) node{$\arob$} ;
      \draw (0,0) circle (0.25) ;
      \draw (-0.18,0.18) -- (-0.72,0.72) ;
      \draw (-0.9,0.9) node{$\lx$} ;
      \draw (-0.9,0.9) circle (0.25) ;   
      
      \draw (-0.9,1.15) -- (-0.9,1.55) ;
      \draw (-0.9,1.8) node{$\arob$} ;
      \draw (-0.9,1.8) circle (0.25) ;

      \draw (-1.08,1.98) -- (-1.62,2.52) ;
      \draw (-1.8,2.7) node{$f$} ;
      \draw (-1.8,2.7) circle (0.25) ;
      
      \draw (-0.72,1.98) -- (-0.18,2.52) ;
      \draw (0,2.7) node{$\arob$} ;
      \draw (0,2.7) circle (0.25) ; 
      
      \draw (-0.18,2.88) -- (-0.72,3.43) ;
      \draw (-0.9,3.6) node{$x$} ;
      \draw (-0.9,3.6) circle (0.25) ;

      \draw (0.18,2.88) -- (0.72,3.43) ;
      \draw (0.9,3.6) node{$x$} ;
      \draw (0.9, 3.6) circle (0.25) ;

      \draw [dotted] (1.25,3.6) -- (1.85,3.6) ;

      \draw (0.22,2.82) -- (1.98,3.48) ; 
      \draw (2.2,3.6) node{$x$} ;
      \draw (2.2,3.6) circle (0.25) ;


      \draw [dotted] (2.5,3.6) -- (3.1,3.6) ; 
       
      \draw (0.22,0.12) -- (2,0.9) ;
      \draw (2,0.9) -- (1,2.3) -- (2.9,2.3) -- (2,0.9) ;
      \draw (2,1.8) node{$\Psi$} ;

      \draw [dotted] (2.4,1.4) -- (4.5,1.4) ;
      
      \draw (0.22,0.12) -- (4.9,0.9) ;
      \draw (4.9,0.9) -- (5.8,2.3) -- (4,2.3) -- (4.9,0.9) ; 
      \draw (4.9,1.8) node{$\Psi$} ;

      \draw [dotted] (5.4,1.4) -- (6,1.4) ;

  }
      \end{tikzpicture}
\caption{Expanding the $\Pi^k_n$ and $\Pi'$}
\label{fig:expanding-the-Pi^k_n$}
\end{figure}
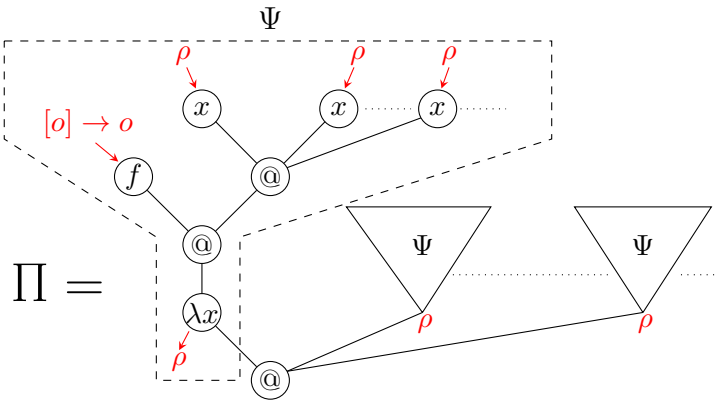

In this Appendix, we build the derivation $\Pi_n$ which are discussed in \Sec~\ref{ss:typ-inf-nf-informal} and \ref{ss:degenerate}

We define a family $(\rho_n)_{n\geqs 1}$ by induction:
  \begin{itemize}
    \item $\rho_1=\erewa$
    \item $\rho_{n+1}=\mult{\rho_k}_{1\leqs k \leqs n} \rew \tv$
\end{itemize}
  Thus, a term $t$ typed with $\rho_{n+1}$ can be fed with an argument $u$ typed with $\rho_1,\ldots,\rho_{n}$ to give the term $t\,u$ of type $\tv$.

We also set, for all $n\geqs 1$,  $\Gam_n=x:(\marewa_{n-1}+\erewa)$.
  
  
  We set:
\[\Psi_1=
\infer{
\infer{
  \infer{\phd}{\jufera}}{
    \ju{f:\merewa}{f(x\,x):\tv}
}}{
  \ju{f:\merewa}{\Delf:\rho_1}
  }
\]
and, for all $k\geqs 2$:
\[\Psi_k=
\infer{
\infer{
  \infer{\phd}{\jufara} \\ 
 \infer{\infer{\phd}{\ju{x:\mult{\rho_{k-1}}}{x:\rho_k}} \\
      \big( \infer{\phd}{\ju{x:\mult{\rho_i}}{x:\rho_i}}\big)_{1\leqs i <k-2}  }{
      \ju{x:\mult{\rho_i}_{1\leqs i \leqs k-1}}{x\,x:\tv}}
  }{
   \ju{f:\marewa;x:\mult{\rho_i}_{1\leqs i \leqs k}}{f(x\,x):\tv}
}}{
   \ju{f:\marewa}{\Delf:\rho_{k}}
  }
\]
We may then define:
\[\Pi_1=\infer{\Psi_1\tri \ju{f:\merewa}{\Delf:\merewa} }{\Gam_1:\cuf}
\]
and, for all $n\geqs 2$:
\[
\Pi_n=
\infer{\Psi_n\tri \ju{f:\marewa}{\Delf:\rho_n} \hspace{0.5cm} (\Psi_k \tri \ju{f:\mult{\mult{\ldots}\rew \tv}}{\Delf:\rho_k})_{1\leqs k\leqs n-1} }{\ju{\Gam_n}{\cuf:\tv}}
\]

Fig.~\ref{fig:expanding-the-Pi^k_n$} gives a more graphic representation of the $\Pi_n$.  By induction, on $n$, we prove that:

\begin{lemma*}
For all $n\geqs 1$, the derivation $\Pi_n$ is obtained from $\Pi^n_n$ by $n$ steps of subject expansion.
\end{lemma*}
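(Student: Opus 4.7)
The plan is to proceed by induction on $n$, combined with an explicit description of the intermediate derivations produced along a chain of $n$ reduction steps. For each $0\leqs k \leqs n$, I will introduce a derivation $Q_{k,n}$ that types $f^k(\cuf)$ with context $\Gam_n$ and type $\tv$, arranged so that $Q_{0,n}$ coincides with $\Pi_n$ and $Q_{n,n}$ coincides with $\Pi^n_n$. The derivation $Q_{k,n}$ has an outer spine of $k$ typed $f$-applications (with the $k$ outer axioms for $f$ using the type $\mult{\tv}\rew \tv$), followed at applicative depth $k$ by a \emph{redex block} consisting of the subderivation $\Psi_{n-k}$ on the left (typing $\Delf$ with $\rho_{n-k}$) and argument derivations $\Psi_{n-k-1},\ldots,\Psi_1$ on the right (each typing $\Delf$ with its matching $\rho_j$). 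The innermost axiom for $f$, of type $\emul\rew\tv$, sits inside $\Psi_1$. Comparing with the explicit formulas given at the end of Appendix~\ref{s:expanding-pi-prime-n}, one checks at once that $Q_{0,n}=\Pi_n$; the identification $Q_{n,n}=\Pi^n_n$ is equally direct, since by construction $\Pi^n_n$ is obtained from $\Pi'_n$ by subject substitution replacing $\fom$ with $f^n(\cuf)$, and the typed skeleton of $\Pi'_n$ is exactly the spine of $n$ nested $f$-applications together with the innermost $\Psi_k$'s.

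Then I will verify, by direct inspection at the proof level, that $Q_{k,n}\bred Q_{k+1,n}$ for each $0\leqs k<n$, firing the (unique) redex at applicative depth $k$, which is the outer $\Delf\,\Delf$ of the redex block. Firing it amounts to substituting, inside $\Psi_{n-k}$, each axiom typing the bound variable $x$ with the unique argument derivation $\Psi_j$ whose conclusion matches the axiom's type. This substitution is canonical despite the non-determinism of reduction in $\scrR$ recalled in \Sec~\ref{ss:non-determinism}: in the redex block each $\rho_j$ with $1\leqs j\leqs n-k-1$ occurs exactly once as an axiom type for $x$ inside $\Psi_{n-k}$ and exactly once as the conclusion of one of the argument derivations $\Psi_{n-k-1},\ldots,\Psi_1$. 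The outcome of this substitution is that the outer $\app$/$\abs$ of $\Psi_{n-k}$ is destroyed, one new $f$-application is produced on the spine, and a redex block of size $n-k-1$ reappears one level deeper: by construction, this is precisely $Q_{k+1,n}$.

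Once the chain $Q_{0,n}\bred Q_{1,n}\bred\cdots\bred Q_{n,n}$ is established, the lemma follows by reading it backwards and invoking the subject expansion property of $\scrRo$ (Proposition~\ref{prop:sr-se-Ro-inf-klop}): $n$ expansion steps transform $\Pi^n_n=Q_{n,n}$ into $Q_{0,n}=\Pi_n$. The base case $n=1$ is a direct computation verifying the plan in its simplest form: $\Pi^1_1$ types $f(\Delf\,\Delf)$ with context $\Gam_1=f:\mult{\emul\rew\tv}$, leaving the inner $\Delf\,\Delf$ untyped, and a single expansion step yields the derivation $\Pi_1$ in which $\Psi_1$ types the head $\Delf$ with $\rho_1=\emul\rew\tv$ and the argument $\Delf$ is still untyped.

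The main obstacle is the bookkeeping of the multisets $\mult{\rho_i}_{1\leqs i\leqs k-1}$ appearing inside $\Psi_k$, and the verification that firing the redex at depth $k$ matches axioms to argument derivations bijectively and moreover produces exactly the spine shape of $Q_{k+1,n}$. Once this matching is spelled out carefully (essentially by unfolding the recursive definitions of $\Psi_k$ and $\rho_k$ and checking that each $\rho_j$ is multiplicity-one at the interface), the rest of the argument is a routine inductive unrolling.
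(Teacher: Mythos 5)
Your proposal is correct and follows the only natural route here, which is also the one the paper intends: the paper merely states that the lemma is proved ``by induction on $n$'' and gives no further detail, while you make explicit the chain $\Pi_n=Q_{0,n}\bred Q_{1,n}\bred\cdots\bred Q_{n,n}=\Pi^n_n$ of derivation reducts and read it backwards via subject expansion. Your observation that the matching of axioms to argument derivations is forced (each $\rho_j$ occurs with multiplicity one at the interface, and the $\rho_j$ are pairwise distinct) correctly disposes of the non-determinism concern from \Sec~\ref{ss:non-determinism}, so the argument is complete.
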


The last part of this appendix is more informal. We explain how the derivation $\Pi'$ typing $\fom$ can be expanded into a derivation $\Pi$ typing $\cuf$ by taking the infinite reduction sequence $\cuf \rew^\infty \fom$ backwards. The derivations $\Pi_n$ play a key role and $\Pi$ will be obtained by taking their join them.  First, we observe that, for all $k<n$:
\begin{itemize}
\item The type $\rho_k$ is a truncation of $\rho_n$ since 
  $\mult{\rho_i}_{1\leqs i < k}$ is obviously a truncation of $\mult{\rho_i}_{1\leqs i < n}$  (so that $\rho_k=\mult{\rho_i}_{1\leqs i < k}\rew \tv$ is a truncation $\rho_n=\mult{\rho_i}_{1\leqs i < n}\rew \tv$). Thus, the family $(\rho_n)$ is ``directed''.
\item The derivation $\Psi_k$ is a truncation of $\Psi_n$ thanks to the first observation.
\item The derivation $\Pi_k$ is a truncation of $\Pi_n$.
\end{itemize}
Those observations are corroborated by drawing $\rho_n,\ \Psi_n,\ \Pi_n$ for small values of $n$.
We then observe that the type $\rho$ coinductively defined by $\rho=\rhoom\rew \tv$ is the ``join'' of the types $\rho_n$ (the family $(\rho_n)$ is directed and every finite truncation of $\rho$ is ``included'' in a $\rho_n$ for some great enough $n$).
We define
\[\Psi=
\infer{
\infer{
  \infer{\phd}{\jufara} \\ 
 \infer{\infer{\phd}{\ju{x:\mult{\rho}}{x:\rho}} \\
      \big( \infer{\phd}{\ju{x:\mult{\rho}}{x:\rho}}\big)_{\om}  }{
      \ju{x:\mult{\rho}_\om}{x\,x:\tv}}
  }{
   \ju{f:\marewa;x:\mult{\rho}_\om}{f(x\,x):\tv}
}}{
   \ju{f:\marewa}{\Delf:\rho}
  }
\]
Then $\Psi$ is the ``infinitary join'' of the $\Psi_n$, notably because the $\rho$ is the infinitary join of the $\rho_n$.
\[
\Pi=
\infer{\Psi_n\tri \ju{f:\marewa}{\Delf:\rho} \hspace{0.5cm} (\Psi_k \tri \ju{f:\marewa}{\Delf:\rho})_{\om} }{\ju{\Gam}{\cuf:\tv}}\]
We notice likewise that, intuitively, $\Pi$ is the ``infinitary join'' of the $\Pi_n$, and $\Pi$ conclude, as expected, with $\ju{\Gam}{\cuf:\tv}$.

\ignore{
\subsection*{An example of Unsound Derivation}

Using the type $\rho$, we set:
\[\Pi_{\Del}:=
\infer{
\infer{
\infer{\phd}{\axxRo{\rho}}\hspace{0.5cm}
\bigg( \infer{\phd}{\axxRo{\rho}}\bigg)_\om}{
\ju{x:\rhoom}{x\,x:\tv}
}}{
\ju{}{\Del:\rho}
  }
\]
Then $\Om$ can be typed with $\tv$:
\[\Pi_\Om :=
\infer{\Pi_\Del \tri \ju{}{\Del:\rho}\hspace{1cm}
  (\Pi_\Del \tri \ju{}{\Del:\rho})_\om }
      {\ju{}{\Om:\,}
        }
\]
}


\section{Lattices of (finite or not) approximations}
\label{a:lattices}

\chhp{
In this Appendix, we generalize Theorem~\ref{th:cpo} to $\ttShp$ and we prove it, \ie we show that the set of $\ttShp$-derivations typing a given term $t$ is a d.c.p.o.}{
  In this Appendix, we prove Theorem~\ref{th:cpo}, \ie we show that the set of $\ttS$-derivations typing a given term $t$ is a d.c.p.o.
}

\begin{notation*}
  Let $P$ be a $\ttS$-derivation. The set of approximations of $P$ is denoted $\Appfty{P}$ and the set of finite approximations of $P$ is denoted $\Approx{P}$.
\end{notation*}

\begin{definition}
\label{def:app:tail-head-arrow}
If $U=F\rew T$, we set $\Tl(U)=F$ and $\Hd(U)=T$ (\textbf{tail} and \textbf{head}).  
\end{definition}

\noindent We define now approximations for types, contexts and context sequences.

\begin{definition}[Approximations of context sequences]\mbox{}
  \label{def:approx-contexts-etc}
  \begin{itemize}
  \item Let $T$ and $T'$ two $\ttS$-types. We write $T \leqs T'$ if $\supp{T}\subeq \supp{T'}$ and for all $a\in \supp{T}$, \chhp{$T(a)\leqs T'(a)$}{$T(a)=T'(a)$}.
  \item Let $\sSk$ and $\sSpkp$. We write $\sSk \leqs \sSpkp$ if $K\subeq K'$ and for all $k\in K$, $S_k\leqs S_k'$.
  \item Let $C$ and $C'$ two $\ttS$-contexts. We write $C \leqs C'$ if, for all $x\in \TermV$, $C(x)\leqfty C'(x)$.  
\item 
  Let $(C_j)_{\jJ}$ and $(C'_j)_{\jJ'}$ two context families. We write
  $(C_j)_{\jJ}\leqfty  (C'_j)_{\jJ'}$ if $J\subeq J'$ and, for all $\jJ$, $C_j\leqfty C'_j$.
\end{itemize}
\end{definition}

We say a family  $(C_j)_{\jJ}$ of contexts is a \textbf{compatible context family} when $\uplus_{\jJ} C_j$ is defined (no track conflict). 
We easily prove:

\begin{lemma}
\label{lem:Shp-lattice-types-contexts}
In system $\ttS$, the sets of types, sequence types, contexts and compatible context families are c.p.o. 
\end{lemma}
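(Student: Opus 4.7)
The plan is to establish the four statements in sequence, each building on the previous. The partial order axioms (reflexivity, antisymmetry, transitivity) follow immediately in each case from the corresponding properties of set-theoretic inclusion combined with functional equality on the common support, so the only real work lies in verifying directed completeness.

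For types, I would define the join of a directed family $(T_j)_{j\in J}$ as the labelled tree $T$ with $\supp{T}=\bigcup_{j\in J}\supp{T_j}$ and $T(a)=T_j(a)$ for any $j$ such that $a\in\supp{T_j}$. This is well-defined because directedness yields, for any $j,j'$, a common upper bound $T_{j''}$ that agrees with both on $\supp{T_j}\cap\supp{T_{j'}}$, forcing $T_j(a)=T_{j''}(a)=T_{j'}(a)$. Since each $\supp{T_j}$ is prefix-closed and the labelling of each $T_j$ obeys the typing grammar, so does $T$. For sequence types, given a directed $(F_j)_{j\in J}$ with $F_j=(k\cdot S^j_k)_{k\in K_j}$, I would set $K=\bigcup_{j\in J} K_j$ and, for each $k\in K$, take $S_k$ as the join (obtained by the previous step) of the directed subfamily $\{S^j_k\mid j\in J,\, k\in K_j\}$; then $(k\cdot S_k)_{k\in K}$ is the required join. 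The context case is pointwise on the values of $C(x)$ for $x\in\TermV$ and reduces immediately to the sequence type case.

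For compatible context families, given a directed $(\calF_i)_{i\in I}$ with $\calF_i=(C^i_j)_{j\in J_i}$, I would set $J=\bigcup_{i\in I} J_i$ and define $C_j$ as the join of $\{C^i_j\mid i\in I,\, j\in J_i\}$. The critical step is verifying that $(C_j)_{j\in J}$ remains compatible: for any two $j\neq j'$ in $J$, directedness of $(\calF_i)_{i\in I}$ yields some $\calF_{i''}$ containing both indices, whose internal compatibility (no track conflict between $C^{i''}_j$ and $C^{i''}_{j'}$) transfers to $C_j$ and $C_{j'}$ because $C_j\leqfty C^{i''}_j$ and $C_{j'}\leqfty C^{i''}_{j'}$, and inclusion of supports preserves disjointness of track sets. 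The main obstacle I anticipate is ensuring that directed joins do not escape the $001$-constraint on types, namely that no join introduces an infinite codomain branch $1^\omega$. This will hold because consistency along $\leqs$ forces any two comparable types to share the same arrow skeleton on each codomain branch: a member labelling some $1^n$ with a type variable forbids any comparable type from extending beyond depth $n$ there, so the join inherits the common finite codomain depth present in every member of the family.
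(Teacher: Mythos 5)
Your treatment of types, sequence types and contexts is correct and is essentially the argument the paper has in mind (the paper itself only asserts the lemma with ``we easily prove''): joins are set-theoretic unions of supports with labels inherited from any member containing the position, well-definedness follows from pairwise common upper bounds in the directed family, and the $001$-constraint is preserved because a position labelled by a type variable is a leaf in every well-formed type, so a leaf of one member at $c\cdot 1^n$ blocks every other member (via a common upper bound) from extending that codomain branch past $c\cdot 1^n$.

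The gap is in the last case, compatible context families. You write that the internal compatibility of $\calF_{i''}$ ``transfers to $C_j$ and $C_{j'}$ because $C_j\leqfty C^{i''}_j$ and $C_{j'}\leqfty C^{i''}_{j'}$''. The inequalities go the other way: $C_j$ is defined as the join of all the $C^i_j$, so $C^{i''}_j\leqfty C_j$, and disjointness of root track sets is preserved when passing to \emph{smaller} contexts, not to larger ones; as stated the inference is invalid. Moreover a single $i''$ whose index set contains both $j$ and $j'$ does not suffice, because a putative track conflict $k\in\Rt(C_j(x))\cap\Rt(C_{j'}(x))$ in the join is witnessed by tracks coming from possibly different members: some $i_1$ with $k\in\Rt(C^{i_1}_j(x))$ and some $i_2$ with $k\in\Rt(C^{i_2}_{j'}(x))$. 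The correct argument is by contradiction: use directedness to find $i_3$ with $\calF_{i_1},\calF_{i_2}\leqfty\calF_{i_3}$; then $j,j'\in J_{i_3}$ and $k\in\Rt(C^{i_3}_j(x))\cap\Rt(C^{i_3}_{j'}(x))$, contradicting the compatibility of $\calF_{i_3}$. With this repair the lemma goes through.
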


Moreover, to prove that the set of $\ttS$-derivations typing a same term
is also a c.p.o., we need:

\begin{lemma}
\label{lem:prove-lattice} \mbox{}
\begin{itemize}
\item If $(k_i\cdot T_i)_\iI$ is a direct family of singleton sequence types, then, for all $\iI$, $k_i=k$ for some $k\in \Nmzo$, $(T_i)_\iI$ is directed and $\sup_\iI (k_i\cdot T_i)=(k\cdot \sup_\iI T_i)$.
\item Let $(T_i)_\iI$, $(T'_i)_\iI$ and  $(F_i)_\iI$ a directed families of $\ttS$-types, of $\ttS$-arrow types and of $\ttS$-sequence types respectively.
\begin{itemize}
\item If $\forall \iI$, $\Hd(T'_i)=T_i$, then $\Hd(\sup_\iI T'(i))=\sup_\iI T(i)$
\item If $\forall \iI$, $\Tl(T'_i)=F_i$, then $\Tl(\sup_\iI T'(i))=\sup_\iI F_i$.
\end{itemize}
\item If $(S_i)_\iI$ and $(T_i)_\iI$ are directed families such that $(S_i,T_i)\in \PPP_{d_i}$ (for some $d_i$) for all $\iI$, $S=\sup_\iI S_i$, and $T=\sup_\iI T_i$, then $(S,T)\in \PPP_d$ for $d=\sup_\iI d_i \in \bbN\cup \set{\infty}$.  
\item Let $(C_i)_\iI$ be a directed sequence of contexts and $x\in \TermV$. Then have $(\sup_\iI C_i)(x)=\sup_\iI (C_i(x))$.
\ighp{\item Let $((C^i_k)_{\kK_i})_\iI$ a directed family of compatible context families. Then $(C_k)_\kK$, the supremum of this family, is a sequence of compatible context such that $K=\cup_\iI K_i$, for all $k\in K$, $C_k=\sup_{\set{i\in I\,|\, k\in K_i}} C^i_k$ and $\uplus_\kK C_k = \sup_\iI (\uplus_{\kK_i} C^i_k)$.}
\end{itemize}
\end{lemma}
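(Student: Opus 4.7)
The plan is to derive each of the four items from the pointwise/support-based definition of the order relations in Definition~\ref{def:approx-contexts-etc}, together with Lemma~\ref{lem:Shp-lattice-types-contexts} which already guarantees that types, sequence types, contexts and compatible context families are c.p.o.s.

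For the first item, I would start by observing that a singleton sequence $(k\cdot T)$ has $\Rt((k\cdot T))=\{k\}$. From $\sSk \leqs \sSpkp \Leftrightarrow K\subeq K'$ and pointwise comparison on roots, any two comparable singletons must share the same track. In a directed family $(k_i\cdot T_i)_{\iI}$, any pair has a common upper bound (which is still singleton, since we are working in the image of the family), so all $k_i$ coincide with a common $k$. Directedness of $(T_i)_\iI$ then follows because $S\leqs S'$ at the type level is exactly the componentwise condition that drives $(k\cdot S)\leqs (k\cdot S')$. The formula $\sup_\iI (k\cdot T_i)=(k\cdot \sup_\iI T_i)$ is then immediate from the fact that the supremum of a directed family of sequence types is computed root by root.

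The second item is almost entirely bookkeeping on supports. If $T'_i=F_i\rew T_i$, then $\supp{T'_i}=\{\epsi\}\cup \supp{F_i}\cup 1\cdot \supp{T_i}$; since suprema of directed families of types, sequence types and context families are built by taking the union of supports and agreeing on common bipositions, we get $\supp{\sup_\iI T'_i} = \{\epsi\}\cup \supp{\sup_\iI F_i}\cup 1\cdot\supp{\sup_\iI T_i}$, which is exactly the support of $(\sup_\iI F_i)\rew(\sup_\iI T_i)$. Uniqueness of the supremum in the c.p.o.\ of types (Lemma~\ref{lem:Shp-lattice-types-contexts}) then forces the stated equalities for $\Hd$ and $\Tl$. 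The third item proceeds by the same scheme: the predicate $\PPP_d$ only constrains the depth of arrow nestings, so taking a directed supremum of pairs with bounded depths $d_i$ yields a pair whose depth is $\sup_\iI d_i\in\bbN\cup\{\infty\}$, and the witnessing structure is obtained componentwise.

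For the fourth item, the order on contexts is defined pointwise on $\TermV$, so a directed family $(C_i)_\iI$ is directed iff each fibre $(C_i(x))_\iI$ is directed in the c.p.o.\ of sequence types; the supremum is then unambiguously given by $(\sup_\iI C_i)(x)=\sup_\iI (C_i(x))$.

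I expect the main obstacle to be the bookkeeping involved in juggling the interplay between the three levels—types, sequence types and contexts—when verifying that the pointwise sup on supports actually produces a legal object at each level (in particular that no track conflict arises when computing $\uplus_\kK C_k$ from a directed family of compatible context families). To handle this cleanly, I would state once and for all a monotonicity lemma: if $(X_i)_\iI$ is a directed family in any of the four c.p.o.s, then $a\in \supp{\sup_\iI X_i}$ iff $a\in \supp{X_i}$ for some $\iI$, and then $(\sup_\iI X_i)(a)=X_i(a)$ for any such $i$. Every item of the lemma then reduces to a short computation on supports and labels, with no further case analysis.
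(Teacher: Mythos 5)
The paper states Lemma~\ref{lem:prove-lattice} without any proof, so there is no official argument to compare yours against; your verification is correct and is precisely the routine support-based computation the paper relies on implicitly (and carries out explicitly one level up, in Lemma~\ref{lem:sup-deriv}, where joins are computed as unions of (bi)supports with agreement on labels). Your closing ``monotonicity lemma'' --- $a\in\supp{\sup_{\iI} X_i}$ iff $a\in\supp{X_i}$ for some $i$, with the label inherited from any such $i$ --- is exactly the uniform principle that makes all four items one-line computations, and it is consistent with Definition~\ref{def:approx-contexts-etc} and Lemma~\ref{lem:Shp-lattice-types-contexts}. Two small remarks. First, in the singleton item your parenthetical is the load-bearing step: directedness must be read as ``every pair has an upper bound \emph{in the family}'', so that the upper bound is itself a singleton $(k_\ell\cdot T_\ell)$ and the root-inclusion clause of $\leqs$ on singletons forces $k_i=k_\ell=k_j$; an upper bound merely in the ambient poset of sequence types would not suffice, since a two-element sequence majorizes two singletons sitting on distinct tracks. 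Second, the item concerning $\PPP_d$ cannot actually be verified in this version of the paper because the predicate $\PPP_d$ is never defined here (it is a vestige of the extension of system $\ttS$ to hereditary permutators); your reading of it as a depth constraint preserved under directed suprema matches the intended meaning, but strictly speaking that item is unprovable as stated in the present text, and this is a defect of the paper rather than of your proof.
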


\subsection{Meets and Joins of Directed Derivations Families}

\begin{lemma}
\label{lem:sup-deriv}
Let $(P_i)_{i\in I}$ be a directed (and non-empty) family of $\ttS$-derivations typing the same term $t$.\\
We define $P$ by $\bisupp {P}=\cup_{\iI} \bisupp {P_i}$ and \chhp{$P(\p)=\sup_\iI P_i(\p)$}{$P(\p)=P_i(\p)$ for any $\iI$ such that $\p\in \bisupp{P_i}$ ($P_i(\p)$ does not depend on $i$)}.\\
Then $P$ is a $\ttS$-derivation and we write $P=\sup_{\iI} P_i$.\\
If $I$ is finite and all the $P_i$ are, then $P$ is also finite.
\end{lemma}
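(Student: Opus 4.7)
The plan is to verify two things: first, that the labelling $P(\p)=P_i(\p)$ is independent of the choice of $i$, and second, that the resulting tree $P$ satisfies the typing rules at every position. Well-definedness is the easy part: if $\p\in \bisupp{P_i}\cap \bisupp{P_j}$, directedness of $(P_i)_{\iI}$ provides some $k\in I$ with $P_i,P_j\leqfty P_k$, so $\p\in \bisupp{P_k}$ and by definition of $\leqfty$ we get $P_i(\p)=P_k(\p)=P_j(\p)$. This shows $P$ is a well-defined function from its bisupport to $\TypeV\cup\{\rew\}$.

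Next I would unpack $P$ into the local data $(\ttC^P(a),\ttT^P(a))$ carried at each position $a\in \supp{P}$. Given such an $a$, let $I_a=\{i\in I\mid a\in \supp{P_i}\}$; directedness of $(P_i)_{\iI}$ passes to $I_a$, and the inclusions $\bisupp{P_i}\subeq \bisupp{P_j}$ (for $i\leqfty j$ in $I_a$) say exactly that $(\ttT^{P_i}(a))_{i\in I_a}$ is directed in the c.p.o.\ of $\ttS$-types of Lemma~\ref{lem:Shp-lattice-types-contexts}, and likewise $(\ttC^{P_i}(a))_{i\in I_a}$ is directed in the c.p.o.\ of $\ttS$-contexts. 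By construction of $P$ and of the c.p.o.\ structures, $\ttT^P(a)=\sup_{i\in I_a}\ttT^{P_i}(a)$ and $\ttC^P(a)=\sup_{i\in I_a}\ttC^{P_i}(a)$. Since the subject $t$ is fixed, $\supp{P}$ is automatically a subset of $\supp{t}$ closed under prefix (being a union of such sets).

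The heart of the argument is then to check, for each $a\in \supp{P}$, that the local rule matching $t(a)$ is correctly instantiated in $P$. I would proceed by cases and use Lemma~\ref{lem:prove-lattice} everywhere to commute $\sup$ with the constructors $\Hd$, $\Tl$, the indexing of a context at a variable, and the computation of singleton sequences. For $\ax$ ($t(a)=x$) every $P_i$ with $i\in I_a$ gives $\ttC^{P_i}(a)=x{:}(k_i\cdot \ttT^{P_i}(a))$; directedness forces a common $k$ and the first item of Lemma~\ref{lem:prove-lattice} yields $\ttC^P(a)=x{:}(k\cdot \ttT^P(a))$. For $\abs$ ($t(a)=\lx.t_0$, with $a\cdot 0\in I_{a\cdot 0}=I_a$) we apply the $\Hd/\Tl$ commutation to $\ttT^{P_i}(a)=\ttC^{P_i}(a\cdot 0)(x)\rew \ttT^{P_i}(a\cdot 0)$, together with the commutation of the supremum with the $x$-slot of contexts. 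The delicate case, and the main obstacle, is $\app$: one must show that the family of argument contexts $(\ttC^{P_i}(a\cdot k))_{k\in K^i}$ still assembles as a disjoint union in $P$. Here I would argue that the relation $\leqfty$ monotonically enlarges the root set of argument tracks and embeds the old argument contexts into the new; hence the collection of argument contexts in $P$ is itself the supremum of a directed family of compatible context families, and the last item of Lemma~\ref{lem:prove-lattice} (together with the commutation of $\uplus$ with directed suprema of compatible families) shows that $\uplus$ in $P$ is well-defined and equals the supremum of the $\uplus$-combined contexts of the $P_i$.

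Finally, for the finiteness claim it suffices to remark that $\bisupp{P}=\cup_{\iI}\bisupp{P_i}$ is a finite union of finite sets when $I$ and each $P_i$ are finite, and that finiteness of a derivation in system $\ttS$ is equivalent to finiteness of its bisupport. The only serious work is in the $\app$ case above; every other clause is a routine transport of suprema through the typing constructors via Lemma~\ref{lem:prove-lattice}.
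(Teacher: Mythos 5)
Your proposal is correct and follows essentially the same route as the paper: reduce everything to the local data $\ttT^P(a)=\sup_{i\in I_a}\ttT^{P_i}(a)$ and $\ttC^P(a)=\sup_{i\in I_a}\ttC^{P_i}(a)$ via Lemma~\ref{lem:Shp-lattice-types-contexts}, then verify each node by a case analysis on $t(a)$ using the commutation facts of Lemma~\ref{lem:prove-lattice} (including the $\app$ case, where the disjoint union of argument contexts is the supremum of a directed family of compatible context families). The only additions are the explicit well-definedness check of $P(\p)$ and the prefix-closure remark, which the paper leaves implicit.
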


\begin{proof} 
  We set $A_i=\supp{P_i}$, $B_i=\bisupp{P_i}$ for all $\iI$, $A=\cup_\iI A_i$, $B=\bisupp{P_i}$, so that $B=\dom{P}$.
  For all $a\in A$, let $I_a$ denote the set $\set{i\in I\,|\, a\in \supp{P_i}}$.  We write $\ttT^i$ and $\ttC^i$ instead of $\ttT^{P_i}$ and $\ttC^{P_i}$ and so on.

Let us check now that $P$ is a correct $\ttS$-derivation (such that $A=\supp{P}$): this is actually enough to prove the whole statement since, if $P$ is correct, any derivation $P'\geqs P_i$ for all $\iI$ will clearly verify $P'\geqs P$ by definition of $\leqs$. 

For all $a\in A$, let $\ttC(a)$ denote the restriction of $P(a)$ on  $\TermV\times \bbN^*$ and let $\ttT(a)$ denote the restriction of $P$ on $\bbN^*$.
Thus, we have $\ttT(a)=\sup_{\iIa} \ttT^i(a)$ and $\ttC(a)=\sup_{\iIa} \ttC^i(a)$.
 By Lemma~\ref{lem:Shp-lattice-types-contexts}, for all $a\in A$, $\ttT(a)$ and $\ttC(a)$ are a $\ttS$-type and context respectively.
 
For any $a\in A$ such that $t(a)=\arob$, we set $\AT(i)=\set{k\geqs 2\,| a\cdot k\in A}$, \ie  $\AT(i)=\cup_{\iI}\ArgTr^1_{P_i} (a)$  ($\AT(i)$ is simply $\ArgTr^1_P(a)$, but this latter notation is not proved to be licit yet, as long as $P$ is not proved to be a correct $\ttS$-derivation!). 

We prove the correctness of $P$ according to which constructor is $t(a)$, for any $a\in A$.
\begin{itemize}
\item Case $t(a)=x$. Let $i,i'\in I$ such that $a\in \supp{P_i}$ and $P_i\leqs P_{i'}$. Thus, $P_i(a)$ and $P_{i'}(a)$ have the form:
$$\begin{array}{c}
P_{i_0}(a) = \ju{x:k_{i_0}\cdot \ttT^{i_0}(a)}{x:\ttT^{i_0}(a)}\sep (i_0\in \set{i,i'}) \\
\end{array}$$
with $k_{i_0} =\tttr^{i_0}(a)$ and  $\ttC^{i_0}(a)=x:k_i\cdot \ttT^{i_0}(a)$. Since $P_i \leqs P_{i'}$, $k_i=k_i'$. \\
By Lemma~\ref{lem:prove-lattice}, $\ttC^i(a)=\sup_{\iI} (\tttr^i(a)\ct \ttT^i(a))=k\cdot \tt$, where $k$ is defined by $\tttr^i(a)$ for any $i\in I$ such that $a\in \supp{P}$. Thus, $P(a)=\ju{x:k\ct \ttT(a)}{x:\ttT(a)}$, \ie $a$ is a correct axiom node in $P$.\\
\item Case $t(a)=\lx$. Then $\tra=\lx.u$ for some $u$. Let $i,i'\in I$ such that $a\in \supp{P_i}$ and $P_i\leqs P_{i'}$.\chhp{ First note that one of the following cases holds (a) $\ttT^i(a)$ is an arrow type (this implies that 
  $\ttT^i(a)(0)=\rew$ and that $P_i$ is an $(\abs)$-node) (b) $\ttT^i(a)=\pcons_{d_i}$ (for $d_i\in \bbNfty$) (and $a$ is an $(\tthp_{d_i})$-node in $P_i$).}{ Thus,
    $\ttT^i(a)$ is an arrow type (this implies that 
    $\ttT^i(a)(0)=\rew$ and that $P_i$ is an $(\abs)$-node).
  }
Since $P_i\leqs P_{i'}$, we also have $\ttT^{i'}(a)(0)=\rew$. 
Thus, $P_i(a)$ and $P_{i'}(a)$ have the form:
$$P_{i_0}(a)= \infer[\abs]{\ju{\ttC^{i_0}(a\cdot 0)}{u:\ttT^{i_0}(a\cdot 0)}}{\ju{\ttC^{i_0}(a)}{\lx.u:\ttC^{\io}(a\cdot 0)(x) \rew \ttT^{\io}(a\cdot 0)}} \sep  (i_0\in \set{i,i'})$$
with $\ttC^{i_0}(a\cdot 0)=\ttC^{i_0},x:\ttC^{\io}(a\cdot 0)$, $\ttT^{i_0}(a)=\ttC^{\io}(a\cdot 0) \rew \ttT^{i_0}(a\cdot 0)$. Since $P_{i}\leqs P_{i'}$, we have
$\ttC^{i}(a)\leqs \ttC^{i'}(a)$,
$\ttT^{i}(a\cdot 0)\leqs \ttT^{i'}(a\cdot 0)$ and 
$\ttC^{i}(a\cdot 0))\leqs \ttC^{i'}(a\cdot 0)(x)$.
\\
By Lemma~\ref{lem:prove-lattice}, the above equalities implies that $\ttC(a\cdot 0)=\ttC(a),x:\ttC(a\cdot 0)(x)$ and $\ttT(a)=\ttC(a\cdot 0)\rew \ttT(a\cdot 0)$. Thus, $a$ is a correct $\abs$-node in $P$. 
\ighp{\item[(b)] $\ttT^{i}(a)=\pcons_{d_i}$. Since $P_i\leqs P_{i'}$, $T^{i'}(a)=\pcons_{d'}$ for some $d'_i\in \bbNfty$ with $d_i\leqs d'_i$ and $P_i(a)$, $P_{i'}(a)$ have the form:
$$
P(a)=\infer[\tthp_{d_\io}]{\ju{\ttC(a\cdot 0)}{u:\ttT(a\cdot 0)}\sep\msep
(\ttC^\io(a\cdot 0)(x),\ttT^\io(a\cdot 0)) \in \PPP_{d_\io}}{\ju{\ttC(a)}{\lx.u:\pcons_{d_\io}}}
$$
with $\ttT^\io(a)=\pcons_{d_\io}$, $\ttC^\io(a\cdot 0)=\ttC^\io(a),x:\ttC^\io(a\cdot 0)(x)$. Since $P_i\leqs P_{i'}$, we have $\pcons_{d_i}\leqs \pcons_{d_{i'}}$, $\ttC^i(a)\leqs \ttC^{i'}(a)$ and $\ttC^i(a\cdot 0)(x)\leqs \ttC^{i'}(a\cdot 0)$.\\
By Lemma~\ref{lem:prove-lattice}, the above equalities implies that $\ttC(a\cdot 0)=\ttC(a),x:\ttC(a\cdot 0)(x)$, $(\ttC(a\cdot 0)(x),\ttT(a\cdot 0)\in \PPP_d$ for some $d$ and $\ttT(a)=\pcons_d$ (for this same $d$). Thus, $a$ is a correct $\tthp_d$-node in $P$.}
\item Case $t(a)=\arob$:  then $\tra=u\,v$ for some $u$ and $v$. Moreover, $P_i(a)$ and $P_{i'}(a)$ have the form:
$$
P_\io(a)=\infer[\app]{\ju{\ttC^\io(a\cdot 1)}{u:\ttT^\io(a\cdot 1)}
\sep (\ju{\ttC^\io(a\cdot k)}{v:\ttT^\io(a\cdot k)}\msep   \trck{k})_{k\in \ArgTr^1_{P_\io}(a)}}{\ju{\ttC^\io(a)}{u\,v:\ttT^\io(a)}}
$$
where 
$\Hd(\ttT^\io(a\cdot 1))=\ttT^\io(a)$, $\Tl(\ttT^\io(a \cdot 1))=(k \cdot \ttT^\io(a\cdot k))_{k \in \ArgTr^1_{P_\io}(a)}$ and $\ttC^\io(a)=\uplus_{k\in \set{1}\cup \ArgTr^1_{P_\io}} \ttC^\io(a\cdot k)$. Since $P_i\leqs P_{i'}$, $\ttT^i(a \cdot 1) \leqs \ttT^{i'}(a\cdot 1)$ and $(\ttC^i(a\cdot k))_{\set{1}\cup\ArgTr^1_{P_i}(a)}\leqs (\ttC^{i'}(a\cdot k))_{\set{1}\cup\ArgTr^1_{P_{i'}}(a)}$. \\
By Lemma~\ref{lem:prove-lattice}, the above equalities implies that $\Hd(\ttT(a\cdot 1))=\ttT(a)$, $\Tl(\ttT(a\cdot 1))=\uplus_{k\in \AT(a)}(k\cdot \ttT(a\cdot k))$  and  $\ttC(a)=\uplus_{k\in \set{1}\cup \AT(a)} \ttC(a\cdot k)$.
 Thus, $a$ is a correct $\app$-node in $P$. 
\end{itemize} 
\end{proof}

\begin{lemma}
\label{lem:inf-deriv}
Let $(P_i)_{i\in I}$ be a non-empty family of $\ttS$-derivations typing the same term $t$, such that $\forall i,\,j \in I,~ \exists P\in \Deriv,~ P_i,\,P_j\leqfty P$.\\
We define $P$ by $\bisupp{P}=\cap_{\iI} \bisupp {P_i}$\chhp{ and $P(\p)=\inf_{\iI} P_i(\p)$ for any $\iI$}{and $P(\p)=P_i(\p)$ for any $\iI$ ($P_i(\p)$ does not depend on $i$)}.\\
Then, $P$ defines a correct  derivation (which is finite if one of the $P_i$ is finite). We write $P=\inf_{\iI} P_i$.
\end{lemma}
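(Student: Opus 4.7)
The plan is to mirror the proof of Lemma~\ref{lem:sup-deriv}, but replacing joins by meets and leveraging the pairwise-compatibility hypothesis to guarantee well-definedness. I will set $A_i = \supp{P_i}$, $B_i = \bisupp{P_i}$, $A = \cap_\iI A_i$, $B = \cap_\iI B_i$, and for each $a \in A$, define $\ttT(a)$ and $\ttC(a)$ as the restrictions of $P$ to $\bbN^*$ and $\TermV \times \bbN^*$ respectively. The first order of business is to check that $P(\p)$ is well-defined, i.e.\ does not depend on the choice of $\iI$: if $\p \in \bisupp{P_i} \cap \bisupp{P_j}$, the hypothesis gives some $P^{i,j}$ with $P_i, P_j \leqfty P^{i,j}$, so $P_i(\p) = P^{i,j}(\p) = P_j(\p)$. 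A straightforward consequence is that $\ttT(a) = \inf_\iI \ttT^i(a)$ and $\ttC(a) = \inf_\iI \ttC^i(a)$, and by the dual of Lemma~\ref{lem:Shp-lattice-types-contexts} (infima of types, sequence types and contexts exist and are given by set-theoretic intersection on bisupports), these are legal types and contexts.

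Next I will verify that $A$ is still a support candidate, i.e.\ non-empty and such that $\ovl{A} \subeq \supp{t}$: non-emptiness follows from the fact that every derivation contains the root $\epsi$ (all $P_i$ conclude with $\juCtt$, up to the common upper bound condition forcing them to share the same root judgment, hence $\epsi \in A$ with $P(\epsi) = \ju{C_*}{t:T_*}$ for some common $C_*, T_*$); the inclusion $\ovl{A} \subeq \supp{t}$ is inherited from each $A_i$. The heart of the proof is then a case analysis on $t(a)$ for $a \in A$, strictly analogous to the $\sup$-case of Lemma~\ref{lem:sup-deriv}, using the dual of Lemma~\ref{lem:prove-lattice} for infima instead of suprema:
\begin{itemize}
\item If $t(a) = x$, then each $P_i(a) = \ju{x : k_i \cdot \ttT^i(a)}{x:\ttT^i(a)}$; the pairwise-upper-bound hypothesis forces $k_i$ to be a single common $k$, and the infima of types and contexts commute with the $k\cdot (\_)$ wrapper, so $P(a) = \ju{x:k\cdot \ttT(a)}{x:\ttT(a)}$ is a correct axiom rule.
\item If $t(a) = \lx$, each $P_i(a)$ is an $\abs$-rule whose premise at $a\cdot 0$ is in $A$ iff $a\cdot 0 \in \cap_i A_i$; infima commute with $\Hd$, $\Tl$ and with splitting $\ttC^i(a\cdot 0)$ into $\ttC^i(a), x : \ttC^i(a\cdot 0)(x)$, so the $\abs$-rule is correct in $P$.
\item If $t(a) = \arob$, the argument-track set becomes $\ArgTr^1_P(a) = \cap_i \ArgTr^1_{P_i}(a)$, which is precisely the set of $k \geqs 2$ with $a\cdot k \in A$. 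Again by the dual of Lemma~\ref{lem:prove-lattice}, infima commute with $\Hd$, $\Tl$ and with the disjoint union $\uplus_{k \in \set{1} \cup \ArgTr^1}$, yielding a correct $\app$-rule. The only subtle point here is that taking intersections of argument-track sets is coherent with disjoint unions of contexts: since each $\uplus_{k} \ttC^i(a\cdot k)$ is defined and the contexts agree pointwise on common bipositions thanks to the pairwise common upper bound, no track conflict can appear when restricting to $\cap_i \ArgTr^1_{P_i}(a)$.
\end{itemize}

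The finiteness clause is immediate: $B \subeq B_{i_0}$ for any fixed $i_0$, so if some $P_{i_0}$ is finite then so is $P$. Finally, $P$ is clearly a lower bound of the family by construction, and any derivation $P'$ with $P' \leqfty P_i$ for every $\iI$ satisfies $\bisupp{P'} \subeq \cap_\iI B_i = B$ and $P'(\p) = P_i(\p) = P(\p)$ on that bisupport, hence $P' \leqfty P$. Thus $P = \inf_\iI P_i$ as claimed. The main delicate step throughout is the $\app$-case of the case analysis, where one must check that taking the intersection of the argument-track sets does not destroy the disjoint-union structure of contexts; this is where the pairwise-upper-bound hypothesis is genuinely used, because without it two $P_i$ could use the same argument track $k$ for argument-type occurrences that disagree, and the meet would not be a legitimate derivation.
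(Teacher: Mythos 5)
Your proof is correct and follows exactly the route the paper takes: the paper's own proof of this lemma is the one-line remark that it is established ``in the same fashion as Lemma~\ref{lem:sup-deriv}'', and your argument is precisely that dualization (well-definedness via the pairwise upper bound, meets of types/contexts as set-theoretic intersections of bisupports, then the case analysis on $t(a)$ with the $\app$-case handled through the intersection of argument-track sets). The only loose phrase is the claim that the common upper bound forces the $P_i$ to ``share the same root judgment'' --- they need only share a common refinement of it, which is all your argument actually uses.
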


\begin{proof}
We prove this in the same fashion as Lemma~\ref{lem:sup-deriv}.
\end{proof}

The previous lemmas define the join and the meet of derivations (under the same derivation) as their set-theoretic union and intersection. More precisely, the statement below\ighp{ (valid for system $\ttS$)} entails Theorem~\ref{th:cpo}:


\begin{theorem*}
 The set of $\ttS$-derivations typing a same term $t$ endowed with $\leqfty$ is a directed complete semi-lattice.
  \begin{itemize}
  \item If $D$ is a directed set of derivations typing $t$:
\begin{itemize}
  \item The \textbf{join} $\sup D$ of $D$ is the function $P$ defined by $\dom{P}=\cup_{P_*\in D} \bisupp{P_*}$ and $P(\p) = \sup_{P_* \in D}P_*(\p)$, which also is a derivation.
  \item  The \textbf{meet} $\inf D$ of $D$ is the function $P$ defined by $\dom{P}=\cap_{P_*\in D} \bisupp{P_*}$ and $P(\p) = \inf_{P_*\in D}(\p)$, which also is a derivation.
  \end{itemize}
  \item If $P$ is a $\ttShp$-derivation typing $t$, $\Appfty{P}$ is a complete lattice and $\Approx{P}$ is a lattice.
\end{itemize}
\end{theorem*}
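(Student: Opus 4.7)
The plan is to dispatch the theorem by packaging together the two main technical lemmas already stated in the appendix, namely Lemma~\ref{lem:sup-deriv} (joins of directed families of derivations typing the same term) and Lemma~\ref{lem:inf-deriv} (meets of non-empty families bounded above). Given a term $t$ and a directed family $D$ of $\ttS$-derivations typing $t$, Lemma~\ref{lem:sup-deriv} produces a derivation $P$ with $\bisupp{P}=\bigcup_{P_*\in D}\bisupp{P_*}$ and $P(\p)=P_*(\p)$ for any $P_*\in D$ with $\p\in\bisupp{P_*}$ (the value does not depend on the chosen $P_*$ because any two elements of $D$ sit under a common upper bound in $D$ by directedness, and therefore agree on overlaps by definition of $\leqfty$). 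It is immediate from the definition of $\leqfty$ that this $P$ is an upper bound of $D$, and that it is the least one since any other upper bound must contain each $\bisupp{P_*}$ and agree with each $P_*$ pointwise. Symmetrically, to get directed meets, I apply Lemma~\ref{lem:inf-deriv}: the hypothesis ``any two $P_i,P_j$ admit a common upper bound'' is trivially satisfied in a directed set (take the bound in $D$ itself), so $\inf D$ exists with bisupport $\bigcap_{P_*\in D}\bisupp{P_*}$, and the greatest-lower-bound property follows from the definition of $\leqfty$.

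For the second bullet, I fix $P$ and establish that $\Appfty{P}$ is a complete lattice. The pivotal observation is that every $P_*\in\Appfty{P}$ is bounded above by $P$; hence, for any non-empty subset $S\subseteq\Appfty{P}$, the hypothesis of Lemma~\ref{lem:inf-deriv} is met (take $P$ as the common upper bound of any pair), and $\bigwedge S\in\Appfty{P}$ exists. For joins, I have two symmetric options, and I would use whichever is shortest to write formally. The direct option: define $Q$ by $\bisupp{Q}=\bigcup_{P_*\in S}\bisupp{P_*}$ and $Q(\p)=P(\p)$ (well-defined because every $P_*\in S$ has $P_*(\p)=P(\p)$ on its bisupport); verifying $Q$ is a correct derivation is done exactly as in the proof of Lemma~\ref{lem:sup-deriv}, case by case on $t(a)$, using that $Q$ is sandwiched between the $P_*$ and $P$ so the typing constraints (axiom shape, $\abs$-shape, $\app$-shape) are inherited from $P$. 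The indirect option: use the fact that the set $\{R\in\Appfty{P}:R\geqfty P_*\text{ for all }P_*\in S\}$ is non-empty (it contains $P$) and take its meet, which is the desired join. The empty subset is handled by declaring $\bigwedge\emptyset=P$ and $\bigvee\emptyset=\inf\Appfty{P}$ (the latter exists by Lemma~\ref{lem:inf-deriv} applied to the whole family $\Appfty{P}$, which is bounded above by $P$).

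For $\Approx{P}$, I only need to observe that the constructions of Lemmas~\ref{lem:sup-deriv} and~\ref{lem:inf-deriv} preserve finiteness of bisupports for finite families: the union of two finite bisupports is finite, and similarly for intersections. So the lattice operations on pairs of finite approximations stay inside $\Approx{P}$, making it a sub-lattice of $\Appfty{P}$ (but not in general a complete one, since infinite joins of finite approximations may fail to be finite — this is precisely the content of approximability in Definition~\ref{def:approximable}).

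The main obstacle is the joins in $\Appfty{P}$ for arbitrary (not necessarily directed) subsets $S$: completeness requires more than directed completeness, and the cleanest way to handle it is the ``direct option'' above, where one must verify that the pointwise union of bisupports yields a correct derivation. The verification mirrors the case analysis in the proof of Lemma~\ref{lem:sup-deriv}, but is simplified by the fact that all values come from $P$, so one never needs to compute suprema of types or contexts in the abstract lattice structure of Lemma~\ref{lem:Shp-lattice-types-contexts}; one only needs to check that the sub-labelled-tree $Q$ of $P$ defined by the chosen bisupport is closed under the shape constraints induced by the three typing rules of system $\ttS$, which is routine.
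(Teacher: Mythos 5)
Your proof is correct and follows essentially the same route as the paper: both reduce the theorem to Lemma~\ref{lem:sup-deriv} (directed joins as set-theoretic unions of bisupports) and Lemma~\ref{lem:inf-deriv} (meets as intersections, applicable because any pair of approximations is bounded above by $P$). Your explicit treatment of arbitrary, non-directed joins in $\Appfty{P}$ --- either via all-meets-plus-top or via the direct union-of-restrictions argument --- addresses a point the paper's appendix leaves implicit, and is sound.
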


\ignore{
\section{Lattices of (finite or not) approximations}

\label{appLat}

\pierre{$U_0$ n'est pas une bonne notation pour qqch qui n'est pas forcement fini}

\subsection{Types, Forest Types and Contexts}

\label{LatticesTPF} 

Let the metavariable $U$ denote a type or a sequence type.

\begin{definition}
\begin{itemize}
\item Let $U_1$ and $U_2$ two (sequence) types. If, as a labelled tree or forest, $U_1$ is a restriction of $U_2$, we write $U_1\leqfty U_2$. When $U_1$ is finite, we write simply $U_1\leqslant U_2$.
\item We set $\Approx{U}=\{\fU ~ |~ \fU\leqslant U\}$ and 
$\Appfty {U}=\set{ U_0 ~ |~ U_0\leqfty U}$
\end{itemize} 
\end{definition}

\begin{lemma}
Let $(T_i)_{i\in I}$ be a non-empty family of types, such that $\forall i,\, j\in I,~ \exists T\in \Types,~ T_i,~T_j\leqfty T$ (\textit{i.e.} $T_i,\, T_j$ have an upper bound inside $\Types$).\\
We define the labelled tree $T(I)$ by $\supp {T(I)}=\bigcap\limits_{i \in I} \supp {T_i}$ and $T(I)(c)=T_i(c)$ for any $i$.\\
Then, this definition is correct and $T(I)$ is a type . We write $T(I)=\inter_{i\in I} T_i$.\\
Moreover, if one of the $T_i$ is finite, then $\inter_{\iI} T_i$ is finite.
\end{lemma}

\begin{proof}
Since $\supp {T(I)}=\bigcap\limits_{i \in I} \supp {T_i}$, $\supp {T(I)}$ is a tree without infinite branch ending by $1^\omega$.

Let us assume $c \in \supp{T(I)}$. Then, for all $i\in I$, $c \in \supp {T_i}$. For $i,\,j \in I$, there is a $T$ such that $T_i,\,T_j\leqfty T$. Thus, $T_i(c)=T(c)=T_j(c)$ and the definition of $T(I)$ is correct.

When $T(I)(c)=\rightarrow$, then $c \in \supp {T_i}$ for all $i\in I$, so $c \cdot 1 \in \supp {T_i}$ for all $i\in I$, so $c \cdot 1 \in \supp {T(I)}$, so $T(I) \in \Types$.
\end{proof}

\begin{lemma}
Let $(T_i)_{i\in I}$ be a non-empty family of types, such that $\forall i,\, j\in I,~ \exists T\in \Types,~ T_i,~T_j\leqfty T$.\\
We define the labelled tree $T(I)$ by $\supp {T(I)}=\bigcup\limits_{i \in I} \supp {T_i}$ and $T(I)(c)=T_i(c)$ for any $i$ such that $c \in \supp{T_i}$.\\
Then, this definition is correct and $T(I)$ is a type (that is finite if $I$ is finite and all the $T_i$ are). We write $T(I)=\bigvee\limits_{i\in I} T_i$.
\end{lemma}

\begin{proof}
Since $\supp {T(I)}=\bigcup\limits_{i \in I} \supp {T_i}$, then $\supp {T(I)}$ is a
tree.

Let us assume $c \in \supp {T(I)}$ and $c \in \supp{T_i}\cap \supp{T_j}$.
Let $T$ be a type such that $T_i,\,T_j\leqfty T$. Thus, we have $T_i(c)=T(c)=T_j(c)$ and the definition of $T(I)$ is correct.

Moreover, since $T_i$ is a type, there is a $n\geqslant 0$ such that $c \cdot 1^n$ is a leaf of $\supp{T_i}$ and $T_i(c \cdot 1^n)=\alpha$ ($\alpha$ is a type variable). Since $T_i\leqfty T$, $T(c \cdot 1^n)=\alpha$. Since $T$ is a correct type, $T(c\cdot 1^n)=\alpha$ entails that $c\cdot 1^n$ is a leaf of $\supp {T}$ and $T(c\cdot 1^n)=\alpha$. Since $T_j\leqfty T$, $c \cdot 1^n$ is a leaf of $\supp {T_j}$ and $T_j(c \cdot 1^n)=\alpha$. So $c\cdot 1^n$ is a leaf of $T(I)$ and $T(I)(c)=\alpha$ and $\supp {T(I)}$ cannot have an infinite branch ending by $1^\omega$.

If moreover $T(I)(c)=\rightarrow$, then $c \in \supp {T_i}$. Since $T_i$ is a correct type, $c \cdot 1 \in \supp {T_i}$ and thus, $c\cdot 1 \in \supp {T(I)}$, so $T(I) \in \Types$.
\end{proof}

\begin{proposition}
The set $\Types$ endowed with $\leqfty$ is a \textbf{direct complete partial order (d.c.p.o.)}. The join is given by the above operator.\\
Moreover, for any type $T$, $\Approx{T}$ is a distributive lattice and $\Appfty{T}$ is a complete distributive lattice, and the meet is given by the above operator.
\end{proposition}

\begin{proof}
The distributivity stems from the distributivity of the set-theoretic union and intersection.
\end{proof}

We can likewise construct the joins and the meets of families of forest types (via the set-theoretic operations on the support), provided every pair of elements have an upper bound. The set $\STypes$ also is a d.c.p.o. and for all f.t. $F$, $\Approx{F}$ is a distributive lattice and $\Appfty{F}$ is a complete distributive lattice.}

\ignore{
\subsection{Meets and Joins of Derivations Families}

When $P_0,\, P$ are two derivations typing the same term, we also write $P_0 \leqfty P$ to mean that $P_0$ is the restriction of $P$ on $\bisupp {P_0}$. We set  $\Appfty{P}=\set{ P_0 \in \Deriv \, | \, P_0 \leqfty P}$.

\begin{lemma}
Let $(P_i)_{i\in I}$ be a non-empty family of derivations typing the same term $t$, such that $\forall i,\,j \in I,~ \exists P\in \Deriv,~ P_i,\,P_j\leqfty P$.\\
We define $P(I)$ by $\bisupp {P(I)}=\bigcap\limits_{i\in I} \bisupp {P_i}$ and $P(I)(\p)=P_i(\p)$ for any $i$.\\
Then, this derivation is correct and the labelled tree $P(I)$ is a derivation (that is finite if one of the $P_i$ is finite). We write $P(I)=\bigwedge\limits_{i\in I} P_i$.
\end{lemma}

\begin{proof}
The proof is done by verifying that $P(I)$ satisfies the characterization of
the previous subsection, including Remark \ref{rmkCharBisupp}. It mostly comes to: 
\begin{itemize}
\item The correctness of the definition is granted by the upper bound condition.
\item The definition $P(I)$ grants proper types and contexts, thanks
to subsection \ref{LatticesTPF} .
\item For any $\p$ and $\p'$ put at stakes in any of the conditions
of the previous subsection, $\p \in \bisupp {P(I)}$ iff $\forall i\in I,~ \p \in \bisupp {P_i}$ iff $\forall i \in I,~ \p'\in  \bisupp {P_i}$ iff $\p'\in \bisupp {P(I)}$.
\item The remaining conditions are proven likewise.
\end{itemize}
\end{proof}

\begin{lemma}
Le $(P_i)_{i\in I}$ b a non-empty family of derivations typing the same term, such that $\forall i,\, j\in I,~ \exists P\in \Deriv,~ P_i,~P_j\leqfty P$.\\
We define the labelled tree $P(I)$ by $\bisupp {P(I)}=\bigcup\limits_{i \in I} \bisupp {P_i}$ and $P(I)(\p)=P_i(\p)$ for any $i$ such that $\p \in \bisupp {P_i}$.\\
Then, this definition is correct and $P(I)$ is a derivation (that is finite if $I$ is finite and all the $P_i$ are). We write $P(I)=\bigvee\limits_{i\in I} P_i$.
\end{lemma}

\begin{proof}
The proof is done by verifying that $P(I)$ satisfies the characterization of
the previous subsection, as well as for the previous lemma. But here, for any $\p$ and $\p'$ put at stakes in any of the conditions of the previous subsection, $\p \in \bisupp {P(I)}$ iff $\exists i\in I,~ \p \in \bisupp {P_i}$ iff $\exists i \in I,~ \p'\in \bisupp {P_i}$ iff $\p'\in \bisupp {P(I)}$.
\end{proof}

The previous lemmas morally define the join and the meet of derivations (under the same derivation) as their set-theoretic union and intersection. More precisely, they entail:


\begin{proposition}
The set of derivations typing a same term $t$, endowed with $\leqfty$ is a d.c.p.o. The join of a direct set is given by the above operator.\\
Moreover, for any derivation $P$, $\Approx{P}$ is a distributive lattice (sometimes empty) and $\Appfty{P}$ is a complete distributive lattice, and the meet is given by the above operator.
\end{proposition}

}

\section{Reduction and Approximability}
\label{a:equinecessity}

In this Appendix, (1) we prove 
Observation~\ref{obs:red-to-right} (every biposition is equincessary with a residuable right-position), which is crucial to ensure
that approximability in system $\ttS$ is stable under reduction and expansion (Lemma~\ref{lem:approx-red}) (2) we release the hypothesis of quantitativity and we prove that subject reduction and expansion are monotonic   (3) we prove the infinitary subject expansion property and at last (4) we prove that approximability cannot be defined only in terms of the concluding judgment of a derivation, as announced in Remark~\ref{rk:root-approx}.

%
%
%

\noindent In this appendix, right bipositions play a particular role, so we set:

\begin{notation*}
Let $P$ be a $\ttS$-derivation.   The set of \textbf{right bipositions} in $P$ is denoted $\bisuppR{P}$ (\ie $\bisuppR{P}=\bisupp{P}\cap \bbN^* \times \bbN^*$).
\end{notation*}

\subsection{Quasi-Residuation in System $\ttS$}
\label{a:qres-Shp-formal}

In this section, we define the notion of quasi-residuation in system $\ttS$. This extends \Sec~\ref{ss:one-step-sr-se}.\\


\noindent \textbf{Hypotheses.} 
For the remainder of this section, we assume that $P\tri \juCtt$ is a   $\ttS$-derivation and $\trb=\lxrs$, $t\breda{b} t'$, so that $\tprb=\rsx$.  We follow Fig.~\ref{fig:SR-residuals} and we reuse some notations of \Sec~\ref{ss:one-step-sr-se}, \ref{ss:non-determinism} and \Sec~\ref{ss:sr-proof}\ighp{, which do not need to be changed for system $\ttS$, including $\RepPpb$, $\TrPl{a}$}.  
We set, for all $a\in \supp{P}$, $\ttX(a)=\set{\trP{\al}~|~ \al \in \AxP_{a\cdot 10}(x)}$. We have $\ttX(a)\subeq \TrPl{a}$, and if $P$ is quantitative, $\ttX(a)=\TrPl{a}$ holds.

For all $k\in \ttX(a)$, we also denote by $a_k$ the unique $\al\in \bbN^*$ such that $a\cdot 10 \cdot \al\in \Axl{a}$ and $\trP{a\cdot10\cdot \al}=k$, \ie $a\cdot 10\cdot a_k$ is the unique axiom assigning track $k$ to $x$ above $a$ (note that $a_k$ implicitly depends on $P$).

\subsection*{Residuation} 
Let $\al\in \supp{P}$, we define $\Res_b(\al)$ case-wise.
\begin{itemize}
\item 
  If $\ovlal \not\geqslant b$, $\Res_b(\al)=\set{\al}$.
\item If $\al=a\cdot 1\cdot 0\cdot \al_0$ for some $a\in \RepPb$ and $t(a)\neq x$, $\Res_b(a)=a\cdot \al_0$
\item If $\al= a \cdot k\cdot \al_0$ for some $a \in \RepPb$ and $k\in \ttX(a)$, then $\Res_b(\al)=a\cdot a_k \cdot \al_0$.
\item In any other case, $\Res_b(a)$ is not defined.\\
\end{itemize}

\noindent \textit{Right bipositions.} Let $(\al,\gam)\in \supp{P}$. If $\al'=\Res_b(\al)$ is defined, then we set $\Res_b(\al,\gam)=(\al',\gam)$. In any other case, $\Res_b(\al,\gam)$ is not defined.

\subsection*{Quasi-Residuation} Let $\al\in \supp{P}$. We define $\QRes_b(\al)$, the \textbf{quasi-residual} of $\al$.
\begin{itemize}
\item \textit{Extension of $\Res_b$:} if $\Res_b(\al)$ is defined, then $\QRes_b(\al)=\Res_b(\al)$.
      \item \textit{Variable of the redex:} if $\al=a\cdot 10\cdot a_k$ for some $a\in \RepPb$ and $k\in \ttX(a)$, then $\QRes_b(\al)=a\cdot a_k$.
    \item \textit{Root of the redex:} if $\al=a\in \RepPb$, then $\QRes_b(\al)=\al$.
\end{itemize}
Thus, the only case in which $\QRes_b(\al)$ is not defined is when $\ovl{\al}=b\cdot 1$. \\

\noindent \textit{Right bipositions.} Let $(\al,\gam)\in \supp{P}$, we define $\QRes_b(\al,\gam)$ case-wise:
\begin{itemize}
\item If $\al'=\QRes_b(\al)$ is defined, then $\QRes_b(\al,\gam)=(\al',\gam)$.
\item  \textit{Abstraction of the redex:} If $\al=a\cdot 1$ for some $a\in \RepPb$ (\ie $t(\al)=\lx$):
  \begin{itemize}
  \item
    if $\gam=1\cdot \gam_0$ for some $\gam_0\in \bbN^*$, then $\QRes_b(a\cdot 1,\gam)=(a,\gam_0)$.
  \item If $\gam=k\cdot \gam_0$ for some $k\geqs 2$ and $\gam_0\in \bbN^*$, then $k\in \ttK(a)$ and $\QRes_b(a\cdot 1,\gam)=(a\cdot a_k,\gam)$.
  \item If $\gam=\epsi$, then $\QRes_b(a\cdot 1,\epsi)=(a,\epsi)$. 
  \end{itemize}
\end{itemize}
Note that $\Res_b$ and $\QRes_b$ implicitly depend on $P$ in the argument of the redex, since the mapping of (bi)positions in the argument derivations typing $s$ depend on axiom tracks inside $P$, but we omit $P$ in the notation since there is

\subsection*{First properties} We observe by case-analysis that $\Res_b$ is injective. For all $\al'\in \Res_b(\supp{P})$ (where $\al=\Res^{-1}_b(\al)$), we set $\ttT'(\al')=\ttT(\al)$ and
$\ttC'(\al')=\ttC(\al)$ if $\ovlal\ngtr b\cdot 10$ and 
$\ttC'(\al')=\ttC(\Res_b(\al'))\setminus x) \uplus (\uplus_{k\in \ttK(\al)} \ttC(a\cdot k) )$ if $\ovlal\geqs b\cdot 1$, $\ttK(\al)=\Rt(\ttC(\al)(x))$ and $a$ is the unique prefix of $\al$ such that $\ovla=b$. This generalizes the construction of \Sec~\ref{ss:sr-proof}. 
We define $P':=\Red_b(P)$ (and we write $P\breda{b} P'$) as the tree labelled with $\ttS$-judgments by $\supp{P'}=\Res_b(\supp{P})$ and for all $\al'\in \supp{P'}$, $P'(\al')=\ju{\ttC'(\al')}{t'\rstr{\al'}:\ttT'(\al)}$.

\noindent Quasi-residuation preserves types. Residuation preserves constructors:

\begin{lemma}
  \label{lem:res-cons-qres-typ}
  Assume $P\breda{b} P'$. Let $\al \in \supp{P}$
  \begin{itemize}
  \item If $\al'=\Res_b(\al)$ is defined, then $t'(\al')=t(\al)$.
  \item If $\al'=\QRes_b(\al)$ is defined, then $\ttT'(\al')=\ttT(\al)$.
  \end{itemize}
\end{lemma}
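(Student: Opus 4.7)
The plan is to prove both items by a direct case analysis, following the respective definitions of $\Res_b$ and $\QRes_b$. No induction or hidden technical machinery should be needed: the claim is essentially a bookkeeping verification that the construction of $P'$ given just before the lemma behaves as expected, so the proof should be routine once the cases are laid out carefully.

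For the first item (residuation preserves constructors), I would split according to the three defining clauses of $\Res_b(\al)$. In the case $\ovlal \not\geq b$, the position $\al$ lies outside the residual of the contracted subterm, so $t(\al) = t'(\al)$ by the very definition of $\bred$ at position $b$. In the $\hearts$-case $\al = a\cdot 10\cdot \al_0$ with $t(\al)\neq x$, the subtree of $t$ rooted at $a\cdot 10$ is (after collapse to $\lam$-term positions) exactly $r$, and in $t'$ the subtree at $a$ is $\rsx$; since $t(\al)\neq x$, the constructor at $\al_0$ in $r$ is unaffected by the substitution, and $t'(a\cdot \al_0) = r(\al_0) = t(\al)$. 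In the $\clubs$-case $\al = a\cdot k\cdot \al_0$ with $k\in \ttX(a)$, the subterm at $a\cdot k$ in $t$ is $s$, while the subterm at $a\cdot a_k$ in $t'$ is the substituted copy of $s$ that has replaced the occurrence of $x$ previously typed by the $\ax$-rule at $a\cdot 10\cdot a_k$; hence $t'(a\cdot a_k\cdot \al_0) = s(\al_0) = t(\al)$.

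For the second item (quasi-residuation preserves types), I would again split according to the three clauses in the definition of $\QRes_b$ on positions. The extension case $\QRes_b = \Res_b$ is immediate from the construction of $P'$, where $\ttT'(\al')$ is defined to be $\ttT(\Res_b^{-1}(\al'))$. In the \emph{variable of the redex} case $\al = a\cdot 10\cdot a_k$ with $k\in \ttX(a)$, the axiom rule at $\al$ has $\ttT(\al) = S_k$ (where $\sSk$ is the sequence domain of the arrow type at $a\cdot 1$); on the other hand $\al' = a\cdot a_k = \Res_b(a\cdot k)$, so $\ttT'(\al') = \ttT(a\cdot k)$, and the $\app$-rule at position $a$ forces the argument derivation on track $k$ to conclude with $s:S_k$, hence $\ttT(a\cdot k) = S_k = \ttT(\al)$. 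In the \emph{root of the redex} case $\al = a\in\RepPb$, $\QRes_b(\al) = a$; the $\app$-rule at $a$ concludes with $\ttT(a) = T$ (the head of the arrow type typing $\lx.r$), and in $P'$ the construction yields $\ttT'(a) = T$ as well, since the reduct $\rsx$ is assigned the output type of the original redex.

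The main (minor) obstacle will be the \emph{variable of the redex} subcase of Part 2, because it is the only point where the argument is not purely bureaucratic: one needs the typing-rule constraint that links the sequence type in the domain of the abstraction type at $a\cdot 1$ to the types of the argument derivations at tracks $k\in K$, using that $P$ is a correct $\ttS$-derivation and that $k\in \ttX(a)\subeq \TrPl{a} = \Rt(\ttC(a\cdot 10)(x))$. Apart from this one matching step, every other case of the lemma follows directly by unfolding either the reduction rule on terms or the construction of the reduct derivation $P'$ given before the lemma.
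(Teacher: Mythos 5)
Your proof is correct and follows the same route as the paper, which establishes this lemma by exactly the case analysis you describe (unfolding the clauses of $\Res_b$ and $\QRes_b$, using that $\ttT'$ is defined via $\Res_b^{-1}$, and invoking the $\ax$/$\abs$/$\app$ constraints to match $\ttT(a\cdot 10\cdot a_k)=S_k=\ttT(a\cdot k)$ in the variable-of-the-redex case). The only point worth tightening is the root-of-the-redex case, where you should identify $\Res_b^{-1}(a)$ explicitly (it is $a\cdot 10$ when $r\neq x$, and $a\cdot k$ when $r=x$) to justify $\ttT'(a)=T$; otherwise the argument is complete.
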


\noindent Lemma~\ref{lem:res-cons-qres-typ} is pivotal to prove:

\begin{lemma}[Subject Reduction]
  \label{lem:app-sr-S}\mbox{}\\
Assume $P \breda{b} P'$. Then the labelled tree $P'$ is a correct $\ttS$-derivation.
\end{lemma}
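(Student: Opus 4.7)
\begin{proofsketch}
The plan is to verify, for each $\al'\in\supp{P'}$, that the rule at $\al'$ in $P'$ is a well-formed instance of one of the three rules of system $\ttS$. Since $\Res_b$ is an injection from its domain onto $\supp{P'}$, I write $\al$ for the unique preimage $\Res_b^{-1}(\al')$. By the construction of $\Res_b$, the undefined cases correspond exactly to the pieces of the redex destroyed by the contraction, namely the $\app$-node at $a\in\RepPb$, the $\abs$-node at $a\cdot 1$, and the axiom rules at $a\cdot 10 \cdot a_k$ for $k\in\ttX(a)$ typing the variable $x$; everything else survives. Lemma~\ref{lem:res-cons-qres-typ} will then do most of the work: residuation preserves constructors, so $t'(\al')=t(\al)$, and quasi-residuation preserves types, so types at children of $\al'$ coincide with those at the corresponding children of $\al$ (possibly after going through a quasi-residual step).

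The proof proceeds by a case analysis on $t(\al)$. In the \textbf{axiom case} $t(\al)=y$, if $y\neq x$ or $\al$ is above an inner binder for $x$, then $\ovl\al\not\geqs b\cdot 10$, so $\ttC'(\al')=\ttC(\al)$ and the axiom at $\al$ transfers verbatim. In the \textbf{abstraction case} $t(\al)=\lambda y$, write $\al\cdot 0$ for its premise; one checks using $\QRes_b(\al\cdot 1,\gam)$ that $\ttT(\al)=\ttC(\al\cdot 0)(y)\rew \ttT(\al\cdot 0)$ is inherited as $\ttT'(\al')=\ttC'(\al'\cdot 0)(y)\rew\ttT'(\al'\cdot 0)$ (and the hypothesis $t(\al)\neq x$ when $\al$ sits directly above $b\cdot 10$ is exactly the side-condition that makes $\Res_b$ defined there), so the $\abs$-rule is correct. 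The \textbf{application case} $t(\al)=\arob$ is the interesting one: the premises of the rule at $\al'$ are indexed by $\ttK(\al)=\Rt(\ttC(\al\cdot 1)(x))$-free pieces of $P$ together with the argument subderivations at $\al\cdot k$ of the original derivation; one checks that $\Tl(\ttT'(\al'\cdot 1))=(k\cdot \ttT'(\al'\cdot k))_{k\in\ArgTr^1_{P'}(\al')}$ and $\Hd(\ttT'(\al'\cdot 1))=\ttT'(\al')$ because the analogous identities hold at $\al$ in $P$ and all the relevant children of $\al$ have defined $\QRes_b$.

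The final, and arguably trickiest, point is context merging for the $\app$-rule: one must check that $\ttC'(\al')=\uplus_{k\in\{1\}\cup\ArgTr^1_{P'}(\al')}\ttC'(\al'\cdot k)$ and that this join is actually defined (no track conflict). When $\ovl\al\ngeqs b\cdot 10$ this is immediate from the corresponding identity in $P$. When $\ovl\al\geqs b\cdot 1$, the definition $\ttC'(\al')=(\ttC(\al)\setminus x)\uplus(\uplus_{k\in\ttK(a)}\ttC(a\cdot k))$ is precisely designed to replace, in every context from the root down to the destroyed axioms, the $(k\cdot S_k)$ coming from an axiom for $x$ with the context of the matching argument derivation at $a\cdot k$. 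Propagating this equality from the premises to the conclusion of each rule along the spine between $a$ and $\al$ shows both that the merge is defined (the $\ttC(a\cdot k)$ were already disjoint in $P$ as they appeared together in the $\app$-rule at $a$) and that it yields $\ttC'(\al')$; this is the main obstacle and is best handled by an auxiliary induction on the length of the path from $a$ to $\al$. Once that is settled the $\abs$-case at $\al'$ just above a destroyed axiom and the axiom cases inside substituted copies of $s$ follow from the very definition of $\ttC'$ on the residuals coming out of the argument derivations.
\end{proofsketch}
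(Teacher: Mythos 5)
Your proof is correct and follows essentially the same route as the paper's own (a node-by-node case analysis appealing to Lemma~\ref{lem:res-cons-qres-typ} and to the definition of $\ttC'$, with special care at the redex position and at the substituted occurrences of $s$), only spelled out in more detail. One small slip worth fixing: an axiom typing $y\neq x$ may perfectly well sit inside $r$, i.e.\ at a position $\al$ with $\ovl{\al}\geqs b\cdot 10$ — the identity $\ttC'(\al')=\ttC(\al)$ there follows not from the position lying outside the redex but from relevance of axioms, which gives $\ttC(\al)(x)=\est$, hence $\ttK(\al)=\eset$ and $\ttC'(\al')=\ttC(\al)\setminus x=\ttC(\al)$.
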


\begin{proof}
We reason by case analysis to prove that each node of $P'$ is a correct typing rule of system $\ttS$, using Lemma~\ref{lem:res-cons-qres-typ} and the definition of $\ttC'$. For $\al'\in \supp{P'}$, a special care must be given when  $\al'$ or one of its children corresponds to $b$ or an occurrence of $s$.
\end{proof}

\begin{lemma} 
  \label{lem:inj-surj-res-Shp}
  Assume that $P$ is a \textit{quantitative} $\ttS$-derivation and $P\breda{b} P'$.
\begin{itemize}
\item $\Res_b$ is a bijection from a subset of $\bisuppR{P}$ to $\bisuppR{P'}$.
\item $\QRes_b$ is a total surjective function from $\bisuppR{P}$ to $\bisuppR{P'}$. 
Moreover, for all  $\p'\in \bisupp{P'}$, $p$ has at most six antecedents by $\QRes_b$
\end{itemize}
\end{lemma}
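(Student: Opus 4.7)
The plan is to handle the three claims by case analysis on the defining clauses of $\Res_b$ and $\QRes_b$, using the structural facts provided by Lemma~\ref{lem:res-cons-qres-typ}. First I would verify that $\Res_b$ is injective on $\supp{P}$: within each of its three clauses the map is visibly injective, and two $\al_1,\al_2$ falling in different clauses cannot share a residual because the images of the clauses lie in recognisable disjoint regions of $\supp{P'}$. The key ingredients are prefix unicity of $a\in\RepPb$ (two prefixes of a given position lying in $\RepPb$ must be equal, since their collapses both equal $b$ so the separating word has empty collapse, which forces it to be $\epsi$) together with the unicity of $k\in\ttX(a)$ giving an $a_k$ equal to any prescribed suffix. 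Injectivity then propagates to right bipositions because $\Res_b$ leaves the second coordinate unchanged. Surjectivity onto $\bisuppR{P'}$ follows from $\supp{P'}=\Res_b(\supp{P})$ (which holds by construction of $P'$) together with the equality $\ttT'(\al')=\ttT(\Res_b^{-1}(\al'))$ granted by Lemma~\ref{lem:res-cons-qres-typ}, so that every $c\in\supp{\ttT'(\al')}$ arises from a biposition of $P$.

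Next, for $\QRes_b$, totality is the main content. On positions the only $\al\in\supp{P}$ where $\QRes_b$ is undefined is $\al=a\cdot 1$ with $a\in\RepPb$. For a right biposition $(a\cdot 1,\gam)\in\bisuppR{P}$, the underlying type is $\ttT(a\cdot 1)=\skSk\rew T$ with $K=\Rt(\ttC(a\cdot 10)(x))$, whose support decomposes as $\{\epsi\}\cup 1\cdot\supp{T}\cup\bigcup_{k\in K}k\cdot\supp{S_k}$; these three shapes of $\gam$ are precisely the three abstraction sub-cases. Quantitativity of $P$ is used crucially here: it forces $K=\ttX(a)$, so that for every $k\in K$ the suffix $a_k$ is defined and the sub-case $\gam=k\cdot\gam_0$ lands on a valid biposition of $P'$. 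Surjectivity on $\bisuppR{P'}$ is then immediate since $\QRes_b$ extends $\Res_b$ (which is already surjective), and the additional clauses stay inside $\bisuppR{P'}$.

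For the antecedent bound, I would enumerate the clauses that can produce a fixed target $\p'=(\al',c)\in\bisuppR{P'}$. Using the uniqueness ingredients above---at most one $a\in\RepPb$ as a prefix of $\al'$, and at most one $k\in\ttX(a)$ giving $a_k$ equal to the corresponding suffix---each clause contributes at most one antecedent: the residual clause, the variable-of-redex clause, the root-of-redex clause, and the three abstraction sub-cases. Generically only one or two actually apply. All six fire simultaneously exactly in the degenerate configuration $\al'=a\in\RepPb$, $c=\epsi$, $r=x$; then $a_k=\epsi$ for the unique $k\in\ttX(a)$, and the position $a$ is at once the root of the redex and of the form $a\cdot a_k$. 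The six antecedents are $\al=a\cdot k$ (residual via the argument clause of $\Res_b$), $\al=a\cdot 10$ (variable-of-redex), $\al=a$ (root-of-redex), and $\al=a\cdot 1$ with $\gam\in\{\epsi,1,k\}$ (the three abstraction sub-cases).

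The main obstacle will be this last step. Obtaining the tight bound of six demands careful bookkeeping of the degenerate overlaps between clauses, especially the case $r=x$ where the body of the redex is exactly the bound variable: there several clauses simultaneously apply at the position $a$, and one must check that no further configuration produces a seventh antecedent. The uniqueness arguments about prefixes in $\RepPb$ and about axiom tracks in $\ttX(a)$ are what cap the count, and quantitativity is essential throughout---without it, $\ttX(a)$ could be a proper subset of $\Rt(\ttC(a\cdot 10)(x))$, breaking both the totality of $\QRes_b$ and the antecedent bound.
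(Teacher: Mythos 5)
Your proof is correct and follows essentially the same route as the paper, which only sketches the argument: the paper's justification is precisely that $\QRes_b$ is defined by six clauses, each injective, and that a redex $(\lx.x)s$ shows the bound six is attained. Your additional bookkeeping (prefix unicity in $\RepPb$, injectivity of $k\mapsto a_k$, and the role of quantitativity in forcing $\Rt(\ttC(a\cdot 10)(x))=\ttX(a)$ so that totality of $\QRes_b$ and the clause for $\gam=k\cdot\gam_0$ go through) fills in exactly the details the paper leaves implicit.
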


In short, the last point of Lemma~\ref{lem:inj-surj-res-Shp} holds, because $\QRes_b(\al,\gam)$ is defined in 6 cases, each one being injective. A derivation typing a redex of the form $(\lx.x)s$ shows that this number may be reached. 

\begin{remark}
Note that, for all $\p\in \bisuppR{P}$ and $\p':=\QRes_b(\p)$, we have $\ttP(\p)=\ttP'(\p')$ except maybe when $\p=(a\cdot 1,\epsi)$ for some $a\in \RepPpb$.
\igintro{\pierre{Actually, for that reason, we could have left $\QRes(a\cdot 1,\epsi)$ undefined, but it is more convenient that $\QRes_b$ should be total function for the proof of Lemma~\ref{app:lem:Shp-approx-stable-red-exp} (approximability is stable by conversion).}}
\end{remark}

\noindent \textbf{Monotonicity.} Observe that $P_1\leqfty P_2$ iff $P_1$ and $P_2$ type the same term, $\supp{P_1}\leqfty \supp{P_2}$ and for all $a\in \supp{P_1}$, $\ttC_1(a)\leqfty \ttC_2(a)$ (Definition~\ref{def:approx-contexts-etc}) and $\ttT_1(a)\leqfty \ttT_2(a)$.
  In particular, if $P_1\leqfty P_2$ type $t$, $t\breda{b} t'$, then $\Res_b(\supp{P_1})\subeq \Res_b(\supp{P_2})$ and $\ttT'_1(\Res_b(\al))=\ttT_1(\al)\leqfty \ttT_2(\al) = \ttT'_2(\Res_b(\al'))$. Last, $\ttC'_i(\al')(y)=(\tr'_i(\al'_0) \cdot \ttT'(\al'_0))_{\al'_0\in \Ax^{P_i}_(\al')(y)}$. Since  $\Ax^{P_1}_{\al'}(y)\subeq \Ax^{P_2}_{\al'}(y)$ for all $\al'\in \supp{P'_1}$ (because $\supp{P_1}\subeq \supp{P_2}$), we have $\ttC'_1(\al')(y) \leqfty \ttC'_2(\al')(y)$ for all $\al'\in \supp{P'_1}$ and $y\in \TermV$. Thus, $\Res_b$ is monotonic on the set of $\ttS$-derivations typing $t$.

\subsection{Equinecessary bipositions}
\label{s:equinecessary-bip}

In this section, we use the notion of \textit{equinecessity} to prove that we can forget about left bipositions while working with approximability, as we suggested in the beginning of \Sec~\ref{a:equinecessity}.

\begin{definition} 
Let $P$ a \textit{quantitative} $\ttS$-derivation and $\p_1,\,\p_2$ two bipositions of $P$.
\begin{itemize}
\item We say $\p_1$ \textbf{subjugates} $\p_2$ if, for all $P_*\leqfty 
P$, $\p_1\in P_*$ implies $\p_2\in P_*$.
\item We say $\p_1$ and $\p_2$ are \textbf{equinecessary} (written $\p_1 \lra \p_2$) if, for all  $P_*\leqfty P$, $\p_1\in P_*$ iff $\p_2\in P_*$.
\item Let $B_1,B_2\subeq \bisupp{P}$. We also write $B_1\lra B_2$ if, for all $P_* \leqfty P$, $B_1 \subeq \bisupp{P}$ iff $B_2\subeq \bisupp{P}$.
\end{itemize}
\end{definition}

Note that subjugation and equinecessity are implicitly defined \wrt $P$.
There are many elementary equinecessity cases that are easy to observe.
We need only a few ones and we define $\asc(\p)$ and $\Asc{\p}$ (standing for ``ascendance'')
so that  $\p \lra \asc(\p)$ and $\p \lra \Asc{\p}$ for all $\p\in \bisupp{P}$.
\begin{itemize}
\item $\asc(\p)$ is defined for any $\p\in \bisupp{P}$ which
is not in an axiom leaf.
\begin{itemize}
	\item \textit{Left bipositions:} $\asc(a,\,x,\,k\cdot c)=(a\cdot \ell,\,x,\, k\cdot c)$, where 
$\ell\geqslant 0$ is the unique integer such that $(a\cdot \ell,\,x,\, k\cdot c)\in
\bisupp{P}$.
	\item \textit{Right bipositions ($\abs$):} 
        if $t(\ovla)=\lambda x$, $\asc(a,\,\epsi)=(a\cdot 0,\,\epsi),~ 
\asc(a,\,1\cdot c)=(a\cdot 0,\, c)$ and $\asc(a,\, k \cdot c)=(a\cdot 0,\,
x,\, k\cdot c)$ if $k\geqslant 2$.
	\item \textit{Right bipositions ($\app$)}: if $t(\ovla)=\arob$, $\asc(a,\,c)=(a \cdot 1,\, 1\cdot c)$.
        \end{itemize}
\item $\Asc{\p}$ is a \textit{right biposition} and is defined as the highest right biposition related to $\p$ by $\asc$.
\begin{itemize}
\item \textit{Right bipositions:} if $\p=(a,c)$, let $h$ be maximal such that $\asc^h(\p)$ is defined. In that case, we set $\Asc{\p}=\asc^h(\p)$.
\item \textit{Left bipositions:} if $\p=(a,x,k\cdot c)$, let $h$ be maximal (if it exists) such that $\asc^h(\p)$ exists. In that case, $\asc^h(\p)$ is of the form $(a_0,x,k\cdot c)$ with $t(a_0)=x$. We set then $\Asc{\p}=(a_0,c)$.
\end{itemize}
\end{itemize}
Since $t\in \Lamzzu$ (and not in $\Lamuuu\setminus\Lamzzu$), $\Asc{\p}$ is defined for any right biposition $\p$. If $P$ is quantitative, then $\Asc{\p}$ is also defined for any left biposition.\\

An examination of the $\app$-rule shows that, if $t(a)=\arob$, for all $k\geqslant 2$ and $c\in \bbN^*$:
$$(a\cdot 1,k\cdot c)\lra (a\cdot k,c)\hspace*{1cm} (E1)$$
Indeed, if $\tra=u\,v$ is typed---say that $u:\skSk\rew T$ and $(v:S_k \trck{k})_{k\in K}$---the domain of the arrow typing $u$ correspond to the types assigned to the argument $u$ (if $c\in \supp{S_k}$, then $(a\cdot 1,k\cdot c)$ and $(a\cdot k,c)$ are both in $\bisupp{P}$).\\

Now, assume that $P$ is still quantitative, $t\rstr{b}=\lxrs$, $a \in \bisupp{P}$, $\ovla=b$, $k\in \TrPl{a}$. Thus, $t\rstr{a\cdot 1}=\lx.r:\skSk\rew T$ and $(s:S_k \trck{k})_{k\in K}$.
Let $c\in \supp{S_k}$ and $\p:=(a\cdot 1,k\cdot c)\in \bisupp{P}$. Notice that $\asc{p}=(a\cdot 10,x,k\cdot c)$ and $\Asc{\p}=(a\cdot 10 \cdot a_k,c)$. Thus: $$ (a\cdot 1,k\cdot c) \lra (a\cdot 10 \cdot a_k,c)\hspace*{1cm}(E2)$$


In particular, by $(E1)$ and $(E2)$:
$$(a\cdot 10\cdot a_k,\,c) \lra (a\cdot k,\, c)$$
This relation specifies that the types of the occurrences of $x$ match the types of the argument $s$ of the redex, \ie if $c\in \supp{S_k}$, both $(a\cdot k,c)$ (pointing in $s:S_k$) and $(a\cdot 10 \cdot a_k,c)$ (pointing in $x:S_k$) are in $\bisupp{P}$.

\begin{lemma}
  \label{lem:obs-equi-bip-res}
Assume that $P$ is quantitative, $P\breda{b} P'$, $\tra=\lxrs$ and 
$\p\in \bisupp{P}$. If $\Res_b(\p)$ is not defined, one of the following cases occur:
\begin{itemize}
\item If $\p$ is a left-biposition $(\al,y,k\cdot c)$ with $\Asc{\p}=:(\al_*, c)$:
\begin{itemize}
\item If $y\neq x$, then $\p \lra \p_0:= \Asc{\p}$ and $\Res_b(\p_0)$ is defined.
\item If $y = x$, then $\al_* = a\cdot 10 \cdot a_k$ for some $a$ and $k$ such that $\ovla=b$ and $k\in \TrPl{a}$.
Then 
$\p \lra \p_0:= (a\cdot 1,k\cdot c)$ by $(E3)$ and $\Res_b(\p_0)$ is defined.
\end{itemize}
\item If $\p$ is a right-biposition $(\al,c)$  with $\Asc{\p}=:(\al_*,c_*)$, then $\al=a\cdot 10\cdot \al_0$ for some $a$ such that $\ovla=b$: 
\begin{itemize}
 \item If $t\rstr{a_*}=y\neq x$, then $\p \lra \p_0:= \Asc{\p}$ and $\Res_b(\p_0)$ is defined.
 \item If $t\rstr{a_*}=x$, let $k:= \trP{a_*}$ and $\p_0:=(a\cdot k,c_*)$. Then $\p\lra \p_0$ and $\Res_b(\p_0)$ is defined by $(E3)$.
\end{itemize}
\end{itemize}
\end{lemma}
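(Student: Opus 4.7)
The plan is to proceed by case analysis on the biposition $\p$, splitting first on whether it is a left or right biposition, and within each case using the characterization of positions without residuals (Observation~\ref{obs:when-res-not-defined}) together with the elementary equinecessities $(E1)$ and $(E2)$ introduced just above the lemma. Throughout, quantitativity of $P$ is what allows one to identify, for each left biposition, a unique underlying axiom rule, so that the ascension maps $\asc$ and $\Asc{\cdot}$ are well-defined.

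For a left biposition $\p = (\al, y, k\cdot c)$, I would iterate $\asc$ upward along the branch where $y$ is tracked: each step is an elementary equinecessity coming either from the $\app$ rule, the $\abs$ rule, or the rule-specific matching of contexts with axiom assignments. Since $P$ is quantitative (so there are no phantom tracks) and $t \in \Lamzzu$ (so there are no $1^\omega$ branches in types), the iteration terminates at an axiom rule typing $y$ on track $k$, giving $\Asc{\p} = (\al_*, c)$ with $t\rstr{\al_*} = y$ and $\trP{\al_*} = k$. If $y \neq x$ this axiom rule is not one of the ones destroyed by firing the redex at $b$, so $\Res_b(\al_*)$ is defined and we set $\p_0 := \Asc{\p}$. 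If instead $y = x$ and $\al_* \geqs b\cdot 10$, then $\al_*$ must be one of the substituted axiom occurrences, hence $\al_* = a\cdot 10\cdot a_k$ for some $a\in \RepPb$; we then apply $(E2)$ to replace this by the equinecessary biposition $\p_0 := (a\cdot 1, k\cdot c)$, whose further $(E1)$-equinecessary form $(a\cdot k, c)$ has a defined residual $a\cdot a_k$ inside $P'$.

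For a right biposition $\p = (\al, c)$, by Observation~\ref{obs:when-res-not-defined} the non-definition of $\Res_b(\al)$ forces $\al \in \set{a,\; a\cdot 1,\; a\cdot 10\cdot a_k}$ for some $a\in \RepPb$ and possibly $k\in \TrPl{a}$. The cases $\al = a$ and $\al = a\cdot 1$ are first pushed upward by $\asc$ (using the $\app$ and $\abs$ rules around the contracted redex) until one reaches $\Asc{\p} = (\al_*, c_*)$ with $\al_* = a\cdot 10\cdot \al_0$ sitting at an axiom rule inside $r$, which is the form asserted by the lemma. Then I split on $t\rstr{\al_*}$: if it is a variable $y \neq x$, the axiom survives reduction and $\Res_b(\Asc{\p})$ is directly defined, so we take $\p_0 := \Asc{\p}$; if it is $x$, then writing $k := \trP{\al_*}$ so that $\al_* = a\cdot 10\cdot a_k$, we invoke $(E2)$ and then $(E1)$ to reach the equinecessary $\p_0 := (a\cdot k, c_*)$, whose residual $(a\cdot a_k, c_*)$ is defined.

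The main technical difficulty lies in the right-biposition subcases $\al = a$ and $\al = a\cdot 1$: one must show that finitely many $\asc$-steps suffice to move $\p$ up to a biposition lying on an axiom inside $r$, and that the cumulative equinecessity is preserved. This relies on quantitativity (so that the ascension never terminates at a phantom left biposition) and on the hypothesis $t \in \Lamzzu$ (ruling out infinite $1^\omega$-branches in types that would prevent $\Asc{\p}$ from being defined); once this is granted, the rest is a routine unfolding of the definitions of $\asc$, $(E1)$, $(E2)$, and the cases of $\Res_b$.
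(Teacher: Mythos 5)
Your proof is correct and follows essentially the same route as the paper's: iterate $\asc$ up to an axiom (termination guaranteed by quantitativity for left bipositions and the absence of $1^\omega$-branches for right ones), then split on whether the axiom types the redex variable $x$, using the composite of $(E1)$ and $(E2)$ to relocate to the argument subderivation $(a\cdot k,c)$ when it does. Your explicit appeal to Observation~\ref{obs:when-res-not-defined} to enumerate the three failure cases for right bipositions, and your routing of the $y=x$ cases to $(a\cdot k,c)$ rather than the abstraction position $a\cdot 1$ (whose residual is in fact undefined), match what the paper's own proof actually does.
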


\begin{proof}\mbox{} The cases resorting to $(E3)$ are justified in the statement. Let us check that we are exhaustive.
Notice first that, when $\p$ is a left-biposition, $\Asc{\p}$ is always defined since $P$ is quantitative. Moreover, when $(\al_*,c):=\Asc{\p}$, then $t(\al_*)=y$ for some $y\in \TermV$ ($\al_*$ points to an $\ax$-rule). 
  \begin{itemize}
   \item If $y \neq x$, $\Res_b(\al_*)$ and thus $\Res_b(\al_*,c)$ are both defined.
   \item If $y=x$, $\Asc{\p}$ is of the form $(a\cdot 10\cdot a_k,c)$ for some $a$ satisfying $\ovla=b$. But, by $(E3)$, $\Asc{\p}$ is equinecessary to $(a\cdot k,c)$ (pointing inside $s$), which always has a residual by definition of $\Res_b$.
  \end{itemize}  
\end{proof}

One then notices that Observation~\ref{obs:red-to-right-res}, which is crucial to prove correctness (Lemma~\ref{lem:approx-red}), 
is a corollary of Lemma~\ref{lem:obs-equi-bip-res}.

\subsection{Uniform Subject Expansion}
\label{ss:uni-subj-exp}

We prove now the \textit{uniform} subject expansion property for system $\ttS$. By \textit{uniform}, we mean that we use a function $\code{\cdot}:\bbN\rew \bbN^*$ to decide the value of the axiom tracks in the axiom rules created during anti-reduction. Uniformity is crucial to ensure that subject expansion is monotonic. 

Assuming $\trb=\lxrs$ and $t\breda{b} t'$, we will built a derivation $P$ typing $t$ from a derivation $P'$ typing $t'$. The constructions is based on the following observation: there are three kinds of positions in $t'$: (1) positions outside the reduct $\rsx$ (2) positions inside $r$ (but outside an occurrence of $s$) (3) positions inside an occurrence of $s$.
Things are a bit less clear in $t$: we have (1) positions outside the redex $\lxrs$ (2) positions inside $r$ (3) positions inside $s$ (4) positions corresponding to the application or the abstraction of the redex $\lxrs$. Moreover, inside $r$, one should distinguish positions corresponding to $x$ (which is destroyed during reduction) and the other positions. All this has been illustrated with Fig.~\ref{fig:SR-residuals}.\\

\noindent \textbf{Hypotheses.} Let $P'\tri \juGtpt$ a quantitative $\ttS$ derivation with $\tprb=\rsx$ and $t$ the term such that $\trb=\lxrs$ and $t\breda{b} t'$.
Let $\code{\cdot}$ be an injective function from $\Nmzo$ to $\ttN^*$.
We set $A'=\supp{P'}$.\\

\noindent \textbf{Building the support of $P$.} We first build $A:=\supp{P}$ and a quasi-residuation function $QR_b$ from $A$ to $A'$.

We write $\ttT'$, $\ttC'$ and $\tttr'$ for $\ttT^{P'}$, $\ttC^{P'}$ and $\tttr^{P'}$. We set $B=\supp{t}$, $B'=\supp{t'}$ and $A'=\supp{P}$. Let $B_x=\set{\beta\in B\,|\, t(\beta)=x}$ (we assume Barendregt convention), $B'_{s\epsi}=\Res_b(B)$, $B'_s={\beta'_0\cdot \beta'_1\in B'\,|\,\beta'_0\in B'_{s\epsi}}$ and $A_b=\RepPpb$, 
$A'_{s\epsi}=\set{\al'\in A'\,|\, \ovlalp \in B'_{s\epsi}}$ and $A'_s=\set{\al'\in A'\,|\,\ovlalp \in B'_s}$. Thus, $A'_{s\epsi}$ (\resp $A'_s$) is the set of positions (in $P'$) of the  occurrences of $s$ (\resp inside an occurrence of $s$) resulting from substitution  We set $A'_{s\tti}=A'_s \setminus A_{s\epsi}$ for the set of positions \textit{internal} to an occurrence of $s$.\\

\noindent \textit{Outside the redex and inside $r$.} 
First, we set $A^0_b=\set{\al'\in A'\,|\,\ovlalp\not\geqs b}$, $A'_r=\set{al'\in A'\,|\, \ovlalp\geqs b,\, \ovlalp \notin A'_{s\tti} }$ (note that we have $\ovl{\al'_0}\in \supp{r}$) 
and $A_r=\set{a\cdot 10\cdot \al'_0 \,|\, a\in \RepPpb,\, \al'_0 \in \bbN^*,\, a\cdot \al'_0 \in A'_r}$ .
Thus, $A^0_b$ is the set of position outside the reduct in $t'$ and 
$A_r$ corresponds to the set of positions inside the subterm $r$ in $t$.
We define $QR_b$ as the identity on $A^0_b$ and by $QR_b(a\cdot 10\cdot \al'_0)=a\cdot \al'_0$ from $A_r$ to $A'_r$.

We split $A_r$ in two: let $A^x_b=\set{a\cdot 10\cdot \al'_0\,|\, a\in \RepPpb\, a\cdot \al'_0\in  A'_{s\epsi}}$ and $A^1_b=A_r\setminus A^x_b$. Intuitively, $A^x_b$ corresponds to the occurrences of $x$ in $t$ (which are in $r$) whereas $A^1_b$ corresponds to the other positions inside the subterm $r$ in $t$.\\


\noindent \textit{Inside $s$.}
We set
$A_s=\set{a\cdot \code{a\cdot 10\cdot \al'_0}\cdot \al_1\,|\, a\in \RepPpb,\,a\cdot \al_0\in A'_{s\epsi},\, a\cdot \al'_0\cdot \al_1\in A'}$. Intuitively, $A_s$ will correspond to the position of $s$ in $P$.
Indeed, $a\cdot \al'_0$ (\resp $a\cdot 10\cdot \al'_0$) corresponds to an occurrence of $s$ in $t'$ (\resp of $x$ in $t$)  whereas $a$ points to an $\app$-rule typing the application of the redex $\lxrs$ below.
As we explained above,  the natural number  $k:=\code{a\cdot 10\cdot  \al_0}$ gives us the axiom track that the occurrence of $x$ at position $a\cdot 10\cdot \al_0$ should be assigned while typing $t$. Then, we should add an argument at $\app$-node $a$ on track $k$. Hence, $a\cdot \code{a\cdot \al_0}$.
We define  $QR_b$ from $A_s$ to $A'_s$ by mapping $a\cdot \code{a\cdot 10\cdot \al'_0} \cdot \al_1$ on $a\cdot \al'_0 \cdot \al_1$ (note that $a\cdot \al'_0\in A'_s$). Observe that this induces a bijection from $A_s$ to $A'_s$.

We may now define $A$, which will be the support of $P$, as a disjoint union:
$$A:=A^0_b\cup A^x_b \cup A^1_b  \cup A_s \cup \RepPpb \cup \RepPpb \cdot 1 $$
We define $QR_b$ on $\RepPpb$ as the identity and we leave it undefined on $\RepPpb\cdot 1$ (which corresponds to the abstraction $\lx.r$). Since $A'=A^0_b\cup A'_r \cup A'_s$,  we have that $QR_b$ is a function from
$A\setminus \RepPpb\cdot 1 $ to $A'$.\\

\noindent \textbf{Building types and contexts.} We set, for all $\al \in A\setminus \RepPpb\cdot 1$, $\ttT(\al)=\ttT'(\al')$ with $\al'=QR_b(\al)$.
If $a\in \RepPpb$, we set $\ttT(a\cdot 1)=\ttT(a\cdot k)_\kK \rew \ttT(a)$ where $K=\set{k\geqs 2\,|\, a\cdot k\in A}$.

In order to define the contexts in $P$, we must observe that, in $r$ (inside $\lxrs$), $x$ is a placeholder for occurrence of $s$.

Since $P'$ is quantitative, we have, for all $\al'\in A'$, $y\in \TermV$, $\ttC'(\al')(y)=  (\tttr'(\al'_0)\cdot \ttT'(\al'_0))_{\al'_0 \in A'(\al',y)}$  where $A'(\al',y):=\Ax^{P'}_{\al'}(y)$.

Let $\al\in A_r$ and $\al':=QR_b(\al)$. For all $y\in \TermV$, 
we set $A'_0(\al',y):=A'(\al',y)\setminus A'_s$, the set of positions of axioms typing $y$ (in $P'$) above $\al'$ which are not in any occurrence of $s$.
We set $A(x,\al)=\set{\al_0\in A^x_b\,|\, \al_0\geqs \al}$ (the positions of $x$ in $A$ above $\al$). If $\al_0\in A(x,\al)$, then the axiom rule typing $x$ at position $\al_0$ is a placeholder for a typed occurrence of $s$ at position $QR_b(\al_0)$ in $P'$.
We set $A'_x(\al',y):=\set{\al'_0\in A'_{s\epsi}\,|\, \al'_0\geqs \al'}$, which is the set of positions  in $P'$ pointing to occurrences of $s$ above $\al'$. 
We then define $\ttC(\al)$ case-wise:
\begin{itemize}
\item If $\al \in A_r$, then $\ttC(\al)=(\tttr'(\al'_0)\cdot \ttT'(\al'_0))_{\al'_0\in A'_0(al',y)}$. for all $y\neq x$ and $\ttC(\al)(x)=(\code{\al_0} \cdot \ttT'(QR_b(\al_0)))_{\al_0\in A(\al,x)}$. This latter definition is correct since $\code{\cdot}$ is injective.
\item 
  If $\ovlal \not\geqs b$ (\ie $\al\in A^0_b$) or $\al \in A_s$, then $\ttC(\al)=\ttC'(\al')$
\item If $\al =a\cdot 1$ for some $a\in \RepPpb$, then we set $\ttC(\al)=(\tttr'(\al'_0)\cdot \ttT'(\al'_0))_{\al'_0\in A'_0(al',y)}$. for all $y\neq x$ and $\ttC(\al)(x)= \est$
  \end{itemize}
We define $\Exp_b(P',\code{\cdot},t)$ as the tree $P$ labelled with $\ttS$-judgments such that $\supp{P}=A$ and for all $\al \in A$, $P(\al)=\ju{\ttC(\al)}{t\rstr{\al}:\ttT(\al)}$.
For instance, the construction of $A_r$ depends on $t$. The construction of $A_s$ depends on $\code{\cdot}$.

We have, for all $\al\in A \setminus \RepPpb$ and $\al'=QR_b(\al)$:
$$\ttT(\al)=\ttT'(\al')\sep \sep \text{(type preservation)}$$

\noindent \textbf{Correction and monotonicity.}
Let $\al\in A$. We set $\ttCh(\al):=\set{\al \cdot k\in A\,|\, k\in \bbN  }$ ($\ttCh$ stands for ``children''). Thus, $\ttCh(\al)$ corresponds to the premises of $\al$ in $A$.
Let $\al'\in A'$. We set likewise $\ttCh'(\al'):=\set{\al'\cdot k\in A'\,|\, k\in \bbN}$. A case analysis shows that, if $\al\in A^0_b\cup A^1_b \cup A_s$ and $\al'=QR_b(\al)$, then $QR_b$ induces a bijection from $\ttCh(\al)$ to $\ttCh'(\al')$. By type preservation, this implies that for all $\al \in  A^0_b\cup A^1_b \cup A_s$, $\al$ is given by a correct $\ttS$-rule.

We must also check that, for $\al \in A^x_b\cup \RepPpb \cup \RepPpb\cdot 1$, $\al$ is correct: this easily follows from the definition of $\ttC$ and $\ttT$. Thus, $P$ is correct and concludes with $\ju{\ttC(\epsi)}{t:\ttT(\epsi)}$, \ie $\ju{C}{t:T}$, as expected.

To prove monotonicity, \ie $P'_1\leqfty P'_2$ implies
$\Exp_b(P'_1,\code{\cdot},t)\leqfty \Exp_b(P'_2,\code{\cdot},t)$, one needs to remark the two following crucial points: (1) the construction of $A_s$, which depends on $\code{\cdot}$, is monotonic (2) if $t(\al)=x$, then 
$(\code{\al}\cdot \ttT_1(\al))\leqfty (\code{\al}\cdot \ttT_2(\al))$
because we assign the same axiom rule in the two singleton sequence types. This ensures that $\ttC_1(\al)\leqfty \ttC_2(\al)$ in all cases.\\

\begin{remark}\mbox{}
  \label{rk:exp-ad-hoc}
  \begin{itemize}
  \item The function $\code{\cdot}$ does not need to be total on $\Nmzo$, nor injective for the above construction to work. It only needs to be defined for the positions of the axiom rules that are created (\ie $\code{\cdot}$ must be defined on $A^x_b$), so that an axiom track is assigned for each axiom rule that is created by expansion. And for all $\RepPpb$, $\code{\cdot}$ needs to be injective on $\set{\al \in A^x_b~|~\al \geqs a}$ (not on $\bbN^*$!), which is enough to ensure that no track conflict occurs.
  \item Thus, the monotonicity of expansion thus admits a variant, without having to specify a function $\code{\cdot}$: if $\fP' \leqs P$ type $t'$ and $t\breda{a}t'$, then there exists $\fP\leqs P$ typing $t$ such that $P\breda{b} P'$ and $\fP\breda{b}\fP$. For that, consider the function $\code{\cdot}$ defined by $\code{\al}=\trP{\al}$ for all $\al \in \supp{P}$ such that $t(\al)$ is the variable of the redex. 
  \item   
  One way wonder how the expansion $P$ depends on the choice of $\code{\cdot}$. It actually does not matter that much, because all the possible expansions of $P'$ are (in some sense) \textit{isomorphic}, whatever the values of the new axiom tracks are. 
\end{itemize}  
\end{remark}

\subsection{Quantitativity is stable under reducation and expansion.}
\label{a:approx-stable-conv}

We can now prove that quantitatitivity is stable under reduction and expansion, as stated in Remark~\ref{rmk:inf-quant-and-red}. This means that we did not lose any generality after \Sec~\ref{s:approx} by considering only quantitative derivations.

\begin{lemma}
\label{lem:quant-stable-red-exp}
If $P\breda{b}P'$, then $P$ is quantitative iff $P'$ is quantitative.
\end{lemma}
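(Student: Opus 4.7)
The plan is to show both directions using the residuation machinery of Section~\ref{a:qres-Shp-formal}. The key observation is that non-quantitativity is witnessed by a \emph{phantom axiom track}, i.e.\ a left biposition $(a,y,k\cdot c) \in \bisupp{P}$ with $k \in \Rt(\ttC^P(a)(y)) \setminus \{\tr^P(a_0) \mid a_0 \in \Ax^P_a(y)\}$. Locally, axiom rules are always quantitative, so a phantom track cannot originate at a leaf: it must persist upward along an infinite branch of the derivation. Reduction at a finite position $b$ either preserves such an infinite branch or translates it, via $\Res_b$ on right positions and via the bijection $a\cdot k\cdot\al \mapsto a\cdot a_k\cdot \al$ on positions inside argument derivations, to an infinite branch in $P'$. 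So the intuition is that phantom axiom tracks are transported faithfully in both directions.

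First I would establish a bookkeeping lemma describing $\Ax^{P'}_{\al'}(y)$ and $\ttC^{P'}(\al')(y)$ in terms of the corresponding data in $P$. For $\al'\in \supp{P'}$, let $\al$ be the (unique) right-residual preimage of $\al'$ under $\Res_b$ (using Observation~\ref{obs:when-res-not-defined}). Three cases arise: (i) $\ovl{\al'}\not\geqs b$, where the reduction leaves everything untouched and $\Ax^{P'}_{\al'}(y)=\Res_b(\Ax^P_\al(y))$ with matching tracks and types, so $\ttC^{P'}(\al')(y)=\ttC^P(\al)(y)$; (ii) $\al'=a\cdot\al'_0$ with $a\in\Rep_{P'}(b)$ and $\al'_0$ inside the copy of $r$ (not inside a substituted occurrence of $s$), where axioms above $\al'$ in $P'$ for a variable $y\neq x$ come bijectively from axioms above $a\cdot 10\cdot\al'_0$ in $P$, while for $y=x$ they have been replaced by axioms from the argument derivations $P_k$; (iii) $\al'=a\cdot a_k\cdot\al_1$ with $\al'$ strictly inside a grafted copy of some argument derivation $P_k$, where $\Ax^{P'}_{\al'}(y)$ corresponds bijectively to $\Ax^P_{a\cdot k\cdot\al_1}(y)$ together with $\Ax^P_a(y)\setminus\Ax^P_{a\cdot 10}(y)$ (the axioms living outside the redex but still above $a$).

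Second, using this lemma I would prove the forward direction. Assume $P$ is quantitative and fix $\al'\in\supp{P'}$ and $y\in\TermV$. In case (i), $\ttC^{P'}(\al')(y)=\ttC^P(\al)(y)=\uplus_{a_0\in\Ax^P_\al(y)}(\tr^P(a_0)\cdot\ttT^P(a_0))$, and by the bijection $\Res_b$ preserves tracks and types (Lemma~\ref{lem:res-cons-qres-typ}), this equals $\uplus_{a'_0\in\Ax^{P'}_{\al'}(y)}(\tr^{P'}(a'_0)\cdot\ttT^{P'}(a'_0))$. In case (ii) with $y\neq x$, the same argument applies. In case (ii) with $y=x$, we have $\ttC^{P'}(\al')(x)=\ttC^P(\al\text{-corresponding})(x)$ restricted to axioms in the grafted arguments; the definition $\ttC'(\al')=(\ttC^P(\al)\setminus x)\uplus(\uplus_{k\in\ttK(a)}\ttC^P(a\cdot k))$ together with quantitativity of $P$ on each $P_k$ gives the desired decomposition. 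Case (iii) is similar, with the additional merging coming from axioms for $y$ living above $a$ but outside $a\cdot 10$. In every case, $P'$ is quantitative at $(\al',y)$.

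The reverse direction proceeds symmetrically: assume $P'$ is quantitative; using the uniform expansion construction $\Exp_b(P',\code{\cdot},t)$ of Section~\ref{ss:uni-subj-exp} with $\code{\cdot}$ chosen so that the reconstructed axiom tracks for $x$ are those witnessed in $P$ (see Remark~\ref{rk:exp-ad-hoc}), $P$ is exactly $\Exp_b(P',\code{\cdot},t)$ up to isomorphism. The construction of $\ttC$ in Section~\ref{ss:uni-subj-exp} writes each context as a disjoint union indexed by axiom positions: $\ttC(\al)(y)=(\tr^{P'}(\al'_0)\cdot\ttT^{P'}(\al'_0))_{\al'_0\in A'_0(\al',y)}$ for $y\neq x$ and $\ttC(\al)(x)=(\code{\al_0}\cdot\ttT^{P'}(QR_b(\al_0)))_{\al_0\in A(\al,x)}$, and these are precisely the quantitative sums over $\Ax^P_\al(y)$ once we identify $\code{\al_0}=\tr^P(\al_0)$ and use type preservation. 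Hence $P$ is quantitative.

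The main obstacle will be the careful case analysis at the boundary of the redex: positions $\al=a\in\Rep_P(b)$ (the redex itself), $\al=a\cdot 1$ (the abstraction), and $\al=a\cdot 10\cdot a_k$ (axioms typing $x$). For these, $\Res_b$ behaves irregularly and one must check by hand that the context decomposition still matches axioms on both sides---in particular that the $x$-component of $\ttC^P(a\cdot 10)$ equals exactly $\uplus_{k\in\ttK(a)}(k\cdot \ttT^P(a\cdot k))$ as forced by the $\abs$-rule, so that the quantitative identities for $x$ in $P$ at positions below $a\cdot 10$ transfer to the identities for $y\neq x$ in $P'$ at positions below the corresponding substituted occurrences of $s$. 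Handling phantom tracks on infinite branches is then only a matter of noting that the above context identities are stable under truncation, so a phantom on an infinite branch in one derivation yields a phantom on the image infinite branch in the other.
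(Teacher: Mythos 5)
Your strategy is genuinely different from the paper's, and it is worth contrasting them. The paper proves the lemma contrapositively in a few lines: a failure of quantitativity is witnessed by a phantom axiom track whose ascendants form an infinite branch of $\supp{P}$; one distinguishes whether that branch lives in an occurrence of $s$, in $r$, or outside the redex, and reads off from the definition of $\ttC'$ that the phantom survives into $P'$ (and back). Your first and last paragraphs contain exactly this idea, but the bulk of your argument instead verifies the quantitativity identity position by position via a bookkeeping lemma. That route is viable and more informative, but your bookkeeping is wrong precisely where the verification has content. At a position $\al'=a\cdot\al'_0$ inside the copy of $r$ (your case (ii)) with $y\neq x$ free in $s$, the axioms typing $y$ above $\al'$ in $P'$ are \emph{not} in bijection with those above $a\cdot 10\cdot\al'_0$ in $P$: the grafted argument derivations contribute new axioms for $y$, which is why $\ttC'(\al')=(\ttC(\al)\setminus x)\uplus(\uplus_{k\in\ttK(\al)}\ttC(a\cdot k))$ carries the summand $\uplus_{k}\ttC(a\cdot k)$. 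You attribute that summand to the case $y=x$, but it accounts for the free variables of $s$; the $x$-component simply disappears. Dually, at a position strictly inside a grafted copy of $s$ (your case (iii)) the context is exactly that of the corresponding position $a\cdot k\cdot\al_1$ of the argument derivation, with no extra term: your added $\Ax^P_a(y)\setminus\Ax^P_{a\cdot 10}(y)$ consists of positions that are not above $a\cdot a_k\cdot\al_1$ in $P'$ and would break the very identity you are checking.

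The converse direction also has a gap. You reduce it to the claim that $P$ is, up to isomorphism, $\Exp_b(P',\code{\cdot},t)$ for a suitable $\code{\cdot}$. But the expansion construction builds $\ttC(\al)(x)$ from the axiom rules it creates, hence quantitatively for $x$ by fiat, and it cannot reconstruct an argument derivation sitting on a phantom axiom track of $x$, since the reduction discards such a subderivation and $P'$ retains no trace of it. So the identity $P\cong\Exp_b(P',\code{\cdot},t)$ presupposes part of what is being proved. The paper sidesteps this by running the contrapositive in both directions, following a phantom of $P'$ back to one of $P$ through the definition of $\ttC'$. If you keep the direct-verification style, state the bookkeeping lemma with the corrections above and prove the converse by reading the same bookkeeping in the other direction rather than through the expansion construction.
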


\begin{proof}\mbox{}
\begin{itemize}  
\item  Assume that $P\tri \juCtt$ is not quantitative, $\trb=\lxrs$ and that $t$ respects Barendregt convention. Thus, there are $\al \in \supp{P}$,  $y\in \TermV$ and $k\geqs 2$ such that, for all $d\geqs 0$, there is $\al_0\geqs a$ such that
  $|\al_0| \geqs d$,  $k\in \supp{\ttC(\al_0)(y)}$. Thus, $B:=\set{\al_0\geqs a~|~k\in \supp{\ttC(\al_0)(x)}}$, the set of ascendants of $(\al_0,y,k)$, defines (the postfix of) an infinite branch of $\supp{P}$. There are three cases: (1) $B$ visits an occurrence of $s$ (\ie there is $\al_0\in B$ such that $\ovl{\al_0}\geqs b \cdot 1$) (2) $B$ visits $r$ but no occurrence of $s$ (3) $B$ does not visit the redex $\lxrs$. By case analysis and using the definition of $\ttC$, we obtain an infinite branch $B'$ in the derivation reduct $P'$ which is in an occurrence of $s$ in case (1), in $r$ but does not visit an occurrence of $s$ in case 2, is outside $\rsx$ in case (3). In case (1), we may replace $(\al,y,k)$ by one of its ascendant and assume that $\al$ is in an occurrence of $s$  without losing generality. Likewise, in case (2), we may assume that $(\al,y,k)$ is in $r$.  
  \item The converse implication is proved by following back the steps of the above proof and considering the definition of $\ttC$.
\end{itemize}
\end{proof}

\subsection{Proof of the infinitary subject expansion property}
\label{a:expans}

We prove now infinitary subject expansion for system $\ttS$ \chhp{(Proposition~\ref{prop:inf-subj-exp-Shp-hp})}{Proposition~\ref{prop:infinite-subject-expansion}}. \ighp{First, subject substitution holds:

\begin{lemma}
\label{lem:subject-substitution-Shp}
Let $P\tri \juCtt$ be a $\ttS$-derivation such that for all $a\in \supp {P},~ t(a) =t'(a)$ ($P$ is not necessarily assumed to be approximable).\\
Let $P[t'/t]$ be the \textit{labelled tree} obtained from $P$ by replacing $t$ by $t'$ (more precisely, $P[t'/t]$ is the labelled tree $P'$ such that $\supp{P'}=\supp{P}$ and, for all $a\in \supp{P},~ P(a)=\ju{\ttC(a)}{t'\rstr{a}:\ttT(a)}$).\\
Then $P[t'/t]$ is a correct derivation.
\end{lemma}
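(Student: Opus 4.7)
\begin{proofsketch}
The plan is to define $P' := P[t'/t]$ as the labelled tree with $\supp{P'} := \supp{P}$ and, for each $a \in \supp{P}$, $P'(a) := \ju{\ttC^P(a)}{t'\rstr{\ovl{a}}:\ttT^P(a)}$, and then to check that each node of $P'$ still instantiates the same typing rule of Fig.~\ref{fig:rules-system-S} as the corresponding node of $P$. The hypothesis $t(a) = t'(a)$ for all $a \in \supp{P}$ guarantees in particular that $\ovl{\supp{P}} \subseteq \supp{t'}$, so $t'\rstr{\ovl{a}}$ is well-defined at every $a$ and its head constructor is the same as that of $t\rstr{\ovl{a}}$.

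The verification proceeds by case analysis on $t(a) = t'(a)$, exploiting the fact that $\ttC^{P'}(a) = \ttC^P(a)$ and $\ttT^{P'}(a) = \ttT^P(a)$ at every node. If $t(a) = x$, then $a$ is an $\ax$-leaf of $P$ and the judgment $P'(a) = \ju{x:(k \cdot T)}{x:T}$ (with $k = \trP{a}$) is still a correct $\ax$-rule. If $t(a) = \lam y$, then $a \cdot 0 \in \supp{P}$, so $t(a \cdot 0) = t'(a \cdot 0)$ and hence $t'\rstr{\ovl{a}} = \lam y.\,t'\rstr{\ovl{a \cdot 0}}$; the $\abs$-rule applied at $a$ in $P$ uses the premise at $a \cdot 0$ with exactly the context and type needed to conclude the same judgment in $P'$. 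If $t(a) = \arob$, then $a \cdot 1$ and every $a \cdot k$ with $k \in \ArgTr_P(a)$ belong to $\supp{P}$, so $t$ and $t'$ agree on all the premise positions of the $\app$-rule at $a$; the rule is still correctly applied in $P'$ with the same left-hand side, the same family of argument premises, and the same disjoint union of contexts.

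The central observation underlying every case is that the typing rules of system $\ttS$ are \emph{local}: the correctness of a rule at a node $a$ depends only on the constructor $t(a)$ and on the types/contexts carried by the premises at $a \cdot \ell$ for the relevant $\ell$'s, together with the fact that the premises' subjects are the immediate subterms of the subject at $a$. Since the constructors, the premise positions, the types, and the contexts are all preserved in $P[t'/t]$, and since the hypothesis propagates the equality $t = t'$ to every premise of every node of $P$, no rule is broken. There is essentially no obstacle here: the only thing worth saying explicitly is that subterms of $t$ and $t'$ outside $\ovl{\supp{P}}$ may differ freely, but these correspond precisely to the untyped arguments of applications (positions $a \cdot k$ with $k \notin \ArgTr_P(a)$) and hence play no role in any rule of $P'$. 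This yields that $P[t'/t]$ is a correct $\ttS$-derivation concluding with $\ju{C}{t':T}$.
\end{proofsketch}
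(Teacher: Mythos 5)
Your proof is correct and follows essentially the same route the paper takes (the paper treats this lemma as immediate, proving the analogous $\scrRo$ statement, Lemma~\ref{lem:sub-subst-Rftyo}, by a ``straightforward induction'' and leaving the $\ttS$ version unproved): a node-by-node check that each rule instance survives the substitution because contexts, types, premise positions and head constructors are all unchanged. Your explicit remark that the positions where $t$ and $t'$ may differ are exactly the untyped argument positions $a\cdot k$ with $k\notin \ArgTr_P(a)$ is the right key observation and matches the paper's intent.
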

}

As mentioned in \Sec~\ref{ss:one-step-sr-se}, performing an 
expansion of a term inside a derivation requires that we choose new axiom tracks. We will 
do this \textit{uniformly}, \ie we fix an injection 
$\code{\cdot}$ from $\mathbb{N}^*$ to $\mathbb{N}-\{0,\,1\}$ and any 
axiom rule created at position $a$ will use the axiom track value 
$\code{a}$. As we have seen (\Sec~\ref{ss:uni-subj-exp}), this ensures that expansion is monotonic.\\

\noindent \textbf{Hypotheses.} We consider an \textit{approximable} $\ttS$-derivation $P'\tri \ju{C'}{t':T'}$ and a productive reduction path $t=t_0 \breda{b_0} t_1 \breda{b_1} t_2 \ldots t_n \breda{b_n} t_{n+1}\rewfty t'$. Thus, $\ad{b_n} \rew \infty$. We and set $A'=\supp{P'}$.\\

Assume $\fP'\leqslant P'$. Let $N \in \mathbb{N}$ such that, for all 
$n\geqslant N,~ b_n \notin \ovl{\supf A'}$ with $\supf A'=\supp {\fP'}$. For $n\geqslant N$, we write $\fP'(n)$ for the derivation 
replacing $t'$ by $t_n$ in $\fP'$. This derivation is correct 
according to the subject substitution lemma (\chhp{Lemma~\ref{lem:subject-substitution-Shp}}{Lemma~\ref{lem:subject-substitution}}), 
since $t_n(\ovl{a})=t'(\ovl{a})$ for all $a\in \supf A'$.\\

We then write $\fP'(n,~ k)$ (with $0\leqslant k \leqslant n$) the 
derivation obtained by performing $k$ expansions (\wrt our 
reduction sequence and $\code{\cdot}$). Since $b_n$ is not in $A$, we 
observe that $\fP'(n+1,\, 1)= \fP'(t_n)$. Therefore, for all 
$n\geqslant N,~ \fP'(n,\, n)=\fP'(N,\, N)$. Since we could 
replace $N$ by any $n\geqslant N$, $\fP$ is morally $\supf 
P'(\infty,\, \infty)$. We write $P=\init(P')$ to refer to this 
deterministic construction (which implicitly depends on $\code{\cdot}$).\\


We set $\scrD=\{\init(\fP')~ |~ \fP'\leqslant P'\}$. Let us 
show that $\scrD$ is a directed set.

Let $\fP'_1,\, \fP'_2\leqslant P'$. We set 
$\fP'=\sup(\fP'_1,\fP'_2)$, so that $\fP'$ is also finite. Let $N$ be great enough so that $\forall n\geqslant N$, $b_n \notin \ovl{\supf A'}$ with $\supf A'=\supp {\fP'}$.

We have $\fP'_i\leqslant \fP'$, so $\fP_i'(N) \leqslant \fP(N)$. Thus, by monotonicity of \textit{uniform} expansion, $\fP'_i(N,\,N)\leqslant \fP'(N,\,N)$ \ie $\init(\fP_i)\leqslant \init(\fP)$.

Since $\scrD$ is directed, we can set  $P=\sup_{\set{\fP'\,|\,\fP'\leqslant P'}} \init( \fP')$. Since for any $\fP'\leqslant P$ and the associated usual notations, 
$\supf \ttC(\epsi)=\supf \ttC'(\epsi)$, $\supf \ttT(\epsi)=\supf \ttT'(\epsi)$ and, by Lemma~\ref{lem:prove-lattice}, $\ttC(\epsi),~ \ttC'(\epsi),~ \ttT(\epsi),~ \ttT'(\epsi)$ are the respective infinite joins of $\supf \ttC(\epsi)$, $\supf \ttC'(\epsi)$, $\supf \ttT(\epsi)$, $\supf \ttT'(\epsi)$ when $\fP'$ ranges over $\Approx{P'}$,
we conclude that $\ttC(\epsi)=\ttC'(\epsi)=C'$ and $\ttT(\epsi)=\ttT'(\epsi)=T'$. In particular, $\ju{C'}{t:T'}$ is approximably derivable. This concludes the proof of the infinitary subject expansion property.


\subsection{Equinecessity and Bipositions of Null Applicative Depth}
\label{a:approx-0-cex}

One may wonder whether every biposition (in a quantitative derivation $P$) is equinecessary with a \textbf{root biposition} \ie a biposition that is located in the judgment concluding $P$. Such a biposition is of the form $\p=(\epsi,c)$ with $c\in \supp{\ttT^P(0)}$ or $\p=(\epsi,x,k\cdot c)$ with $k\cdot c\in \supp{\ttC^P(0)(x)}$. 

This would imply that, for a derivation $P$ to be approximable, it is enough to have: ``$P$ is quantitative and, for all  $\oB\subeq \bisupp{P}$ finite set of \textit{root} bipositions, there exists $\fP\leqs P$ such that $\oB\subeq \bisupp{\fP}$''. We call this condition \textbf{root approximability}. 
$\p\in \oB$,

This is actually true in the finite case and so, for the approximable derivation. However, this is not true for \textit{any} derivation. We exhibit a counter-example of this conjecture in this section \ie a derivation that root approximable but not approximable. We present this counter example with a $\scrR$-derivation whereas it should be a $\ttS$-derivation (since approximability is only an informal notion for System $\scrR$), but it is easier to understand that way. Corresponding $\ttS$-derivations are not difficult to define from our presentation.

The idea is to use a productive reduction path $t_0\rew t_1\rew t_2\rew \ldots \rewfty t'$ such that no reduction step is erasing but there is a variable $x\in \fv{t_0}=\fv{t_1}=\ldots$ such that $x\notin \fv{t'}$ (\ie there is an asymptotic erasure).

Let $\Dels=\lx.(\lz.y(x\,x\,z)$ and $t=\Dels\,\Dels$, so that $t\rew \lz.y(t\,z)$ (see Fig.\;\ref{fig:1-cex-approx}). We have $t\beq \cu\,(\lam tx.(\lz.y(tx)))$.

\begin{figure}
  \begin{center}
    \ovalbox{
\begin{tikzpicture}

\drawtri{0}{-0.9}{t}
\draw (2,-0.45) node {$\longrightarrow$};
\transh{4}{
\drawlabnode{-0.65}{0.9}{$y$}
\drawtri{0}{1.8}{$t$}
\drawrighttail{0}{1.8}
\drawlabnode{1.3}{1.8}{$z$}
\blocka{0.65}{0.9}
\blocka{0}{0}
\blockunary{0}{-0.9}{$\lz$}
}
\end{tikzpicture}}
\end{center}
\caption{Reduction of $t$}
\label{fig:1-cex-approx}
\end{figure}
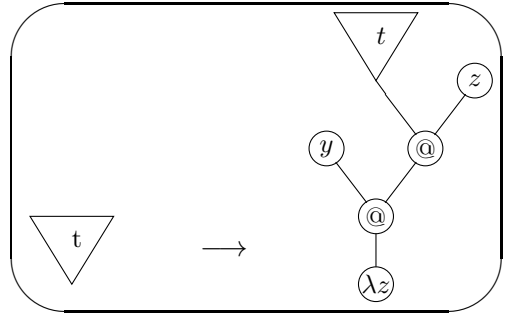
Note that $t\,f$  converges to the term $t'=y\som$, which does not contain $f$. Indeed, $t\,f\rew (\lz.y(t\,z))f \rew y(t\,f)$ (non-erasing steps).

There are two derivations $\Psi_1$ and $\Psi_2$ respectively concluding with  $\ju{y:\mult{\arewa}_\om}{\lz.y(t\,z):\arewa}$  and  $\ju{y:\mult{\arewa}_\om}{\lz.y(t\,z):\erewa}$. They are obtained from $\Psi'_1$ and $\Psi'_2$ from Fig\;\ref{fig:1-cex-approx} by a one-step expansion:
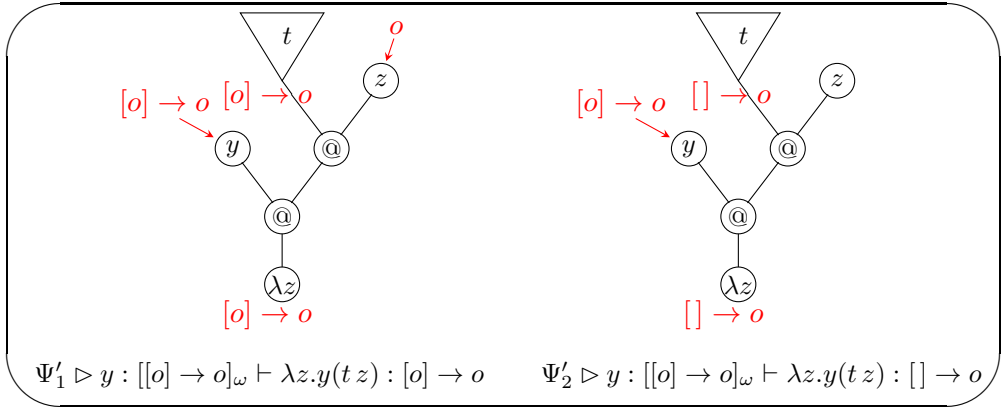
\begin{figure}
\begin{center}
\ovalbox{
  \begin{tikzpicture}
\inputarewa{-0.65}{0.8}
\drawlabnode{-0.65}{0.9}{$y$}
\draw (-0.2,1.6) node {$\red{\arewa}$};
\drawtri{0}{1.8}{$t$}
\drawrighttail{0}{1.8}
\draw (1.5,2.5) node {$\red{\tv}$};
\red{
\draw [>=stealth,->] (1.47,2.36) --++ (-0.1,-0.3);}
\drawlabnode{1.3}{1.8}{$z$}
\blocka{0.65}{0.9}
\blocka{0}{0}
\blockunary{0}{-0.9}{$\lz$}
\draw (-0.2,-1.3) node {$\red{\arewa}$};
\draw (-0.3,-2.1) node {\small {$\Psi'_1\tri \ju{y:\mult{\arewa}_\om}{\lz.y(t\,z):\arewa}$}};

\transh{6}{
\inputarewa{-0.65}{0.8}
\drawlabnode{-0.65}{0.9}{$y$}
\draw (-0.1,1.6) node {$\red{\erewa}$};
\drawtri{0}{1.8}{$t$}
\drawrighttail{0}{1.8}
\drawlabnode{1.3}{1.8}{$z$}
\blocka{0.65}{0.9}
\blocka{0}{0}
\blockunary{0}{-0.9}{$\lz$}
\draw (-0.2,-1.3) node {$\red{\erewa}$};
\draw (0.3,-2.1) node {\small {$\Psi'_2\tri\ju{y:\mult{\arewa}_\om}{\lz.y(t\,z):\erewa}$}};
}

\end{tikzpicture}}
\end{center}
\caption{Two Derivations typing $t':=\lz.y(t\,z)$}
\label{fig:2-cex-approx}
\end{figure}

Let $\Pi_{\Om}$ a derivation concluding with $\ju{}{\Om:\tv}$. This derivation is unsound and is intuitively not approximable: it is impossible to find a  finite derivation (of $\scrRo$) concluding with $\ju{}{\Om:\tv}$.
Then using $\Psi_1$ and $\Pi_{\Om}$ and an $\app$-rule we can build a derivation $\Pi_1$ concluding with  $\ju{y:\mult{\arewa}_\om}{t\,\Om:\tv}$, in which the subterm $\Om$ is typed. We can also build a derivation $\Pi_2$ from $\Psi_2$ that also concludes with  $\ju{y:\mult{\arewa}_\om}{t\,\Om:\tv}$. This times, the subterm $\Om$ is not typed.

It is not difficult to see that $\Pi_2$ is intuitively approximable, whereas $\Pi_1$ is root approximable but not fully approximable. Roughly speaking, the subderivation of $\Pi_1$ typing $\Om$ is the only non-approximable part of $\Pi_1$. 
Indeed:
\begin{itemize}
\item $\Pi_2$ is not approximable since it contains a subderivation typing the mute term $\Om$.
\item Every (finite) approximation of $\Pi_2$ is an approximation of $\Pi_1$. Thus, the join of the finite approximations of $\Pi_1$ is actually $\Pi_2$.
\end{itemize}
This proves that \textit{root} approximability is not equivalent to approximability, and that root approximability actually accepts derivations which are partially unsound.


\section{Approximability of normal derivations}
\label{a:normal-forms}


\noindent In this Appendix, we describe all the quantitative derivations typing normal forms in system $\ttS$ and we prove that all of them are approximable.
\ighp{

Actually, this statement could be generalized to system $\ttShp$ it is somewhat clear that every quantitative derivation typing a normal form in system $\ttShp$ is also approximable, but:
\begin{itemize}
\item This would be more tedious, because it is a bit more complicated to describe natural extensions in system $\ttShp$: indeed, we must consider all the subpositions of $\supp{t}$ which correspond to an element of $\HP_d$ for some $d\in \bbN \cup{\infty}$.
\item We do not need it to obtain the main theorem of characterization in system $\ttShp$ (Theorem~\ref{th:hp-pcons-hp}).
\end{itemize}
}

%
%
%
%
%
%

\subsection{Called rank of a position inside a type in a derivation}

 We recall that the rank $\rk{a}$ of $a\in \bbN^*$ is defined by $\rk{a}=\max (\ad{a},\max(a))$ (Definition~\ref{def:rank-max-width-depth}). We reuse the notation for constrain levels $\rdeg$ defined in \Sec~\ref{ss:support-candidates} and the notation $\Cal(a)$ define in \Sec~\ref{ss:nat-ext}.\\

For each $a$ in $A$ and each position $c$ in $\Cal(a)$ such that
$\Cal(a)(c)\neq X_{a'}$ (for all $a'$), we define the numbers $\cadout{a,c}$ and $\cadin{a,c}$ by:
\begin{itemize}
\item When $a$ is a unconstrained node, $\cadout{a,c}$ is $\rk{a}$ and $\cadin{a,c}=\rk{c}$.
\item When $a$ is a non-zero position: the respective values of $\cadout{a,c}$ and $\cadin{a,c}$ for the positions colored in red are $\rk{a}$ and $\rk{c}$.
$$\ttE(\ra)(x_1)\redrew \ttE(\ra)(x_2)\redrew \ldots \redrew \ttE(\ra)(x_n)\red{\rightarrow \ttT(\ra)}$$
where $n=\rdeg(a)$.
In particular, $\cadout{a,1^i}=\ad{a}$ and $\cadin{a,1^i}=0$ for $1\leqs i \leqs n$..
\item When $a$ is partial: the respective values of $\cadout{a,c}$ and $\cadin{a,c}$ for the positions 
colored in red are $\rk{a}$ and $\rk{c}$.
$$\Rst_1(a)\redrew \ldots
\redrew \Rst_n(a)\red{\rightarrow \ttT(\ra)}$$
where $n=\rdeg(a)$.
In particular, $\cadout{a,1^i}=\ad{a}$ for $1\leqs i \leqs n$ and $\cadout{a,1^n\cdot c}=\ad{a}$.
\end{itemize}

For each $a\in A$ and each position $c$ in $\ttT(a)$, we define
$\cadout{a,c}$ and $\cadin{a,c}$ by extending $\mathtt{cr}_\mathtt{ out}$ and $\mathtt{cr}_{\mathtt{in}}$ via substitution (this is formally done in the next section).

Again, for each $a\in A$, each variable $x$ and each position
$c$ in $\ttC(a)(x)(c)$, we define $\cadout{a,x,c}$ and $\cadin{a,x,c}$  by extending $\mathtt{cr}_\mathtt{out}$ via  substitution.

The definition of $\cadout{a,c}$, $\cadout{a,x,c}$, $\cadin{a,c}$ and $\cadin{a,x,c}$)  are sound, because in $\ttE(a)(x)$ and $\Rst_i(x)$, the $X_{a'}$ occur at depth $\geqs 1$ in $\Cal(a')$.

\begin{definition}
If $\p \in \bisupp{P}$, we define \textbf{called rank} of $\p$  by $$\cad{\p}=\max(\cadout{\p},\, \cadin{\p})$$
\end{definition}

\subsection{Truncation of rank $n$}

We present more formally the definitions of the last section and  we recall that $\rk{a}=\max (\ad{a},\,\max(a))$ for $a\in \bbN^*$.\\

As we saw above, a quantitative derivation typing a normal form can be reconstructed from its supports and the types assigned in the unconstrained positions. Thus, in a quantitative derivation $P$ typing a normal form $t$ such that $\supp{P}=A$, every $\p\in \bisupp{P}$ (with $\p=(a,c)$ or $\p=(a,x,c)$) must come from (so to say) an $(a',c')$ with $a\geqs a'$. We call $a'$ the  \textit{calling outer position} and $c'$ the \textit{calling inner position} of $\p$. Let us define them formally now. 
\\

For all $a\in A$ and $k\in \mathbb{N}$, we set, by induction on $k$, $\ttT^0 (a)=X_a$ and 
$\ttT^{k+1}(a)= \ttT^k(a)[\Cal(a')/X_{a'}]_{a'\in \bbN^*}$ and for all $k \in \mathbb{N}$, we set $\supps {\ttT^k(a)}=\set{c\in \supp {\ttT^k(a)}\,|\, \forall a'\in \bbN^*,~ \ttT^k(a)(c)\neq X_{a'}}$. Thus, $\supp{\ttT(a)}=\cup_{k\in \bbN} \supps{\ttT^*(a)}$ for all $a\in A$.\\

If $c\in \supp {\ttT(a)}$, there is a minimal $k\in \bbN$ such that $c\in \supps{\ttT^k(a)}$.
We denote it $\cd(a)(c)$ (call-depth of $c$ at pos. $a$).

When $k=\cd(a)(c)$, there are unique $c'\in \supp {\ttT(a)},\; c\secu\in \bbN^*$ and $a'\in A$ such that
$c=c'\cdot \csec,\ \ttT^{k-1}(a)(c')=X_{a'}$ (we have necessarily $a\leqslant a'$ since $a'$ is called by $a$), $\csec\in \supps{\Cal(a')}$ and $\Cal(a')(\csec)=\ttT(a)(c)$. 
We write $a'=\cop(a,c)$ (\textit{calling outer position} of $c$ at position $a$), $c'=\cip(a,c)$ (\textit{calling inner position} of $c$ at position $a$)
and $\csec=\pf(a,c)$ (\textit{postfix} of $c$ at position $a$). 
Then, we set $\cadout{a,c}:= \rk{a'}$ and $\cadin{a,c}=\rk{\csec}$.
Finally, we may define $\cad{a,c}$ by $\cad{a,c}=\max(\cadout{a,c},\cadin{a,c})$. Implicitly, $\cad{a,c}$ depends on $P$, but $P$ is omitted from the notation.


\begin{itemize}
\item We set $A_n=\{a\in A\, |\, \rk{a}\leqslant n\}$.
\item For all $\ra\in \rA_n$, we define $\rT_n(\ra)$ by removing
all the positions $c$ such that $\rk {c} > n$, \ie $\rT_n(\ra)$ is the restriction of $\rT(\ra)$ on $\set{c\in \supp{\rT(\ra)}\,|\, \rk{c}\leqslant n}$.
\end{itemize}
Since $t$ is in $\Lamzzu$ (and not in $\Lamuuu\setminus \Lamzzu$), $A_n$ is finite. Since, for all $a\in A$, $\ttT(a)$ is in $\Types$ (and not in $\Types^{111}\setminus\Types$), the type $\rT_n(\ra)$ is finite for all $\ra\in \rA_n$. 

We define $P_n$ as the natural extension of $(A_n,\rT_n)$. We retrieve contexts $\ttC_n$ and types $\ttT_n$ such that, for all $a\in A_n$, $P_n=\ju{\ttC_n(a)}{\tra:\ttT_n(a)}$.\\

\begin{lemma}
\label{l:cut}
For all $k\in \mathbb{N},\; a\in A,\; c\in \bbN^*$, we have $\rk{a}\leqslant n$ and 
$c\in \supp {\ttT_n^k(a)}$ iff $c\in \supp {\ttT^k(a)}$ and
$\cad {a, c} \leqslant n$. \\
In that case, $\ttT^k(a)(c)=\ttT_n^k(a)(c)$.
\end{lemma}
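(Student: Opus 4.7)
The plan is to prove the statement by induction on $k$, simultaneously handling the two directions of the biconditional and the equality of labels. The key observation is that the rank-$n$ truncation $\ttT_n^k$ is built by the same unfolding process as $\ttT^k$, but every step substitutes $\Cal_n(a')$ (defined only for $a' \in A_n$ and obtained from $\Cal(a')$ by restricting to rank-$\leq n$ positions in $\rT_n$ and placeholder positions for $a_0 \in A_n$) in place of $\Cal(a')$. So the correspondence between positions of the two families at step $k$ is controlled precisely by the bookkeeping maintained by $\cadout$ (tracking the deepest outer call) and $\cadin$ (tracking the deepest inner offset).

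The base case $k=0$ is immediate: $\ttT^0(a) = \ttT_n^0(a) = X_a$, the unique position is $\epsi$, and we adopt the natural convention that the called rank of this pure-placeholder position is $\rk{a}$ so that both sides reduce to $\rk{a} \leqs n$. For the inductive step at $k+1$, I split positions $c \in \supp{\ttT^{k+1}(a)}$ into two cases. In Case (i), $c$ is already stable at step $k$, i.e.\ $c \in \supps{\ttT^k(a)}$; then $\ttT^{k+1}(a)(c) = \ttT^k(a)(c)$ and the called rank is unchanged, so the equivalence follows directly from the inductive hypothesis. In Case (ii), $c$ becomes newly stable at step $k+1$, with decomposition $c = c' \cdot \csec$ where $\ttT^k(a)(c') = X_{a'}$ and $\csec \in \supps{\Cal(a')}$. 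Here $\cop(a,c) = a'$, $\pf(a,c) = \csec$, and $\cad{a,c} = \max(\rk{a'}, \rk{\csec})$. For the forward direction, membership $c \in \supp{\ttT_n^{k+1}(a)}$ forces $a' \in A_n$ (hence $\rk{a'} \leqs n$) and $\csec \in \supp{\Cal_n(a')}$ (hence $\rk{\csec} \leqs n$), so $\cad{a,c} \leqs n$; for $c'$ itself (a placeholder position) the strengthened inductive hypothesis applies. The backward direction is symmetric: $\cad{a,c} \leqs n$ yields both $a' \in A_n$ and $\csec \in \supp{\Cal_n(a')}$, and the inductive hypothesis at $c'$ places it in $\supp{\ttT_n^k(a)}$ with the matching label $X_{a'}$. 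The equality $\ttT^{k+1}(a)(c) = \ttT_n^{k+1}(a)(c)$ then follows from the fact that $\Cal(a')$ and $\Cal_n(a')$ agree on the rank-$\leqs n$ positions of $\supp{\Cal(a')}$.

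The main obstacle I expect is handling placeholder positions cleanly inside the inductive hypothesis. The statement as written refers to $c \in \supp{\ttT^k(a)}$, which includes positions $c'$ with $\ttT^k(a)(c') = X_{a'}$; these are exactly the positions one needs to transport across the induction when unfolding the next step. The fix is to phrase a strengthened invariant: for \emph{every} $c \in \supp{\ttT^k(a)}$ (stable or placeholder), membership in $\supp{\ttT_n^k(a)}$ with matching label is equivalent to $\rk{a} \leqs n$ plus the bound $\cad{a,c} \leqs n$ (where, for a placeholder position pointing at $X_{a'}$, the called rank is taken to be $\rk{a'}$, consistent with the limiting value of $\cad$ at the next unfolding step). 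Soundness of this strengthening rests on the monotonicity $\rk{a_i} \leqs \rk{a_{i+1}}$ along any call-chain $a = a_0 \to a_1 \to \dots \to a_m = a'$: each $a_{i+1}$ lies in $A_{a_i}(x)$ for some $x$, so $a_i$ is a prefix of $a_{i+1}$, whence $\ad{a_i} \leqs \ad{a_{i+1}}$ and $\max(a_i) \leqs \max(a_{i+1})$, so bounding the final $\rk{a'}$ controls all intermediate $\rk{a_i}$. Similarly, intermediate inner offsets along the chain are controlled by the fact that $\Cal_n$ restricts precisely to rank-$\leqs n$ positions at every step, ensuring that the recursive extension of $\cadin$ via substitution really is captured by the terminal value $\rk{\csec}$ together with the constraints already imposed on previous steps through the inductive hypothesis.
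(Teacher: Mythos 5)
Your proposal matches the paper's own proof: the same induction on $k$, the same case split on whether $c$ is already stable at step $k$ or arises from unfolding a placeholder $X_{a'}$ via the decomposition $c=c'\cdot c''$, and the same key fact that $\Cal(a')$ and $\Cal_n(a')$ agree on rank-$\leqs n$ positions; moreover your ``strengthened invariant'' for placeholder positions is exactly what the stated lemma already asserts, since $\supp{\ttT^k(a)}$ includes placeholder positions and $\cad{a,c}$ for such a position equals $\rk{a'}$ by the paper's definition of calling outer position and postfix. The one imprecision is your justification of rank-monotonicity along call chains: an argument call from a partial position $a=\ra\cdot 1^d$ targets $\ra\cdot 1^{d-i}\cdot k$, which is not a prefix-extension of $a$ (the paper makes the same slip), but the inequality $\rk{a}\leqs\rk{a'}$ you actually need still holds by direct computation of $\ad{\cdot}$ and $\max(\cdot)$, so nothing breaks.
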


\begin{proof}
By a simple but tedious induction on $k$. The notations $\cop_n$, $\cip_n$, $\pf_n$ and $\ttT^k_n$ for the \textit{ad hoc} definitions of calling outer positions and so on for $P_n$.\\

\begin{itemize}
\item Case $k=0$: \\
$\Rightarrow$: if $\rk{a}\leqslant n$ and $c\in \supp {\ttT_n^0(a)}$, then $\rk{a}\leqslant n$ and $c=\epsi$. 
By definition, $\cadout{a,\epsi}=\rk{a}\leqslant n$ and 
$\cadin{a}(\epsi)=\rk{\epsi}=0$. Thus, $\cad {a, c}\leqslant n$.\\
$\Leftarrow$: conversely, if $c\in \supp {\ttT^0(a)}$ and $\cad {a,\, c}\leqslant n$, we have likewise $c=\epsi$ and $\cop(a,\epsi)=a$, so $\rk{a}=\cadout{a,c}\leqslant n$. So $c=\epsi\in \ttT_n^0(a)$.\\
\item Case $k+1$: 
$\Rightarrow$: we assume that $\rk{a}\leqs n$ and $c\in \supp{\ttT^k_n(a)}$.
We have two cases, whether $c\in \supps{\ttT^k_n(a)}$ or not. If $c\in \supps{\ttT^k_n(a)}$, then $\ttT^{k+1}_n(a)(c)=\ttT^k_n(a)(c)$. By \ih, we have $\ttT^k_n(a)(c)=\ttT^k(a)(c)$. This implies $\ttT^{k+1}(a)(c)=\ttT^k(a)(c)$ and also, $\ttT^{k+1}_n(a)(c)=\ttT^{k+1}(a)(c)$. Thus, let us assume $c\notin \supps{\ttT^k_n(a)}$.\\

We have once again two cases, whether $c\in \supp{\ttT^k_n(a)}$ or not.

Assume first that $c\in \supp{\ttT^k_n(a)}$. Since $c\notin \supps{\ttT^k_n(a)}$, we have $\ttT_n^k(a)(c)=X_{a'}$ where $a'=\cop(a,c)$. Then $\ttT^{k+1}_n(a)(c)=\ttT^=\Cal_n(a')(\epsi)$. Moreover, the\ \ih applies to $(a,c)$ and gives $\cad{a,c}\leqslant n$ and $\ttT^k(a)(c)==\ttT^k_n(X_{a'})$. Thus, $\ttT^{k+1}(a)(c)=\Cal(a')(\epsi)$ and $\rk{a'}=\cadout{a,c}\leqslant n$. Since $a\leqs a'$, we have $\rk{a}\leqslant n$.
Since $\rk{\epsi}=0$, we have $\Cal_n(a')(\epsi)=\Cal(a')(\epsi)$. So $\ttT^{k+1}(a)(c)=\ttT^{k+1}_n(a)(\epsi)=\Cal_n(a')(\epsi)$.

We assume now that $c\notin \supp{\ttT^k_n(a)}$. Since $c\in \supp{T^{k+1}_n(a)}\setminus \supp{\ttT^k_n(a)}$, there are unique $a'\in A_n$, $c''\in \supp{T^k_n(a)}$ and $\csec \in \bbN^*$ such that $c=c'\cdot \csec$, $\ttT^k_n(a)(c)=X_{a'}$.
In particular, $\ttT^k_n(a)(c)=\ttT^k_n(a')(\csec)$. We have two subcases:
\begin{itemize}
\item Case $\ttT^{k+1}_n(a)(c)\neq X_{\asec}$ for all $\asec$. In that case, $a'=\cop_n(a,c)$, $c'=\cip_n(a,c)$ and $\csec=\pf_n(a,c)$. 
We have $\csec\in \supps{\Cal_n(a')}\leqs \supp{\Cal(a')}$. This entails in particular $\ttT^{k+1}(a)(c)=\Cal_n(a')(c)=\ttT^{k+1}_n(a,c)$.\\
Moreover, we have $\cadout{a,c}=\rk{a'}\leqslant n$ since $a\in A_n$ and $\cadin{a,c}=\rk{\csec} \leqslant n$ since $\csec \in \supp{\ttT_n(a')}$. Thus, $\cad{a,c}\leqslant n$.
\item Case $\ttT^{k+1}_n(a)(c)=X_{\asec}$ for some $\asec \geqs a'$. In that case, $a\secu=\cop_n(a,c)$, $c=\cip_n(a,c)$ and $\pf_n(a,c)=\epsi$. Due to the form of $\Cal_n(\_)$, $\secu=1^i \cdot \ell$ for some $j<\rdeg (a')$ and $\ell =\code{a'}\geqs 2$. By \ih, $\ttT^k(a)(c')=\ttT^k_n(a)(c')$, so $\ttT^k(a)(c')=X_{a'}$. Since $c\secu = 1^i\cdot \ell \in \supp{\Cal_n(a')}$ and $\Cal_n(a')\leqs \Cal(a')$, we also have $\Cal(a')(c\secu)=X_{a\secu}$. Thus, $\ttT^{k+1}(a)(c)=X_{a\secu}=\ttT^{k+1}_n(a)(c)$.\\
Moreover, $\cadout{a,c}=\rk{\asec}\leqs n$ since $\asec \in A_n$, and $\cadin{a,c}=\rk{\epsi}=0$, so $\cad{a,c}\leqs n$.\\
\end{itemize}

$\noindent \Leftarrow$: conversely, if $a\in A,\ c\in \supp {\ttT^{k+1}(a)}$ and $\cad{a, c}\leqslant n$, we assume that $a\notin \supps {\ttT^k(a)}$ (if $a\in \supps{\ttT^k(a)}$, the case is straightforwardly handled by \ih).

Assume first that $c\in \supp{\ttT^k(a)}$ and $\ttT^k(a)=X_{a'}$. Then $\rk{a'}=\cadout{a,c}\leqslant n$, so that $a'\in A_n$, and $\ttT^{k+1}(a)=\Cal_n(a')(\epsi)$. By \ih on $(a,c)$, $a\in A_n$,  $c\in \supp{\ttT^k_n(a)}$ and $\ttT^k_n(a)(c)=X_{a'}$, so $\ttT^{k+1}_n(\epsi)=\Cal_n(a')(\epsi)$. Since $\Cal_n(a')\leqs\Cal(a)$, we have $\Cal_n(a')=\Cal(a')(\epsi)$, which is equal to $\ttT^{k+1}(a)(c)$ since $\ttT^{k}(a)(c)=X_{a'}$. Thus, $\ttT^{k+1}_n(a)(cc)=\ttT^{k+1}(a)(c)$.

We assume now that $c\notin \supp{\ttT^k(a)}$.
 Since $c\in \supp{T^{k+1}(a)}\setminus \supp{\ttT^k(a)}$, there are unique $a'\in A$, $c'\in \supp{T^k(a)}$ and $\csec \in \bbN^*$ such that $c=c'\cdot \csec$, $\ttT^k(a)(c)=X_{a'}$. By \ih on $c'$, $\rk{a}\leqs n$, $c'\in \supp{T^k_n(a)}$ and $\ttT^k_n(a)(c')=\ttT^k(a)(c)=X_{a'}$.
In particular, $\ttT^k(a)(c)=\ttT^k(a')(\csec)$. We have two subcases:
\begin{itemize}
\item If $\ttT^{k+1}(a)(c)\neq X_{\asec}$ for all $\asec$, then $a'=\cop(a,c)$, $c'=\cip(a,c)$ and $\csec = \pf(a,c)$.  Moreover, $\rk{\csec}=\cadin{a,c}\leqs \cad{a,c}\leqs n$, so $\secu\in \supp{\Cal_n(a')}$. By definition of $\ttT^{k+1}_n(a)$, $\ttT^{k+1}=\Cal_n(a')(c')$. Since $\Cal_n(a')\leqs \Cal(a')$, we have $\Cal_n(a')(c')=\Cal(a)(c')$, which is equal to $\ttT^{k+1}(a)(c)$. Thus, $\ttT^{k+1}_n(a)(c)=\ttT^{k+1}(a)(c)$.
\item If $\ttT^{k+1}(a)(c)=X_{\asec}$ for some $a\secu\geqs a'$. In that case, $\asec=\cop(a,c)$, $c=\cip(a,c)$ and $\pf(a,c)=\epsi$. 
Due to the form of $\Cal(\_)$, $\secu=1^j \cdot \ell$ for some $ j< \rdeg (a')$ and $\ell =\code{a'}\geqs 2$. By definition of $\Cal_n(\_)$, we also have $1^j\in \Cal_n(a')$ and $\Cal_n(a')(\csec)=X_{\asec}$ ($\Cal_n(a')$ and $\Cal(a')$ may only differ on positions $\geqs 1^{\rdeg(a')}$ by construction). By definition of  $T^{k+1}_n$, we have $T^{k+1}_n(a)(c)=\Cal_n(a')(X_{\asec})$. In particular, $T^{k+1}_n(a)(c)=\ttT^{k+1}(a)(c)$.
\end{itemize}
\end{itemize}
\end{proof}

\subsection{A Complete Sequence of Derivation Approximations}

We now prove Lemma~\ref{lem:q-deriv-NF-approx}:

\begin{lemma*}
If $P$ is a quantitative derivation typing a normal form $t$, then $P$ is approximable.
\end{lemma*}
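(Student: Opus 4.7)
\begin{proofattempt}
My plan is to follow the sketch given in the paper, making precise the two crucial claims: (1) every biposition of $P$ has finite called rank, and (2) the rank-$n$ truncation $P_n$ contains every biposition of $P$ with called rank $\leqs n$. Claim (2) is essentially Lemma~\ref{l:cut}, which has already been proved above, once we transfer it from positions in types to full bipositions (including left ones). Claim (1) is the main obstacle and is precisely where the \emph{quantitativity} hypothesis enters.

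First, I would spell out the natural extension $P_n$ of $(A_n, \rT_n)$. Since $t\in \Lamzzu$, the support $A$ contains no infinite branch of finite applicative depth, so $A_n=\{a\in A\mid \rk{a}\leqs n\}$ is finite (there are only finitely many positions of applicative depth $\leqs n$ whose labels lie in $\{0,1,2,\dots,n\}$). Since each $\ttT(a)\in\Types^{001}$, its support contains no infinite branch ending in $1^\om$, so $\rT_n(\ra)$ (the restriction of $\rT(\ra)$ to positions of rank $\leqs n$) is a finite type for every $\ra\in\rA_n$. By Lemma~\ref{l:NF-are-typable}, $P_n$ is a well-defined derivation, and it is finite.

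Second, I would show $P_n \leqs P$ and establish the right-biposition inclusion: for any right biposition $(a,c)\in\bisuppR{P}$, we have $(a,c)\in\bisupp{P_n}$ as soon as $\cad{a,c}\leqs n$, and in that case $P_n(a,c)=P(a,c)$. This is exactly the content of Lemma~\ref{l:cut}, after taking unions over $k\in\bbN$ (recall $\supp{\ttT(a)}=\bigcup_k \supps{\ttT^k(a)}$, and analogously for $\ttT_n$). From the equality of labels on the common support, $P_n\leqs P$ follows at once.

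Third comes the main obstacle: extending this to \emph{left} bipositions and showing that \emph{every} biposition has finite called rank. Here I use quantitativity. Given a left biposition $(a,x,k\cdot c)\in\bisupp{P}$, Definition~\ref{def:quant-S-deriv} gives $\ttC(a)(x)=\uplus_{a_0\in\AxP_a(x)}(\trP{a_0}\cdot \ttT(a_0))$, so there is a unique $a_0:=\pos{a,x,k}\in\AxP_a(x)$ with $\trP{a_0}=k$, and the left biposition $(a,x,k\cdot c)$ is equinecessary with the right biposition $(a_0,c)$ (both carry the same label). I extend the definition of called rank by setting $\cad{a,x,k\cdot c}:=\max(\rk{a_0},\cad{a_0,c})$, so the finiteness of the called rank of left bipositions reduces to that of right ones. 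For right bipositions, the called rank is automatically finite by construction of the natural extension: each $c\in\supp{\ttT(a)}$ lies in $\supps{\ttT^k(a)}$ for some finite $k$, so $\cop(a,c)\in A$ and $\pf(a,c)\in\bbN^*$ are both defined finite objects, and hence $\cad{a,c}=\max(\rk{\cop(a,c)},\rk{\pf(a,c)})<\infty$.

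Putting everything together, let $\oB\subseteq \bisupp{P}$ be finite. Set $n:=\max\{\cad{\p}\mid \p\in\oB\}$, which is finite by the argument above. For each right biposition $\p=(a,c)\in\oB$, Lemma~\ref{l:cut} gives $\p\in\bisupp{P_n}$ with $P_n(\p)=P(\p)$. For each left biposition $(a,x,k\cdot c)\in\oB$, we have $a_0:=\pos{a,x,k}\in A_n$ (since $\rk{a_0}\leqs \cad{a,x,k\cdot c}\leqs n$) and $(a_0,c)\in\bisupp{P_n}$ by the right-biposition case; quantitativity of $P_n$ (which follows from the fact that $P_n$ is a natural extension) then forces $(a,x,k\cdot c)\in\bisupp{P_n}$ and $P_n(a,x,k\cdot c)=P(a,x,k\cdot c)$. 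Thus $\oB\subseteq\bisupp{P_n}$, and since $P_n\leqs P$ is finite, Lemma~\ref{lem:Bo-appox-charac} yields that $P$ is approximable.
\end{proofattempt}
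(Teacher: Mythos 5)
Your proof is correct and follows essentially the same route as the paper: you build the rank-$n$ truncation $P_n$ as the natural extension of $(A_n,\rT_n)$, invoke Lemma~\ref{l:cut} to get the right-biposition inclusion, use quantitativity to reduce left bipositions to right ones (which is exactly how the paper extends the called rank to contexts ``via substitution''), and conclude with $n=\max\{\cad{\p}\mid\p\in\oB\}$ and Lemma~\ref{lem:Bo-appox-charac}. The only implicit point shared with the paper's own treatment is that the natural extension's axiom-track injection must be chosen to agree with the axiom tracks of $P$ so that $P_n\leqs P$ holds; this is a presentational detail, not a gap specific to your argument.
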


\begin{proof}
Let $\oB\subset \bisupp{P}$ a finite subset. We set $n=\max\set{\cad{\p}\,|\, \p\in B}$. Then $\oB\subseteq \bisupp{P_n}$.

In order to conclude, it is enough to prove that $P_n$ is a \textit{finite} derivation. For that, we notice that $A_n$ is finite and, for all $a\in A_n$, $\Cal_n(a)$ is finite and $\ttT_n^{k+1}(a)=T_n^{n+1}(a)$ for all $k\geqslant n+1$, since $\ttT^{n+1}_n(a)$ does not contain any $X_{a'}$. In particular, $\ttT_n(a)$ is finite. Since $A_n$ and all the $\ttT_n(a)$ ($a$ ranging over $A$) are finite, $P_n$ is finite.
\end{proof}

\section{Infinitary Multiset Types}
\label{a:system-R} 


We present here a definition of type assignment system $\scrR$,
which is an infinitary version of Gardner-de Carvalho's system $\scrRo$.\\

Intuitively, a multiset is a sequence in which we have the positions (the tracks) of the elements has been collapsed, \eg the multiset $\mult{\tv_1,\tv_2,\tv_1}$ may be seen indifferently as the collapse of the sequences $(3\cdot \tv_1,5\cdot \tv_2,7\cdot \tv_2)$, $(2\cdot \tv_1,8\cdot \tv_2,10\cdot \tv_2)$ or $(3\cdot \tv_2,5\cdot \tv_1,7\cdot \tv_2)$. Since there are nesting of multisets in the finite types from System $\scrRo$, the types of $\scrRo$ may be seen as an inductive/nested collapse of finite types of System $\ttS$.

In order to build an infinitary version of $\scrRo$ whose types will feature infinitary nestings of multisets, it is natural to define those types as coinductive collapse of sequences.\\

\subsection{Types}  
Let $U_1$ and $U_2$ be two labelled trees or two sequences of labelled trees (in the sense of \Sec~\ref{s:rigid}
). A \textbf{01-stable isomorphism} from $U_1$ to $U_2$ is a bijection $\phi$ from $\supp{U_1}$ to $\supp{U_2}$ such that:
\begin{itemize}
\item $\phi$ is monotonic for the prefix order.
\item If $a\cdot k\in \supp{U_1}$ with $k=0$ or $k=1$, then $\phi(a\cdot k)=\phi(a)=k$
\item For all $a\in \supp{U_1}$, $U_2(\phi(a))=U_1(a)$.
\end{itemize}
Thus, a 01-stable isomorphism is either an isomorphism of labelled trees or an isomorphism of sequences, which preserves tracks 0 and 1. We write $U_1\equiv U_2$ when there is a 01-stable isomorphism from $U_1$ to $U_2$.
When $U$ is a tree, the definition implies that $\phi(\epsi)=\epsi$. 
If $U_1$ and $U_2$ are types (resp. sequence types), $\phi$ is called a \textbf{\textbf{type isomorphism}} (resp. \textbf{sequence type isomorphism}).\\

Alternatively, we can define $U_1\equiv U_2$ for types and sequence types by coinduction, without reference to 01-stable isomorphism:
\begin{itemize}
\item $\tv \equiv \tv$
\item $\sSk\equiv \sSpkp$ if there is a bijection $\rho:K\rew K'$ such that, for all $\kK$, $S_k\equiv S'_{\rho(k)}$.
\item $\sSk\rew T\equiv \sSpkp\rew T'$  if $\sSk\equiv \sSpkp$ and $T\equiv T'$.
\end{itemize}

The set $\TypR$ of types of System $\scrR$ is defined as the quotient set $\Types/\equiv$.

If $U$ is a $\ttS$-type or a sequence type, its equivalence class is written 
$\overline{U}$. We may now define coinductively the notation of the collapses of $\ttS$-types:
\begin{itemize}
\item The equivalent class of a sequence type $F=\sSk$ is the 
multiset type written $\mult{\overline{S_k}}_{k\in \Rt (F)}$
\item  We write $\ovl{F}\rew  \ovl{T}$ for $\ovl{F\rew T}$.
\item If $\tv$ is a type variable, $\ovl{\tv}$ is the singleton $\set{\tv}$. In that case, we just write (abusively) $\tv$ instead of $\ovl{\tv}$ or $\set{\tv}$.
\end{itemize}

Countable sum $+_\iI \overline{F^{i}}$ is defined on 
the set of multisets types by using an arbitrary bijection $j$ from the pairwise disjoint countable sum $\bbN \times (\bbN\setminus\set{0,1})$ to $\bbN\setminus\set{0,1}$.

Let $(F^i)_{\iI}=(\sSqkK{i})_{\iI}$ a countable family of sequence type such that $I\subseteq \bbN$. We define $\stackrel{j}{+}_{\iI} F^i$ as the sequence type $F=\sSk$ such that $K=\set{j(i,k)\,|\, i\in I,\,\kK(i)}$ and, for all $\kK$, $S_k$ is $S^i_{k_0}$ where $(i,k_0)$ is the unique pair such that $i\in I,\; k_0\in K(i)$ and $k=j(i,k_0)$. We may then prove that (1) if for all $\iI$, $F^i \equiv G^i$, then $\stackrel{j}{+}_{\iI} F^i\equiv \stackrel{j}{+}_{\iI} G^i$ (2) the class of $\stackrel{j}{+}_{\iI} F^i$ does not depend on $j$. Thus, 
we may define countable sum operator + on multiset types. It is routine work to prove that $+$ infinitarily associative and commutative on multiset types, as expected.\\

\subsection{Typing rules} 
An \textbf{$\scrR$-context} is a total function from the set of term variables  $\TermV$ to the set of infinitary multiset types. An \textbf{$\scrR$-judgment} is a triple of the form $\juGtt$ where $G$ is a $\scrR$-context, $t$ a 001-term and $\tau$ a $\scrR$-types. The set of \textbf{semi-rigid} derivations is the set of trees (labelled with $\scrR$-judgments) defined coinductively by the following rules:
\begin{center}
$$\infer[\ax]{\phd}{\ju{x:[\tau]}{x:\tau  }}
\sep\sep
\infer[\abs]{\ju{\Gam;x:\msigi}{t: \tau } }
{\ju{\Gam}{\lambda x.t:\, \msigi\rew\tau }}$$
$$\infer[\app]{\ju{\Gam}{t: \msigi\rew \tau} \sep
(\ju{\Delk}{u:\sigk )_{\kK}}}
{\ju{\Gam +_{\kK} \Delk}{t\,u:\, \tau}}$$
\end{center}

Why do we say that these derivations are \textit{semi-rigid}? Because the argument of derivations are still placed on argument tracks $\kK$, while obviously, they should not matter when we work with multisets.
Let $P_1$ and $P_2$ be two semi-rigid derivations.  
We define the set of $\scrR$-derivation as the quotient set of that of semi-derivation by the relation $\equiv$.

An element of $\DerivR$ is usually written $\Pi$, whereas an element of $\Deriv_*$ is written $P$ (as for system $\ttS$). Notice the derivation $\Pi$ and $\Pi'$ of \Sec~\ref{ss:typ-inf-nf-informal} and Fig.~\ref{fig:expans-by-trunc-1} are objects of $\Deriv$.

\subsection{Quantitativity and Coinduction in System $\scrR$}
\label{s:unquant}
Let $\Gam$ be \textit{any} context. Exactly as in \Sec~\ref{ss:quant-coind-S}, we can use the infinite branch of $\fom$
 to give the following variant of derivation $\Pi'$ from \Sec~\ref{ss:typ-inf-nf-informal}, which respects the rules
of system $\scrR$:
  $$
\infer[\app]{\infer[\ax]{}{\jufara}\sep
\Pi'_{\Gam}\tri \ju{f:\mult{\arewa}_{\omega} + \Gam}{
\fom:\tv}}{ \ju{f:\mult{\arewa}_{\omega} +\Gam }{\fom :\tv}}
$$


The notion of quantitativity (Definition~\ref{def:quant-S-deriv}) can be adapted while considering multisets:

\begin{definition}\mbox{}
\begin{itemize}
\item A semi-rigid derivation $P$ is \textbf{quantitative} if, for all $a\in \supp{P},~  \Gam(a)(x)= \mult{\tau(a')}_{a'\in \AxPa{x}}$.
\item A $\scrR$-derivation $\Pi$ is \textbf{quantitative} if any of its semi-rigid representatives is (in that case, all of them are quantitative).
\end{itemize}
\end{definition}

In the next subsection, we show that a derivation $\Pi$ from system $\scrR$ can have both quantitative and not quantitative, approximable and not approximable representatives in System\;$\ttS$. It once again shows that rigid constructions allow a more fine-grained control than system $\scrR$ does on derivations.

\subsection{Representatives and Dynamics}
\label{a:rep-and-dynamics}

A $\ttS$-derivation $P$   \textbf{represents} a derivation $\Pi$ if the semi-rigid derivation $P_*$
defined by $\supp {P_*}=\supp {P}$ and $P_*(a)=\ju{\ovl{C(a)}}{\tra:\,\ovl{T(a)}}$, is a semi-rigid representative of $\Pi$.
We write $P_1 \eqr P_2$ when two $\ttS$-derivations $P_1$ and $P_2$ both represent the same $\scrR$-derivation $\Pi$. 

\begin{proposition}
If a $\ttS$-derivation $P$ is quantitative, then the $\scrR$-derivation 
$\ovl{P}$ is quantitative in system $\scrR$.
\end{proposition}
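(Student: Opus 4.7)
The plan is to unfold both definitions and show the quantitativity equation transports along the collapse $\ovl{\cdot}$. Since an $\scrR$-derivation is an equivalence class of semi-rigid derivations, quantitativity of $\ovl{P}$ amounts to quantitativity of \emph{one} (hence any) semi-rigid representative. The natural representative to pick is $P_*$ built directly from $P$: same support, same underlying term, and for every $a \in \supp{P_*}=\supp{P}$, $P_*(a) = \ju{\ovl{\ttC^P(a)}}{\tra : \ovl{\ttT^P(a)}}$.

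First, I would record a key compatibility lemma: the collapse commutes with disjoint union, that is, for any countable disjoint family of sequence types $(F^i)_{\iI}$,
\[
\ovl{\uplus_{\iI}\, F^i} \;=\; +_{\iI}\, \ovl{F^i}.
\]
This is essentially the definition of $+$ on multiset types modulo $\equiv$: the bijection $j : \bbN \times (\bbN \setminus \set{0,1}) \to \bbN \setminus \set{0,1}$ used to realise the sum is precisely what collapses the disjoint union of tracks, and the class of the result does not depend on $j$. As a trivial special case, $\ovl{(k \cdot S)} = \mult{\ovl{S}}$, so in particular $\ovl{\trP{a'} \cdot \ttT^P(a')} = \mult{\ovl{\ttT^P(a')}}$.

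Next, I would observe that $\AxP_a(x)$ is defined purely from $\supp{P}$ and the labelling of the underlying term $t$ (it selects the positions above $a$ whose label is $x$ and that are not bound between $a$ and themselves). Since $P_*$ has the same support and subject as $P$, the set $\AxP_a(x)$ in $P$ coincides with $\AxPa{x}$ in $P_*$, and the axiom track $\trP{a'}$ is the same in both.

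Putting the pieces together: for every $a \in \supp{P_*}$ and every $x \in \TermV$, quantitativity of $P$ gives $\ttC^P(a)(x) = \uplus_{a' \in \AxP_a(x)} (\trP{a'} \cdot \ttT^P(a'))$. Applying $\ovl{\cdot}$ to both sides, using the compatibility lemma above and $\AxP_a(x) = \AxPa{x}$, yields
\[
\Gam_*(a)(x) \;=\; \ovl{\ttC^P(a)(x)} \;=\; +_{a' \in \AxPa{x}}\, \mult{\ovl{\ttT^P(a')}} \;=\; \mult{\tau_*(a')}_{a' \in \AxPa{x}},
\]
which is exactly the quantitativity condition for $P_*$. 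Hence $\ovl{P}$ is quantitative.

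The only mildly delicate point, and the main thing to be careful with, is the infinitary compatibility $\ovl{\uplus_{\iI} F^i} = +_{\iI} \ovl{F^i}$: one must check well-definedness of countable $+$ up to $\equiv$ and verify that the bijection $j$ chosen to linearise the sum matches (up to $\equiv$) the tracks produced by $\uplus$. This is routine once the definitions are unfolded, so I do not foresee any genuine obstacle; the rest is bookkeeping about the fact that $\supp{\cdot}$, the subject, and the sets $\AxP_a(x)$ are preserved by the canonical passage from $P$ to its semi-rigid image $P_*$.
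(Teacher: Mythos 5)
Your proof is correct. The paper states this proposition without giving any proof (treating it as routine), and your argument --- passing to the canonical semi-rigid representative $P_*$, observing that $\supp{\cdot}$, the subject, and hence the sets $\AxP_a(x)$ and the axiom tracks are preserved, and that the collapse turns $\uplus$ into the multiset sum $+$ --- is exactly the expected verification, with the one genuinely non-trivial point (well-definedness of the countable $+$ up to $\equiv$, independently of the linearising bijection $j$) correctly identified and already settled in the paper's definition of $+$.
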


\noindent Using natural extensions (\Sec~\ref{ss:nat-ext}), it easy to prove:


\begin{proposition}
If $\Pi$ is a quantitative derivation typing a 001-normal form, then, there is 
a quantitative rigid derivation $P$ such that $\overline{P}=\Pi$.
\end{proposition}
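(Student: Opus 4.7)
The plan is to rely on the natural extension construction of Section~\ref{ss:nat-ext}. Let $t$ be the 001-normal form typed by $\Pi$, and fix a semi-rigid representative of $\Pi$ so as to have an actual support $A \subseteq \bbN^*$ together with, at each $a \in A$, a judgment $\ju{\Gamma(a)}{\tra : \tau(a)}$ in the $\scrR$-sense. Since $\Pi$ respects the typing rules, the set $A$ is a support candidate for $t$ in the sense of Section~\ref{ss:support-candidates}, so it makes sense to speak of unconstrained positions $\rA \subseteq A$, constrain levels $\rdeg(a)$, and related positions $\ra$.

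The first step is to choose, for every unconstrained position $\ra \in \rA$, a sequence type $\rT(\ra) \in \Types$ such that $\overline{\rT(\ra)} = \tau(\ra)$. This is always possible: the collapse map from $\ttS$-types to $\scrR$-types is surjective, since from any (finite or infinite) multiset type one can coinductively build a sequence type by assigning pairwise distinct tracks in $\bbN \setminus \{0,1\}$ to its elements. Next, fix an arbitrary injection $\code{\cdot} : \bbN^* \to \bbN \setminus \{0,1\}$ to serve as axiom-track code, and let $P$ be the natural extension of the pair $(A, \rT)$ determined by $\code{\cdot}$. By Lemma~\ref{l:NF-are-typable}, $P$ is a correct $\ttS$-derivation with $\supp{P} = A$. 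Moreover, $P$ is quantitative by construction, because the natural extension explicitly defines $\ttC(a)(x) = (\code{a_0} \cdot \ttT(a_0))_{a_0 \in \AxPa{x}}$, which is exactly the quantitativity condition of Definition~\ref{def:quant-S-deriv}.

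It remains to verify that $\overline{P} = \Pi$, i.e.\ that $\overline{\ttT(a)} = \tau(a)$ and $\overline{\ttC(a)(x)} = \Gamma(a)(x)$ for every $a \in A$ and $x \in \TermV$. The context equality reduces to the type equality via quantitativity on both sides: $\overline{\ttC(a)(x)} = \mult{\overline{\ttT(a_0)}}_{a_0 \in \AxPa{x}}$, and since $\Pi$ is quantitative, $\Gamma(a)(x) = \mult{\tau(a_0)}_{a_0 \in \AxPa{x}}$, so the two match as soon as $\overline{\ttT(a_0)} = \tau(a_0)$ for axiom positions $a_0$. For the type equality itself, at unconstrained $\ra$ it holds by the choice of $\rT$; at non-zero and partial positions, it follows from the explicit formulas for $\Cal(a)$ given in Section~\ref{ss:nat-ext}, by observing that the collapse of $\ttE(a)(x)$ is precisely $\Gamma(a)(x)$ (quantitativity of $\Pi$) and that the collapse of $\Rst_i(a)$ is the multiset of types that the $\app$-rule of $\Pi$ at the corresponding head application imposes on the $i$-th argument (by the shape of the $\app$-rule in system $\scrR$).

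The only genuinely delicate point is the coinductive character of the equality $\overline{\ttT(a)} = \tau(a)$. The construction gives $\ttT(a)$ as the unique solution of the productive coinductive equation $\ttT(a) = \Cal(a)[\ttT(a')/X_{a'}]_{a'}$, while $\tau(a)$ is determined by the typing rules applied at $a$ in $\Pi$, which by quantitativity of $\Pi$ amount to an analogous productive equation in the $\scrR$-world obtained by collapsing $\Cal(a)$. Since collapse commutes with the substitution used in the definition, $\overline{\ttT(a)}$ satisfies the same productive equation as $\tau(a)$, and by uniqueness of productive coinductive fixpoints, the two coincide. This finishes the verification that $\overline{P} = \Pi$, and hence the proof.
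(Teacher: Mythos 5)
Your proof is correct and follows essentially the same route as the paper: take a semi-rigid representative of $\Pi$ to fix a support $A$, choose $\ttS$-type representatives of $\tau(\ra)$ at the unconstrained positions, apply the natural extension of Lemma~\ref{l:NF-are-typable}, and check that the result collapses back to $\Pi$. The paper leaves the verification that $T(a)$ represents $\tau(a)$ as a one-line remark, whereas you spell out the productive-fixpoint argument; that is a welcome elaboration, not a different method.
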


\begin{proof}
Let $P(\ast)$ be a semi-rigid derivation representing $\Pi$. We set $A=\supp{P(\ast)}$ and for all full position $a\in A$, we choose a representative $T(\mathring{a})$ of $\tau(a)$. We apply then the canonical construction of \Sec~\ref{ss:nat-ext}, which yields a rigid derivation $P$ such that $P_*=P(\ast)$ (we show that, for all $a\in A$, $T(a)$ represents $\tau(a)$).
  \end{proof}

We can actually prove that every quantitative $\scrR$-derivation can be  represented with a $\ttS$-quantitative rigid derivation and that we can endow  it with every possible infinitary reduction choice~\cite{VialLICS18}.
However, a quantitative derivation can also have a not quantitative rigid 
representative, as we see below with $\Pi'$ and $\tilde{P}'$.


We omit again the right side of axiom rules, \eg
$f:(\twoarewa)_2 $ stands for $\ju{f:(\twoarewa)_2}{f:\twoarewa}$. Moreover, outer tracks are indicated between red square brackets.\\

\begin{figure*}[!t]
 {
$$P'_2=
\infer{
\infer{\phd}{f\!:\!(2\ct \twoarewa) \trck{1}}\
\infer{
  \infer{\phd}{f\!:\!(3\ct \twoarewa)  \trck{1}}\
        P'_4\tri \ju{f\!:\!(k\ct \twoarewa)_{k\geqslant 4}}{
            \fom: \tv} \trck{2}}{
  \ju{f:(k\ct \twoarewa)_{k\geqslant 3}}{\fom:\tv \trck{2}}}}
{\ju{f:(k\ct \twoarewa)_{k\geqslant 2}}{\fom: \tv}}
$$}

{
$$\tilde{P}'_2=
\infer{
  \infer{\phd}{ f\!:\!(2\ct \twoarewa) \trck{1}}\msep
  \infer{
    \infer{\phd}{f\!:\!(4\ct \twoarewa) \trck{1}}\msep
    \tilde{P}'_5\tri \ju{f\!:\!(k\ct \twoarewa)_{k=3 \vee k\geqslant 5}}{\fom:\tv\trck{2}}}{
    \ju{f:(k\ct \twoarewa)_{k\geqslant 3}}{\fom:\tv} \trck{2}}}
{ \ju{f:(k\ct \twoarewa)_{k\geqslant 2}}{\fom:\tv}}
$$}

{
$$\tilde{P}'_k=
\infer{
  \infer{\phd}{ f\!:\!(k\ct \twoarewa) \trck{1}}\sep
    \tilde{P}'_{k+1}\tri \ju{f\!:\!(i\ct \twoarewa)_{i=3 \vee i \geqslant k+1}}{\fom:\tv}\trck{2}  }
{\ju{f:(i\ct \twoarewa)_{i\geqslant k}}{\fom:\tv}}\sep \sep \text{(for $k\geqs 5$)}
$$ }

\caption{Representing $\Pi'$}
\label{fig:two-representatives-fom}
\end{figure*}

\tbul~ \newcommand{\Rtv}{{R_{\tv}}}
We define $\Rtv$ by $\Rtv:=(k\cdot \Rtv)_{k\geqs 2}\rew \tv$ and
Let $P_k$ ($k\geqslant 2$) and $P$ be the following $\ttS$-derivations:  
{
$$
  P_k=
 \infer{
\infer{\infer{\phd}{f\!:\!(k\ct \twoarewa) 
  \trck{1} }  \
\infer{
  \infer{\phd}{x:\,(2\ct \Rtv) \trck{1}}
  (\infer{\phd}{x:(i\ct \Rtv) \trck{i-1}})_{i\geqs 3}}{
   \ju{x:(i\ct \Rtv)_{i\geqslant 2}}{x\,x:\tv} \trck{2}}}{
  \ju{f:(k\ct \twoarewa)}{ f(x\,x):\tv}\trck{0}}}
{\ju{f:(k\ct \twoarewa)}{\Delta_f:\Rtv}}
$$}

{
$$
P=
\infer{P_2 \trck{1}\sep (P_k \trck{k-1})_{k\geqslant 3}
}{\ju{f:(k\ct \twoarewa)_{k\geqslant 2}}{\Del_f \Del_f}}
$$
}

\tbul ~ Let $\tilde{P}_k$ ($k\geqslant 2$) and $\tilde{P}$ be the following $\ttS$-derivations:

{
  $$\tilde{P}_k=
\infer{
  \infer{
    \infer{\phd}{f:(k\ct \twoarewa)~  \trck{1}} \
   \infer{      
     \infer{\phd}{x:(3\ct \Rtv)~  \trck{1}}
     \infer{\phd}{x:(2\ct \Rtv)~ \trck{2}}
     (\infer{\phd}{x:(i \ct \Rtv)~ \trck{i-1}})_{i\geqslant 4}
   }{\ju{x:(i\ct \Rtv)_{i\geqslant 2}}{x\,x:\tv} \trck{2}}}
        {\ju{f:(k\ct \twoarewa) }{ f(x\,x):\tv} \trck{0}}}
{\ju{f:(k\ct \twoarewa)}{\Del_f:\Rtv}}
$$}

{
$$
\tilde{P}=
\infer{\tilde{P}_2 \trck{1} \sep (\tilde{P}_k \trck{k-1})_{k\geqslant 3} }{
  \ju{f:(\twoarewa)_{k\geqslant 2} }{\Delta_f \Delta_f}}$$
}

\tbul ~ The rigid derivations $P$ and $\tilde{P}$ both represent
$\Pi$. Intuitively, subject reduction in $P$ will consist in taking the
first argument $P_3$, placing it on the first occurrence of $x$ in
$f(x\,x)$ (in $P_2$) and putting the other $P_k$ ($k\geqslant 4$) in the
different axiom rules typing the second occurrence of $x$ in the same
order, \ie, for $k \geqs 3$, $P_k$ is moved from track $k-1$ to track $k-2$. There is a simple decrease on the track number and we can
go this way towards $P'_2$ typing $\fom$.

The rigid derivation $\tilde{P}$ process the same way, except it will always skip $\tilde{P}_3$ ($\tilde{P}_3$ will remain on track 2). Morally, we 
perform subject reduction ``by-hand'' while avoiding to ever place $\tilde{P}_3$ in head position, \ie $\tilde{P}_3$ is never ``consumed'' by reduction.

A computation shows that infinitary reductions performed in $P$ and $\tilde{P}_2$ yield respectively to $P'_2$ and $\tilde{P}'_2$ of Fig.~\ref{fig:two-representatives-fom}.

Thus, $P'$ and $\tilde{P}'$ both represent $\Pi'$ (from \Sec~\ref{ss:typ-inf-nf-informal}), but $P'$ is quantitative whereas $\tilde{P}'$ is
not (the track $3$ type assigned to $f$ does not end in an axiom leaf). This proves  that quantitativity is not stable up to the limit of a productive reduction path and that $\tilde{P}$ is not approximable by Proposition~\ref{prop:infinite-subject-reduction}.

Moreover, by Proposition~\ref{prop:NF-uf-approx-typable}, $P'_2$ is approximable. By infinitary subject expansion, $P_2$ also is (actually, one may also use the $\ttS$-representatives of $\Pi_n$ and $\Pi'_n$ in Fig.~\ref{fig:expans-by-trunc-1} to approximate $P$ and $P'$ arbitrarily).\\

\noindent \textbf{Conclusion.} Thus, $\Pi$ and  $\Pi'$ have both approximable and not approximable representatives. Actually, $\Pi$ has a representative which is quantitative but not approximable.  This proves that approximability cannot be defined in system $\scrR$.

\end{document}